\documentclass[format=acmsmall, review=true]{acmart}
\usepackage{acm-ec-25}
\usepackage{booktabs} 
\usepackage[ruled]{algorithm2e} 
\usepackage{subfigure}

\SetAlFnt{\small}
\SetAlCapFnt{\small}
\SetAlCapNameFnt{\small}
\SetAlCapHSkip{0pt}
\IncMargin{-\parindent}

\setcitestyle{authoryear}

\title[Aggregating Information and Preferences with Bounded-Size Deviations]{Aggregating Information and Preferences with Bounded-Size Deviations}

\author{Qishen Han}
\email{hnickc2017@gmail.com}
\orcid{0000-0003-0268-6918}
\affiliation{%
  \institution{Rutgers University}\city{Piscataway}\state{NJ}\country{USA}}
\author{Grant Schoenebeck}
\email{schoeneb@umich.edu}
\orcid{0000-0001-6878-0670}
\affiliation{%
  \institution{University of Michigan}\city{Ann Arbor}\state{MI}\country{USA}}
\author{Biaoshuai Tao}
\email{bstao@sjtu.edu.cn}
\orcid{0000-0003-4098-844X}
\affiliation{%
  \institution{Shanghai Jiao Tong University}\city{Beijing}\country{China}}
\author{Lirong Xia}
\email{xialirong@gmail.com}
\orcid{0000-0002-9800-6691}
\affiliation{%
  \institution{Rutgers University and DIMACS}\city{Piscataway}\state{NJ}\country{USA}}

\begin{abstract}
We investigate a voting scenario with two groups of agents whose preferences depend on a ground truth that cannot be directly observed. The majorities' preferences align with the ground truth, while the minorities disagree. Focusing on strategic behavior, we analyze situations where agents can form coalitions up to a certain capacity and adopt the concept of ex-ante Bayesian $\kd$-strong equilibrium, in which no group of at most $\kd$ agents has an incentive to deviate. Our analysis provides a complete characterization of the region where equilibria exist and yield the majority-preferred outcome when the ground truth is common knowledge. This region is defined by two key parameters: the size of the majority group and the maximum coalition capacity. When agents cannot coordinate beyond a certain threshold determined by these parameters, a stable outcome supporting the informed majority emerges. The boundary of this region exhibits several distinct segments, notably including a surprising non-linear relationship between majority size and deviation capacity. Our results reveal the complexity of the strategic behaviors in this type of voting game, which in turn demonstrate the capability of the ex-ante Bayesian $\kd$-strong equilibrium to provide a more detailed analysis.

\end{abstract}

\begin{document}
\newtheorem{thm}{Theorem}
\newenvironment{thmnb}[1]
  {\renewcommand{\thethm}{\ref{#1}}%
   \addtocounter{thm}{-1}%
   \begin{thm}}
  {\end{thm}}

\newtheorem{claim}{Claim}
\newtheorem{prop}{Proposition}
\newtheorem{lem}{Lemma}
\newenvironment{lemnb}[1]
  {\renewcommand{\thelem}{\ref{#1}}%
   \addtocounter{lem}{-1}%
   \begin{lem}}
  {\end{lem}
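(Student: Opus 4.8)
\emph{Proof sketch.} Since the ground truth is common knowledge, the only randomness remaining at the ex-ante stage is the realized assignment of agents to the informed majority and the dissenting minority. Write $n$ for the number of agents, let $\mu$ denote the majority size, and let $k$ be the coalition capacity. The statement is an ``if and only if'' characterization of the parameter region $(\mu,k)$ in which a majority-supporting ex-ante Bayesian $k$-strong equilibrium exists, so the plan is to prove the two inclusions separately and then to reconcile their common boundary, which is where the non-linear segment should appear.

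For the \emph{positive} direction (parameters inside the region), I would exhibit the natural candidate profile in which every agent votes for the majority-preferred alternative and show it is a $k$-strong equilibrium. The key reduction is a ``worst-coalition'' lemma: among all deviations available to a group of size at most $k$, the most profitable one has every coalition member cast its vote for the minority-preferred alternative, because utilities are monotone in the realized margin and a coalition that wants to overturn the outcome can do no better than to pool all of its weight behind the flip. With this reduction in hand, the existence claim collapses to a single scalar inequality: the ex-ante gain from flipping the outcome, integrated over the random majority/minority assignment, must be nonpositive. I would then verify that this inequality holds exactly on the claimed region.

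For the \emph{negative} direction (parameters outside the region) the burden is heavier, since I must show that \emph{every} profile yielding the majority outcome admits a profitable deviation of size at most $k$, not merely that the canonical profile fails. Here I would argue structurally: any profile that produces the majority outcome necessarily leaves ``slack'' in the vote margin, and a coalition seeded with the dissenting minority members (who strictly gain from flipping) together with a bounded number of co-opted majority members can convert that slack into a swing. Quantifying how many majority members are needed, as a function of the margin distribution, is what forces a case analysis and is the source of the boundary's several distinct segments.

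The step I expect to be the main obstacle is establishing the \emph{non-linear} portion of the boundary. The linear segments should fall out of the pivotality count above, but the non-linear relationship arises because the most effective coalition must simultaneously suppress enough majority votes \emph{and} supply enough minority votes, and the optimal trade-off between these two effects is itself an optimization whose value scales nonlinearly in $\mu$. The hard part will be to isolate this trade-off as a small combinatorial optimization, solve it in closed form, and show that its threshold coincides with the curve in the statement; matching the impossibility construction to this same curve—so that the two directions meet exactly, with no gap—will require particular care at the integer boundary cases.
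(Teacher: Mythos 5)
Your proposal starts from a misreading of the model, and this undermines both directions. You write that ``the ground truth is common knowledge'' and that the only ex-ante randomness is ``the realized assignment of agents to the informed majority and the dissenting minority.'' In the paper the situation is exactly the reverse: the type composition is essentially fixed (a fraction $\alpha$ of majority agents), while the world state $W \in \{L,H\}$ is \emph{unobservable} and agents only hold conditionally i.i.d.\ noisy signals. The ``informed majority decision'' is the outcome the majority \emph{would} choose if it knew $W$; the entire difficulty is that no one knows $W$. Consequently your candidate profile for the positive direction --- ``every agent votes for the majority-preferred alternative'' --- is not even well-defined as a strategy, since a strategy may only condition on the private signal, not on $W$. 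The paper's equilibria are instead mixed, signal-contingent profiles: majority agents play $(\bpl^{\maj},\bph^{\maj})$ obtained by solving the system $\{\thd_H = \tfrac12+\xi,\ \thd_L = \tfrac12\}$ on \emph{expected vote shares} and then perturbing to reach a non-dominated boundary, while minority agents either all vote $\bA$ (Segments 1--2) or split between two anti-informative strategies (Segments 3--4). The machinery your sketch lacks entirely is this expected-vote-share calculus (Theorem~\ref{thm:arbitrary}, imported from prior work), which is what converts ``fidelity $\to 1$ under any size-$\xi\ag$ deviation'' into tractable inequalities $\thd_H' > \tfrac12$, $\thd_L' < \tfrac12$, and ultimately into the optimization $\max \min(\xi_h,\xi_\ell)$ over the minority parameters $(\gamma, \bph^1, \bpl^0)$ whose case analysis produces the four segments.

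Your negative direction also fails structurally. You propose coalitions ``seeded with the dissenting minority members \ldots together with a bounded number of co-opted majority members.'' The paper proves (Lemma~\ref{lem:acctoeq} and Lemma~\ref{lem:minority_deviate}) that majority agents can never be part of a profitable deviation: any deviation that harms fidelity strictly lowers their utility, and any deviation that preserves fidelity yields vanishing gains. All deviating coalitions consist of minority agents only, and their two extremal deviations (all switch to $\bA$ to push $\thd_L'\ge\tfrac12$, or all switch to $\bR$ to push $\thd_H'\le\tfrac12$) are measured against the margins $\thd_H - \tfrac12$ and $\tfrac12 - \thd_L$, not against realized pivotality counts. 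Relatedly, your guess about the source of the non-linear segment (a trade-off between suppressing majority votes and supplying minority votes) does not match the paper: the non-linearity comes from optimizing, within the case $\bph^1=1$, the bound $\hat\xi$ over the type-$0$ minority parameter $1-\bpl^0$ (Lemmas~\ref{lem:seg4_upper_fix} and~\ref{lem:seg4_upper_max}), i.e., from the internal structure of the \emph{minority's own} strategy split, with the quadratic first-order condition producing the square root in $\xinl$. As written, your plan could not be repaired locally; it would need to be rebuilt on the incomplete-information foundation.
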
}

\newenvironment{propnb}[1]
  {\renewcommand{\theprop}{\ref{#1}}%
   \addtocounter{prop}{-1}%
   \begin{prop}}
  {\end{prop}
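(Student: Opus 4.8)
The plan is to prove the characterization in both directions: that inside the described region the candidate profile is an ex-ante Bayesian $\kd$-strong equilibrium yielding the majority-preferred outcome, and that on and beyond the boundary no such equilibrium exists. Since the ground truth is common knowledge, the only residual uncertainty is over the random partition of the agents into the informed majority and the dissenting minority, so the ex-ante expected utilities reduce to expectations over this partition. I would begin by fixing the natural candidate profile---every agent votes according to its own preference (majority types for the ground truth, minority types against it)---and computing the induced outcome together with each type's ex-ante expected payoff as a function of the majority size and $\kd$.

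For the \emph{existence} direction, I would take an arbitrary coalition $C$ with $|C|\le\kd$ and show it has no profitable joint deviation. The key observation is that a deviation is profitable only if \emph{every} member strictly gains, which forces $C$ to consist of agents who all prefer the same outcome; a majority-type agent never benefits from flipping the result away from the ground truth, so the only threatening coalitions are essentially those of dissenters. I would then bound the number of votes such a coalition can swing and argue that, once the majority size exceeds the stated threshold, the induced change in the outcome distribution cannot raise the ex-ante utility of all members simultaneously. This step is where the interaction between $\kd$ and the majority size produces the binding constraint.

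For the \emph{non-existence} direction I would, for each violated boundary condition, exhibit an explicit coalition of size at most $\kd$ whose coordinated deviation strictly improves every member's ex-ante payoff. I expect the several boundary segments to correspond to qualitatively different optimal deviations: in one regime a ``pure'' dissenting coalition suffices, while in the regime giving rise to the non-linear relationship the optimal coalition mixes types and exploits the variance of the realized group sizes, so that the swing it can effect scales non-linearly with $\kd$. Matching the two directions segment by segment is what yields the exact boundary.

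The main obstacle will be the non-linear segment. There, pinning down the optimal deviating coalition amounts to a combinatorial optimization of coalition composition weighted by the distribution of the realized group sizes, and establishing tightness requires showing both that this coalition is profitable exactly when the condition fails and that no coalition is profitable when it holds. Carefully controlling this distribution-weighted swing---rather than a worst-case vote count---is what I expect to drive the surprising non-linearity and to be the most delicate part of the argument.
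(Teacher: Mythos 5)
Your proposal rests on a misreading of the model, and this undermines both directions of the argument. In the paper's game the ground truth is \emph{not} common knowledge: the world state $W\in\{L,H\}$ is hidden, and each agent only observes a noisy private signal ($P_{hH}>P_{hL}$). The phrase ``when the ground truth is common knowledge'' only defines the \emph{benchmark} (the informed majority decision), not the information structure of the game. Likewise, the partition into majority and minority types is deterministic (a fixed fraction $\alpha$), so there is no ``random partition'' to take expectations over; the ex-ante expectation is over the world state and the signal realizations. As a result, your candidate profile --- ``every agent votes according to its own preference given the ground truth'' --- is not even a well-defined strategy in this game, since strategies map signals (not states) to votes. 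The actual equilibria are carefully calibrated mixed strategies: e.g., for Segments 1--2 all minority agents always vote $\bA$ while majority agents play a pair $(\bpl^{\maj},\bph^{\maj})$ obtained by solving the expected-vote-share equations $\{\thd_H=\tfrac12+\xi,\ \thd_L=\tfrac12\}$ and then perturbing to the boundary of the non-dominated region; for Segment 4 the minority splits across the two ``anti-informative'' strategies. None of this structure is visible from your starting point.

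The core machinery you are missing is the reduction to expected vote shares and the ensuing optimization. The paper first shows only minority coalitions matter (if the deviation's fidelity still converges to 1 the gain vanishes; if not, majority members strictly lose), then reduces all deviations to two extremes (a $\xi$-fraction switching to always-$\bA$ to push $\thd_L\ge\tfrac12$, or to always-$\bR$ to push $\thd_H\le\tfrac12$), and finally converts existence/non-existence into maximizing $\min(\xi_h,\xi_\ell)$ over the minority strategy parameters $(\gamma,\ \bpl^0,\ \bph^1)$, with concentration results from prior work linking vote-share margins to fidelity. In particular, your explanation of the non-linear segment --- ``variance of the realized group sizes'' driving a distribution-weighted swing --- is wrong: group sizes are deterministic, and $\xinl$ arises because, in the corner case $\bph^1=1$, the optimizer of $\hat\xi$ over $1-\bpl^0$ is an \emph{interior} root of a quadratic, which produces the square-root expression in Equation~(\ref{eq:xinl}). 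Without the vote-share reduction and this optimization, neither the boundary segments nor their tightness can be recovered.
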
}

\newenvironment{smallblock}{\scriptsize}{\par}

\newtheorem{asm}{Assumption}
\newtheorem{conj}{Conjecture}

\newtheorem{Alg}{Algorithm}
\newtheorem{prob}{Problem}
\newtheorem{prof}{Proof}
\newtheorem{coro}{Corollary}
\newenvironment{coronb}[1]
  {\renewcommand{\thecoro}{\ref{#1}'}%
   \addtocounter{coro}{-1}%
   \begin{coro}}
  {\end{coro}}
\newtheorem{remark}{Remark}

\theoremstyle{acmdefinition}
\newtheorem{dfn}{Definition}

\newtheorem{ex}{Example}

\newcommand\qishen[1]{{\color{blue} \footnote{\color{blue}Qishen: #1}} }
\newcommand{\Qishen}[1]{\textbf{\color{orange} [Qishen: #1\textbf{]}}} 
\newcommand{\Biaoshuai}[1]{\textbf{\color{pink} [Biaoshuai: #1\textbf{]}}} 
\newcommand\gs[1]{{\color{blue} \footnote{\color{blue}Grant: #1}} }

\newcommand{\blue}[1]{\textcolor{blue}{#1}}

\newcommand{\sag}{i}
\newcommand{\ag}{n}
\newcommand{\ut}{u}
\newcommand{\vt}{v}
\newcommand{\Ut}{U}
\newcommand{\stg}{\sigma}
\newcommand{\stgp}{\Sigma}

\newcommand{\acc}{A}
\newcommand{\err}{I}

\newcommand{\wos}{k} 
\newcommand{\Wos}{K} 
\newcommand{\Wosset}{\mathcal{W}} 
\newcommand{\Wosrv}{W} 
\newcommand{\sig}{m} 
\newcommand{\sigi}{s}
\newcommand{\Sigrv}{\Psi}
\newcommand{\Sig}{M} 
\newcommand{\Sigset}{\mathcal{S}} 

\newcommand{\bA}{\mathbf{A}}
\newcommand{\bR}{\mathbf{R}}
\newcommand{\bP}{\mathbf{P}}

\newcommand{\Thd}{\mu}
\newcommand{\thd}{f}
\newcommand{\hthd}{\hat{\thd}}
\newcommand{\thdmaj}{\theta}

\newcommand{\agf}{\alpha_{F}}
\newcommand{\agu}{\alpha_{U}}
\newcommand{\agc}{\alpha_{C}}
\newcommand{\aga}{\alpha_{A}}
\newcommand{\agi}{\alpha_{I}}
\newcommand{\tf}{\ag_{F}}
\newcommand{\tu}{\ag_{U}}
\newcommand{\tc}{\ag_{C}}
\newcommand{\ta}{\ag_{A}}
\newcommand{\ti}{\ag_{I}}
\newcommand{\lp}{\lambda}
\newcommand{\instance}{\mathcal{I}}

\newcommand{\ps}{PS} 
\newcommand{\eps}{\tilde{PS}} 
\newcommand{\rw}{R} 
\newcommand{\prQ}{\vpr} 
\newcommand{\pr}{q} 
\newcommand{\vpr}{\mathbf{q}} 

\newcommand{\Ex}{\mathbb{E}}

\newcommand{\rp}{a} 
\newcommand{\rpset}{\mathcal{A}} 
\newcommand{\sigp}{S}
\newcommand{\rpp}{A}
\newcommand{\rwd}{R} 

\newcommand{\bpl}{\beta_{\ell}}
\newcommand{\bph}{\beta_{h}}

\newcommand{\astg}{\bar{\stg}} 
\newcommand{\abpl}{\bar{\bpl}}
\newcommand{\abph}{\bar{\bph}}
\newcommand{\aut}{\bar{\ut}}
\newcommand{\astgp}{\bar{\stgp}}
\newcommand{\abpls}{\bar{\bpl^*}}
\newcommand{\abphs}{\bar{\bph^*}}

\newcommand{\nbph}{\hat{\bph}}
\newcommand{\nbpl}{\hat{\bpl}}

\newcommand{\kd}{k} 
\newcommand{\kc}{c}

\newcommand{\qi}{interim}
\newcommand{\afrac}{\alpha}
\newcommand{\ds}{\delta} 

\newcommand{\jtruthful}{\text{truthful}}
\newcommand{\jdeviate}{\text{deviator}}

\newcommand{\dut}{\Delta u}

\newcommand{\func}{f}
\newcommand{\gunc}{g}
\newcommand{\munc}{m}
\newcommand{\kde}{k_E}
\newcommand{\kdq}{k_B}
\newcommand{\maxdps}{\overline{\Delta \ps}}

\newcommand{\dpsh}{\Delta h}
\newcommand{\dpsl}{\Delta \ell}

\newcommand{\bphth}{\bar{b}_h}
\newcommand{\bphtl}{\bar{b}_l}

\newcommand{\inst}{\mathcal{I}}
\newcommand{\iset}{I}

\newcommand{\xrv}{X}

\newcommand\blfootnote[1]{%
  \begingroup
  \renewcommand\thefootnote{}\footnote{#1}%
  \addtocounter{footnote}{-1}%
  \endgroup
}

\newcommand{\textinst}{environment}
\newcommand{\piv}{piv}
\newcommand{\exshare}{excess expected vote share}
\newcommand{\infdif}{\Delta}
\newcommand{\maj}{{maj}}
\newcommand{\mino}{{min}}

\newcommand{\fone}{\gamma_1}
\newcommand{\fz}{\gamma_0}
\newcommand{\sign}{\text{sign}}

\newcommand{\xinl}{\xi_{NL}}
\newcommand{\anl}{\alpha_{NL}}

\newcommand{\nocontentsline}[3]{}
\let\origcontentsline\addcontentsline
\newcommand\stoptoc{\let\addcontentsline\nocontentsline}
\newcommand\resumetoc{\let\addcontentsline\origcontentsline}
\begin{titlepage}

\maketitle

\setcounter{tocdepth}{1} 
\tableofcontents
\end{titlepage}

\section{Introduction}

Majority voting is the most studied and applied method for making efficient, fair, and informed decisions from two alternatives. In practice, it is used in diverse scenarios such as political elections, corporate board meetings, academic or industry hiring processes, and even popular shows, just to name a few. Theoretically, extensive research in social choice supports majority voting as reliable both procedurally and in its conclusions. When voters have complete information about their preferences, majority voting ensures that casting a dishonest vote does not provide an advantage. In scenarios where agents have incomplete information, such as a jury system, the famous Condorcet Jury Theorem~\citep{Condorcet1785:Essai} shows that agents can collectively reveal the ground truth determining their preferences and reach the “correct” decision with high probability.

\begin{ex}[Condorcet Jury Theorem]
    \label{ex:CJT}
    A group of jurors vote on whether a defendant is guilty or not. Each juror receives an independent signal with $p > \frac12$ probability to be correct. The Condorcet Jury Theorem states that (1) the likelihood that a correct decision is made by the majority vote is larger than that made by any juror individually, (2) such likelihood increases as the number of voters increases, and (3) the likelihood converges to 1 and as the number of voters goes to infinity. 
\end{ex}

The Condorcet Jury Theorem, while foundational, relies on two restrictive assumptions about voter behavior and motivation. Agents are assumed to vote {\em informatively}, i.e., honestly reflect their information in the vote and to share the identical objective of selecting the alternative that matches the ground truth. These assumptions significantly limit the applicability of the theorem, as they fail to capture many real-world voting scenarios where voters may have strategic considerations or divergent objectives. The following example illustrates these limitations.


\begin{ex}[Voters with conflicting preferences and incomplete information]
    \label{ex:motive}
    Consider a (hypothetical) House vote on whether to accept (denoted as $\bA$) or reject (denoted as $\bR$) a mixed economic policy. The consequence of the policy can be a higher inflation rate (denoted as $H$) if the expansionary component takes a primary effect, or a lower inflation rate (denoted as $L$) if the contractionary component takes a primary effect. The representatives can be categorized into two wings based on their preferences towards the consequence. The majority Doves support a higher inflation rate to stimulate the economy, while the minority Hawks are concerned about inflation and prefer a lower inflation rate. Therefore, if the policy is sure to increase the inflation rate, the Doves will vote for it, while the Hawks will vote against it, and vice versa. However, the real effect of the policy is unknown until it is implemented. Every representative has private noisy information about the real effect. Both wings of representatives will try to reach their preferred consequence via a majority vote. Under what circumstances does the vote lead to the consequences favored by the majority?
\end{ex}


Example~\ref{ex:motive} illustrates the complexity of voting problems. There are two groups of agents with drastically different preferences, prompting each group to maximize the probability of achieving its preferred outcome. However, because their preferences hinge on an unobservable ground truth, the strategic behaviors of agents can be complex. Even under the setting where agents share identical preferences, \citet{austen1996information} show that strategic behaviors can be counterintuitive: agents reason about their best strategy under the hypothetical situation that  all other votes form a tie, because this is the only situation in which their vote is decisive.  The information provided to an agent in assuming such a situation, typically causes the agent to deviate from informative voting that would otherwise lead to a favorable result.  Moreover, although this hypothetical situation is often exponentially rare, it impacts the agents vote in every situation.  



More importantly, while individual agents may have limited influence on voting outcomes, they can increase their impact by forming strategic coalitions with others who share similar objectives. The widespread adoption of the Internet and social media has made such coordination significantly easier. However, the capability of group coordination varies depending on organizational structures. They may be limited to local communities and trusted networks, or extend as broadly as political parties.
These varying capabilities for group coordination call for a more refined framework for analyzing strategic behavior. In this paper, we adopt the {\em ex-ante Bayesian \kd-strong equilibrium}~\citep{han2025kstrong}. This framework explicitly considers scenarios where no group of up to $\kd$ agents has an incentive to deviate.  
To address the aforementioned issue from Austen-Smith and Banks, that agents may act only based on a hypothetical that occurs with vanishingly small probability, we focus on approximate equilibria, namely {\em $\varepsilon$-ex-ante Bayesian $\kd$-strong equilibrium}~\citep{han2025kstrong}, following the convention in previous literature~\cite{han2023wisdom,deng2024aggregation}. 
We focus on scenarios where strategic behavior leads to an {\em informed majority decision}, i.e., the alternative preferred by the majority of agents if all agents had complete knowledge of the ground truth.  This concept has been widely adopted as the benchmark for a desired voting outcome in prior literature examining agents with heterogeneous preferences ~\citep{feddersen1997voting, schoenebeck21wisdom, han2023wisdom,deng2024aggregation}. In settings with conflicting interests among agents, the informed majority decision represents the alternative preferred by the majority group when fully informed. For instance, in Example 1, the informed majority decision would be to accept the policy if it increases inflation and reject it otherwise. Our research seeks to address the following research question.



\begin{center}
    \textbf{When does majority voting achieve informed decisions under capacitated coalitional strategic behavior?}
\end{center}

Under a special case where agents share common preferences, \citet{han2023wisdom} guarantee the existence of strong equilibrium that reaches the informed majority decision even for unlimited coalitional strategic behavior. However, the question remains largely open when agents have drastically different or even conflicted preferences. \citet{deng2024aggregation} established conditions for the existence of strong equilibria that resist unlimited group deviations. Nevertheless, their analysis does not address scenarios where agents have limited capability of group coordination. This leaves open critical questions about voting outcomes and their alignment with informed majority decisions in cases where such strong equilibria do not exist.

\subsection{Our Contribution}
We completely address the question above by obtaining a closed form of the threshold $\xi^*$ as a function of the fraction of majority agents $\alpha$. This threshold represents the maximum fraction of agents that can be prevented from deviating while still maintaining an (approximated) equilibrium that achieves the informed majority decision.  

\begin{thmnb}{thm:thresholdk}[Threshold $\xi^*$ (informal)] Let $\xi^*(\alpha)$ be the threshold curve. For any $\xi < \xi^*(\alpha)$, there exists an $\varepsilon$-ex-ante Bayesian $\xi\ag$-strong equilibrium such that (1) no agents play weakly dominated strategies, (2) the informed majority decision is reached with probability converging to 1, and (3) $\varepsilon$ converges to 0 as $\ag$ goes to infinity. When $\xi = \xi^*$, no such equilibrium exists. 
\end{thmnb}

\begin{figure}[htbp]
    \subfigure[Case 1: four segments.]{
        \centering
        \includegraphics[width = 0.45\linewidth]{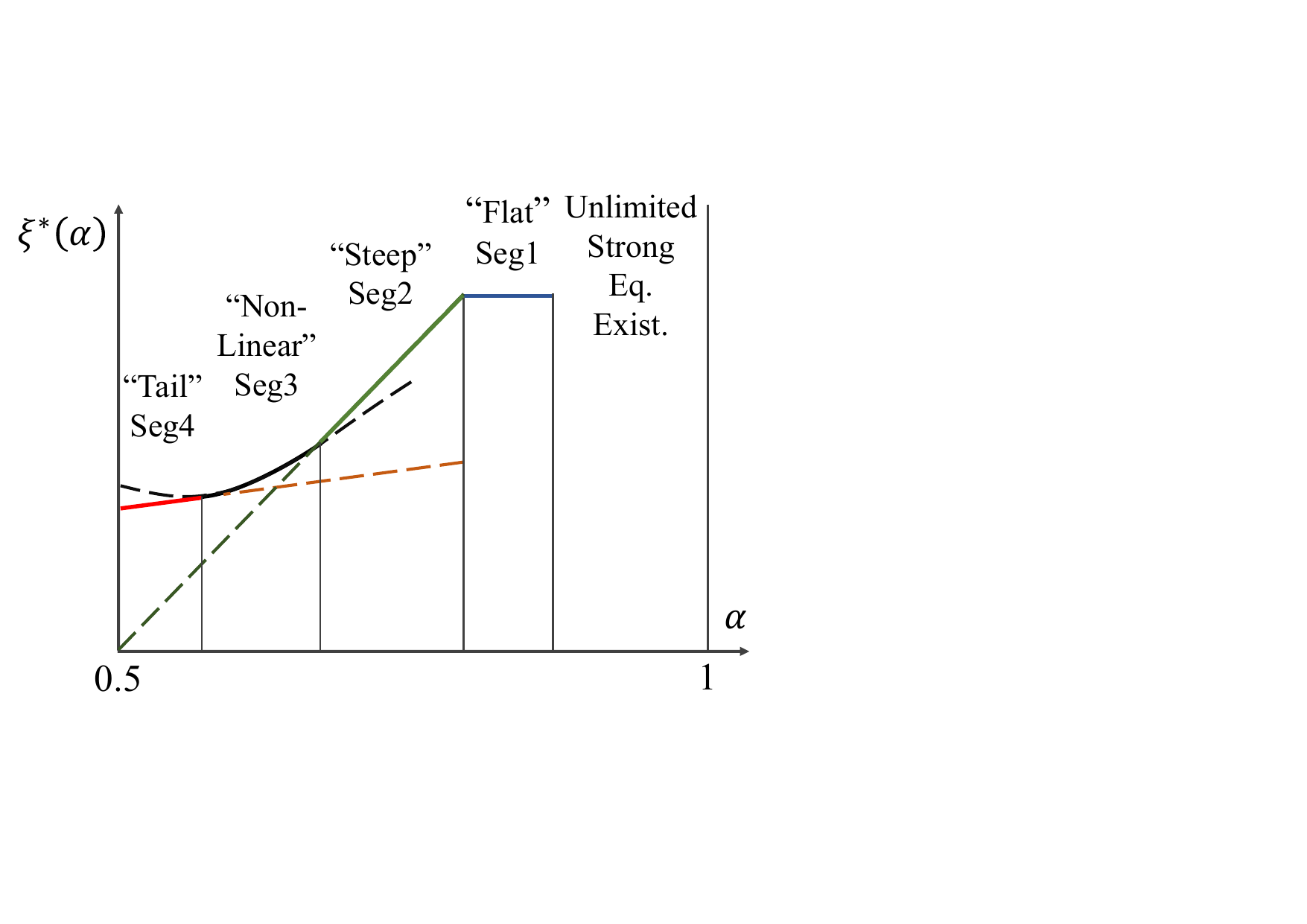}}
    \subfigure[Case 2: three segments.]{
        \centering
        \includegraphics[width = 0.45\linewidth]{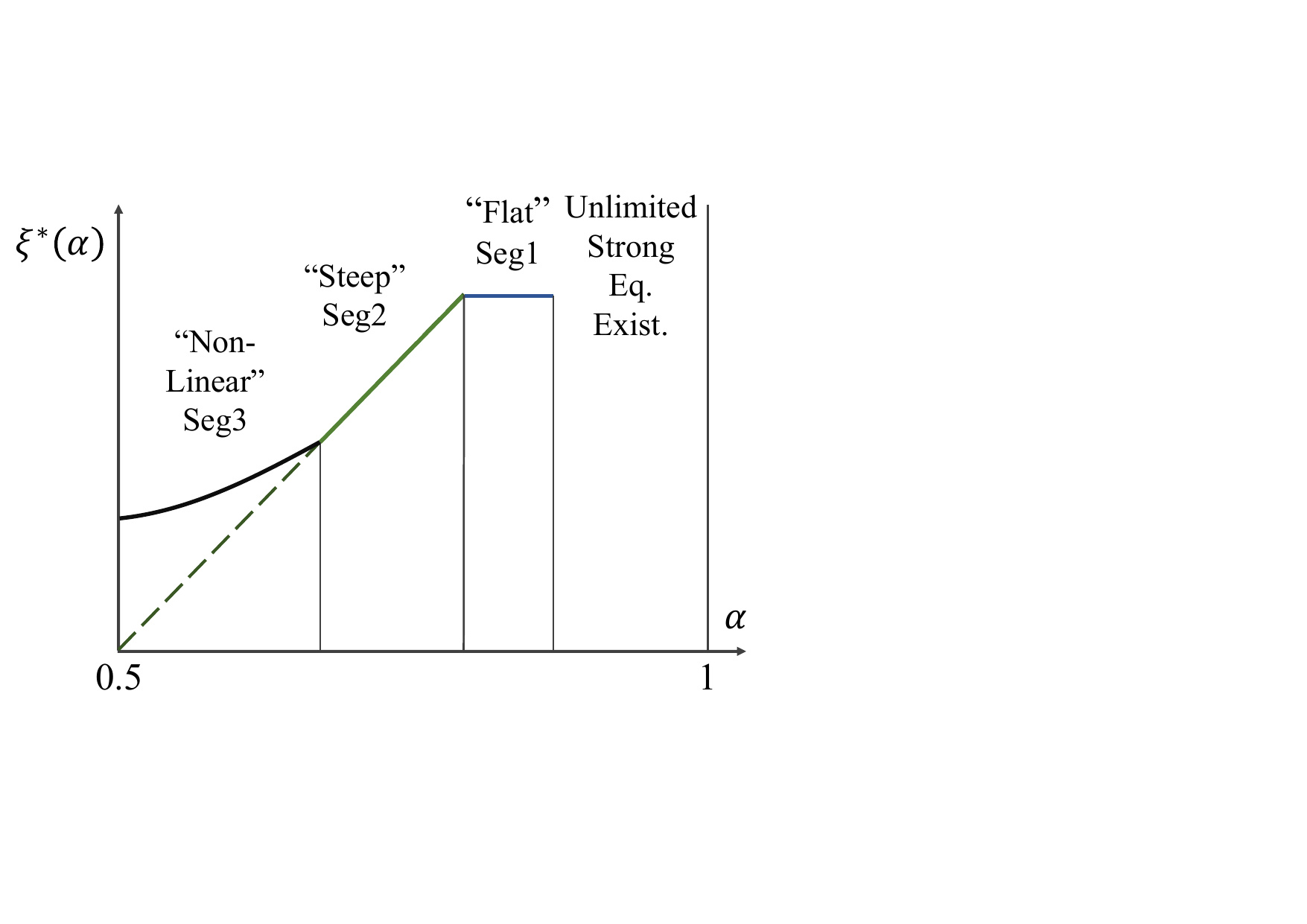}}
    \caption{The threshold curve $\xi^*(\alpha)$. The segments are proved in our Theorem~\ref{thm:thresholdk}. The existence of (unlimited) strong equilibrium is from~\cite{deng2024aggregation}. \label{fig:curve}}
\end{figure}

As illustrated in Figure~\ref{fig:curve}, 
the threshold $\xi^*(\alpha)$ has three or four segments depending on the distribution of the private information agents receive. We label them as Segment 1 to 4 in decreasing order on $\alpha$ and assign each segment a descriptive name. Firstly, there is the ``Flat'' Segment 1, where $\xi^*$ remains a constant value despite the decreasing $\alpha$. In the ``Steep'' Segment 2, $\xi^*$ declines linearly as $\alpha$ decreases, with the linear projection reaching 0 when $\alpha$ goes to $\frac12$. In the ``Non-Linear'' Segment 3, $\xi^*$ displays a surprising non-linear dependence on $\alpha$. If Segment 3 ends before $\alpha = \frac12$, there will be a linear ``Tail'' Segment 4; otherwise, $\xi^*$ ends with three segments.


The threshold $\xi^*$ illuminates the interplay between coalition size, strategic behavior, and decision outcomes in voting systems. If agents cannot coordinate within at least a $\xi$ fraction of the population, a stable outcome favoring the informed majority still emerges. Notably, $\xi^*$ maintains a non-zero value as $\alpha$ approaches $\frac{1}{2}$, indicating that even with a slim numerical advantage, an uncoordinated minority still yields a stable outcome favoring the bare majority.

Our results reveal the complexity and variety of the strategic behaviors of conflicting agents with incomplete information, which in turn demonstrates the capability of the ex-ante Bayesian $\kd$-strong equilibrium to provide a more detailed perspective compared to its unlimited group-size counterpart. The presence of multiple segments with distinct shapes reveals drastically different agent behaviors as the power balance between types shifts. For example, when the majority fraction is large, all minority agents converge on the same alternative in an equilibrium. However, as this fraction approaches $\frac{1}{2}$, minority agents split evenly between two alternatives in the equilibrium.


At a high level, the existence of an equilibrium hinges on the fraction of votes each alternative gets in expectation. For an equilibrium to exist, the informed majority's preferred choice must secure more than half of the expected votes under both ground truth states, even when a fraction $\xi$ of agents deviates from their prescribed strategies. Consequently, the majority agents employ strategies that maximize their margin above the $\frac{1}{2}$ threshold in expected vote share. Minority agents, conversely, work to diminish this margin and push the vote share below the majority threshold. The critical deviation threshold $\xi^*$ is determined by three competing factors: a majority's ability to build its margin, its requirement to maintain the margin under both ground truth states, and a minority's ability to deviate. This interplay makes finding $\xi^{(\alpha)}$ a complex analytical challenge.

We also analyze the relationship between the shape of the threshold $\xi^*$ and the distribution of the private information of the agents. When agents possess more accurate private information, the threshold $\xi^*$ exhibits contracts in range while increases in values. This observation aligns with the intuition that agents with more precise information are more certain about their preferences, thereby increasing the likelihood of reaching a stable outcome.

\subsection{Related Works}
The study of binary voting with incomplete information traces back to the famous Condorcet Jury Theorem~\citep{Condorcet1785:Essai}, which has ``formed the basis for the development of social choice and collective decision-making as modern research fields''~\citep{Nitzan17:Collective}. The Condorcet Jury Theorem considers the scenario where agents honestly reflect their information in the vote and gives merit to the majority vote to make a correct decision. Extensive liteature generalize the Condorcet Jury Theorem into a wider range of scenarios, including correlated private information~\citep{Nitzan84:Significance,Shapley84:Optimizing}, agents with different competencies~\citep{Nitzan80:Investment,Ben11:Condorcet}, and plurality voting with more alternatives~\citep{Young88:Condorcet,Goertz14:Condorcet}. Experimental studies have also been conducted to examine the Theoerm in real-world scenarios~\citep{Ignacio2014hypo,Battaglini2010swing,Goeree2011experimental}. \citet{austen1996information} pioneers the study of the Condorcet Jury Theorem under a game theory context and show that the classical non-strategic behaviors may fail to form a Nash equilibrium. Consequently, a large literature focuses on the strategic behaviors of the voters and the existence of an equilibrium that reaches a correct outcome~\citep{Meirowitz02:Informative, wit1998rational,myerson1998extended,feddersen1997voting,Feddersen1998:innocent}.

A recent line of work brings up the coalitional strategic behavior in voting with incomplete information. \citet{schoenebeck21wisdom} design a mechanism to incentivize truthful reporting and reach the outcome that favors the majority as the ground truth is known to all. \citet{han2023wisdom} reveals a surprising equivalent between strategies reaching the informed majority decision and those forming an (approximated) strong equilibrium. Their work provides strong support for the majority vote reaching a good outcome via strategic behavior. Both studies adopt a similar model where agents have diverse preferences yet retain some common value. \citet{deng2024aggregation} study agents with drastically different preferences and characterize conditions that a strong equilibrium exists. All three papers apply the solution concept of $\varepsilon$-strong Bayes Nash equilibrium which no group (with unlimited size) of agents has incentives to deviate. 

While \citet{austen1996information} and many works~\cite{Duggan01:Bayesian, wit1998rational,coughlan2000defense} assume identical utility functions across agents, a subsequent research has explored more diverse utility structures. In the first direction, agents have different ordinal preference yet share some common values~\cite{feddersen1997voting,Gerardi07:Deliberative,han2023wisdom,schoenebeck21wisdom,Morgan2008Polls}. These models typically feature three agent types: two groups with fixed preferences for two alternatives respectively, and a third group whose preferences align with the ground truth. In the second direction which our paper follows, agents have opposite preferences contingent on the ground truth~\citep{bhattacharya2013preference,bhattacharya2023condorcet,kim2007swing,ali2018adverse}.

Our study also closely relates to research on group strategic behavior, particularly those limited capacity. There is extensive literature~\citep{desmedt2010equilibria,barbera2001voting,rabinovich2015analysis,holzman1997strong,yin2011nash,harks2012existence} examining group-strategic behavior, which originated from the concept of strong Nash equilibrium in complete information games~\citep{Aumann59:Acceptable}. In the context of Bayesian games, group strategic behavior has been investigated across multiple domains, including cooperative games~\citep{ichiishicooperative,ichiishi1996bayesian}, implementation theory in mechanism design~\citep{hahn2001coalitional,safronov2018coalition}, and voting.
Recent research has highlighted the importance of examining capacitated coalitional behavior and has introduced various solution concepts for this scenario.\citet{guo2022robust} propose coalitional interim equilibrium parameterized by an exogenously given set of all admissible coalitions. \citet{han2025kstrong} propose (ex-ante) Bayesian $\kd$-strong equilibrium parameterized by the largest capacity of the deviating group $\kd$. \citet{Abraham2008:lower} also propose a $k$-coalitional equilibrium where the deviators will share all their private information in the group. \citet{Abraham2006:distributed} study capacitated strategic behavior under a non-Bayesian game, where agent have complete information about their preferences. The introduction of capacitated strong equilibria enables theories to model a broader range of real-world scenarios and provides more nuanced insights, particularly in situations where unlimited strong equilibria are too restrictive to exist~\citep{Gao2019incentivizing}.  
\section{Preliminaries}

$\ag$ agents vote for two candidates $\bA$ and $\bR$. The majority rule is applied, so the candidate with more than $\frac{\ag}{2}$ votes becomes the winner. There is a world state (ground truth) $\Wosrv \in \{L, H\}$ that affects the preferences of the agents. The common prior on the world state is denoted by $P_H$ and $P_L$. The world state is not directly observable. Instead, each agent receives a private signal $\sig \in \{\ell, h\}$ whose distribution depends on the world states. The signals of different agents are i.i.d. conditioned on the world state. For a world state $\Wosrv$ and a signal $\sig$, let $P_{\sig \Wosrv}$ denote the probability that an agent receives $\sig$ signal when the world state is $\Wosrv$. We assume that an $h$ signal is more likely to occur in world state $H$ than in world state $L$, i.e., $P_{hH} > P_{hL}$, and vice versa. Let $\infdif = P_{hH} - P_{hL}$ be the difference on the signal distribution under two world states. From the fact that $P_{hH} + P_{\ell H} = P_{\ell L} +P_{hL} = 1$, we have the following observation. 

\begin{prop}
    $\infdif = P_{hH} - P_{hL} = P_{\ell L} - P_{\ell H} = P_{hH}\cdot P_{\ell L} - P_{hL}\cdot P_{\ell H} = P_{hH} + P_{\ell L} - 1$.
\end{prop}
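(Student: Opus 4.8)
The plan is to verify the chain of equalities one link at a time, relying only on the two normalization constraints $P_{hH} + P_{\ell H} = 1$ and $P_{hL} + P_{\ell L} = 1$ stated in the preamble to the proposition. The entire statement is an algebraic identity, so no probabilistic reasoning is needed beyond these two relations: all the work lies in substitution and cancellation. Since $\infdif$ is defined as $P_{hH} - P_{hL}$, it suffices to show that the second, third, and fourth expressions each equal $P_{hH} - P_{hL}$.

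First I would handle $P_{hH} - P_{hL} = P_{\ell L} - P_{\ell H}$. Rewriting $P_{hH} = 1 - P_{\ell H}$ and $P_{hL} = 1 - P_{\ell L}$, the left-hand side becomes $(1 - P_{\ell H}) - (1 - P_{\ell L})$, where the additive constants cancel and leave $P_{\ell L} - P_{\ell H}$. Next, for the product form $P_{hH}\cdot P_{\ell L} - P_{hL}\cdot P_{\ell H}$, I would substitute the same two identities and expand; the key observation is that the quadratic cross term $P_{\ell H}P_{\ell L}$ appears with opposite signs and cancels, leaving precisely $P_{\ell L} - P_{\ell H}$, which matches the expression just derived. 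Finally, for $P_{hH} + P_{\ell L} - 1$, I would substitute $P_{hL} = 1 - P_{\ell L}$ and $P_{\ell H} = 1 - P_{hH}$ into the product form, expand, observe that the quadratic terms cancel once more, and regroup the remaining constant to obtain $P_{hH} + P_{\ell L} - 1$.

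There is no genuine obstacle here: each equality reduces to a single-line substitution, and the only point requiring a little care is choosing which of the two normalization identities to apply to each factor so that the quadratic cross terms cancel cleanly rather than accumulating. I would therefore present the argument compactly as a short displayed sequence of equalities, annotating each step with the identity being used.
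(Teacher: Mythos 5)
Your verification is correct and follows exactly the route the paper intends: the paper states this proposition as an immediate observation from the normalization identities $P_{hH} + P_{\ell H} = 1$ and $P_{hL} + P_{\ell L} = 1$, leaving the substitution-and-cancellation algebra implicit, which is precisely what you carry out. Each of your three links checks out (in fact the last equality admits an even shorter route, $P_{hH} - P_{hL} = P_{hH} - (1 - P_{\ell L}) = P_{hH} + P_{\ell L} - 1$, bypassing the product form), so there is nothing to add.
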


For each agent $i$, the utility function $\vt_i: \{H, L\}\times \{\bA, \bR\} \to \{0, 1, \cdots, B\}$ represents the preference of the agent. There are two types of agents, whose preferences depend on the world state. The majority type prefer $\bA$ in world state $H$ ($\vt(H,\bA) > \vt(H, \bR)$) and $\bR$ in world state $L$ ($\vt(L,\bA) < \vt(L, \bR)$), and the minority type prefer $\bR$ in world state $H$ ($\vt(H,\bA) < \vt(H, \bR)$) and $\bA$ in world state $L$ ($\vt(L,\bA) > \vt(L, \bR)$). Let $\alpha > 0.5$ be the approximated fraction of the majority agents.

The {\em informed majority decision} is the decision to be made if all the agent know the world state. In this setting, the informed majority decision follows the majority type agents' preference: the informed majority decision is $\bA$ in the world state $H$ and $\bR$ in the world state $L$.

\begin{ex}[Information Structure and Agent Preferences] 
    \label{ex:setting}
    Consider the House voting scenario in Example~\ref{ex:motive}. Suppose $\ag = 50$ agents vote on a mixed policy. The world state is the real effect of the policy, which can be a higher inflation rate ($H$) or a lower inflation rate ($L$). The prior that a high inflation rate occurs is $P_H = 0.6$ and $P_L = 0.4$. The distribution is $P_{hH} = 0.7$, $P_{\ell H} = 0.3$, $P_{hL} = 0.2$, and $P_{\ell L} = 0.8$. This means that an agent receives an $h$ signal with probability $0.7$ when the world state is $H$ and probability $0.2$ when the state is $L$. The utilities of the agents are illustrated in Table~\ref{tab:util}. In this example, we assume that agents of the same type share an identical utility function, which may not necessarily be true in general.

    \noindent\begin{minipage}{\linewidth}
\vspace{0.3cm}
\centering
\begin{tabular}{@{}ccccc@{}}
\toprule
Type         & $\vt_i(H,\bA)$         & $\vt_i(L,\bA)$         & $\vt_i(H,\bR)$         & $\vt_i(L,\bR)$         \\ \midrule
Majority Agents & 4         & 0          & 1           & 2          \\
Minority Agents & 2          & 3          & 3           & 1          \\
\bottomrule
\end{tabular}
\captionof{table}{Utility of agents.\label{tab:util}}
\end{minipage}
\end{ex}

A strategy $\stg$ of an agent $i$ maps his/her signal to a distribution on $\{\bA,\bR\}$, denoting the action of the agent. A strategy profile $\stgp$ is a vector of the strategies of all the agents. For an agent $i$, let $\bpl^i$ and $\bph^i$ be the probability that $i$ votes for $\bA$ when his/her private signal is $\ell$ and $h$, respectively. 


The {\em fidelity} of a strategy profile is the likelihood that the candidate preferred by the agents becomes the winner under this strategy profile. That is, $\bA$ becomes the winner when the world state is $H$, and $\ell$ become the winner when the world state is $\ell$. Let $\lp_{\Wosrv}^{\bA}(\stgp)$ and $\lp_{\Wosrv}^{\bR}(\stgp)$ be the probability that $\bA$ ($\bR$, respectively) becomes the winner when the world state is $W$ and the strategy profile is $\stgp$. Then the fidelity in the voting game is in the following form: 
\begin{equation*}
    \acc(\stgp) = P_H\cdot \lp_H^{\bA}(\stgp) + P_L\cdot \lp_L^{\bR}(\stgp). 
\end{equation*}

Similarly, we can define the (ex-ante) expected utility of an agent $i$ under a strategy profile $\stgp$: 
\begin{equation*}
    \ut_{i}(\stgp) =P_{L}(\lp_{L}^{\bA}(\stgp)\cdot\vt_{i}(L, \bA) + \lp_{L}^{\bR}(\stgp)\cdot \vt_{i}(L, \bR)) +  P_{H}(\lp_{H}^{\bA}(\stgp)\cdot\vt_{i}(H, \bA) + \lp_{H}^{\bR}(\stgp)\cdot \vt_{i}(H, \bR)). 
\end{equation*}


\paragraph{Non-dominated strategies}
Our theoretical results concern equilibria in which agents employ only non-dominated strategies. A strategy is dominated if playing a different strategy always brings the agent a higher expected utility.
\begin{dfn}
    A strategy $\stg_i$ is an (ex-ante) weakly dominated strategy for agent $i$ if there exists a strategy $\stg'_i$ such that, regardless of what other agents play $\stgp_{-i}$, $\ut_i((\stg_i, \stgp_{-i})) \le \ut_i((\stg_i, \stgp_{-i}))$, and there exists a $\stgp_{-i}$ such that $\ut_i((\stg_i, \stgp_i)) < \ut_i((\stg_i, \stgp_{-i}))$. 
\end{dfn}

The constraint of no agents playing weakly dominated strategies follows the convention in the literature~\citep{Feddersen96:Swing,feddersen1997voting,wit1998rational}, and we believe this restriction is mild and reasonable. Firstly, an agent playing a dominated strategy has incentives to deviate to a non-dominated strategy, which brings no utility loss in any circumstances and utility gain in some circumstances. This contradicts the intention of an equilibrium that predicts a stable outcome. Secondly, as shown in the following lemma, whether a strategy is (weakly) dominated can be easily determined, and playing a non-dominated strategy does not bring extra computational cost and difficulties to the agents. 

\begin{lem}
\label{lem:dominated}
    For an arbitrary agent $i$, a strategy $\stg = (\bpl, \bph)$ is NOT a weakly dominated strategy if and only if the following conditions hold. 
    \begin{itemize}
        \item If $i$ is a majority agent: $\bpl = 0$ or $\bph = 1$. 
        \item If $i$ is a minority agent: $\bpl = 1$ or $\bph = 0$. 
    \end{itemize}
\end{lem}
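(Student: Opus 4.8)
The plan is to reduce each agent's expected utility to a linear function of the pair $(q_L,q_H)$, where $q_W=P_{\ell W}\bpl+P_{hW}\bph$ is the probability that agent $i$ votes $\bA$ conditional on world state $W$, and then read off the dominance structure geometrically. The starting observation is that agent $i$'s vote affects the winner only when $i$ is pivotal: writing $\pi_W\ge 0$ for the probability that $i$ is pivotal in world state $W$ (a quantity depending only on $\stgp_{-i}$), we have $\lp_W^{\bA}(\stgp)=\pi_W q_W+c_W$ with $c_W$ independent of $i$'s strategy. Substituting into $\ut_i$ and writing $d_W:=\vt_i(W,\bA)-\vt_i(W,\bR)$ gives $\ut_i(\stgp)=\mathrm{const}+P_H d_H\,\pi_H q_H+P_L d_L\,\pi_L q_L$, where the constant does not depend on $(\bpl,\bph)$. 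For a majority agent $d_H>0>d_L$, so the payoff is increasing in $q_H$ and decreasing in $q_L$, each weighted by a nonnegative pivotality; for a minority agent the two signs flip. Thus the problem depends on $\stgp_{-i}$ only through the nonnegative vector $(\pi_L,\pi_H)$, and a strategy is non-dominated exactly when its $(q_L,q_H)$ cannot be simultaneously improved in these two objectives over the feasible parallelogram, for every weighting arising from some $\stgp_{-i}$.

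For the ``only if'' direction I would exhibit an explicit dominating deviation. Consider a majority agent with $\bpl>0$ and $\bph<1$. Moving in the direction $(\delta_\ell,\delta_h)\propto(-P_{hL},P_{\ell L})$ --- decreasing $\bpl$, increasing $\bph$ --- is feasible for a small step precisely because $\bpl>0$ and $\bph<1$, and a direct computation shows it leaves $q_L$ unchanged ($\Delta q_L=0$) while raising $q_H$ by a positive multiple of $\infdif=P_{hH}-P_{hL}$. Hence the induced change in utility is $P_H d_H\,\pi_H\,\Delta q_H\ge 0$ for \emph{every} $\stgp_{-i}$, and is strictly positive whenever $\pi_H>0$, which holds for at least one profile (e.g.\ all other agents randomizing uniformly). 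This deviation therefore weakly dominates, so any strategy with $\bpl>0$ and $\bph<1$ is dominated. The symmetric ``lower $q_L$ while fixing $q_H$'' direction and the minority case are handled identically.

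For the ``if'' direction, suppose a majority agent plays $\bpl=0$ (the case $\bph=1$ is symmetric) and that some $(\bpl',\bph')$ weakly dominates it; feasibility forces $\delta_\ell=\bpl'\ge 0$. Weak domination means $P_H d_H\,\pi_H\,\Delta q_H+P_L d_L\,\pi_L\,\Delta q_L\ge 0$ for every attainable $(\pi_L,\pi_H)$. Evaluating this along profiles whose pivotality is concentrated almost entirely in one world state forces $\Delta q_H\ge 0$ and (since $d_L<0$) $\Delta q_L\le 0$. Inverting the nonsingular linear map from $(\delta_\ell,\delta_h)$ to $(\Delta q_L,\Delta q_H)$, whose determinant is $\infdif$, yields $\delta_\ell=(P_{hH}\Delta q_L-P_{hL}\Delta q_H)/\infdif\le 0$; together with $\delta_\ell\ge 0$ this forces $\delta_\ell=0$, hence $P_{hH}\Delta q_L=P_{hL}\Delta q_H$, and with $\Delta q_L\le 0\le\Delta q_H$ this forces $\Delta q_L=\Delta q_H=0$, i.e.\ $(\bpl',\bph')=(\bpl,\bph)$. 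So no genuine dominating deviation exists.

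The crux --- and the step I expect to be the main obstacle --- is justifying that the attainable pivotality pairs $(\pi_L,\pi_H)$ can have an arbitrarily lopsided ratio, so that the two marginal conditions $\Delta q_H\ge 0$ and $\Delta q_L\le 0$ are each genuinely forced (for small fixed populations this ratio is bounded, and the ``iff'' can fail). I would establish the needed lopsidedness by constructing $\stgp_{-i}$ in which the other voters are tuned to sit near a tie in one world state while reaching a near-consensus in the other: the pivotality then decays only polynomially in the first world but exponentially in the second, so $\pi_L/\pi_H$ and, symmetrically, $\pi_H/\pi_L$ grow without bound as $\ag\to\infty$. This is exactly where the asymptotic nature of the setting enters, and it is the one place where binomial pivotality estimates, rather than pure linear algebra, are required.
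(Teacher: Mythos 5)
Your proposal follows essentially the same route as the paper's proof: linearize the utility difference through the pivotal event, exhibit an explicit feasible direction (raising $\bph$, lowering $\bpl$) that can never hurt and sometimes strictly helps whenever $\bpl>0$ and $\bph<1$, and defend the boundary strategies by playing against opponent profiles whose pivot probabilities are lopsided across the two states. The one substantive difference is that you explicitly identify, and sketch how to establish, the attainability of sufficiently lopsided pivotality ratios --- a step the paper's proof asserts without justification (``by constructing $\stgp_{-i}$ so that $\Pr[L\mid \piv,\stgp_{-i}]$ is sufficiently close to 0 or 1'') and which, as you correctly note, genuinely requires $\ag$ to be large relative to $B$ and the signal probabilities, since for small fixed $\ag$ the attainable pivot-odds ratio is bounded and the claimed equivalence can indeed fail.
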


\paragraph{Proof Sketch of Lemma~\ref{lem:dominated}.} We give the reasoning for a majority agent, and that for a minority agent is similar. A majority agent's expected utility increases with the probability that $\bA$ wins in world state $H$ and decreases with the probability that $\bR$ wins in world state $H$. Given that a signal $h$ implies the world state $H$ is more likely than a signal $\ell$, increasing $\bph$ and decreasing $\bpl$ usually helps the agent to increase his expected utility. Therefore, a strategy where $\bpl > 0$ and $\bph < 1$ is dominated by a strategy with a larger $\bph$ and a smaller $\bpl$. On the other hand, when $\bpl = 0$ or $\bph = 1$, any change in the strategy decreases the expected utility under world state $H$ or under world state $L$. Therefore, there will exist a strategy profile of all other agents where the loss exceeds the gain in expectation, and the expected utility of the agent decreases. Therefore, in this case, we can show the strategy profile is not weakly dominated. The full proof is in Appendix~\ref{apx:dominated}. 

\paragraph{Sequence of Environments.} An \textinst{} $\instance$  contains the agent number $\ag$, the world state prior distribution $(P_L, P_H)$, the signal distributions $(P_{hH}, P_{lH})$ and $(P_{hL}, P_{lL})$, the utility functions $\{\vt_i\}_{i=1}^\ag$, and the approximated fraction of majority agents $\alpha$. Let $\{\instance_\ag\}_{\ag=1}^{\infty}$ (or $\{\instance_\ag\}$ for short) be a sequence of \textinst{}s, where each $\instance_\ag$ is an \textinst{} with $\ag$ agents.
The \textinst{}s in a sequence share the same prior distributions, signal distributions, and approximated fractions of each type. There are no additional assumptions on the utility functions. 

In an {\em $\varepsilon$-ex-ante Bayesian $k$-strong equilibrium} ($(\varepsilon, k)$-EBSE), no group of at most $\kd$ agents can increase their utilities by more than $\varepsilon$ through deviation. A strategy profile $\stgp = (\stg_1, \stg_2,\cdots, \stg_\ag)$ is an $\varepsilon$-ex-ante Bayesian $k$-strong equilibrium if there does not exist a subset of agents $D$ (the {\em deviating group}) and a strategy profile $\stgp' = (\stg_1', \stg_2',\cdots, \stg_\ag')$ (the {\em deviating strategy profile}) such that
\begin{enumerate}
    \item $\stg_\sag = \stg_\sag'$ for all $\sag\not\in D$; 
    \item $\ut_{\sag}(\stgp') \ge \ut_{\sag}(\stgp)$ for all $\sag\in D$; and 
    \item there exists $\sag\in D$ such that $\ut_{\sag}(\stgp') > \ut_{\sag}(\stgp) +\varepsilon$. 
\end{enumerate}

By definition, when $\varepsilon=0$, the equilibrium is ex-ante Bayesian $k$-strong equilibrium where no group of at most $k$ agents can strictly increase their utilities through deviation; and when $\kd = \ag$, the equilibrium is an $\varepsilon$-ex-ante Bayesian strong equilibrium (named $\varepsilon$-strong Bayes-Nash equilibrium in~\citep{deng2024aggregation}), where no group of agents can increase there their utilities with at least one more than $\varepsilon$.

\section{Group Size Robustness of Equilibria in Antagonistic Voting}

We desire a ``good'' stable outcome in the voting with conflict agents and incomplete information: the existence of an equilibria that reach the informed decision with high probability. \citet{deng2024aggregation} gives a first characterization of such problem under the solution concept of $\varepsilon$-ex-ante Bayesian strong equilibrium. They show that such a ``good'' equilibrium exists if and only if the fraction of the majority agents exceeds a fraction 
\begin{equation*}
       \theta =\begin{cases}
           \frac12 + \frac{P_{\ell H}}{2P_{\ell L}} &\ P_{\ell L} \ge P_{\ell H},\\
           \frac12 + \frac{P_{hL}}{2P_{hH}} &\ P_{\ell L}<  P_{\ell H}.
       \end{cases}
   \end{equation*}

\begin{thm}~\citep{deng2024aggregation}
    \label{thm:deng}
    When $\alpha > \theta$, there exists a strategy profile sequence $\{\stgp_\ag\}$ such that the fidelity $\acc(\stgp_\ag)$ converges to 1, and every $\stgp_\ag$ is an $\varepsilon$-ex-ante Bayesian equilibrium with $\varepsilon$ converges to 0. When $\alpha \le \theta$, no such sequence of strategy profiles exists. 
\end{thm}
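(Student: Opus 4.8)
The plan is to reduce the whole question to a statement about \emph{expected vote shares} and then to a small linear-feasibility problem. Because signals are i.i.d.\ conditioned on the world state, under any profile the individual votes are independent, so by a Hoeffding/Chernoff bound the winner in each state is determined with probability $\to 1$ whenever the expected $\bA$-vote share is bounded away from $\frac12$. Writing the (average) majority strategy as $(\bpl,\bph)$ and the minority strategy analogously, the expected fraction of $\bA$-votes in state $\Wosrv$ is
\[
g_\Wosrv=\alpha\bigl(P_{h\Wosrv}\bph+P_{\ell \Wosrv}\bpl\bigr)+(1-\alpha)\bigl(P_{h\Wosrv}\bph^{\mino}+P_{\ell \Wosrv}\bpl^{\mino}\bigr).
\]
High fidelity ($\acc(\stgp_\ag)\to1$) is then equivalent to $g_H>\frac12$ and $g_L<\frac12$ with both bounded away from $\frac12$; the $\varepsilon\to0$ bookkeeping follows because any deviation that does not change the (concentrated) winner alters a deviator's utility by at most $B$ times a vanishing probability.

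For the existence direction ($\alpha>\thdmaj$) I would exhibit the profile in which every majority agent plays a fixed $(\bpl,\bph)$ and every minority agent votes $\bR$ (i.e.\ $\bpl^{\mino}=\bph^{\mino}=0$, which is non-dominated). The majority already obtains its preferred outcome in both states, so no coalition of majority agents can gain; the only threatening deviations are by the minority, and I claim the only helpful ones are ``all vote $\bR$'' (to make $\bR$ win in $H$) and ``all vote $\bA$'' (to make $\bA$ win in $L$). Each of these flips exactly one state while leaving the other at the majority's preferred outcome, so each is unambiguously profitable \emph{iff} it flips. Defending against both requires exactly
\[
\alpha\bigl(P_{hH}\bph+P_{\ell H}\bpl\bigr)\ge\tfrac12,\qquad \alpha\bigl(P_{hL}\bph+P_{\ell L}\bpl\bigr)\le\alpha-\tfrac12 .
\]
With strict slack these two inequalities simultaneously guarantee high fidelity of the constructed profile and rule out every minority coalition; any mixed coalition containing a majority agent is blocked because flipping either state strictly hurts that majority agent, violating the weak-improvement requirement. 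So existence reduces to: for which $\alpha$ does some $(\bpl,\bph)\in[0,1]^2$ satisfy the two displayed inequalities?

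I would then treat this as a linear program: maximise $P_{hH}\bph+P_{\ell H}\bpl$ subject to $P_{hL}\bph+P_{\ell L}\bpl\le 1-\frac{1}{2\alpha}$ over the unit square, and ask when the optimum is $\ge\frac{1}{2\alpha}$. Using the identity $P_{hH}P_{\ell L}-P_{hL}P_{\ell H}=\infdif>0$, the objective increases along the active constraint in the direction of larger $\bph$, so the optimum sits either at the vertex with $\bph=1$ (constraint binding, $\bpl$ interior) or at the vertex with $\bpl=0$ ($\bph$ interior), the choice governed by the sign of $P_{hH}+P_{hL}-1$. Setting the optimal value equal to $\frac{1}{2\alpha}$ and solving for $\alpha$ yields exactly the two branches $\frac12+\frac{P_{\ell H}}{2P_{\ell L}}$ and $\frac12+\frac{P_{hL}}{2P_{hH}}$, i.e.\ $\thdmaj$. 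For $\alpha>\thdmaj$ there is strict slack, supplying the bounded-away margins the concentration step needs, and since the binding vertices have $\bph=1$ or $\bpl=0$ the majority strategy is non-dominated, completing existence.

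For the converse ($\alpha\le\thdmaj$) I would argue by contradiction. Given any equilibrium sequence with $\acc\to1$, the winner is $\bA$ in $H$ and $\bR$ in $L$ with probability $\to1$, so the margins are bounded away from $\frac12$; then the minority's ``all $\bR$'' deviation flips $H$ without disturbing $L$ (and symmetrically ``all $\bA$'' flips $L$), and each flip gives a fixed positive utility jump eventually exceeding $\varepsilon$, so the equilibrium property forces the aggregate majority strategy to satisfy the same two inequalities above. For $\alpha<\thdmaj$ that program is infeasible, a contradiction, and at $\alpha=\thdmaj$ it is feasible only with zero margin, so fidelity cannot concentrate to $1$. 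I expect this converse to be the main obstacle: I must reduce arbitrary \emph{asymmetric} profiles to their aggregate vote shares so that the two inequalities are the correct necessary conditions, handle the boundary $\alpha=\thdmaj$ where feasibility is only marginal, and verify that the two extreme minority deviations really dominate all others — in particular that flipping one state never forces the minority to sacrifice its gain in the other.
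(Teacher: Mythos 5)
First, note that the paper does not actually prove this statement---it is imported from \citet{deng2024aggregation}---so the fair comparison is with the machinery the paper builds for its own Theorem~\ref{thm:thresholdk}, in particular Propositions~\ref{prop:seg2lower} and~\ref{prop:seg2upper} and their supporting lemmas. Your plan follows exactly that template (reduce to expected vote shares, restrict attention to the two extreme minority deviations, decide feasibility of a two-inequality linear system over $[0,1]^2$), and the parts you worked out are correct: the construction with the minority always voting $\bR$ and the majority tuned to keep $\thd_H>\frac12$ and $\thd_L+(1-\alpha)<\frac12$ with constant slack does give the existence direction, and your LP analysis (optimum at $\bph=1$ or $\bpl=0$, case split governed by the sign of $P_{hH}-P_{\ell L}$) recovers precisely the two branches of $\thdmaj$.

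The genuine gap is in the converse, in the step ``$\acc(\stgp_\ag)\to1$, so the margins are bounded away from $\frac12$.'' That implication is false: margins of order, say, $\ag^{-1/4}$ still give fidelity $\to 1$, and a profile with sub-linear vote-count variance (e.g., a mix of deterministic voters realizing the same average strategy) can have fidelity $\to1$ with margins shrinking arbitrarily fast. The correct statement is Theorem~\ref{thm:arbitrary}: what matters is $\liminf\sqrt{\ag}\cdot\min\left(\thd_H-\frac12,\frac12-\thd_L\right)$ together with a $\Theta(\ag)$-variance condition. For $\alpha<\thdmaj$ strictly your argument survives without the false claim---infeasibility of the linear system over the compact square gives a uniform constant $\delta$ by which one of the two constraints is violated for \emph{every} average majority strategy, the corresponding extreme deviation then pushes the deviating profile's expected share across $\frac12$ by $\delta$, and Hoeffding plus a Lemma~\ref{lem:minority_deviate}-style argument gives every minority deviator a constant gain. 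But at $\alpha=\thdmaj$ your conclusion ``feasible only with zero margin, so fidelity cannot concentrate to $1$'' does not follow: zero margin kills concentration only when the vote-count variance is $\Theta(\ag)$, so you must either rule out sub-linear variance near the boundary-feasible strategy (this can be done: average $\bph=1$ or $\bpl=0$ with the other coordinate bounded away from its extremes forces a constant fraction of agents to vote non-deterministically given their signals) or invoke the kind of variance machinery the paper develops (Lemmas~\ref{lem:var_sublinear} and~\ref{lem:var_linear}, Corollary~\ref{coro:var_sublinear})---which is exactly what Proposition~\ref{prop:seg2upper}, the paper's analogue of your converse, spends most of its effort on.
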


When $\alpha$ is large, Theorem~\ref{thm:deng} provides a positive guarantee on a ``good'' equilibrium. Nevertheless,  as the gap between majority and minority fractions narrows, the non-existence of a strong equilibrium becomes less informative on the stable outcome of strategic agents, especially when agents have limited capability to coordinate. By introducing the $\varepsilon$-ex-ante Bayesian $\kd$-equilibrium framework, we can conduct a more granular analysis of strategic behavior, revealing how the existence of a $\kd$-strong equilibrium that achieves the informed majority decision depends on the coordination capacity $\kd$ (or its fractional version $\xi$). 


More precisely, we give the closed-form of the largest fraction of deviators $\xi^*$ for each $\alpha$ where there exists a sequence of $\varepsilon$-ex-ante Bayesian $\xi\cdot \ag$-strong equilibria in which (1) no agents play weakly dominated strategy, (2) the fidelity (likelihood of informed majority decision) converges to 1, and (3) $\varepsilon$ converges to 0 (asymptotically no one wishes to deviate). 

The closed form of $\xi^*(\alpha)$ has three or four segments depending on the signal distribution. These segments, ordered from highest to lowest values of $\alpha$ as Segment 1 to 4, are given descriptive names on their distinct features. The ``Flat'' Segment 1 begins at $\theta$ and maintains a constant value as $\alpha$ decreases. In the ``Vanishing'' Segment 2, $\xi^*$ decreases linearly with  $\alpha$, and its extension intersects zero at $\alpha=\frac12$. Segment 3, the ``Non-Linear'' segment, is characterized by a non-linear relationship between $\xi^*$ and $\alpha$. If Segment 3 ends before $\alpha = \frac12$, there will be a ``Tail'' Segment 4; otherwise, $\xi^*$ ends with three segments. 
A special case of $\xi^*$ is when the signal is symmetric, i.e, $P_{hH} = P_{\ell L}$. Under this specific setting, both the ``Flat'' Segment 1 and the ``Non-Linear'' Segment 3 vanishes, and $\xi^*$ consists of only``Steep'' Segment 2 and ``Tail'' Segment 4 as shown in Figure~\ref{fig:8080}. The explicit form of $\xi^*$ is presented later in this section.  

\begin{thm}[Threshold $\xi^*$]
\label{thm:thresholdk}

For any $\thdmaj \ge \alpha > \frac12 $ and any $\xi < \xi^*(\alpha)$, there exists a sequence of strategy profile $\{\stgp_\ag\}$ such that,
    \begin{enumerate}
        \item For all $\ag$, no agents play weakly dominated strategies in $\stgp_\ag$,
        \item the fidelity $\acc(\stgp_\ag)$ converges to 1, and
        \item for all $\ag$, $\stgp_\ag$ is an $\varepsilon$-ex-ante Bayesian $\xi\ag$-strong equilibrium, where $\varepsilon$ converges to 0.
    \end{enumerate}
    On the other hand, when $\xi = \xi^*$, no such sequence of profile exists. 
\end{thm}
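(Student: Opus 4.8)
The plan is to establish both directions by first reducing the fidelity requirement and the no-profitable-deviation requirement to linear inequalities on expected vote shares, and then solving a max--min program over non-dominated strategies. Fix a profile in which the fraction-$\alpha$ majority plays $(\bpl^{\maj},\bph^{\maj})$ and the minority plays $(\bpl^{\mino},\bph^{\mino})$, and let $V_W=\alpha(P_{\ell W}\bpl^{\maj}+P_{hW}\bph^{\maj})+(1-\alpha)(P_{\ell W}\bpl^{\mino}+P_{hW}\bph^{\mino})$ be the expected $\bA$-vote share in state $W$. A Chernoff bound applied conditionally on each world state shows that, for shares bounded away from $\tfrac12$, $\acc(\stgp_\ag)\to1$ is equivalent to the two strict margin conditions $s_H:=V_H-\tfrac12>0$ and $s_L:=\tfrac12-V_L>0$, collapsing asymptotic fidelity to two linear inequalities.

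Because the informed-majority outcome is already optimal for every majority agent in both states, no weakly profitable coalition can contain a vote-changing majority agent, so the binding coalition consists of minority agents deviating to a common $(\bpl',\bph')\in[0,1]^2$ (deviations need not be non-dominated). This shifts the shares by $\Delta V_W=\xi[P_{\ell W}(\bpl'-\bpl^{\mino})+P_{hW}(\bph'-\bph^{\mino})]$, and in the large-$\ag$ limit the outcome in each state is decided by the sign of the shifted margin; the coalition strictly gains iff it flips $H$ to $\bR$ (i.e. $V_H+\Delta V_H<\tfrac12$) or $L$ to $\bA$ (i.e. $V_L+\Delta V_L>\tfrac12$). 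The crucial point is that since $P_{\ell H}/P_{hH}<P_{\ell L}/P_{hL}$, a single deviation can lower $V_H$ and raise $V_L$ simultaneously, so I must bound a combined attack rather than treating the two states separately; robustness then becomes a finite family of linear constraints in $\xi$ asserting that no admissible $(\bpl',\bph')$ achieves either flip.

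Combining these reductions, $\xi^*(\alpha)$ is the value of a max--min program in which the designer picks non-dominated strategies ($\bpl^{\maj}=0$ or $\bph^{\maj}=1$; $\bpl^{\mino}=1$ or $\bph^{\mino}=0$), possibly splitting the minority across two strategies, to maximize the margins, while the adversary picks the best attack to minimize them. I would solve this by a case analysis over which strategy vertices are used and which constraint binds: the flat segment arises when the margin constraints saturate with the minority concentrated on one alternative; the steep linear segment when the fidelity constraint binds (its extension passing through $\alpha=\tfrac12$); and the non-linear segment when the optimum is interior, with the deviation ratio $\bpl'/\bph'$ and an even split of the minority chosen jointly (matching the claim that the minority splits as $\alpha\to\tfrac12$), which produces the ratio/product dependence on $\alpha$; the symmetric case $P_{hH}=P_{\ell L}$ collapses Segments 1 and 3. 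For the impossibility half I would use tightness of this program: at $\xi=\xi^*(\alpha)$ every non-dominated profile with $\acc(\stgp_\ag)\to1$ violates some robustness constraint, so the adversary's optimal attack flips a state for a utility gain bounded below independently of $\ag$, contradicting $\varepsilon\to0$.

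The main obstacle is the second and third steps jointly: the adversary's best response is itself a small linear program whose optimal value depends nonlinearly on the minority split and on $\alpha$, so the envelope of binding constraints is genuinely piecewise and the non-linear segment corresponds to an interior saddle point rather than a strategy vertex. Locating the segment boundaries, and verifying that the converse attack is tight exactly at $\xi^*(\alpha)$, is where the real work concentrates.
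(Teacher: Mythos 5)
Your skeleton does match the paper's: reduce fidelity to expected-vote-share margins, argue that only minority coalitions matter, and characterize $\xi^*$ as the value of a max--min problem over non-dominated profiles, solved by a case analysis on which constraint binds in each segment. However, two of the steps you defer as ``the real work'' are not mere computation; they are where your plan, as stated, would fail.

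First, the robustness requirement is not ``a finite family of linear constraints'' and the optimization is not a search over strategy vertices plus one interior saddle. Once the minority is split into a fraction $\gamma$ of agents playing $(\bpl^0,\bph^0=0)$ and a fraction $1-\alpha-\gamma$ playing $(\bpl^1=1,\bph^1)$ --- and this split is essential, since the non-linear segment only emerges from it --- the largest tolerable coalition becomes $\min(\xi_h,\xi_\ell)$, where $\xi_h$ and $\xi_\ell$ are \emph{ratios} of affine functions of the three variables $(\bph^1,\,1-\bpl^0,\,\gamma)$. Maximizing this min of two fractional objectives over a box is the heart of the problem. The paper's pivotal step, absent from your plan, is a sign-of-derivative sweep along the lines $1-\bpl^0 = t\cdot\bph^1$ (Proposition~\ref{prop:border_cases}): the sign of each directional derivative is constant along every such line, so the maximum either never exceeds the steep bound $\Delta(\alpha-\tfrac12)/P_{hL}$ or is attained on one of the faces $\bph^1=1$ or $1-\bpl^0=1$. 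Without this reduction you have no exhaustiveness guarantee for your case analysis --- precisely because, as you yourself note, the optimum on the non-linear segment is interior in $1-\bpl^0$, so it is not a vertex and could a priori sit anywhere. Even after the reduction, one must equalize $\xi_h=\xi_\ell$ in $\gamma$ on the face $\bph^1=1$, optimize the resulting $\hat\xi$ over $1-\bpl^0$ to obtain $\xinl$, locate the boundary $\anl$, and verify feasibility ($\gamma^*\ge\xinl$ and $1-\alpha-\gamma^*\ge\xinl$) so that the optimizer actually yields an equilibrium via the lower-bound lemma; none of this is routine.

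Second, your Chernoff-based equivalence (fidelity $\to 1$ iff both margins are strictly positive) holds only when the margins are bounded away from $0$ by a constant. The impossibility half at $\xi=\xi^*$ must rule out \emph{every} admissible sequence, including ones whose margins shrink at rate $1/\sqrt{\ag}$ or whose vote-count variance is sublinear; there the equivalence breaks, and the paper needs the finer trichotomy (the $\liminf$ of $\sqrt{\ag}$ times the margin, together with the variance lemmas) to make the adversary's attack succeed for infinitely many $\ag$. Your tightness argument silently assumes the clean margin characterization and so does not cover these boundary sequences. A smaller point in the opposite direction: your ``combined attack'' worry is unnecessary. Success requires flipping only one state, so the two extreme deviations (all deviators always vote $\bR$, or all always vote $\bA$) dominate any mixed attack; treating the two states separately with these extremes is exactly what the paper does, and your insistence on combined attacks adds nothing but redundant constraints.
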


The existence of multiple segments with drastically different shapes, including a surprising and highly non-trivial non-linear segment, reveals the complexity of the strategic behaviors of the agents with incomplete information in the voting game. The solution concept $\varepsilon$-ex-ante Bayesian $\kd$-strong equilibrium provides a more refined and practical perspective to the problem, where the capacity of group strategic behaviors is represented by $\xi$. As a consequence, Theorem~\ref{thm:thresholdk} provides a more informative analysis compared to Theorem~\ref{thm:deng}. This in turn demonstrates the capability of $\varepsilon$-ex-ante Bayesian $\kd$-strong equilibrium as an informative perspective. 

\subsection{Explicit Form of the Threshold Curve.}
We assume without loss of generality that $P_{hH} \le P_{\ell L}$. In this case, $\theta = \frac12 + \frac{P_{\ell H}}{2P_{\ell L}}$. 
Now we explicitly give the formula of $\xi^*$ as a function of $\alpha$. 

\vspace{0.2cm}
\begin{minipage}[t]{0.5\textwidth}
When $\frac{1}{1+P_{\ell L} +(P_{\ell L} - P_{hH})} > \frac12$,
\begin{equation*}
    \xi^*(\alpha) = 
    \begin{cases}
        \frac{\Delta}{2P_{\ell L}} &\ \thdmaj \ge \alpha \ge \frac{1}{2P_{\ell L}}\\
         \frac{\Delta\cdot (\alpha - 1/2)}{P_{hL}} &\ \frac{1}{2P_{\ell L}} > \alpha \ge \anl\\
         \xinl &\ \anl > \alpha > \frac{1}{1+P_{\ell L} +(P_{\ell L} - P_{hH})}\\
         \frac12 \cdot \Delta \cdot \alpha &\ \frac{1}{1+P_{\ell L} +(P_{\ell L} - P_{hH})} > \alpha > \frac12
    \end{cases}
    \label{eq:xi4}
\end{equation*}
\end{minipage}
\hfill\vline\hfill
\begin{minipage}[t]{0.4\textwidth}
When $\frac{1}{1+P_{\ell L} +(P_{\ell L} - P_{hH})}\le  \frac12$,
\begin{equation*}
    \xi^*(\alpha) = 
    \begin{cases}
        \frac{\Delta}{2P_{\ell L}} &\ \thdmaj \ge \alpha \ge \frac{1}{2P_{\ell L}}\\
         \frac{\Delta\cdot (\alpha - 1/2)}{P_{hL}} &\ \frac{1}{2P_{\ell L}} > \alpha \ge \anl\\
         \xinl &\ \anl > \alpha > \frac{1}{2}
    \end{cases}
    \label{eq:xi_3}
\end{equation*}
\end{minipage} 
\vspace{0.2cm}

Here, $\xinl$ is a non-linear function on $\alpha$ which makes up the ``Non-Linear'' Segment 3 in the curve. $\anl$ is the changing point between Segment 2 and 3 and comes from solving inequality $\xinl \ge \frac{\Delta(\alpha - 1/2)}{P_{hL}}$.

\begin{align}
    \anl =&\ \frac{2P_{hH}P_{\ell L}^2 +P_{hH}\cdot (P_{hH}+3P_{\ell L})-(3P_{hH} +P_{\ell L}) + 2 - 2P_{hL}\cdot \sqrt{P_{hH}P_{hL}P_{\ell H}P_{\ell L}}}{2P_{\ell H}^2 + 8P_{hH}P_{\ell L}^2}. \label{eq:anl}\\
    \xinl = &\ \frac{\Delta}{4\sqrt{2(1 - \alpha P_{\ell H})(1 - 2 \alpha P_{\ell L})P_{\ell H}P_{\ell L}} + 2P_{\ell H} + 4P_{\ell L} - 8\alpha P_{\ell H}P_{\ell L}}. \label{eq:xinl}
\end{align}


\subsection{Shape Analysis on the Threshold Curve.}
While the closed-form of $\xi^*(\alpha)$ is highly non-trivial, we can still gain valuable insights into how the underlying signal distribution influences its structural properties. 

\textit{Accuracy.} As the signals become more accurate (i.e., $P_{hH}$ and $P_{\ell L}$ increase), two effects emerge in the equilibrium threshold $\xi^*$. First, the interval of $\xi^*$ from $\theta$ to $\frac{1}{2}$ contracts because $\theta$ is negatively correlated with signal accuracy. Second, since $\Delta = P_{hH} + P_{\ell L} - 1$ increases with signal accuracy, the equilibrium threshold $\xi^*$ increases within each segment of its piecewise function. This observation aligns with intuition: as agents receive more precise signals about the true state, the majority group can more effectively leverage their information advantage to secure their preferred outcome.


\textit{Bias.} As the signal bias $P_{\ell L} - P_{hH}$ increases, there are corresponding changes in the intervals of Segments 1 and 4. The length of Flat Segment 1, given by $\theta - \frac{1}{2P_{\ell L}} = \frac{P_{\ell L} - P_{hH}}{2P_{\ell L}}$, exhibits a positive linear relationship with the signal bias. Conversely, ``Tail'' Segment 4 contracts as the signal bias increases, since its left boundary is a decreasing function of $P_{\ell L} - P_{hH}$ while its right boundary remains fixed at $\frac{1}{2}$.


\textit{Not Converge to 0.} No matter what the signal distribution is, $\xi^*$ converges to $\frac{1}{4}\Delta > 0$ when $\alpha$ goes to $\frac12$. This indicates that, even when the majority only slightly outnumbers the minority, an unorganized minority still yields a stable outcome favoring the 51\%.

Figure~\ref{fig:shapes} illustrates the shapes of $\xi^*$ under different distributions. For example,  Figure~\ref{fig:8080} with a symmetric signal distribution contains only Segment 2 and 4, while Figure~\ref{fig:9060} with a biased signal distribution has a much larger ``Flat'' Segment 1 and shorter Segment 4. In Figure~\ref{fig:6050} where the signal is inaccurate, the interval of $\xi^*$ is long yet the value is close to 0. 

\begin{figure}[htbp]
        \centering
        \subfigure[$P_{\ell L} = 0.8, P_{hH} = 0.8$.\label{fig:8080}]{   \includegraphics[width=0.32\textwidth]{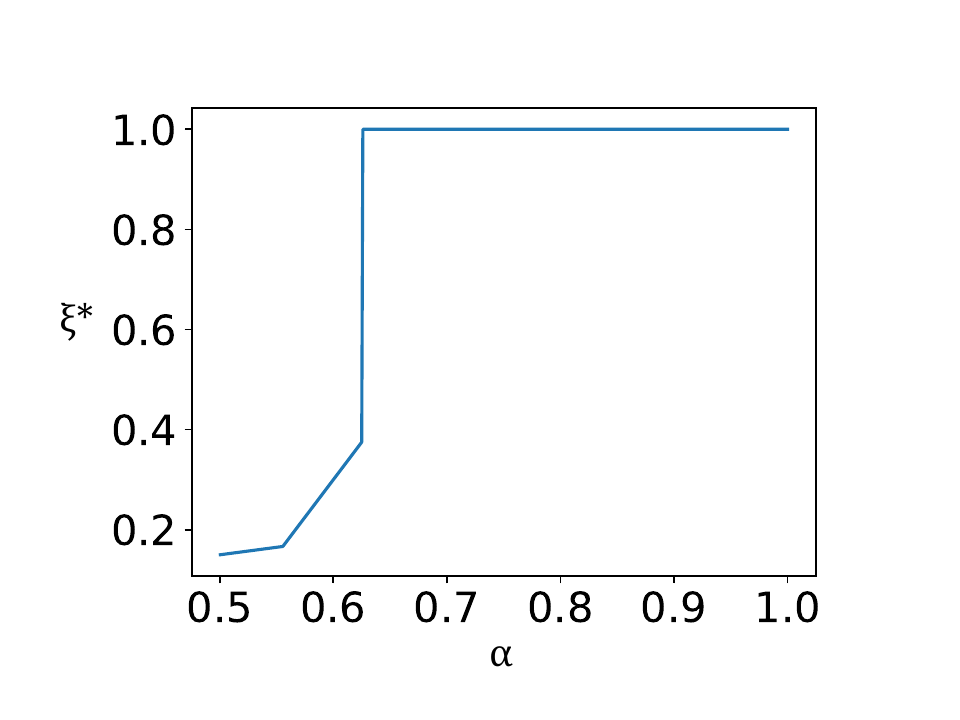}}
        \subfigure[$P_{\ell L} = 0.9, P_{hH} = 0.6$.\label{fig:9060}]{\includegraphics[width=0.32\textwidth]{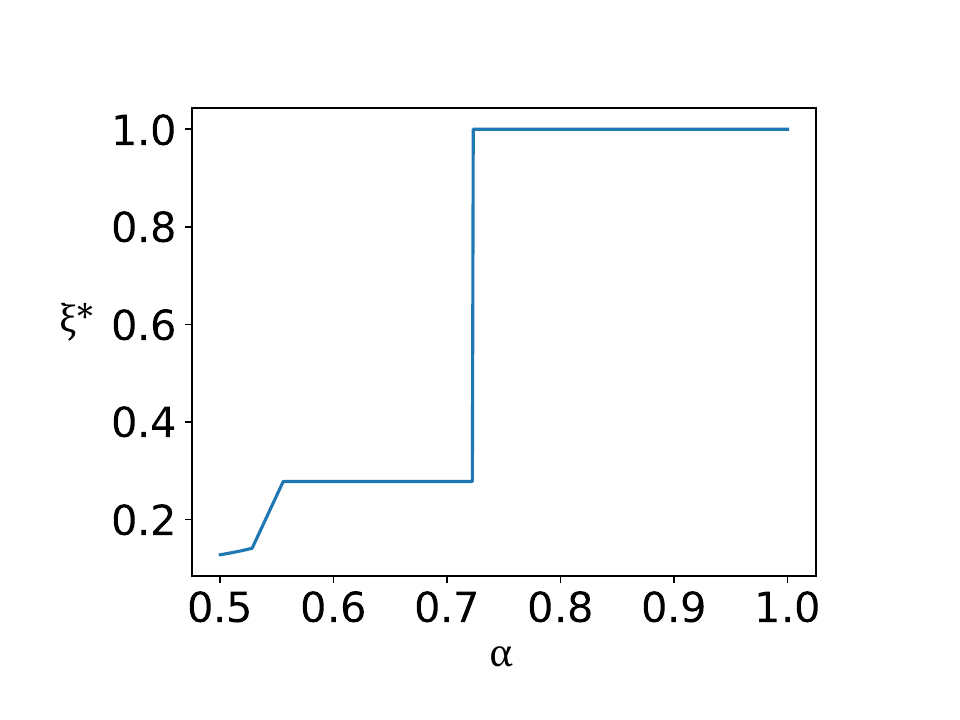} }
        \subfigure[$P_{\ell L} = 0.6, P_{hH} = 0.5$.\label{fig:6050}]{\includegraphics[width=0.32\textwidth]{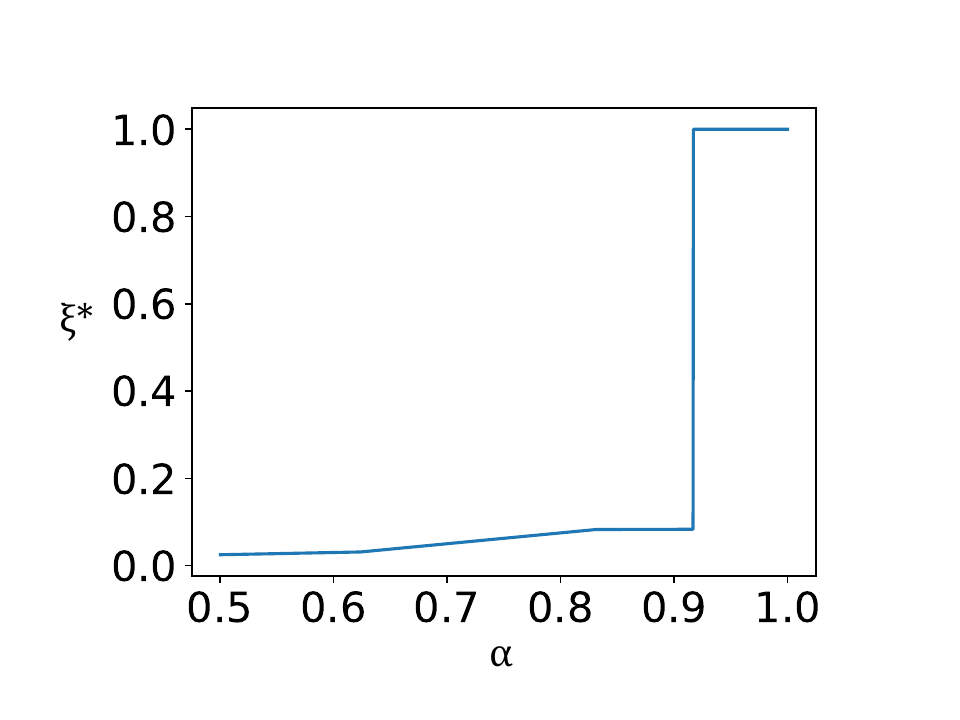}}
        \caption
        {The shape of $\xi^*$ under different signal distribution. We set $\xi^* = 1$ when $\alpha > \theta$, indicating the existence of a strong equilibrium. \label{fig:shapes}} 
\end{figure}
\vspace{-0.4cm}

\section{Proof Sketch}

In total, four segments of $\alpha$ are characterized in Theorem~\ref{thm:thresholdk}. For each threshold, we need to prove both the upper bound and the lower bound. For the upper bound, we show that any sequence of strategy profiles satisfying the first (non-dominated strategy) and the second (fidelity converges to 1) violates the third constraint (being an equilibrium).
For a lower bound, we construct a sequence of strategy profiles satisfying all three constraints. 

Before analyzing individual segments, we establish that only minority agents need to be considered in potential deviations. This assertion is justified because we focus on strategy profile sequences that satisfy constraint 2, where fidelity converges to 1. Under these conditions, any deviating strategy profile sequence must follow one of two scenarios. In the first scenario, if the deviating strategy's fidelity converges to 1, the utility change from deviation becomes negligible and approaches 0, rendering the deviation ineffective. In the second scenario, if the deviating strategy's fidelity fails to converge to 1, majority agents experience strictly lower utility and thus have no incentive to join the deviating group. Consequently, we need only examine deviations involving minority agents, whose utilities are negatively correlated with fidelity. Such a deviation succeeds (and consequently, the original strategy profile sequence fails to be a sequence of $o(1)$-equilibrium) if and only if the fidelity of the deviating strategy does not converge to 1.

Moreover, we leverage the expected vote share to determine the fidelity of the strategy profiles, especially the deviating ones. By extending the results in~\citep{han2023wisdom}, with a few exceptions, the fidelity of the strategy profile sequence converges to 1 if and only if $\thd_H > \frac12$ and $\thd_L < \frac12$ holds for all sufficiently large $\ag$. Specifically, the expected vote shares have the following form. 
\begin{align*}
    \thd_H =&\ \frac{1}{\ag} \cdot \sum_{i = 1}^{\ag} (P_{hH}\cdot \bph^i + P_{\ell H}\cdot \bpl^i). \\
    \thd_L =&\ \frac{1}{\ag} \cdot \sum_{i = 1}^{\ag} (P_{hL}\cdot \bph^i + P_{\ell L}\cdot \bpl^i). 
\end{align*}

The deviators aim to achieve one of two outcomes after a $\xi$ fraction of agents deviate to create a new strategy profile (with expected vote share $\thd_H'$ and $\thd_L'$): either  \begin{equation}\label{eq:H-fail}\thd_H' \le \frac12,\end{equation} leading to the expected majority being $\bR$ in state $H$, or \begin{equation}\label{eq:L-fail}\thd_L'\ge \frac12, \end{equation} leading to the expected majority being $\bA$ in state $L$.
Strategically, to maximize their chances,  deviators should embrace one of two extremes. In the first extreme the goal is to maximize,  $\thd_L'$, the expected fraction of $\bA$ votes in world $L$ after deviation. Here the deviators switch to always vote for $\bA$, and consist of the minority agents who are initially least likely to vote for $\bA$.  In the second extreme the goal is to minimize,  $\thd_H'$, the expected fraction of $\bA$ votes in world $H$ after deviation. Here the deviators switch to always vote for $\bR$, and consist of the minority agents who are initially least likely to vote for $\bR$.  If either deviation succeeds, they succeed; and if both fail, then no deviation will succeed; implying the original strategy profile is an equilibrium. 

In general, there are three correlated factors that may affect how large a deviating group, $\xi$, a particular strategy profile can tolerate. 
\begin{itemize}\item The first factor is the difference between $\thd_H$ and $\thd_L$, which is determined by the quantity $\bph^i - \bpl^i$ of each agent $i$. When $\thd_H - \thd_L$ is large, it is at least possible for the threshold of 1/2 to be between the two values and not too close to either of them. 
\item Secondly, the strategies of the minority agents.  Agents ability to change $\thd_H$ and $\thd_F$ depends on their initial strategy.  For example, an agent who always votes for $\bA$ cannot help deviate to increase $\thd_L$ by can deviate to $\thd_H$ by $\frac1{\ag}$. Meanwhile, an agent who votes for $\bA$ with signal $\ell$ and $\bR$ with signal $h$ can deviate to either increase $\thd_L$ by $\frac{P_{hL}}{\ag}$ or to decrease $\thd_H$ by $\frac{P_{\ell H}}{\ag}$.
\item
The last factor is balance.  Because the deviators have two possible directions to deviate (decrease $\thd_H$ and increase $\thd_L$), a deviation-proof strategy profile needs to keep both margins $\thd_H - \frac12$ and $\frac12 - \thd_L$ sufficiently large to overwhelm the aforementioned deviating power of minority agents.  Sometimes deviations can by neutralized by keeping margin balanced even at the expense of decreasing the difference $\thd_H - \thd_L$.   
\end{itemize}
The interplay of these three factors largely contributes to the complexity of the problem. In fact, as we will see in the following proof sketch, in the equilibria establishing lower bounds for Segments 1 and 2, all minority agents unanimously vote for alternative $\bA$. In contrast, the equilibrium determining the lower bound for Segment 4 exhibits an even split of minority votes between both alternatives.

The rest of the proof proceeds as follows. We will start from the first segment $\thdmaj \ge \alpha \ge \frac{1}{2P_{\ell L}}$ whose proof is somewhat simpler and independent but shares the high-level idea with other parts. Then, we present a unified analysis of Segments 2, 3, and 4. The key step is converting the upper bound and the lower bound problem into constraints on the expected vote share which is further transformed into an optimization problem where the goal is to maximize the size of the deviation $\xi$ that can be tolerated.  The parameters that yield the optimum can be leveraged to construct a sequence of equilibrium that proves the lower bound of the theorem.  By case analysis, we can determine which constraints are binding in different settings and also construct the upper bounds.  

\subsection{Segment 1: $\thdmaj \ge \alpha \ge \frac{1}{2P_{\ell L}}$. }
We first introduce the strategy profile achieving the lower bound and explain why the threshold $\xi^*$ is constant in this region. Then we present the upper bound proof which serves as a simplified version of that for other segments. 

\begin{prop}
\label{prop:seg2lower}
    For all $\alpha \ge \frac{\infdif}{1 - \thdmaj}$, for any $\xi < \frac{\Delta}{2P_{\ell L}}$, there exists a strategy profile sequence that satisfies all three constraints of the theorem. 
\end{prop}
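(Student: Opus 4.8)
The plan is to exhibit one explicit profile sequence and verify the three constraints, using the proof-sketch reduction that lets us argue about expected vote shares rather than utilities directly. Let every majority agent play $(\bpl,\bph)=(b,1)$ for a constant $b\in[0,1)$ to be fixed, and every minority agent vote $\bA$ unconditionally, i.e.\ $(\bpl,\bph)=(1,1)$. By Lemma~\ref{lem:dominated} both strategies are non-dominated (the majority strategy has $\bph=1$, the minority strategy has $\bpl=1$), so constraint~(1) holds for every $b$ and every $\ag$. Substituting into the vote-share formulas and using $P_{hH}+P_{\ell H}=P_{hL}+P_{\ell L}=1$ gives $\thd_H=1-\alpha P_{\ell H}(1-b)$ and $\thd_L=1-\alpha P_{\ell L}(1-b)$.

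The key observation is that both shares depend on $\alpha$ and $b$ only through the product $t:=\alpha(1-b)$, so I treat $t$ as the free parameter (then $b=1-t/\alpha\in[0,1)$ exactly when $0<t\le\alpha$). I need $\thd_L<\tfrac12$, i.e.\ $t>\tfrac{1}{2P_{\ell L}}$, and I want the $H$-margin $\thd_H-\tfrac12=\tfrac12-P_{\ell H}t$ to exceed $\xi$, i.e.\ $t<\tfrac{1/2-\xi}{P_{\ell H}}$. A one-line computation shows the interval $\bigl(\tfrac{1}{2P_{\ell L}},\tfrac{1/2-\xi}{P_{\ell H}}\bigr)$ is nonempty precisely when $\xi<\tfrac{\Delta}{2P_{\ell L}}$, which is our hypothesis, and it contains a value $t\le\alpha$ as soon as $\alpha>\tfrac{1}{2P_{\ell L}}$. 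Fixing such a constant $t$ makes both $\thd_H-\tfrac12$ and $\tfrac12-\thd_L$ positive constants independent of $\ag$, so the concentration result extending~\citep{han2023wisdom} yields $\acc(\stgp_\ag)\to1$, i.e.\ constraint~(2).

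For constraint~(3) I invoke the reduction that only minority agents act as deviators and that a deviation helps them only if it keeps fidelity from converging to $1$, that is only if it forces $\thd_H'\le\tfrac12$ or $\thd_L'\ge\tfrac12$; it then suffices to defeat the two extreme deviations. Switching a minority deviator from $\bA$ to $\bR$ removes a full $\tfrac1\ag$ of expected $H$-share, so $\xi\ag$ such deviators lower $\thd_H$ by at most $\xi$, and $\thd_H-\tfrac12>\xi$ keeps $\thd_H'>\tfrac12$. Switching minority deviators toward $\bA$ cannot raise $\thd_L$ at all, since the minority already votes $\bA$ with probability one, so $\thd_L'=\thd_L<\tfrac12$. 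Since $\xi<\tfrac{\Delta}{2P_{\ell L}}=1-\theta\le1-\alpha$, any group of size $\xi\ag$ can be realized entirely within the minority, so no group-size cap interferes. Hence every admissible deviation still has fidelity tending to $1$, the deviators' utility change is $o(1)$, and $\stgp_\ag$ is an $\varepsilon$-equilibrium with $\varepsilon\to0$.

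The main conceptual point, and the reason $\xi^*$ is flat on this segment, is the balancing step: because the attainable $H$-margin depends only on $t=\alpha(1-b)$, the optimal choice pins $t$ near $\tfrac{1}{2P_{\ell L}}$ and yields the $\alpha$-independent value $\tfrac{\Delta}{2P_{\ell L}}$. Intuitively, as $\alpha$ grows the majority gains slack in $\thd_L$ and spends exactly that slack by raising $b$ to lift $\thd_H$, so the tolerable deviation size stays constant. The one delicate case is the boundary $\alpha=\tfrac{1}{2P_{\ell L}}$, where the admissible window for $t$ degenerates; there the threshold coincides with that of the Segment~2 construction (both equal $\tfrac{\Delta}{2P_{\ell L}}$), and the claim follows by continuity or directly from that segment's argument.
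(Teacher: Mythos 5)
Your construction is correct and is essentially the paper's own: all minority agents vote $\bA$ unconditionally, the majority play a common non-dominated strategy whose margins satisfy $\thd_H > \frac12 + \xi$ and $\thd_L < \frac12$, and the equilibrium property follows because the minority (the only deviators that matter, by the fidelity reduction) cannot raise $\thd_L$ at all and can lower $\thd_H$ by at most $\xi$. The one substantive difference is the parametrization of the majority strategy: you pin $\bph^{\maj}=1$ and vary only $\bpl^{\maj}=b$, whereas the paper first solves $\{\thd_H = \frac12+\xi,\ \thd_L = \frac12\}$ for an interior pair $(\bpl^*,\bph^*)$ and then shifts it (raising $\bph$ at rate $P_{\ell H}+P_{\ell L}$, lowering $\bpl$ at rate $P_{hH}+P_{hL}$) until it hits \emph{either} corner $\bph^{\maj}=1$ or $\bpl^{\maj}=0$ of the non-dominated set. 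Your choice buys a cleaner one-parameter computation (everything depends on $t=\alpha(1-b)$, and nonemptiness of the window for $t$ is exactly the hypothesis $\xi<\frac{\Delta}{2P_{\ell L}}$), but it genuinely loses the left endpoint $\alpha=\frac{1}{2P_{\ell L}}$, which is included in the proposition's range: there $t\le\alpha$ forces $\thd_L = 1-P_{\ell L}t \ge \frac12$, so \emph{no} profile with $\bph^{\maj}=1$ can work, and the majority must instead sit at the other corner $\bpl^{\maj}=0$, $\bph^{\maj}<1$ --- precisely what the paper's shift construction produces automatically.

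You do flag this endpoint, but be careful about how you patch it. ``By continuity'' is not an argument here: $\alpha$ is a fixed parameter of the environment sequence, and the existence of equilibrium sequences for every $\alpha'>\frac{1}{2P_{\ell L}}$ says nothing about the environment with $\alpha=\frac{1}{2P_{\ell L}}$ itself. Your second patch is the right one and closes the gap: the Segment~2 construction (Proposition~\ref{prop:seg3lower}, proved independently, with all minority voting $\bA$ and the majority playing $\bpl^{\maj}=0$ and $\bph^{\maj}$ just below its critical value) applies at $\alpha=\frac{1}{2P_{\ell L}}$, and its threshold $\frac{\Delta(\alpha-1/2)}{P_{hL}}$ evaluates there to exactly $\frac{\Delta}{2P_{\ell L}}$, so every $\xi<\frac{\Delta}{2P_{\ell L}}$ is covered. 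With that substitution made explicit (and the continuity remark dropped), the proof is complete.
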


In this strategy profile sequence, all the minority agents always vote for $\bA$. recall the strategy of majority agents need to guarantee that both Equation~\ref{eq:H-fail} and \ref{eq:L-fail} fail to hold. Note that these minority agents cannot increase $\thd_L$ because they always vote for $\bA$. On the other hand, they can decrease $\thd_H$ by $\xi$ by switching to always vote for $\bR$. Therefore, the strategies of the majority agents must satisfy $\thd_H > \frac12 + \xi$ and $\thd_L < \frac12$ to defend the possible deviations. 

Without loss of generality, we consider a strategy profile where all the majority agents play the same strategy, denoted as $(\bph^{\maj}, \bpl^{\maj})$. In this case, the expected vote share is 
\begin{align*}
    \thd_H =&\ \alpha \cdot (P_{hH}\cdot \bph^{\maj} + P_{\ell H}\cdot \bpl^{\maj}) + (1 - \alpha). \\
    \thd_L =&\ \alpha + \cdot (P_{hL}\cdot \bph^{\maj} + P_{\ell L}\cdot \bpl^{\maj}) + (1 - \alpha). 
\end{align*}

We want to find a $\xi$ as large as possible to satisfying the following two constraints. Firstly, $\bph^{\maj}$ and $\bpl^{\maj}$ should be in $[0, 1]$. Secondly, $\thd_H > \frac12 + \xi$ and $\thd_L < \frac12$ holds. 

The important observation that help us solve the problem is that $\thd_H$ is more sensitive on $\bph^{\maj}$, and $\thd_L$ is more sensitive on $\bpl^{\maj}$. This is because $P_{hH} > P_{hL}$ and $P_{\ell H} < P_{\ell L}$. Therefore, if we could find $\bph^*$ and $\bpl^*$ in $(0, 1)$ that satisfy $\{\thd_H = \frac12 + \xi, \thd_L = \frac12\}$, we can get a valid pair of $(\bph^{\maj}, \bpl^{\maj})$ by increasing $\bph^*$ and decrease $\bpl^*$ at a certain ratio so that $\thd_H$ increases and $\thd_L$ decreases. The modified $(\bph^{\maj}, \bpl^{\maj})$ will satisfy the two constraints. 


Let $\bph^*$ and $\bpl^*$ be the solution to the equation set $\{\thd_H = \frac12 + \xi, \thd_L = \frac12\}$ with the explicit form.
\begin{align*}
        \bpl^*  =&\ \frac1{\alpha} (\alpha - \frac12 - \frac{P_{hL}}{\infdif}\cdot \xi).\\
        \bph^*  =&\ \frac1{\alpha} (\alpha - \frac12 + \frac{P_{\ell L}}{\infdif}\cdot \xi).
\end{align*}
Examining the above equations, when $\xi < \frac{\Delta}{2P_{\ell L}}$ we can easily see that both $\bpl^*$ and $\bph^*$ are in $(0, 1)$. In the second step, we simultaneously increase $\bph^*$ at rate $P_{\ell H} + P_{\ell L} $and decrease $\bpl^*$ at rate $(P_{h H} + P_{h L})$ until the former reaches 1 or the latter reaches 0, at which point we stop. The strategy after the second step is 
$\bpl^{\maj} = \bpl^* - \delta \cdot (P_{h H} + P_{h L})$ and $\bph^{\maj} = \bph^* + \delta \cdot (P_{\ell H} + P_{\ell L})$, where $\delta > 0$ is chosen so that both are between 0 and 1 and one is at the boundary. The strategy of every majority agent is set to $(\bph^{\maj}, \bpl^{\maj})$.  

This implies that all the three constraints hold. 
Because either $\bph^{\maj} =1$ or $\bpl^{\maj} = 0$ holds, all the majority agents are playing their non-dominated strategy. 
The two-step construction guarantees that $\thd_H > \frac12 + \xi$ and $\thd_L < \frac12$.  Thus, the fidelity of the sequence converges to 1.
Moreover,  because, with high probability, a $\xi$ fraction of minority agents cannot change the outcome, it is an $\epsilon$-ex-ante Bayesian $\xi n$-strong equilibrium, where $\epsilon$ converges to 0.  

It seems a little counter-intuitive that $\xi$ is constant even while $\alpha$ varies in segment 1.  
Here we try to give some high-level explanation for this behavior. 
At a high level, this is a trade-off between the difference $\thd_H - \thd_L$ (the first factor that affects $\xi$) and the balance (the third factor). When $\alpha$ is large, the margin of $\thd_H$ is smaller than that of $\thd_L$, and the majority agents' votes have to lean to $\bA$, which sacrifices $\thd_H - \thd_L$ to keep the balance. As $\alpha$ decreases, there are more minority agents always voting for $\bA$, and the margins of $\thd_H$ and of $\thd_L$ become more balanced. While the fraction of the majority agents decreases, more balanced margins allow them to play a strategy that leads to a larger $\thd_H - \thd_L$. This trade-off keeps the $\xi$ at a flat constant. Eventually, when $\alpha = \frac{1}{2P_{\ell L}}$, the margins are completely balanced, and majority agents play informatively which maximizes $\thd_H - \thd_L$. When $\alpha$ further decrease, the margin on $\thd_L$ becomes larger than $\thd_L$, and majority agents cannot increases $\thd_H - \thd_L$ anymore. Therefore, the threshold $\xi^*$ begins to decrease. 


Now we sketch the proof of the upper bound. We take an arbitrary strategy profile sequence satisfying the first two constraints and show that if $\xi \geq  \frac{\Delta}{2P_{\ell L}}$, constraint 3 is violated. 

\begin{prop}
\label{prop:seg2upper}
    For all $\alpha \le \thdmaj$, for any profile sequence $\{\stgp_\ag\}$ such that $\acc(\stgp_\ag)$ converges to 1 and no agents play weakly dominated strategies, there is a constant $\varepsilon > 0$ and infinitely many $\ag$ such that $\stgp_\ag$ is NOT an $\varepsilon$-ex-ante Bayesian $\xi\cdot \ag$-strong equilibrium, where $\xi = \frac{\Delta}{2P_{\ell L}}$. 
\end{prop}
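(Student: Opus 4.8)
The plan is to fix an arbitrary profile $\stgp_\ag$ meeting constraints~1 and~2 and to produce a minority deviation of size $\xi\ag$, with $\xi=\frac{\Delta}{2P_{\ell L}}$, that raises the deviators' utilities by a fixed constant. By the reduction already established in the proof sketch, only minority agents need deviate, and it suffices to consider the two extreme deviations: sending the $\xi\ag$ deviators to always vote $\bA$ (to raise the state-$L$ share $\thd_L$ toward $\frac12$), or to always vote $\bR$ (to lower the state-$H$ share $\thd_H$ toward $\frac12$). Write $m_H=\thd_H-\frac12>0$ and $m_L=\frac12-\thd_L>0$ (positive by constraint~2). For a non-dominated minority agent $i$ with strategy $(\bph^i,\bpl^i)$, let $p_i=\frac1\ag(P_{hH}\bph^i+P_{\ell H}\bpl^i)$ be the drop in $\thd_H$ if $i$ switches to $\bR$, and $q_i=\frac1\ag(P_{hL}(1-\bph^i)+P_{\ell L}(1-\bpl^i))$ the rise in $\thd_L$ if $i$ switches to $\bA$; let $D_H$ (resp.\ $D_L$) be the sum of the $\xi\ag$ largest $p_i$ (resp.\ $q_i$). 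The profile resists both deviations exactly when $D_H<m_H$ and $D_L<m_L$, so the goal is to show this is impossible. First note that $\alpha\le\thdmaj$ gives $1-\alpha\ge 1-\thdmaj=\frac{\Delta}{2P_{\ell L}}=\xi$, so the deviating group always fits inside the minority.

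The three ingredients I would use are: (i) the margin identity $m_H+m_L=\thd_H-\thd_L=\frac{\Delta}{\ag}\sum_i(\bph^i-\bpl^i)\le\Delta$; (ii) a per-agent deviation-power invariant $P_{hL}p_i+P_{hH}q_i\ge\frac{P_{hL}}{\ag}$, valid for every non-dominated minority agent and tight along the segment from informative voting to always-$\bA$; and (iii) a majority efficiency frontier, namely that since $\frac{P_{hH}}{P_{hL}}>\frac{P_{\ell H}}{P_{\ell L}}$ (which follows from Proposition~1), for a fixed majority contribution to $\thd_L$ its contribution to $\thd_H$ is maximized by casting $h$-votes first. Summing the invariant over any common set of $\xi\ag$ minority agents gives $P_{hL}D_H+P_{hH}D_L\ge\xi P_{hL}$; together with $D_H<m_H$ and $D_L<m_L$ this already constrains the trade-off between the two deviations, and the remaining task is to bound $m_H,m_L$ using (i) and (iii) so that the inequalities become contradictory precisely at $\xi=\frac{\Delta}{2P_{\ell L}}$. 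Concretely, I would cast ``maximize the tolerable $\xi$'' as an optimization over non-dominated profiles with $\thd_H>\frac12>\thd_L$ and show its value equals $\frac{\Delta}{2P_{\ell L}}$, with the optimum attained when all minority agents vote $\bA$ and the majority plays the balanced strategy of Proposition~\ref{prop:seg2lower} — in which case $D_L=0$, only the $\bR$-deviation is active, and $m_H$ is capped at $\frac{\Delta}{2P_{\ell L}}$ by the efficiency frontier applied under $\thd_L<\frac12$.

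Once a deviation is shown to force $\thd_H'\le\frac12$ or $\thd_L'\ge\frac12$, I would convert this into a constant utility gain: by the expected-vote-share/fidelity correspondence (extending~\citep{han2023wisdom}), the deviated profile has $\acc(\stgp_\ag')$ bounded away from $1$ for infinitely many $\ag$ (a tied expected share yields a winning probability near $\frac12$ in the affected state), whereas $\acc(\stgp_\ag)\to1$. Since minority utility is strictly decreasing in fidelity and utilities are bounded, every deviator gains at least a fixed $\varepsilon>0$, contradicting the $(\varepsilon,\xi\ag)$-equilibrium property and establishing the claim.

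The main obstacle is the combination of ingredients rather than any single estimate: invariant (ii) is on its own too weak, because it permits $m_L$ to be large (so that neither deviation is forced), and only after coupling it with the state-$L$ feasibility $\thd_L<\frac12$ and the efficiency frontier (iii) does the value $\frac{\Delta}{2P_{\ell L}}$ emerge. Equivalently, the hard part is proving that resisting the $\bR$-deviation (keeping $m_H$ large) necessarily forces the minority to retain enough $\bR$-weight that the $\bA$-deviation succeeds; carrying this out for heterogeneous minority strategies — rather than only the homogeneous all-$\bA$ profile — requires a sorting/convexity argument that reduces the optimization to its extremal profile.
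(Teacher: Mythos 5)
Your set-up is sound and two of your ingredients check out: the per-agent invariant $P_{hL}p_i+P_{hH}q_i\ge\frac{P_{hL}}{\ag}$ is correct for every non-dominated minority strategy (with equality exactly on the type-1 segment $\bpl^i=1$), and the reduction to the two extreme deviations and to minority-only deviating groups matches the paper. But the proof has a genuine gap at its center: the step ``cast `maximize the tolerable $\xi$' as an optimization over non-dominated profiles and show its value equals $\frac{\Delta}{2P_{\ell L}}$'' \emph{is} the proposition; nothing in your proposal actually performs it. You correctly observe that your invariant alone cannot yield the contradiction (it only lower-bounds $P_{hL}D_H+P_{hH}D_L$, which is compatible with both $m_H>D_H$ and $m_L>D_L$), and you correctly flag that heterogeneous minority strategies are the obstacle, but the ``sorting/convexity argument'' you invoke is never supplied. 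Note that the paper sidesteps this obstacle entirely rather than solving it: it \emph{fixes the deviating group $D$ of $\xi\ag$ minority agents in advance}. Since both candidate deviations replace the deviators' votes by constants (all-$\bA$ or all-$\bR$), the deviated vote shares depend only on the \emph{average} strategy $(\abpl,\abph)$ of the non-deviators, so heterogeneity is irrelevant and the whole question collapses to a two-variable implication: if $\thdmaj(P_{hH}\abph+P_{\ell H}\abpl)>\frac12$ then $\thdmaj(P_{hL}\abph+P_{\ell L}\abpl)+(1-\thdmaj)\ge\frac12$. The paper proves this by a monotonicity argument (the relevant function of $\abph$ is decreasing, so it suffices to check $\abph=1$), exploiting exactly your ``efficiency frontier'' $P_{hH}P_{\ell L}>P_{hL}P_{\ell H}$. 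If you adopt the fixed-group device, your framework closes; without it, the extremal-profile reduction you defer is the hard (and missing) part.

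A second, smaller gap is the conversion from ``$\thd_H'\le\frac12$ or $\thd_L'\ge\frac12$'' to a constant utility gain. Your justification (``a tied expected share yields a winning probability near $\frac12$'') is false for low-variance profiles — e.g., deterministic votes with expected share exactly $\frac12$ give winning probability $0$ — and, more to the point, the tool the paper actually has (Theorem~\ref{thm:arbitrary}) only covers $F<0$ unconditionally; the boundary case $F\ge 0$ requires the vote-count variance to be $\Theta(\ag)$. This is why the paper's proof of the proposition runs a separate case analysis: sub-linear-variance profiles are killed first via Corollary~\ref{coro:var_sublinear} (sub-linear variance forces $\thd_H-\thd_L\to 0$, after which a deviation succeeds), and only then is the vote-share argument applied in the linear-variance case, with Lemma~\ref{lem:minority_deviate} converting the fidelity failure into a constant $\varepsilon$ gain for every deviator. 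Your proof needs either this case analysis or an independent argument (e.g., a median-versus-mean bound for Poisson--binomial counts) to make the final step rigorous.
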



For each environment with $\ag$ agents, we fix the deviating group $D$ with $\xi = \frac{\Delta}{2P_{\ell L}}$ minority agents and show that no matter what the strategy profile is, the deviator can always succeed by decreasing $\thd_H$ or increasing $\thd_L$ to $\frac12$. This is a stronger statement than the existence of a successful deviation, as it fixes the deviating group in advance. 

We first consider the strategies of the non-deviators, including all the majority agents and some minority agents. Since they are not going to deviate, their contribution to $\thd_H$ and $\thd_L$ can be represented by the average of all their strategies, denoted by $\abph$ and $\abpl$. Likewise, if the deviation fails, the expected vote share under state $H$ must be greater than $\frac12$ even when all the deviators vote for $\bR$, and the expected vote share under state $L$ must be smaller than $\frac12$ even when all the deviators vote for $\bA$. Formally, this leads to the following constraints:
\begin{align*}
    \thd_H' :=&\ (1 - \xi) \cdot (\abph \cdot P_{hH} + \abpl \cdot P_{\ell H}) > \frac12.\\
    \thd_L' := &\ (1-\xi) \cdot (\abph \cdot P_{hL} + \abpl \cdot P_{\ell L}) + \xi < \frac12. 
\end{align*}

Therefore, to show the deviation will succeed, it suffice to show that for any $(\abpl, \abph)$, at least one of the inequalities is violated. 
As before, we utilize the observation that $\thd_H'$ is more sensitive to $\abph$ and $\thd_L'$ is a more sensitive to $\abpl$. First, we solve the equation set $\{\thd_H ' = \frac12, \thd_L' = \frac12\}$ and get the solution $\abphs$ and $\abpls$. However, when $\xi = \frac{\Delta}{2P_{\ell L}}$, the solution $\abphs \ge 1$. This difference changes the outcome into the opposite direction, as we will show that one of the inequalities is violated.  

We consider an arbitrary pair of $\abph$ and $\abpl$ in $[0, 1]$. Note that $\abph \le \abphs$. Therefore, if $\abpl$ is not large enough, then $\thd_H' > \frac12$ is violated. On the other hand, if $\abpl$ is large enough so that $\thd_H' > \frac12$ holds, it turns out that it is too large. $\thd_L'$ is more sensitive to $\abpl$, so a large $\abpl$ affects $\thd_L'$ more than it affects $\abph$. Therefore, when $\thd_H' > \frac12$, $\thd_L' < \frac 12$ will be violated. 
This implies that one of the two deviations can succeed, and $\stgp$ fail to be an equilibrium. 



\subsection{Segment 2, 3, and 4, $\frac{1}{2P_{\ell L}} \ge \alpha > \frac12$.}
The proof of the rest of the three segments follows the same high-level idea and shares some preliminary steps. According to Lemma~\ref{lem:dominated}, we can further categorize minority agents into to two types: Type-0 agents who always vote for $\bR$ when they receive $h$ signal, and type-1 agents who always vote for $\bA$ when they receive $\ell$ signal. While a minority agent can both always vote for $\bR$ when receiving $h$ signal and always vote for $\bA$ when receiving $\ell$ signal, in our analysis, we will always partition the agents so that each is assigned  to exactly one type. The fraction of type-0 agents will be denoted by $\gamma$, and the fraction of the type-1 agents  then $1 - \alpha - \gamma$. 

This part of proof consists of five steps. 
\begin{enumerate}
    \item We solve the case when $\gamma$ is biased (either $\gamma$  or $1- \alpha - \gamma$ is close to 0) and restrict the case when $\gamma$ is not biased to be ``double-symmetric'': all type-0 agents play the same strategy, and all type-1 agents play the same strategy.
    \item We show that a sequence of strategy profiles satisfies all three constraints if and only if $\xi< \xi_h$ and $\xi< \xi_\ell$. Here $\xi_h$ and $\xi_\ell$ are two thresholds derived by the constraints on the expected vote share ($\thd_H' > \frac12$ and $\thd_L < \frac 12$, like that in the ``Flat'' Segment 1). Therefore, the problem of finding the upper and lower bound is converted into maximizing $\min(\xi_h, \xi_\ell)$.
    \item We reduce the search space of maximizing $\min(\xi_h, \xi_\ell)$ into two corner cases. In the first case, all type-0 agents always vote for $\bR$. In the second case, all type-1 agents always vote for $\bA$. The threshold $\xi^*$ takes the maximum of these two cases and $\frac{\Delta(\alpha - \frac12)}{P_{hL}}$, which derives from case with biased $\gamma$.
    \item We tackle the two extreme cases respectively. In the case where all type-0 agents always vote for $\bR$ (the first case), the extension of ``Steep'' Segment 2 and ``Tail'' Segment 4 serves as an upper bound. On the other hand, the case where all type-1 agents always vote for $\bA$ (the second case) derives the ``Non-Linear'' Segment 3 and the ``Tail'' Segment 4. 
    \item We take the maximum of all the cases and reach the upper and the lower bounds. 
\end{enumerate}


\subsubsection{Preliminaries}
We first consider the case when either $\gamma$ is close to 0 or $1- \alpha - \gamma$ is close to 0, which means that one type of minority agents dominates the minority group. In this case, the $\xi$ fraction of deviators shall exhaust the type with fewer agents and fill up the rest fraction with agents from the other type. Following a similar reasoning of that in Proposition~\ref{prop:seg2upper} (pre-fixing a group of deviators and showing the at least one of the inequality is violated), we can get the upper bound for $\xi$ of this case is $\frac{\Delta\dot (\alpha - 1/2)}{P_{hL}}$. 

\begin{prop}
\label{prop:biasedgamma_gen}
    For $\frac{1}{2P_{\ell L}} \ge \alpha >\frac12$, for any $\xi' \ge \frac{\Delta \cdot (\alpha - 1/2)}{P_{hL}}$, NO sequence with infinitely many strategy profiles where $\gamma \le \xi'$ or $1 - \alpha - \gamma \le \xi'$ satisfy all three constraints in Theorem~\ref{thm:thresholdk} with $\xi = \xi'$. 
\end{prop}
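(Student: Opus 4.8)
The plan is to reuse the mechanism of Proposition~\ref{prop:seg2upper}: given any profile $\stgp_\ag$ in the sequence I select a deviating group $D$ of $\xi'\ag$ minority agents and show that one of the two extreme coalitional deviations — every member of $D$ switching to always-$\bA$ (which pushes $\thd_L$ up) or every member switching to always-$\bR$ (which pushes $\thd_H$ down) — carries the post-deviation expected vote share across $\frac12$. By the earlier reduction to minority deviations, a minority coalition strictly gains (by a $\Theta(1)$ amount) exactly when the deviation prevents fidelity from converging to $1$, so producing such a deviation shows $\stgp_\ag$ is not an $o(1)$-equilibrium and violates constraint~3. The only new ingredient is that the biased composition of the minority pins down the average non-deviator strategy tightly enough to sharpen the tolerable $\xi'$ from the Segment~1 value down to $\frac{\infdif(\alpha-\frac12)}{P_{hL}}$. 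I use the partition of minority agents into type-$0$ ($\bph=0$, fraction $\gamma$) and type-$1$ ($\bpl=1$, fraction $1-\alpha-\gamma$), treat the two biased regimes separately, and may assume $\xi'\le 1-\alpha$ (otherwise replace $\xi'$ by $1-\alpha$, which still exceeds the threshold since $P_{hH}\le P_{\ell L}$ gives $\frac{\infdif(\alpha-\frac12)}{P_{hL}}\le 1-\alpha$ throughout the stated range of $\alpha$, and uses only minority agents).

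For the regime $\gamma\le\xi'$ (type-$1$ dominant) I would let $D$ exhaust all type-$0$ agents and fill the remaining $(\xi'-\gamma)\ag$ slots with type-$1$ agents, so the non-deviators are the $\alpha\ag$ majority agents plus exactly $(1-\alpha-\xi')\ag$ type-$1$ agents. Writing $\abph,\abpl$ for the non-deviator average strategy, the assumption that \emph{both} extreme deviations fail reads
\[
\thd_H' = (1-\xi')\bigl(\abph P_{hH}+\abpl P_{\ell H}\bigr) > \tfrac12,
\qquad
\thd_L' = (1-\xi')\bigl(\abph P_{hL}+\abpl P_{\ell L}\bigr)+\xi' < \tfrac12 .
\]
Forming $P_{\ell L}\cdot(\text{first}) - P_{\ell H}\cdot(\text{second})$ cancels the $\abpl$ terms and, using $P_{hH}P_{\ell L}-P_{hL}P_{\ell H}=\infdif$ and $P_{\ell L}-P_{\ell H}=\infdif$ (Proposition~1), gives $(1-\xi')\abph>\frac12+\frac{P_{\ell H}}{\infdif}\xi'$. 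The structural fact that every surviving type-$1$ agent plays $\bpl=1$ forces $(1-\xi')\abpl\ge 1-\alpha-\xi'$; substituting this into the second inequality yields the opposing bound $(1-\xi')\abph<\frac{\frac12-\xi'-(1-\alpha-\xi')P_{\ell L}}{P_{hL}}$. Simplifying with $P_{\ell H}+\infdif=P_{\ell L}$ and $1-P_{hL}=P_{\ell L}$, these two bounds are simultaneously satisfiable only if $\xi'<\frac{\infdif(\alpha-\frac12)}{P_{hL}}$; hence at the stated $\xi'$ they collide and one deviation must succeed.

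The regime $1-\alpha-\gamma\le\xi'$ (type-$0$ dominant) is handled the same way, with $D$ exhausting the type-$1$ agents and filling with type-$0$ agents, leaving $(1-\alpha-\xi')\ag$ type-$0$ non-deviators, each with $\bph=0$. The identical elimination again produces $(1-\xi')\abph>\frac12+\frac{P_{\ell H}}{\infdif}\xi'$, whereas now the dominant type contributes nothing to the $h$-vote, so $(1-\xi')\abph\le\alpha$. These collide once $\xi'\ge\frac{\infdif(\alpha-\frac12)}{P_{\ell H}}$, and because $P_{hH}\le P_{\ell L}$ gives $P_{\ell H}\ge P_{hL}$, this threshold is at most $\frac{\infdif(\alpha-\frac12)}{P_{hL}}$; so profiles in this regime already fail at the $\xi'$ claimed by the proposition.

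The step I expect to be the main obstacle is isolating the correct structural bound in each regime and confirming it is the binding one: the proof closes only because the dominant minority type pins down exactly one non-deviator coordinate — $\abpl$ from below when type-$1$ dominates and $\abph$ from above when type-$0$ dominates — and I must check that neither residual freedom in the majority strategies nor the choice of which dominated-type agents enter $D$ can evade these bounds (this is where it matters that all type-$1$ agents contribute identically to $\abpl$ and all type-$0$ agents contribute identically to the $h$-vote, so the averages are determined regardless of how $D$ is filled). The two linear eliminations are then routine given Proposition~1, and only the boundary value $\xi'=\frac{\infdif(\alpha-\frac12)}{P_{hL}}$ needs the extra remark that a deviated vote share equal to exactly $\frac12$ already collapses fidelity to about $\frac12$ in one world state, still handing the minority coalition a $\Theta(1)$ gain.
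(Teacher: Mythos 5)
Your core argument is the paper's: fix a deviating group that exhausts the scarce minority type, consider the two extreme deviations (all-$\bA$ and all-$\bR$), and show that the two ``deviation fails'' vote-share conditions are incompatible once $\xi' \ge \frac{\Delta(\alpha-1/2)}{P_{hL}}$. Your algebra is correct in both regimes, and your route to the contradiction --- eliminating $\abpl$ by taking $P_{\ell L}$ times the $H$-side inequality minus $P_{\ell H}$ times the $L$-side inequality, then invoking the structural bounds $(1-\xi')\abpl \ge 1-\alpha-\xi'$ and $(1-\xi')\abph \le \alpha$ in the two regimes --- is a slightly cleaner, equivalent version of what the paper does by solving the system $\{\thd_H'=\frac12,\ \thd_L'=\frac12\}$ and observing that its solution has $\bpl^* \le 0$ at this $\xi'$. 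Using one fixed deviating group for both extreme deviations, rather than the paper's min-$\bph^1$/max-$\bph^1$ selections, is also fine and mildly simpler.

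The genuine gap is the endgame. Your collision only yields the weak conclusion that $\thd_H' \le \frac12$ or $\thd_L' \ge \frac12$; at $\xi'$ exactly equal to the threshold the two bounds coincide, so no strict margin is produced. To convert this into a violation of constraint 3 you must pass through Theorem~\ref{thm:arbitrary}, whose second bullet requires $\liminf \sqrt{\ag}\cdot\min(\thd_H'-\frac12,\frac12-\thd_L') < 0$ and whose third bullet requires the vote count's variance to be $\Theta(\ag)$. Your closing remark --- that a deviated expected vote share of exactly $\frac12$ ``already collapses fidelity to about $\frac12$'' --- silently assumes that $\Theta(\ag)$ variance. It is unjustified for profiles in which the majority agents play (nearly) signal-independent non-dominated strategies, e.g.\ $\bph^{\maj}=\bpl^{\maj}=1$, which constraint 1 permits and which give $o(\ag)$ variance. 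The paper closes exactly this hole with a variance dichotomy: when the majority strategies have $\Theta(\ag)$ variance it invokes the third bullet of Theorem~\ref{thm:arbitrary}; when they have sub-linear variance it invokes Corollary~\ref{coro:var_sublinear} to force $\thd_H-\thd_L = o(1)$, after which one of the extreme deviations beats the vanishing margin by a constant (switching a $\xi'$ fraction of type-1 agents to always-$\bR$ lowers $\thd_H$ by at least $\xi' P_{\ell H}$), landing in the $F<0$ case; Lemma~\ref{lem:minority_deviate} then converts non-convergent fidelity into a $\Theta(1)$ utility gain for the deviators. Without this branch (or an equivalent argument) your proof does not cover all profile sequences allowed by constraints 1 and 2, which is precisely the set the proposition quantifies over.
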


Moreover, $\frac{\Delta\dot (\alpha - 1/2)}{P_{hL}}$ also serves a lower bound for $\xi$. 

\begin{prop}
    \label{prop:seg3lower}
    When $\frac{1}{2P_{\ell L}} \ge \alpha >\frac12$, for any $\xi < \frac{\Delta(\alpha - 1/2)}{P_{hL}}$, there exists a sequence of strategy profiles $\{\stgp\}$ that satisfies the constraints in Theorem~\ref{thm:thresholdk} with $\xi  = \frac{\Delta(\alpha - 1/2)}{P_{hL}}$
\end{prop}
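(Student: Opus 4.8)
The plan is to exhibit an explicit strategy profile sequence realizing the claimed threshold. Motivated by the remark that the Segment 1 and Segment 2 lower bounds both have all minority agents unanimously voting $\mathbf{A}$, I would set every minority agent to play $(\beta_\ell, \beta_h) = (1,1)$ and every majority agent to play a common strategy $(\beta_\ell^{maj}, \beta_h^{maj}) = (0, b)$ for a constant $b$ to be tuned. The choice $\beta_\ell^{maj} = 0$ is forced: if one reuses the Segment 1 solution of $\{\thd_H = \frac12 + \xi,\ \thd_L = \frac12\}$, namely $\bpl^* = 1 - \frac{1}{2\alpha} - \frac{P_{hL}\xi}{\alpha\Delta}$, then driving $\xi$ up toward $\frac{\Delta}{2P_{\ell L}}$ pushes $\bpl^*$ below $0$ once $\alpha < \frac{1}{2P_{\ell L}}$. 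Hence the boundary $\bpl = 0$ becomes the binding corner in this regime, and the only remaining free parameter is $b = \bph^{maj}$.

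With this profile the expected vote shares are $\thd_H = \alpha P_{hH} b + (1-\alpha)$ and $\thd_L = \alpha P_{hL} b + (1-\alpha)$, since each minority agent contributes exactly $\frac1\ag$ to both shares. I would then impose the two margin conditions needed for a good equilibrium: fidelity requires $\thd_L < \frac12$, i.e.\ $b < \frac{\alpha - 1/2}{\alpha P_{hL}}$; and robustness against the only effective minority deviation (argued in the last paragraph) requires $\thd_H > \frac12 + \xi$. Pushing $b$ toward the fidelity limit $\frac{\alpha-1/2}{\alpha P_{hL}}$ makes $\thd_H - \frac12$ approach $\frac{P_{hH}(\alpha-1/2)}{P_{hL}} - (\alpha - \tfrac12) = \frac{\Delta(\alpha - 1/2)}{P_{hL}}$, so for any $\xi < \frac{\Delta(\alpha-1/2)}{P_{hL}}$ one can pick $b$ slightly below $\frac{\alpha-1/2}{\alpha P_{hL}}$ so that both $\thd_L < \frac12$ and $\thd_H > \frac12 + \xi$ hold strictly with $\ag$-independent margins. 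Crucially, $b \le 1$ holds precisely because $\alpha \le \frac{1}{2P_{\ell L}}$ is equivalent to $\alpha P_{\ell L} \le \frac12$, i.e.\ $\alpha - \frac12 \le \alpha P_{hL}$; this is exactly where the segment hypothesis enters.

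It remains to verify the three constraints of Theorem~\ref{thm:thresholdk}. Constraint (1) is immediate from Lemma~\ref{lem:dominated}: the majority agents satisfy $\bpl^{maj}=0$ and the minority agents satisfy $\bpl=1$, so no one plays a weakly dominated strategy. Constraint (2) follows from concentration of the realized vote share around $\thd_H$ (resp.\ $\thd_L$) in world state $H$ (resp.\ $L$): since $\thd_H > \frac12$ and $\thd_L < \frac12$ with constant margins, $\mathbf{A}$ wins in $H$ and $\mathbf{R}$ wins in $L$ with probability tending to $1$, so $\acc(\stgp_\ag)\to 1$. For constraint (3) I would invoke the reduction established earlier that only minority coalitions need be considered, and that a minority coalition of size at most $\xi\ag$ has two candidate directions: raise $\thd_L$ by switching to always-$\mathbf{A}$, or lower $\thd_H$ by switching to always-$\mathbf{R}$. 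The first is inert here because every minority agent already attains the maximal $\mathbf{A}$-contribution $\frac1\ag$, so $\thd_L$ cannot be increased at all; the second lowers $\thd_H$ by at most $\xi$, leaving $\thd_H' \ge \thd_H - \xi > \frac12$. Thus after any admissible deviation the fidelity still tends to $1$, the outcome distribution is asymptotically unchanged, and each minority agent's ex-ante utility can gain only $o(1)$, yielding an $\varepsilon$-equilibrium with $\varepsilon\to 0$.

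The main obstacle is conceptual rather than computational: recognizing that in this regime the equilibrium is constrained not by the need to defend against an $\thd_L$-increasing deviation (as the balance factor dictates in Segment 1) but by the bare fidelity requirement $\thd_L < \frac12$, since unanimous minority $\mathbf{A}$-voting already saturates $\thd_L$ and eliminates that deviation direction entirely. Identifying $\bpl^{maj}=0$ as the correct binding corner, and verifying that the hypothesis $\alpha \le \frac{1}{2P_{\ell L}}$ is precisely the feasibility condition $b \le 1$, is the crux; once these are in place the vote-share computation and the equilibrium check are short.
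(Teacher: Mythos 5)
Your proposal is correct and matches the paper's own proof essentially step for step: the paper also has all minority agents vote $\bA$ unconditionally, all majority agents play $\bpl^{\maj}=0$ with a common $\bph^{\maj}$ chosen in exactly your interval $\left(\frac{\alpha-1/2+\xi}{\alpha P_{hH}},\ \frac{\alpha-1/2}{\alpha P_{hL}}\right)$ (parametrized there via an offset $\delta$), uses the segment hypothesis $\alpha\le\frac{1}{2P_{\ell L}}$ to keep $\bph^{\maj}\le 1$, and closes with the same argument that minority deviators cannot raise $\thd_L$ and can lower $\thd_H$ by at most $\xi$, so both margins persist and any deviation gains only $o(1)$ utility.
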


The strategy profiles in Proposition~\ref{prop:seg3lower} resembles that in Proposition~\ref{prop:seg2lower}. All the minority agents always vote for $\bA$, and the majority agents play the same strategy constructed by solving equation on the expected vote share and modifying. The proof also resembles Proposition~\ref{prop:seg2lower}.

From now on, we focus on the strategy profile sequences where type-0 and type-1 agents are not biased, i,e., $\gamma > \xi$ and $1 - \alpha - \gamma > \xi$ holds for all sufficiently large $\ag$. We assume this assumption holds unless otherwise stated.

Before we start analyzing the expected vote shares and deviations with type-0 and type-1 agents, we further narrow the strategy profiles to be considered to those where all type-0 agents play the same strategy and all type-1 agents play the same strategy, with the following lemma.  

\begin{lem}
    \label{lem:double}
    If a strategy profile sequence $\{\stgp_\ag\}$ satisfies all three constraints on $\xi$, then the following strategy profile $\{\stgp_\ag'\}$ all satisfies all three constraints on $\xi$. For each $\ag$, (1) each majority agent in $\stgp'_\ag$ play the same strategy as in $\stgp_\ag$; (2) all the type-1 minority agents in $\stgp'_\ag$ play the (arithmetic) average strategy of type-1 minority agents in $\stgp_\ag$; and (3) all the type-0 minority agents in $\stgp'_\ag$ play the average strategy of type-0 minority agents in $\stgp_\ag$. 
\end{lem}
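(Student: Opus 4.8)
The plan is to verify the three constraints for the averaged profile $\stgp'_\ag$ one at a time, exploiting the fact that the averaging operation is tailored to preserve exactly the two quantities that matter—type membership and the aggregate expected vote shares—while never increasing the deviating power of any minority coalition. Constraints~1 and~2 are immediate. Since every type-0 agent has $\bph^i = 0$ and every type-1 agent has $\bpl^i = 1$, the arithmetic average within each type again satisfies $\bph = 0$ (type-0) or $\bpl = 1$ (type-1), so by Lemma~\ref{lem:dominated} the averaged minority strategies remain non-dominated; the untouched majority strategies are non-dominated by hypothesis, which gives Constraint~1. For Constraint~2, averaging within a type leaves the sums $\sum_i \bph^i$ and $\sum_i \bpl^i$ unchanged, so the expected vote shares $\thd_H$ and $\thd_L$ of $\stgp'_\ag$ equal those of $\stgp_\ag$ exactly; since fidelity convergence is governed solely by whether $\thd_H > \frac12$ and $\thd_L < \frac12$, the fidelity $\acc(\stgp'_\ag)$ still converges to $1$.

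The crux is Constraint~3. Recall from the reductions above that (i) only minority deviations need be considered, and (ii) the most profitable deviation is one of two extremes—switching the chosen deviators to always-$\bA$ (to raise $\thd_L$ toward $\frac12$) or to always-$\bR$ (to push $\thd_H$ down toward $\frac12$). For a fixed budget of $\xi\ag$ deviators, I would let $\tau_\bA(\stgp)$ denote the maximum increase of $\thd_L$ achievable by the always-$\bA$ deviation and define $\tau_\bR(\stgp)$ analogously for always-$\bR$. Each is a maximum-weight selection of $\xi\ag$ agents: the per-agent gain from switching agent $i$ to always-$\bA$ is $\frac1\ag\left[P_{hL}(1-\bph^i) + P_{\ell L}(1-\bpl^i)\right]$, and the coalition picks the $\xi\ag$ largest such gains. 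The key claim is $\tau_\bA(\stgp'_\ag) \le \tau_\bA(\stgp_\ag)$ and $\tau_\bR(\stgp'_\ag) \le \tau_\bR(\stgp_\ag)$.

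I would prove this claim by a majorization argument applied within each type. For any split of the budget into $a$ type-0 and $b = \xi\ag - a$ type-1 deviators—feasible because we are in the non-biased regime $\gamma > \xi$ and $1-\alpha-\gamma>\xi$, so the split is feasible identically before and after averaging—the sum of the $a$ largest type-0 gains is at least $a$ times the type-0 average gain, and this average gain equals, term by term, the per-agent gain of a type-0 agent after averaging (and likewise for type-1). Hence the split that is optimal for $\stgp'_\ag$ is matched, with value at least as large, by the same split in $\stgp_\ag$; optimizing over splits then yields $\tau_\bA(\stgp'_\ag) \le \tau_\bA(\stgp_\ag)$, and symmetrically for $\tau_\bR$.

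Combining the pieces closes the argument: because $\thd_H,\thd_L$ are unchanged while $\tau_\bA,\tau_\bR$ can only shrink, we obtain $\thd_L(\stgp'_\ag) + \tau_\bA(\stgp'_\ag) \le \thd_L(\stgp_\ag) + \tau_\bA(\stgp_\ag)$ and $\thd_H(\stgp'_\ag) - \tau_\bR(\stgp'_\ag) \ge \thd_H(\stgp_\ag) - \tau_\bR(\stgp_\ag)$. Thus every extreme deviation that fails (asymptotically, up to the vanishing margin captured by $\varepsilon$) against $\stgp_\ag$ also fails against $\stgp'_\ag$; since the two extremes dominate all minority deviations, $\stgp'_\ag$ remains an $\varepsilon$-ex-ante Bayesian $\xi\ag$-strong equilibrium with $\varepsilon \to 0$. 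I expect the main obstacle to be the bookkeeping in the majorization step—handling the joint optimization over the type-split together with the top-$\xi\ag$ selection, and confirming that each split's feasibility is identical before and after averaging—rather than any conceptual difficulty; the $\varepsilon$/margin accounting is routine once the exact preservation of $\thd_H,\thd_L$ is in hand.
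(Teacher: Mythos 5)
Your core mechanism for Constraint~3 --- that within-type averaging preserves the expected vote shares exactly while weakly reducing the coalition's deviating power (your $\tau_\bA,\tau_\bR$ majorization over type-splits) --- is precisely the insight the paper's proof is built on, and that part of your argument is sound. The genuine gap is in how you handle fidelity, both for Constraint~2 and when closing Constraint~3. You assert that ``fidelity convergence is governed solely by whether $\thd_H > \frac12$ and $\thd_L < \frac12$,'' but under the paper's tool (Theorem~\ref{thm:arbitrary}) this is false: convergence requires $\liminf_{\ag}\sqrt{\ag}\cdot\min(\thd_H-\frac12,\,\frac12-\thd_L)=+\infty$, and there is a third case in which a finite nonnegative liminf combined with $\Theta(\ag)$ variance destroys convergence. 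This matters here for two reasons. First, averaging within a type weakly \emph{increases} the total variance $\sum_i p_i(1-p_i)$ (concavity of $p(1-p)$), so even though $\thd_H,\thd_L$ are exactly preserved, $\{\stgp'_\ag\}$ could in principle move from a benign sub-linear-variance regime into the $\Theta(\ag)$-variance regime where a finite margin is fatal; equality of vote shares alone does not transfer fidelity. Second, for Constraint~3 the statement ``the extreme deviations fail against $\stgp_\ag$'' only tells you (via Lemma~\ref{lem:minority_deviate}) that their fidelity converges to $1$; it does not by itself give the quantitative $\sqrt{\ag}$-margin statements that your monotonicity inequalities are supposed to pass on to $\stgp'_\ag$.

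The paper closes exactly this hole, and the step is not routine bookkeeping. From the equilibrium property of $\stgp_\ag$ it deduces that the post-deviation vote shares must stay near the correct side of $\frac12$, which forces $\thd_H - \thd_L \ge \frac{\xi}{2}$ for all large $\ag$; Lemma~\ref{lem:var_linear} then shows the \emph{majority} agents' strategies already have $\Theta(\ag)$ variance (deviations and averaging never touch majority strategies, so this persists in every profile under consideration); with linear variance, the second and third cases of Theorem~\ref{thm:arbitrary} are excluded, upgrading ``fidelity converges to $1$'' into $\liminf_{\ag}\sqrt{\ag}\cdot(\text{post-deviation margins of }\stgp_\ag)=+\infty$. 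Only these quantitative statements --- which do transfer to $\stgp'_\ag$ and its deviations via your inequalities --- allow one to invoke the first case of Theorem~\ref{thm:arbitrary} and Lemma~\ref{lem:minority_deviate} to conclude both Constraint~2 and Constraint~3 for the averaged profile. Without the gap-plus-variance argument, your fidelity-transfer steps do not go through as written.
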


The high level idea of Lemma~\ref{lem:double} is that $\stgp_\ag'$ shares the equal expected vote with $\stgp_\ag$ (so as the same margin on $\thd_H$ and $\thd_L$), but the minority agents have weaker deviating power. For example, when the deviation aims to decrease $\thd_H$, the deviators (switching to always vote for $\bR$) in $\stgp_\ag$ can be the type-1 agents with largest $\bph^i$. This brings a larger decrease to $\thd_H$ compared to a $\xi$ fraction of type-1 agents who play the average $\bph$ among all type-1 agents. Therefore, if deviations cannot succeed in $\stgp_\ag$, it cannot succeed in $\stgp_\ag'$ as well. 

\subsubsection{Transition to maximization problem.}
In this part, we present the core steps in our proof. By solving the inequalities of the deviating expected vote share, we convert the problem of finding upper and lower bounds into maximizing the minimum of two upper bounds of $\xi$, denoted as $\xi_h$ and $\xi_\ell$, on the strategies and fractions of type-0 and type-1 agents. 

We first introduce the notations. Consider a strategy profile $\stgp$ where no agents play weakly dominated strategies. Let $\bph^{\maj}$  and $\bpl^{\maj}$ be the average strategy of the majority agents. The strategy that type-0 agents play (recall that we can assume that they play the same strategy without loss of generality) is $(\bpl^0, \bph^0 = 0)$. The strategy that type-1 agents play is $(\bpl^1 = 1, \bph^1)$. 
In this way, the expected vote share of the strategy profile can be written as 
    \begin{align*}
        \thd_H =&\ \alpha\cdot (P_{hH}\cdot \bph^{\maj} + P_{\ell H}\cdot \bpl^{\maj}) +  (1 - \alpha - \gamma)\cdot (P_{hH}\cdot \bph^{1} + P_{\ell H}) + \gamma \cdot P_{\ell H}\cdot \bpl^0.\\
        \thd_L =&\ \alpha\cdot (P_{hL}\cdot \bph^{\maj} + P_{\ell L}\cdot \bpl^{\maj}) +  (1 - \alpha - \gamma)\cdot (P_{hL}\cdot \bph^{1} + P_{\ell L}) + \gamma \cdot P_{\ell L}\cdot \bpl^0.
    \end{align*}

$\stgp$ is an $\varepsilon$-Bayesian $\xi \ag$-strong equilibrium if and only if no deviation succeed. As said, we can consider only two extremity deviations that minimizes/maximizes the deviating expected vote share respectively. In the first deviation that minimizes $\thd_L$, a $\xi$ fraction of type-1 agents switch to always vote for $\bR$. The expected vote share under $H$ state after deviation $\thd_H'$ is required to be larger than $\frac12$ to successfully defend the deviation. In the second deviation that maximizes $\thd_H$, a $\xi$ fraction of type-0 agents switch to always vote for $\bA$. The expected vote share under $L$ state after deviation $\thd_L'$ is required to be smaller than $\frac12$. Therefore, we have the following constraints. 
\begin{align}
        \thd'_H := \alpha\cdot (P_{hH}\cdot \bph^{\maj} + P_{\ell H}\cdot \bpl^{\maj}) +  (1 - \alpha - \gamma - \xi)\cdot (P_{hH}\cdot \bph^{1} + P_{\ell H}) + \gamma \cdot P_{\ell H}\cdot \bpl^0 > &\ \frac12\label{eq:devthdH}\\
        \thd'_L := \alpha\cdot (P_{hL}\cdot \bph^{\maj} + P_{\ell L}\cdot \bpl^{\maj}) +  (1 - \alpha - \gamma)\cdot (P_{hL}\cdot \bph^{1} + P_{\ell L}) + (\gamma - \xi) \cdot P_{\ell L}\cdot \bpl^0 +\xi <&\ \frac12\label{eq:devthdL}
    \end{align}
We claim that, for any given $\bph^1$, $\bpl^0$, $\gamma$, and $\xi$, there is a pair of $(\bpl^\maj, \bph^\maj)$ satisfying this condition if and only if the solution to the equation set $\{\thd'_H = \frac12, \thd'_L = \frac12\}$ (denoted as $(\bpl^*, \bph^*)$) is in $(0, 1)$. The proof of the claim shares the same idea with the proof of Proposition~\ref{prop:seg2lower} and~\ref{prop:seg2upper} as we illustrated. 

Specifically, the explicit form of the solution is 
    \begin{align*}
        \bpl^* =&\ \frac{1}{\alpha} (\frac12 - (1-\alpha)  + \gamma \cdot ( 1 - \bpl^0)) - \frac{\xi}{\Delta\cdot \alpha}(P_{hH}\cdot P_{hL}\cdot \bph^1 + P_{hH}\cdot P_{\ell L} \cdot (1 - \bpl^0) + P_{hL}).\\
        \bph^* = &\ \frac{1}{\alpha} (\frac12 - (1 - \alpha -\gamma) \cdot \bph^1) + \frac{\xi}{\Delta\cdot \alpha}(P_{hH}\cdot P_{\ell L}\cdot \bph^1 + P_{\ell H}\cdot P_{\ell L} \cdot (1 - \bpl^0) + P_{\ell H}).
    \end{align*}
Solving the constraint $\bph^* < 1$ and $\bpl^* > 0$ on $\xi$ respectively, we get two thresholds on $\xi$, denoted as $\xi_\ell$ and $\xi_h$. 
    \begin{align}
        \xi < &\ \xi_h :=  \frac{\Delta(\alpha - 1/2 + (1 - \alpha - \gamma) \cdot \bph^1)}{P_{hH}\cdot P_{\ell L}\cdot \bph^1 + P_{\ell H}\cdot P_{\ell L} \cdot (1 - \bpl^0) + P_{\ell H}}.\label{eq:xih}\\
        \xi < &\ \xi_{\ell} := \frac{\Delta (\alpha - 1/2 + \gamma \cdot (1 - \bpl^0))}{P_{hH}\cdot P_{hL}\cdot \bph^1 + P_{hH}\cdot P_{\ell L} \cdot (1 - \bpl^0) + P_{hL}}. \label{eq:xil}
    \end{align}

Specifically, $\xi_h(0, 0) = \frac{\Delta(\alpha - 1/2)}{P_{\ell H}}$ and $\xi_{\ell}(0, 0) = \frac{\Delta(\alpha - 1/2)}{P_{hL}}$ do not depends on $\gamma$. 

The following two lemmas guarantee the relationship between bounding $\xi$ and maximizing $\min(\xi_h, \xi_\ell)$. Lemma~\ref{lem:xilower} guarantees a strategy profile sequence satisfying all constraints for every $\min(\xi_h, \xi_\ell)$ that can be reached by a set of $\bph^1, \bpl^0$, and $\gamma$. Lemma~\ref{lem:xiupper} guarantees that the upper bound of $\xi^*$ is (partially) determined by the maximum of $\min(\xi_h, \xi_\ell)$. 

\begin{lem}
    \label{lem:xilower}
    Given a set of $\bph^1, \bpl^0$, and $\gamma$ satisfying $\bph^1 \in [0,1]$, $\bpl^0 \in [0, 1]$, and let $\xi' = min(\xi_h, \xi_\ell)$ corresponding to $\bph^1, \bpl^0$, and $\gamma$. If $\gamma > \xi'$ and $1 - \alpha - \gamma >\xi'$ holds, then for all $\xi < \xi'$ there exists a sequence of profiles such that the constraints in Theorem~\ref{thm:thresholdk} holds for $\xi$. 
\end{lem}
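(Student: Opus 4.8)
The plan is to reuse the solve-and-perturb construction of Proposition~\ref{prop:seg2lower}, now with a mixed minority population. Fix the type-$0$ minority agents to play $(\bpl^0, \bph^0 = 0)$ and the type-$1$ minority agents to play $(\bpl^1 = 1, \bph^1)$ with population fractions $\gamma$ and $1 - \alpha - \gamma$; by Lemma~\ref{lem:dominated} both are non-dominated. It remains to exhibit a single majority strategy $(\bpl^{\maj}, \bph^{\maj})$ making the profile satisfy all three constraints for the given $\xi < \xi' = \min(\xi_h, \xi_\ell)$. I would first solve the system $\{\thd'_H = \frac12, \thd'_L = \frac12\}$ for the majority strategy, obtaining the explicit pair $(\bpl^*, \bph^*)$ displayed before Equations~\ref{eq:xih}--\ref{eq:xil}, and then argue $(\bpl^*, \bph^*) \in (0,1)^2$.

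The interior claim is where the hypotheses $\xi < \xi_h$ and $\xi < \xi_\ell$ enter. Since $\Delta > 0$, $\bph^*$ is strictly increasing in $\xi$ and $\bpl^*$ strictly decreasing in $\xi$; by construction $\xi_h$ is exactly the value at which $\bph^* = 1$ and $\xi_\ell$ the value at which $\bpl^* = 0$, so $\xi < \xi_h$ gives $\bph^* < 1$ and $\xi < \xi_\ell$ gives $\bpl^* > 0$. For the remaining two bounds I evaluate at $\xi = 0$: there $\bph^* = \frac1\alpha(\frac12 - (1-\alpha-\gamma)\bph^1) > 0$ because $(1-\alpha-\gamma)\bph^1 \le 1 - \alpha < \frac12$, and $\bpl^* = \frac1\alpha(\alpha - \frac12 + \gamma(1-\bpl^0)) < 1$ because $\gamma(1 - \bpl^0) \le \gamma \le 1 - \alpha < \frac12$; monotonicity in $\xi$ then propagates $\bph^* > 0$ and $\bpl^* < 1$ to all admissible $\xi \ge 0$. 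With $(\bpl^*, \bph^*)$ interior, I run the two-step perturbation of Proposition~\ref{prop:seg2lower}: from $(\bpl^*, \bph^*)$ increase $\bph$ at rate $P_{\ell H}+P_{\ell L}$ and decrease $\bpl$ at rate $P_{hH}+P_{hL}$ until $\bph^{\maj} = 1$ or $\bpl^{\maj} = 0$. A direct computation (using $\Delta = P_{hH}P_{\ell L} - P_{hL}P_{\ell H}$) shows $\thd'_H$ increases at rate $\alpha\Delta > 0$ and $\thd'_L$ decreases at rate $\alpha\Delta$ along this ray, so the endpoint is non-dominated and yields $\thd'_H > \frac12$ and $\thd'_L < \frac12$ with a positive, $\ag$-independent gap.

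It remains to check the three constraints. Non-domination holds for every agent by the choices above. Since removing minority $\bR$-switchers only lowers $\thd_H$ and adding minority $\bA$-switchers only raises $\thd_L$, we have $\thd_H \ge \thd'_H > \frac12$ and $\thd_L \le \thd'_L < \frac12$; by the expected-vote-share characterization extended from~\citep{han2023wisdom}, $\acc(\stgp_\ag) \to 1$. For the equilibrium constraint I invoke the reduction already established: only minority agents can profitably deviate, and among them only the two extreme coalitions matter -- up to $\xi\ag$ type-$1$ agents switching to $\bR$ (the vote-share-minimizing attack on $\thd_H$) and up to $\xi\ag$ type-$0$ agents switching to $\bA$ (the vote-share-maximizing attack on $\thd_L$). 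The non-biased hypotheses $\gamma > \xi'$ and $1 - \alpha - \gamma > \xi'$ together with $\xi < \xi'$ guarantee each worst-case coalition fits entirely inside its own type, so Equations~\ref{eq:devthdH}--\ref{eq:devthdL} are exactly the $\thd'_H > \frac12$ and $\thd'_L < \frac12$ already secured. Because both post-deviation expected vote shares keep a constant-margin lead on the correct side, a concentration bound makes the probability that any admissible deviation flips the outcome vanish, so the utility any coalition can gain tends to $0$, giving an $\varepsilon$-equilibrium with $\varepsilon \to 0$.

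The main obstacle is not any single calculation but assembling the equivalences correctly: one must confirm that $\xi < \min(\xi_h, \xi_\ell)$ is precisely the condition placing $(\bpl^*, \bph^*)$ in the open unit square (so that a feasible, non-dominated majority strategy exists), and simultaneously that the two inequalities $\thd'_H > \frac12$, $\thd'_L < \frac12$ dominate all coalitional deviations. The latter leans on the non-biased hypotheses being strong enough to keep each extreme coalition within a single type; without them a coalition could draw from both types and the reduction to Equations~\ref{eq:devthdH}--\ref{eq:devthdL} would fail, which is exactly why these hypotheses appear in the statement.
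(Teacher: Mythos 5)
Your proof is correct and follows essentially the same route as the paper's: the same solve-and-perturb construction of the majority strategy from the system $\{\thd'_H = \frac12,\ \thd'_L = \frac12\}$, the same verification of non-domination and of fidelity via the expected-vote-share characterization, and the same reduction of the equilibrium check to the two extreme within-type coalitions enabled by $\gamma > \xi'$ and $1-\alpha-\gamma > \xi'$. If anything, your explicit monotonicity-in-$\xi$ argument placing $(\bpl^*, \bph^*)$ in $(0,1)^2$ fills in a detail the paper merely asserts ``by the definition of $\xi_h$ and $\xi_\ell$.''
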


\begin{lem}
    \label{lem:xiupper}
    Let $\xi' = \max_{\bph^1, \bpl^0, \gamma} \min(\xi_h, \xi_\ell)$. No sequence strategy profile can satisfy all three constraints in Theorem~\ref{thm:thresholdk} with $\xi = \max(\xi', \frac{\Delta(\alpha - 1/2)}{P_{hL}})$. 
\end{lem}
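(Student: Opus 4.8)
The plan is to argue by contradiction, combining the two preliminary results of this subsection according to whether the fraction $\gamma$ of type-0 agents is biased. Note first that the prescribed value $\xi = \max(\xi', \frac{\Delta(\alpha - 1/2)}{P_{hL}})$ satisfies both $\xi \ge \xi'$ and $\xi \ge \frac{\Delta(\alpha - 1/2)}{P_{hL}}$; each of these two bounds will be consumed by one of the two cases below. Suppose toward a contradiction that a sequence $\{\stgp_\ag\}$ satisfies all three constraints of Theorem~\ref{thm:thresholdk} with this $\xi$, and for each $\ag$ let $\gamma_\ag$ be the fraction of type-0 agents in $\stgp_\ag$. I split on how often $\gamma_\ag$ is biased, i.e. satisfies $\gamma_\ag \le \xi$ or $1 - \alpha - \gamma_\ag \le \xi$.

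In the first case, infinitely many profiles are biased. Passing to that subsequence, every profile in it is biased and it still satisfies the three constraints. Since $\xi \ge \frac{\Delta(\alpha - 1/2)}{P_{hL}}$, Proposition~\ref{prop:biasedgamma_gen} (applied with its parameter equal to $\xi$) asserts that no such biased subsequence can satisfy all three constraints, a contradiction. In the second case, all but finitely many profiles are non-biased, so $\gamma_\ag > \xi$ and $1 - \alpha - \gamma_\ag > \xi$ for all large $\ag$. Here I invoke Lemma~\ref{lem:double} to replace each $\stgp_\ag$ by its double-symmetric version $\stgp_\ag'$, which preserves all three constraints; write its parameters as $(\bph^{\maj}_\ag, \bpl^{\maj}_\ag, \bph^1_\ag, \bpl^0_\ag, \gamma_\ag)$. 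Because the deviators are minority agents whose utility moves opposite to fidelity, constraints 2 and 3 force both extreme deviations to fail asymptotically; by the derivation of inequalities~\ref{eq:devthdH} and~\ref{eq:devthdL} this is exactly the requirement that $\xi \le \xi_h$ and $\xi \le \xi_\ell$ hold at the parameters of $\stgp_\ag'$, where $\xi_h, \xi_\ell$ are given by \ref{eq:xih} and \ref{eq:xil}. Hence $\xi \le \min(\xi_h, \xi_\ell) \le \max_{\bph^1, \bpl^0, \gamma}\min(\xi_h, \xi_\ell) = \xi'$, which contradicts $\xi \ge \xi'$ provided the boundary is ruled out.

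The main obstacle is precisely this boundary case $\xi = \xi'$ in the non-biased branch, since the equilibrium condition derived from \ref{eq:devthdH}--\ref{eq:devthdL} is strict ($\xi < \xi_h$ and $\xi < \xi_\ell$), so equality must be handled separately. The resolution is to track the vote-share margins rather than $\xi$ directly: if the parameters of $\stgp_\ag'$ yield $\min(\xi_h, \xi_\ell) = \xi' = \xi$, then solving the corresponding deviation inequality at equality forces at least one of $\thd_H' = \frac12$ or $\thd_L' = \frac12$ at the deviating profile, so that deviation drives the expected vote share exactly to the tie and the post-deviation fidelity stays near $\frac12$ instead of converging to $1$. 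Consequently the deviating minority obtains a utility gain bounded away from $0$, contradicting constraint 3 ($\varepsilon \to 0$).

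Finally, a subsequence argument handles the dependence of $(\bph^1_\ag, \bpl^0_\ag, \gamma_\ag)$ on $\ag$. Since $\min(\xi_h, \xi_\ell) \le \xi'$ for every admissible choice of parameters, the inequality $\xi \ge \min(\xi_h, \xi_\ell)$ holds uniformly along the sequence, so for each large $\ag$ either $\xi > \min(\xi_h, \xi_\ell)$ at $\stgp_\ag'$—in which case one deviation already succeeds with a vote-share margin bounded away from $0$—or $\xi = \min(\xi_h, \xi_\ell)$, in which case the boundary argument applies. In either subcase the deviating minority secures a utility improvement that does not vanish as $\ag \to \infty$, so $\stgp_\ag'$ (and hence $\stgp_\ag$, by Lemma~\ref{lem:double}) cannot form an $\varepsilon$-ex-ante Bayesian $\xi\ag$-strong equilibrium with $\varepsilon \to 0$. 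This contradicts constraint 3 and completes the argument.
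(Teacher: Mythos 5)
Your overall decomposition (biased $\gamma$ handled by Proposition~\ref{prop:biasedgamma_gen}, non-biased $\gamma$ reduced via Lemma~\ref{lem:double} to double-symmetric profiles and then to the thresholds $\xi_h,\xi_\ell$) matches the paper's proof. The genuine gap is in how you dispose of the near-boundary cases, and it traces to a missing ingredient that the paper's proof spends most of its length on: the variance of the majority agents' strategies. Your final dichotomy --- ``either $\xi > \min(\xi_h,\xi_\ell)$ at $\stgp_\ag'$, in which case one deviation succeeds with a vote-share margin bounded away from $0$, or $\xi = \min(\xi_h,\xi_\ell)$'' --- is false as stated: the parameters $(\bph^1_\ag,\bpl^0_\ag,\gamma_\ag)$ vary with $\ag$, so the gap $\xi - \min(\xi_h,\xi_\ell)$ can be strictly positive yet vanish as $\ag\to\infty$, and then the post-deviation vote share violates the half-threshold only by an $o(1)$ amount. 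In that regime, and likewise in your exact-tie case, the inference ``expected vote share at (or barely past) $\frac12$ $\Rightarrow$ post-deviation fidelity stays near $\frac12$'' is precisely the \emph{third} case of Theorem~\ref{thm:arbitrary}, which is conditional on the vote count having variance $\Theta(\ag)$. You never establish this condition, and it can fail: if the majority agents play (nearly) pure strategies, the vote count concentrates, and a profile whose post-deviation expected vote share exceeds $\frac12$ by only $\Theta(1/\ag)$ can still have fidelity converging to $1$ after the deviation, so no contradiction with constraint 3 is obtained. In other words, for low-variance profiles the expected-vote-share characterization of equilibrium that your whole argument rests on simply breaks down.

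The paper closes this hole with a case analysis that your proposal lacks. When the majority strategies have variance $\Theta(\ag)$ for all large $\ag$, it shows from $\xi \ge \xi' \ge \min(\xi_h,\xi_\ell)$ that the solution $(\bpl^*,\bph^*)$ of $\{\thd_H'=\frac12,\thd_L'=\frac12\}$ lies outside $(0,1)^2$, and then a monotonicity argument ($P_{hH}>P_{hL}$, $P_{\ell L}>P_{\ell H}$) yields the \emph{weak} violation $\thd_H'\le\frac12$ or $\thd_L'\ge\frac12$ for the profile's actual $(\bph^\maj,\bpl^\maj)$; with linear variance, cases 2 and 3 of Theorem~\ref{thm:arbitrary} together kill the fidelity of one deviation regardless of how small the violation is, and Lemma~\ref{lem:minority_deviate} converts this into a constant utility gain. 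When instead the variance is sub-linear infinitely often, the algebra above is not usable at all; the paper invokes Corollary~\ref{coro:var_sublinear} to show $\thd_H-\thd_L$ collapses, and then exploits the fact that the two extreme deviations shift the expected vote share by a \emph{constant} ($\xi'\cdot(P_{hL}+P_{\ell H})$) to force one of them below/above $\frac12$ by a constant margin. To repair your proof you would need to add both branches; as written, the argument is incomplete precisely where the statement is delicate.
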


The proof of Lemma~\ref{lem:xilower} resembles that of Proposition~\ref{prop:seg2lower} (except for the more complicated constraints), and the proof of Lemma~\ref{lem:xiupper} resembles that of Proposition~\ref{prop:seg2upper}. Both proof shares the idea that the constraints $\{\thd_H' > \frac12, \thd_L' < \frac12\}$ holds if and only if the equation set $\{ \thd_h' = \frac12, \thd_L' = \frac12\}$ has a solution in $(0, 1)^2$. The reason that Lemma~\ref{lem:xiupper} takes a maximum on $\xi'$ and $ \frac{\Delta(\alpha - 1/2)}{P_{hL}}$ is that $\xi_h$ and $\xi_\ell$ only characterize the scenario where $\gamma > \xi'$ and $1 - \alpha - \gamma >\xi'$. When $\gamma \le \xi'$ or $1 - \alpha - \gamma \le \xi'$, Proposition~\ref{prop:biasedgamma_gen} gives the upper bound $ \frac{\Delta(\alpha - 1/2)}{P_{hL}}$. 

\subsubsection{Maximizing $\min(\xi_h, \xi_\ell)$. }
From now on, we start to maximize $\min(\xi_h, \xi_\ell)$ on $\bph^1$, $1 - \bpl^0$, and $\gamma$. We find that $\min(\xi_h, \xi_\ell)$ exceeds $\frac{\Delta (\alpha - 1/2)}{P_{hL}}$ only if one of the following condition holds. (1) all the type-1 agents always vote for $\bA$ ($\bph^1 = 1$); or (2) all the type-0 agents always vote for $\bR$ ($1 - \bpl^0 = 1$). 

\begin{prop}
    \label{prop:border_cases}
    At least one of the following holds. (1) $\max_{\bph^1, \bpl^0, \gamma} \min(\xi_h, \xi_\ell) \le \frac{\Delta(\alpha - 1/2)}{P_{hL}}$; (2) $\min(\xi_h, \xi_\ell)$ is maximized when $\bph^1 = 1$, and (3) $\min(\xi_h, \xi_\ell)$ is maximized when $1 - \bpl^0 = 1$. 
\end{prop}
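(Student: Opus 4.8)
The plan is to treat $\min(\xi_h,\xi_\ell)$ as a function on the box $\bph^1\in[0,1]$, $\bpl^0\in[0,1]$, $\gamma\in[0,1-\alpha]$ (writing $\beta:=1-\alpha-\gamma$ for the type-1 fraction, $x:=\bph^1$, and $y:=1-\bpl^0$), and to show that any maximizer whose value strictly exceeds the reference $r:=\frac{\Delta(\alpha-1/2)}{P_{hL}}$ must lie on the face $\bph^1=1$ or the face $1-\bpl^0=1$. The first move is a set of boundary reductions killing the ``cheap'' faces. Using the WLOG assumption $P_{hH}\le P_{\ell L}$, hence $P_{\ell H}\ge P_{hL}$, I read off from \eqref{eq:xih}--\eqref{eq:xil} that at $x=0$ one has $\xi_h=\frac{\Delta(\alpha-1/2)}{P_{\ell H}(1+P_{\ell L}y)}\le\frac{\Delta(\alpha-1/2)}{P_{\ell H}}\le r$; at $y=0$ one has $\xi_\ell\le r$; at $\gamma=0$ one has $\xi_\ell\le r$; and at $\gamma=1-\alpha$ (so $\beta=0$) one has $\xi_h\le r$. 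On each of these faces $\min(\xi_h,\xi_\ell)\le r$, so if the global maximum exceeds $r$ its maximizer has $x,y\in(0,1]$ and $\gamma\in(0,1-\alpha)$.

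Next I pin down the structure at such a maximizer. From \eqref{eq:xih}--\eqref{eq:xil}, $\xi_\ell$ is strictly decreasing in $x$ and $\xi_h$ is strictly decreasing in $y$; so if the two thresholds were unequal we could lower whichever of $x,y$ feeds the smaller threshold and strictly raise $\min(\xi_h,\xi_\ell)$, contradicting maximality. Hence $\xi_h=\xi_\ell=:V$ at the maximizer. (That $\xi_h$ is non-increasing and $\xi_\ell$ non-decreasing in $\gamma$ is also what licenses balancing through $\gamma$ in the first place.) Assuming $x,y\in(0,1)$ — otherwise we are already on a target face and done — the maximizer is interior in all three coordinates, so $V$ is stationary along the two-dimensional balance surface $\{\xi_h=\xi_\ell\}$. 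Taking the two independent tangent directions ``move $x$, compensate with $\gamma$ to preserve balance'' and ``move $y$, compensate with $\gamma$'' and setting the induced $dV$ to zero yields, after clearing the common positive denominators, the two stationarity relations
\[
\Delta\beta\,y = V\,P_{hH}\,(P_{hL}x+P_{\ell L}y),
\qquad
\Delta\gamma\,x = V\,P_{\ell L}\,(P_{hH}x+P_{\ell H}y).
\]

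The crux — and the step I expect to be the main obstacle — is to show these relations are incompatible with $V>r$. The clean route is to feed each relation back into the balance equation it matches. The quantity $V P_{hH}(P_{hL}x+P_{\ell L}y)$ is exactly the non-constant part of $V\cdot(\text{denominator of }\xi_\ell)$, so substituting collapses $\xi_\ell=V$ to $V P_{hL}=\Delta(\alpha-\frac12+y\rho)$; symmetrically $\xi_h=V$ collapses to $V P_{\ell H}=\Delta(\alpha-\frac12-x\rho)$, where $\rho:=\gamma-\beta=2\gamma-(1-\alpha)$ is a single scalar measuring the imbalance between the two minority types. A one-line dichotomy then finishes it: if $\rho\ge0$ then $V=\frac{\Delta(\alpha-1/2-x\rho)}{P_{\ell H}}\le\frac{\Delta(\alpha-1/2)}{P_{\ell H}}\le r$ (using $x>0$ and $P_{\ell H}\ge P_{hL}$), while if $\rho<0$ then $V=\frac{\Delta(\alpha-1/2+y\rho)}{P_{hL}}<\frac{\Delta(\alpha-1/2)}{P_{hL}}=r$ (using $y>0$). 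Either way $V\le r$, contradicting $V>r$. Hence no fully interior maximizer beats the reference, so the maximum either is at most $r$ (conclusion (1)) or is attained on the face $\bph^1=1$ or $1-\bpl^0=1$ (conclusions (2) and (3)). The real difficulty is discovering this substitution: attacking the four relations among $\gamma,\beta,x,y$ head-on is forbidding, whereas routing the first-order conditions through the balance equations linearizes everything and makes the $\rho$-dichotomy appear.
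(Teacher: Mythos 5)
Your proof is correct, and it takes a genuinely different route from the paper's. The paper argues globally and never touches stationary points: it first kills the regimes where $\gamma$ or $1-\alpha-\gamma$ is small using monotonicity of $\xi_h$ in $\bph^1$ and of $\xi_\ell$ in $1-\bpl^0$ (Proposition~\ref{prop:mid_biased_gamma_gen}), and for the remaining $\gamma$ it restricts both functions to rays $1-\bpl^0=t\cdot\bph^1$ through the origin, observing that the sign of each directional derivative depends only on $t$ (linearly) and not on the position along the ray; hence $\xi_h$ and $\xi_\ell$ are each monotone on every ray, so either one of them is dominated by its value at the origin, which is at most $\frac{\Delta(\alpha-1/2)}{P_{hL}}$, or both increase until the ray exits through the face $\bph^1=1$ or $1-\bpl^0=1$ (Lemma~\ref{lem:slope_gen}). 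You instead run a local variational argument: after boundary-face reductions (your four faces $\bph^1=0$, $1-\bpl^0=0$, $\gamma=0$, $\gamma=1-\alpha$ all check out against the reference $r=\frac{\Delta(\alpha-1/2)}{P_{hL}}$), you force equalization $\xi_h=\xi_\ell=V$ at any maximizer beating $r$, derive two first-order conditions along the balance surface --- I verified that clearing denominators indeed yields $\Delta(1-\alpha-\gamma)\,y=VP_{hH}(P_{hL}x+P_{\ell L}y)$ and $\Delta\gamma\,x=VP_{\ell L}(P_{hH}x+P_{\ell H}y)$ --- and your substitution back into the two balance equations correctly collapses them to $VP_{hL}=\Delta(\alpha-\tfrac12+\rho y)$ and $VP_{\ell H}=\Delta(\alpha-\tfrac12-\rho x)$ with $\rho=\gamma-(1-\alpha-\gamma)$, after which the sign dichotomy on $\rho$ gives $V\le r$ in either case, the desired contradiction. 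What each approach buys: yours is self-contained and isolates a clean structural reason for the result (the imbalance scalar $\rho$ cannot favor both constraints at once), but it relies on compactness of the box and an implicit-function-theorem step that you only gesture at --- you should state explicitly that $x,y>0$ makes $\partial_\gamma(\xi_h-\xi_\ell)<0$, so the balance set is locally a smooth graph over $(x,y)$ and tangential stationarity is legitimate; the paper's ray argument needs no interiority, no attained maximizer, and no smoothness bookkeeping, and its intermediate monotonicity facts are reused verbatim in the later case analyses (Propositions~\ref{prop:seg3_h} and~\ref{prop:seg5_h}), which is what the paper's setup is optimized for.
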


Proposition~\ref{prop:border_cases} largely decrease the space we need to search for the maximum of $\min(\xi_h, \xi_\ell)$. We only need examine two corner cases $\bph^1 = 1$ and $1 - \bpl^0 = 1$ and compare the maximum of these cases with $\frac{\Delta (\alpha - 1/2)}{P_{hL}}$.

The key idea to prove Proposition~\ref{prop:border_cases} is to analyze the derivatives of $\xi_h$ and $\xi_\ell$ for a spectrum of lines that passes $(\bph^1 = 0, 1 - \bpl^0 = 0)$. That is, for every $t \ge 0$, we consider the derivatives
under a linear constraint of $1 -\bpl^0 = t\cdot \bph^1$. (Specifically, $t = +\infty$ for line $\bph^1 = 0$.) If $\xi_h$ (or $\xi_\ell$) decreases towards such a line characterized by $t$, then for every $1 -\bpl^0 = t\cdot \bph^1$ and every $\gamma$, $\xi_h$ ($\xi_\ell$, respectively) will not exceed that when $\bph = 1 - \bpl = 0$, which is no more than $\frac{\Delta (\alpha - 1/2)}{P_{hL}}$. On the other hand, if $\xi_h$ and $\xi_\ell$ both increases towards line $t$, they will keep increasing until it reach $\bph^1 = 1$ or $1 - \bpl^0 = 1$. Therefore, $\min(\xi_h, \xi_\ell)$ is maximized at $\bph^1 = 1$ or $1 - \bpl^0 = 1$ (if it exceeds $\frac{\Delta (\alpha - 1/2)}{P_{hL}}$). 

It remains to show that $\xi_h$ and $\xi_\ell$ are always increasing or always non-increasing on every line $t$. While directly solving the derivatives is difficult, we only care about whether they are positive or negative. Recall the fact that for a function $f = \frac{u}{v}$, the derivative of $f$ is $f' = \frac{u'v - uv'}{v^2}$. Therefore, we follow $\sign(f') = \sign(u'v - uv')$ and have the following observation on the derivatives. 

\begin{align*}
    \sign(\frac{\partial \xi_h(\bph^1, t\cdot \bph^1)}{\partial \bph^1}) =&\ \sign((1 - \alpha - \gamma) \cdot P_{\ell H} - (\alpha - \frac12) \cdot P_{hH}\cdot P_{\ell L} - (\alpha  - \frac12) \cdot P_{\ell H} \cdot P_{\ell L}\cdot t)).\\
    \sign(\frac{\partial \xi_\ell(\bph^1, t\cdot \bph^1)}{\partial \bph^1}) =&\ \sign(t \cdot (\gamma \cdot P_{hL} - (\alpha - \frac12)\cdot P_{hH} \cdot P_{\ell L}) - P_{hH} \cdot P_{\ell L} \cdot (\alpha - \frac12)). 
\end{align*}

An important observation is that both (signs of the) derivatives are linear on $t$ and independent of $\bph^1$ and $1 - \bpl^0$. Therefore, both derivatives are either always positive or always non-positive. This implies the both $\xi_h$ and $\xi_\ell$ are either always increasing or always non-increasing on each line $t$. This finished the proof of Proposition~\ref{prop:border_cases}.

In the following sections of the proof sketch,  we maximize $\min(\xi_h, \xi_\ell)$ under $\bph^1 = 1$ and under $1 - \bpl^0 = 1$ respectively. As we assumed that $P_{\ell L} \ge P_{hH}$, these two cases are not symmetric. In the $1 - \bpl^0 = 1$ case, $\min(\xi_h, \xi_\ell)$ is upper bounded by linear constraint related to ``Steep'' Segment 2 and ``Tail'' Segment 4, while in the $\bph^1 = 1$ case, the ``Non-Linear'' Segment 3 occurs.

\subsubsection{Case $1 - \bpl^0 = 1$.}
\label{sec:bpl0}
When $1 - \bpl^0 = 1$, $\min(\xi_h, \xi_\ell)$ is upper bounded by the maximum of the extensions of ``Steep'' Segment 2 and of ``Tail'' Segment 4 together. When $\frac{1}{2P_{\ell L}} \ge \alpha > \frac{1}{1+P_{\ell L}}$, $\min(\xi_h, \xi_\ell)\le \frac{\Delta(\alpha - 1/2)}{P_{hL}}$, which relates to Segment 2; and when $\frac{1}{1+P_{\ell L}} \ge \alpha > \frac12$, $\min(\xi_h, \xi_\ell)\le \frac12\Delta\alpha$, which relates to Segment 4. As $\frac{\Delta(\alpha - 1/2)}{P_{hL}} \ge \frac12\Delta\alpha$ if and only if $\alpha \ge \frac{1}{1 + P_{\ell L}}$,  this upper bound is the maximum of the line $\frac{\Delta(\alpha - 1/2)}{P_{hL}}$ and line $\frac12\Delta\alpha$. 


Throughout the case $1 - \bpl^0 = 1$, we assume that $\xi_h$ is positively monotone on $\bph^1$, which holds if and only if an inequality between $\gamma$ and $\alpha$ holds. The case where the inequality does not hold is covered by Proposition~\ref{prop:border_cases}, and $\xi_h \le \frac{\Delta(\alpha - 1/2)}{P_{hL}}$. 

\begin{prop}
    \label{prop:seg3_h} 
    When $\frac{\infdif}{1 - \thdmaj} \ge \alpha \ge \frac{1}{1 + P_{\ell L}}$, if $1- \bpl^0 = 1$ holds, either $\xi_h$ or $\xi_{\ell}$ cannot exceed $\frac{\Delta(\alpha - \frac12)}{P_{hL}}$.
\end{prop}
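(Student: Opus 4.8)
The plan is to treat $g(\bph^1,\gamma):=\min(\xi_h,\xi_\ell)$ as a function of the two free variables that remain once $1-\bpl^0=1$ is fixed, and to show that its maximum over the feasible region never exceeds $X:=\frac{\Delta(\alpha-1/2)}{P_{hL}}$. Substituting $1-\bpl^0=1$ into~\eqref{eq:xih}--\eqref{eq:xil}, I would first record the monotonicities that organize everything: $\xi_\ell$ is decreasing in $\bph^1$ and increasing in $\gamma$ (its numerator is constant in $\bph^1$ and increasing in $\gamma$, its denominator increasing in $\bph^1$ and constant in $\gamma$), while $\xi_h$ is decreasing in $\gamma$ and --- precisely under this subsection's standing assumption --- increasing in $\bph^1$. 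When that assumption fails, $\xi_h$ is non-increasing in $\bph^1$, so $\xi_h\le\xi_h|_{\bph^1=0}=\frac{\Delta(\alpha-1/2)}{P_{\ell H}(1+P_{\ell L})}$, and since $P_{hH}\le P_{\ell L}$ forces $P_{\ell H}(1+P_{\ell L})\ge P_{hL}$, this already gives $\xi_h\le X$ and hence $g\le X$. So I may assume $\xi_h$ is increasing in $\bph^1$.

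With the opposite monotonicities in $\gamma$, for each fixed $\bph^1$ the map $\gamma\mapsto\min(\xi_h,\xi_\ell)$ is maximized where $\xi_h=\xi_\ell$ (if this crossing lies in the feasible $\gamma$-range; otherwise the maximizer is a boundary point, where $g$ is smaller). I then claim the crossing value is nondecreasing in $\bph^1$, so the global maximum of $g$ sits at $\bph^1=1$. This is the step I expect to be the main obstacle: because both the crossing location $\gamma^\ast(\bph^1)$ and the two functions move with $\bph^1$, establishing monotonicity of the crossing value requires an envelope/derivative argument that leans on the standing assumption ($\xi_h$ increasing in $\bph^1$) together with $\xi_\ell$ decreasing in $\bph^1$. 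I would also verify here that $\gamma^\ast$ is admissible, i.e. lies in $[0,1-\alpha]$ and respects the not-biased conditions $\gamma>\xi$ and $1-\alpha-\gamma>\xi$.

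Granting the reduction to $\bph^1=1$, the computation becomes clean. At $\bph^1=1$ the denominators collapse via $P_{hH}+P_{\ell H}=1$ and $P_{hL}+P_{\ell L}=1$, giving $\xi_\ell(1,\gamma)=\frac{\Delta(\alpha-1/2+\gamma)}{P_{hH}+P_{hL}}$ and, using $\alpha-1/2+(1-\alpha)=1/2$, $\xi_h(1,\gamma)=\frac{\Delta(1/2-\gamma)}{P_{\ell H}+P_{\ell L}}$. Writing $S_h=P_{hH}+P_{hL}$ and $S_\ell=P_{\ell H}+P_{\ell L}$ and using $S_h+S_\ell=2$, solving $\xi_h=\xi_\ell$ yields $\gamma^\ast=\frac{S_h-2(\alpha-1/2)S_\ell}{4}$ and a common value of exactly $\frac12\Delta\alpha$.

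It remains to compare $\frac12\Delta\alpha$ with $X$. Since $\frac12\Delta\alpha\le\frac{\Delta(\alpha-1/2)}{P_{hL}}$ is equivalent to $\frac12\alpha P_{hL}\le\alpha-\frac12$, i.e. $\alpha(2-P_{hL})\ge1$, i.e. $\alpha\ge\frac{1}{2-P_{hL}}=\frac{1}{1+P_{\ell L}}$, the hypothesis $\alpha\ge\frac{1}{1+P_{\ell L}}$ gives exactly $\max g=\frac12\Delta\alpha\le X$, which is the claim. This also explains the threshold: $\frac{1}{1+P_{\ell L}}$ is precisely where the two candidate bounds $\frac12\Delta\alpha$ and $\frac{\Delta(\alpha-1/2)}{P_{hL}}$ coincide, and below it the maximum $\frac12\Delta\alpha$ becomes the binding bound (the ``Tail'' Segment~4). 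As a safeguard against the monotonicity-of-crossing step, an alternative purely algebraic route is to assume toward a contradiction that both $\xi_h>X$ and $\xi_\ell>X$, convert these into a lower and an upper bound on $\gamma$, and observe that their compatibility reduces to $Q(\bph^1)<0$ for an upward parabola $Q$ with $Q(0)\ge0$ and $Q(1)=2(\alpha-\tfrac12)P_{\ell L}-(1-\alpha)P_{hL}$; the identity $Q(1)=0$ at $\alpha=\frac{1}{1+P_{\ell L}}$ recovers the same threshold, and showing $Q\ge0$ on $[0,1]$ (by controlling the vertex location) then gives $g\le X$ directly.
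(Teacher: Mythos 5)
Your proposal has the right ingredients at the corner (the fallback case when $\xi_h$ is non-increasing in $\bph^1$, the computation $\xi_h=\xi_\ell=\frac12\Delta\alpha$ at $\bph^1=1$ with $\gamma^\ast=\frac{1-(P_{\ell L}+P_{\ell H})\alpha}{2}$, and the threshold identity $\frac12\Delta\alpha\le\frac{\Delta(\alpha-1/2)}{P_{hL}}\Leftrightarrow\alpha\ge\frac{1}{1+P_{\ell L}}$), but the step you yourself flag as the main obstacle --- that the crossing value is nondecreasing in $\bph^1$, so the maximum of $\min(\xi_h,\xi_\ell)$ sits at $\bph^1=1$ --- is not merely unproven: it is false, and it fails precisely in the regime of this proposition. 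Setting $1-\bpl^0=1$ and eliminating $\gamma$ from $\xi_h=\xi_\ell$ gives the crossing value
\begin{equation*}
v(b)=\frac{\Delta\left(\alpha-\frac12+\frac{b}{2}\right)}{P_{hH}P_{hL}\,b^2+(2P_{hH}P_{\ell L}+P_{hL})\,b+P_{\ell H}(1+P_{\ell L})},\qquad b=\bph^1,
\end{equation*}
with $v(1)=\frac12\Delta\alpha$ as you found. In the symmetric case $P_{hH}=P_{\ell L}$, the sign of $v'(1)$ equals the sign of $\frac12 P_{hL}-(\alpha-\frac12)(1+P_{\ell L})$, which is nonpositive exactly when $\alpha\ge\frac{1}{1+P_{\ell L}}$ --- that is, everywhere in the proposition's range the crossing value is \emph{decreasing} at $b=1$. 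Concretely, for $P_{hH}=P_{\ell L}=0.8$ and $\alpha=\frac{1}{2P_{\ell L}}=0.625$, one gets $v(1)=0.1875$ while $v(0)=\frac{\Delta(\alpha-1/2)}{P_{\ell H}(1+P_{\ell L})}\approx 0.208$, and $v$ is strictly decreasing on $[0,1]$. Hence the maximum over the slice is not at $\bph^1=1$, and your final inequality $\frac12\Delta\alpha\le\frac{\Delta(\alpha-1/2)}{P_{hL}}$ controls a single point rather than the maximum; the proposition's conclusion survives in this example only because $v(0)$ also happens to lie below the bound, which your argument never establishes. (There are additional feasibility issues your envelope argument would have to confront: the crossing $\gamma$ can violate the standing assumption and the non-biased conditions $\gamma>\xi$, $1-\alpha-\gamma>\xi$.)

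The workable argument is the one you relegate to a ``safeguard'' sketch, and it is essentially what the paper does. The paper splits on $\gamma$ against $\gamma_{sp}=\frac{1-(P_{\ell L}+P_{\ell H})\alpha}{2}$: for $\gamma\ge\gamma_{sp}$, monotonicity of $\xi_h$ (decreasing in $\gamma$, increasing in $\bph^1$) anchors $\xi_h\le\xi_h$ evaluated at $(\bph^1,\gamma)=(1,\gamma_{sp})$, which equals $\frac12\Delta\alpha\le\frac{\Delta(\alpha-1/2)}{P_{hL}}$; for $\gamma<\gamma_{sp}$, it computes the two thresholds $\bph^{1h}$ and $\bph^{1\ell}$ at which $\xi_h$ and $\xi_\ell$ respectively equal $\frac{\Delta(\alpha-1/2)}{P_{hL}}$, and proves $\bph^{1h}\ge\bph^{1\ell}$, so $\xi_h$ and $\xi_\ell$ cannot simultaneously exceed the bound. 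That incompatibility reduces to a quadratic inequality in $\gamma$, verified by monotonicity of $\gamma(1-\alpha-\gamma)$ on the relevant range plus a check at the endpoint $\gamma=\gamma_{sp}$. This is exactly your parabola $Q$, but ``showing $Q\ge0$ on $[0,1]$ by controlling the vertex'' is the substantive step of the whole proof, and as written your proposal leaves it undone while resting the main line on a monotonicity claim that is false.
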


\begin{prop}
    \label{prop:seg5_h} 
    When $ \frac{1}{1 + P_{\ell L}}\ge \alpha \ge \frac{1}2$, if $1- \bpl^0 = 1$ holds, either $\xi_h$ or $\xi_{\ell}$ cannot exceed $\frac12 \Delta \alpha$.  
\end{prop}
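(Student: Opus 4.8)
The plan is to rule out, for every admissible configuration, the possibility that both $\xi_h$ and $\xi_\ell$ strictly exceed $\frac12\Delta\alpha$; this is exactly the assertion $\min(\xi_h,\xi_\ell)\le\frac12\Delta\alpha$. Substituting $1-\bpl^0=1$ into Equations~\eqref{eq:xih} and~\eqref{eq:xil} and writing $x=\bph^1\in[0,1]$, $g=\gamma\in[0,1-\alpha]$, the two hypothetical inequalities $\xi_h>\frac12\Delta\alpha$ (call it (A)) and $\xi_\ell>\frac12\Delta\alpha$ (call it (B)) become, after cancelling the common positive factor $\Delta$ and clearing the positive denominators, two affine inequalities in $(x,g)$. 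The crucial observation is that $g$ enters (A) with coefficient $-x$ and (B) with coefficient $+1$, so the two constraints pull $g$ in opposite directions; I would therefore eliminate $g$ by taking the convex combination $\tfrac{1}{1+x}\cdot(\mathrm{A})+\tfrac{x}{1+x}\cdot(\mathrm{B})$, whose weights are chosen precisely so that the $g$-terms cancel. Since this is a nonnegative-weight average of (A) and (B), its validity is a necessary consequence of both holding.

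After the elimination, and using the identities $P_{hL}+P_{\ell L}=1$, $P_{hH}+P_{\ell H}=1$, and $P_{hH}+P_{\ell L}=1+\Delta$, the combined inequality simplifies (after scaling by $1+x>0$) to $(\alpha-\tfrac12)+\tfrac{x}{2}>\tfrac{\alpha}{2}Q(x)$, where $Q(x)=P_{hH}P_{hL}x^2+(2P_{hH}P_{\ell L}+P_{hL})x+P_{\ell H}(1+P_{\ell L})$. Thus it suffices to prove that the convex quadratic $F(x):=\tfrac{\alpha}{2}Q(x)-(\alpha-\tfrac12)-\tfrac{x}{2}$ is nonnegative on $[0,1]$, since $F(x)\ge0$ directly contradicts the combined inequality and hence forbids (A) and (B) from holding together. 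The heart of the argument is the algebraic fact that $Q(1)=2$ (again via the three identities above), which gives $F(1)=0$: the bound is exactly tight at $x=1$, the configuration where type-1 agents vote informatively.

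Because $F$ is convex with positive leading coefficient $\tfrac{\alpha}{2}P_{hH}P_{hL}$ and has $x=1$ as a root, it is nonnegative on $[0,1]$ if and only if its other root is at least $1$. By Vieta's formulas that root equals the ratio of the constant term to the leading coefficient, and requiring it to be $\ge1$ reduces, after the simplification $P_{\ell H}P_{\ell L}-P_{hH}P_{hL}=P_{\ell L}-P_{hH}$, to the clean condition $\alpha\le\frac{1}{1+2P_{hH}-P_{\ell L}}$. I would then close the loop using the standing assumption $P_{hH}\le P_{\ell L}$: it gives $1+2P_{hH}-P_{\ell L}\le1+P_{\ell L}$, so $\frac{1}{1+P_{\ell L}}\le\frac{1}{1+2P_{hH}-P_{\ell L}}$, and the hypothesis $\alpha\le\frac{1}{1+P_{\ell L}}$ delivers exactly the required inequality. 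This establishes $F\ge0$ on $[0,1]$ and the proposition in the regime where $\xi_h$ is monotone in $\bph^1$.

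Finally, the sub-case where the monotonicity assumption on $\xi_h$ fails is immediate: there Proposition~\ref{prop:border_cases} already yields $\xi_h\le\frac{\Delta(\alpha-1/2)}{P_{hL}}$, and since $\alpha\le\frac{1}{1+P_{\ell L}}$ is equivalent to $\alpha(1+P_{\ell L})\le1$, i.e.\ to $\frac{\Delta(\alpha-1/2)}{P_{hL}}\le\frac12\Delta\alpha$, this case needs no further work. I expect the main obstacle to be purely computational: verifying the identity $Q(1)=2$ (hence $F(1)=0$) and reducing the ``other root $\ge1$'' condition to $\alpha\le\frac{1}{1+2P_{hH}-P_{\ell L}}$ both require several cancellations through the substitutions $P_{\ell H}=1-P_{hH}$, $P_{hL}=1-P_{\ell L}$, and $\Delta=P_{hH}+P_{\ell L}-1$; keeping these organized, and confirming that all denominators stay positive over the box $x\in[0,1]$, $g\in[0,1-\alpha]$, is where care is needed.
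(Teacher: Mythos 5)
Your proof is correct, and it takes a genuinely different route from the paper's. The paper proves this proposition by a case analysis on $\gamma$ around the critical value $\gamma^\dagger = \frac{1 - (P_{\ell L}+P_{\ell H})\alpha}{2}$: it uses the monotonicity of $\xi_h$ and $\xi_\ell$ in $\bph^1$, $1-\bpl^0$, and $\gamma$ to dispose of $\gamma \ge \gamma^\dagger$, and for $\gamma < \gamma^\dagger$ it solves for the two crossing points $\bph^{1h}$, $\bph^{1\ell}$ where $\xi_h$ and $\xi_\ell$ respectively equal $\frac12\Delta\alpha$, then verifies $\bph^{1h}\ge\bph^{1\ell}$ by reducing a quadratic inequality in $(\alpha,\gamma)$ to a sum of manifestly non-negative terms. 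You instead eliminate $\gamma$ outright: since $\gamma$ enters the hypothesized violation of the $\xi_h$-bound with coefficient $-\bph^1$ and the $\xi_\ell$-bound with coefficient $+1$, the weights $\bigl(\frac{1}{1+\bph^1},\frac{\bph^1}{1+\bph^1}\bigr)$ cancel it, collapsing the two-variable problem to the single convex quadratic $F(x)=\frac{\alpha}{2}Q(x)-(\alpha-\tfrac12)-\tfrac{x}{2}$ with $F(1)=0$ (I checked $Q(1)=2$, the Vieta reduction to $\alpha\le\frac{1}{1+2P_{hH}-P_{\ell L}}$ via $P_{\ell H}P_{\ell L}-P_{hH}P_{hL}=P_{\ell L}-P_{hH}$, and the fallback equivalence $\frac{\Delta(\alpha-1/2)}{P_{hL}}\le\frac12\Delta\alpha \Leftrightarrow \alpha\le\frac{1}{1+P_{\ell L}}$; all are correct). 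Your argument buys three things: it needs no case split on $\gamma$ and no monotonicity preconditions (indeed your final paragraph is unnecessary caution, since the elimination never used monotonicity and is valid for every real $\gamma$); it is much shorter; and it actually proves the bound on the larger range $\alpha\le\frac{1}{1+2P_{hH}-P_{\ell L}}\supseteq\bigl[\tfrac12,\frac{1}{1+P_{\ell L}}\bigr]$, which is consistent with the paper because the Non-Linear Segment 3 values exceeding $\frac12\Delta\alpha$ arise from interior $1-\bpl^0\in(0,1)$, not from the case $1-\bpl^0=1$ treated here. What the paper's longer route buys is structural information reused elsewhere: it exhibits the critical $\gamma^\dagger$ and the tightness point $\bph^1=1-\bpl^0=1$ explicitly, which the paper then recycles as the lower-bound construction for the ``Tail'' Segment 4 (Proposition~\ref{prop:seg5lower}), whereas your convex-combination argument certifies only the upper bound.
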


The proof of Proposition~\ref{prop:seg3_h} and~\ref{prop:seg5_h} resembles each other. 
The cutting point is a specific set of parameters: $\bph^1 = 1 - \bpl^0 = 1$, and $\gamma = \frac{1 - (P_{\ell L} + P_{\ell H})\cdot \alpha}{2}$. In this case, $\xi_h = \xi_{\ell} = \frac12 \Delta \alpha$. (Note that this also serves as a lower bound by Lemma~\ref{lem:xilower}, which eventually be the lower-bound proof for ``Tail'' Segment 4.) If we decrease $\bph^1$ or increase $\gamma$, $\xi_h$ will decreases and be smaller than $\frac12 \Delta \alpha$. If we decrease $\gamma$, $\xi_h \ge \frac12 \Delta \alpha$ only when $\bph^1$ is large, while $\xi_\ell \ge \frac12 \Delta \alpha$ only when $\bph^1$ is small. By solving an inequality, we shall figure out that no $\bph^1 \in [0, 1]$ can satisfies both inequalities. Therefore, $\min(\xi_h, \xi_\ell)\le \frac12 \Delta \alpha$. As $\frac12 \Delta \alpha \le \frac{\Delta(\alpha - 1/2)}{P_{hL}}$ when $\alpha \ge \frac{1}{1 +P_{\ell L}}$, this also implies Proposition~\ref{prop:seg3_h}. 

\subsubsection{Case $\bph^1 = 1$.}
\label{sec:bph1}
Unlike the $1 - \bpl^0 = 1$ case, $\min(\xi_h, \xi_\ell)$ contains the ``Non-Linear'' Segment 3, and the method for Proposition~\ref{prop:seg3_h} and~\ref{prop:seg5_h} does not apply.  Therefore, we take a more straightforward method by maximizing $\min(\xi_h, \xi_\ell)$ parameter by parameter. Firstly, for every fixed $1 - \bpl^0$, we maximizes $\min(\xi_h, \xi_\ell)$ on $\gamma$. We denote $\gamma^*$ as the $\gamma$ that maximizes $\min(\xi_h, \xi_\ell)$. Then we maximize $\min(\xi_h, \xi_\ell)$ where $\gamma = \gamma^*$ on $1 - \bpl^0$. The maximum naturally becomes an upper bound. Finally, the parameters leading to this maximum also serve as proof for a tight lower bound. 

Given that $\xi_h$ decreases and $\xi_\ell$ increases when $\gamma$ increases, the $\gamma$ that maximizes $\min(\xi_h, \xi_\ell)$ satisfies $\xi_h = \xi_\ell$. Therefore, solving this equation, we get $\gamma^*$ and the corresponding $\xi$ (denoted as $\hat{\xi}$) being an upper bound of $\min(\xi_h, \xi_\ell)$

\begin{lem}
    \label{lem:seg4_upper_fix} When $ \frac{1}{2 P_{\ell L}}\ge \alpha \ge \frac{1}2$, for $\bph^1 = 1$ and any fixed $1 - \bpl^0$, let 
    \begin{equation*}
        \hat{\xi} = \frac{\Delta \cdot (\alpha - 1/2 + 1/2 \cdot (1 - \bpl^0))}{(P_{\ell L}\cdot (1 - \bpl^0) + P_{hL})\cdot (P_{\ell H}\cdot (1 - \bpl^0) + P_{hH} +1)}. 
    \end{equation*}
    Then, $\min(\xi_h, \xi_{\ell}) \le \hat{\xi}$ holds for all feasible $\gamma$. 
\end{lem}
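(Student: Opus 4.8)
The plan is to exploit the monotonicity of $\xi_h$ and $\xi_\ell$ in $\gamma$ and then evaluate their common value at the crossing point without ever solving for the optimal $\gamma$ explicitly. First I would substitute $\bph^1 = 1$ into Equations~\ref{eq:xih} and~\ref{eq:xil}. The numerator of $\xi_h$ collapses, since $\alpha - \tfrac12 + (1 - \alpha - \gamma) = \tfrac12 - \gamma$, so that $\xi_h = \frac{\Delta(\tfrac12 - \gamma)}{P_{hH}P_{\ell L} + P_{\ell H}P_{\ell L}(1-\bpl^0) + P_{\ell H}}$ is strictly decreasing in $\gamma$, while $\xi_\ell = \frac{\Delta(\alpha - \tfrac12 + \gamma(1-\bpl^0))}{P_{hH}P_{hL} + P_{hH}P_{\ell L}(1-\bpl^0) + P_{hL}}$ is non-decreasing in $\gamma$ (both denominators are positive constants in $\gamma$, and $1 - \bpl^0 \ge 0$). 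Since $\min(\xi_h,\xi_\ell)$ is the minimum of a decreasing and a non-decreasing function of $\gamma$, it is maximized at the $\gamma^*$ where $\xi_h = \xi_\ell$; hence it suffices to show that the common value at $\gamma^*$ equals $\hat{\xi}$.

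To avoid solving the equation $\xi_h = \xi_\ell$ for $\gamma^*$, I would use the mediant identity: if $\frac{a}{b} = \frac{c}{d}$, then this common ratio equals $\frac{w_1 a + w_2 c}{w_1 b + w_2 d}$ for any nonnegative weights (not both zero) keeping the denominator positive. Writing $x := 1 - \bpl^0$ and taking $(w_1, w_2) = (x, 1)$, the weighted numerator is $x\,\Delta(\tfrac12 - \gamma) + \Delta(\alpha - \tfrac12 + \gamma x) = \Delta(\alpha - \tfrac12 + \tfrac12 x)$, in which the $\gamma$-terms cancel exactly, matching the numerator of $\hat{\xi}$. It then remains to check that the weighted denominator $x\big(P_{hH}P_{\ell L} + P_{\ell H}P_{\ell L}x + P_{\ell H}\big) + \big(P_{hH}P_{hL} + P_{hH}P_{\ell L}x + P_{hL}\big)$ factors as $(P_{\ell L}x + P_{hL})(P_{\ell H}x + P_{hH} + 1)$.

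Expanding both sides, the coefficients of $x^2$ (namely $P_{\ell H}P_{\ell L}$) and the constant terms (namely $P_{hL}(P_{hH}+1)$) agree immediately; the only nontrivial check is the coefficient of $x$, which requires $2P_{hH}P_{\ell L} + P_{\ell H} = P_{\ell L}(P_{hH}+1) + P_{hL}P_{\ell H}$, that is, $P_{hH}P_{\ell L} - P_{hL}P_{\ell H} = P_{\ell L} - P_{\ell H}$. This is exactly the identity $\Delta = P_{\ell L} - P_{\ell H} = P_{hH}P_{\ell L} - P_{hL}P_{\ell H}$ recorded in the opening proposition, so the factorization holds. Combining the two displays, the common value of $\xi_h$ and $\xi_\ell$ at $\gamma^*$ equals $\hat{\xi}$, and therefore $\min(\xi_h,\xi_\ell) \le \hat{\xi}$ for every feasible $\gamma$.

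I expect the denominator factorization to be the main obstacle, not because it is deep but because it is where the structure of the signal distribution enters: the whole statement hinges on the single algebraic identity for $\Delta$, and choosing the mediant weights $(x,1)$ so as to both annihilate the $\gamma$-dependence in the numerator and produce the clean factored denominator is the one non-mechanical step. A minor point to dispatch is the degenerate case $x = 0$ (i.e.\ $\bpl^0 = 1$), where $\xi_\ell$ is constant in $\gamma$; there the mediant weight $w_1 = 0$ reduces the argument to $\hat{\xi} = \xi_\ell$, which still gives the claimed bound, so no separate treatment is needed.
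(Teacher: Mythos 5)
Your proof is correct, and it shares the paper's skeleton: with $\bph^1 = 1$, $\xi_h$ is strictly decreasing in $\gamma$ and $\xi_\ell$ is non-decreasing in $\gamma$, so for every feasible $\gamma$ the value $\min(\xi_h,\xi_\ell)$ is bounded by the common value at the crossing point (on one side of the crossing you bound by $\xi_h$, on the other by $\xi_\ell$, exactly as the paper does). Where you genuinely diverge is in how that common value is identified. The paper solves $\xi_h = \xi_\ell$ explicitly for $\gamma$, obtaining a closed-form $\hat{\gamma}$, and then asserts (the substitution algebra is left implicit) that the corresponding value is $\hat{\xi}$. You avoid solving for $\gamma$ altogether: the mediant identity with weights $(1-\bpl^0,\,1)$ cancels the $\gamma$-dependence in the weighted numerator, and the entire verification collapses to factoring the weighted denominator as $(P_{\ell L}(1-\bpl^0) + P_{hL})(P_{\ell H}(1-\bpl^0) + P_{hH}+1)$, which holds precisely because of the identity $\Delta = P_{\ell L}-P_{\ell H} = P_{hH}P_{\ell L}-P_{hL}P_{\ell H}$ from the paper's first proposition. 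Your route is more transparent and more easily checkable for this lemma in isolation, and it isolates exactly where the signal-distribution structure enters. What the paper's heavier computation buys is the explicit formula for $\hat{\gamma}$ itself, which is not discarded: it is reused verbatim as the $\gamma^*$ in the lower-bound construction for the non-linear segment (Proposition~\ref{prop:seg4lower} and Lemma~\ref{lem:seg4_gamma_alt}), so an author following your proof would still need to derive that formula downstream. Your dispatch of the degenerate case $1-\bpl^0 = 0$, where $\xi_\ell$ is constant and equals $\hat{\xi}$, is also correct.
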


The next step is the maximize $\hat{\xi}$ on $1 - \bpl^0$. Here we again apply the trick $\sign(f') = \sign(u'v - uv')$ and examine the sign of the derivative $\frac{\partial \hat{\xi}}{\partial (1 - \bpl^0)}$. It turns out that the $1 - \bpl^0$ that maximizes $\hat{\xi}$ increases as $\alpha$ decreases, and 
that three cases are occurring in order as $\alpha$ decreases. (1) Firstly, when $\alpha$ is large, $\hat{\xi}$ is maximized at $1 - \bpl^0 = 0$, in which $\hat{\xi}\le \frac{\Delta(\alpha - 1/2)}{P_{hL}}$. (2) Secondly, when $\alpha$ becomes smaller, $\hat{\xi}$ is maximized at $1 - \bpl^0  \in (0, 1)$, in which $\hat{\xi} = \xinl$. (3) Finally, when $\alpha$ further decreases, $\hat{\xi}$ is maximized at $1 - \bpl^0 = 1$, and  $\hat{\xi} = \frac12\Delta\alpha$. Note this is also considered as a case $1 - \bpl^0 = 1$, for which Section~\ref{sec:bpl0} gives an upper and lower bound of $\frac12 \Delta \alpha$. 





By Lemma~\ref{lem:seg4_upper_fix},  case (2) is exactly the upper bound for the ``Non-Linear'' Segment 3, and case (3) is the upper bound for the ``Tail'' Segment 4. Additionally, the changing point between case (2) and (3) is $\alpha = \frac{1}{1+P_{\ell L} + (P_{\ell L} - P_{hH})}$, which is also the cut-off point of Segment 3 and 4. Specifically, if this cut-off point is smaller than $\frac12$, Segment 4 disappears. On the other hand, $\xinl \ge \frac{\Delta(\alpha - 1/2)}{P_{hL}}$ if and only if $\alpha \le \anl$. This becomes the cut-off point of Segment 2 and 3. 

As a conclusion, the upper bound given in the case $\bph^1 = 1$ has three Segments. 

\begin{enumerate}
    \item For $\alpha \ge \anl$, the upper-bound is at most $\frac{\Delta(\alpha - 1/2)}{P_{hL}}$. 
    \item For $\anl > \alpha > \frac{1}{1+P_{\ell L} + (P_{\ell L} - P_{hH})}$, the upper bound is $\xinl$. 
    \item For $\alpha > \frac{1}{1+P_{\ell L} + (P_{\ell L} - P_{hH})}$, the upper bound is $\frac12\Delta\alpha$. 
\end{enumerate}
This contributes to the upper bound for Segment 3 and Segment 4 in $\xi^*(\alpha)$. 


Finally, we wish to derive a (tight) lower bound from the parameters $\gamma^*$, $\bph^1 = 1$, and $1 - \bpl^0$ that maximizes $\min(\xi_h, \xi_\ell)$ (or, equivalently, $\hat{\xi}$) for ``Non-Linear'' Segment 3 and ``Tail'' Segment 4, aligning with the upper bounds, by applying Lemma~\ref{lem:xilower}. We need to show that the parameters satisfy the conditions for Lemma~\ref{lem:xilower}. $\bph^1$ and $\bpl^0$ are guaranteed to be $[0, 1]$. Therefore, it remains to show that $\gamma^* \ge \hat{\xi}$ and $1 - \gamma^* \ge \hat{\xi}$. We don't do Segment 2 because it cannot do better than the case where $\gamma$ is biased, where a lower bound $\frac{\Delta(\alpha - 1/2)}{P_{hL}}$ is given. 

While the proof for this constraint is mostly calculation and solving inequalities, we would like to give a brief high-level idea here. Recall that as $\alpha$ decreases, the $1 - \bpl^0$ that maximizes $\hat{\xi}$ increases. Given that $\bph^1 = 1$ always holds, this implies that minority agents are playing a more and more balanced strategy. Therefore, $\gamma^*$ which guarantees that $\xi_h = \xi_\ell$, will also be close to half of the minority agents, which is $\frac{1 - \alpha}{2}$. As a consequence, $\alpha$ decreases, both $\gamma^*$ and $1 - \gamma^*$ are increasing, while $\hat{\xi}$ is decreasing. Therefore, it suffices to show that at the beginning of the ``Non-Linear'' Segment 3, where $\alpha = \anl$, both $\gamma^* \ge \hat{\xi}$ and $1 - \gamma^* \ge \gamma^*$ (which is easier than directly proving that $1 - \gamma^* \ge \hat{\xi}$) hold. This is achieved by directly solving the inequalities. 


\subsubsection{Wrap up the bounds.}
So far, we have shown that the upper and the lower bound of $\xi$ only derive from one of the four cases. (1) When $\gamma$ is biased (either type-0 agents or type-1 agents dominates the minority group). (2) When $1 - \bpl^0 = 1$ (all type-0 agents always vote for $\bR$). (3) When $\bph^1 = 1$ (all type-1 agents always vote for $\bA$). (4) All other cases, upper-bounded with $\frac{\Delta(\alpha-1/2)}{P_{hL}}$ by Proposition~\ref{prop:border_cases}. Each case has its own upper bound, and cases (1), (2), and (3) also have lower bounds. 

Therefore, the last step is to integrate all the cases and reach a global upper and lower bound for each Segment. For the upper bound, we take the maximum of the upper bound for each case. In each case, there does not exist a strategy profile satisfying all three constraints with $\xi$ equal to their corresponding upper bound. Therefore, overall, no strategy profiles can satisfy all three constraints with $\xi$ equal to the maximum of all upper bounds. For the lower bound, we show that the upper bound in each segment binds a tight lower bound. As a consequence, the bounds consist of the threshold curve $\xi^*(\alpha)$, which finishes the proof.




Now we start to integrate the upper bound of each case. 
\begin{enumerate}
    \item When $\gamma$ is biased, the upper bound is $\frac{\Delta(\alpha - 1/2)}{P_{hL}}$ by Proposition~\ref{prop:biasedgamma_gen} along Segment 2, 3, and 4.
    \item When $1 - \bpl^0 = 1$, the upper bound is given by the maximum of $\frac{\Delta(\alpha - 1/2)}{P_{hL}}$ and $\frac12\Delta\alpha$ by Proposition~\ref{prop:seg3_h} and~\ref{prop:seg5_h}. 
    \item When $\bph^1 = 0$, the upper  bound is no more than $\frac{\Delta(\alpha - 1/2)}{P_{hL}}$ in Segment 2, $\xinl$ in Segment 3, and $\frac12\Delta\alpha$ in Segment 4, as analyzed in Section~\ref{sec:bph1}. 
    \item In all other cases, the upper bound is at most $\frac{\Delta(\alpha - 1/2)}{P_{hL}}$ by Proposition~\ref{prop:border_cases}.
\end{enumerate}
The order of value of these upper bounds varies in different segments, of which the full proof is in Appendix~\ref{apx:lemmas}. In Segment 2, $\frac{\Delta(\alpha - 1/2)}{P_{hL}}$ is the largest. In Segment 3, $\xinl$ is the largest. Finally, in Segment 4, $\frac12 \Delta \alpha \ge \frac{\Delta(\alpha - 1/2)}{P_{hL}}$. Therefore, taking the maximum on each segment, the upper bound for $\xi$ is $\frac{\Delta(\alpha - 1/2)}{P_{hL}}$ in Segment 2, $\xinl$ in Segment 3, and $\frac12 \Delta \alpha$ in Segment 4 (if exists). This is exactly the upper bound given in $\xi^*(\alpha)$. 

Finally, we show the tight lower bound corresponding to the upper bound in each segment. 

\begin{enumerate}
    \item In ``Steep'' Segment 2, Proposition~\ref{prop:seg3lower} in the biased case gives the lower bound $\frac{\Delta(\alpha - 1/2)}{P_{hL}}$, where all minority agents always vote for $\bA$. 
    \item In ``Non-Linear'' Segment 3, we show that the corresponding $\xi^*$ and $1 - \bpl^0$ that maximizes $\hat{\xi}$ serves as a valid set of parameters in Section~\ref{sec:bph1}. By Lemma~\ref{lem:xilower}, $\xinl$ is also a lower bound. 
    \item For ``Tail'' Segment 4, the parameters $\bph^1 = 1 - \bpl^0 = 1$, and $\gamma = \frac{1 - (P_{\ell L} + P_{\ell H})\cdot \alpha}{2}$ (as discussed in the Section~\ref{sec:bpl0} and~\ref{sec:bph1}) gives the lower bound $\frac12\Delta\alpha$ by Lemma~\ref{lem:xilower}. 
\end{enumerate}

Therefore, the lower bound on $\xi$ is exactly given in $\xi^*(\alpha)$. We finish the proof that $\xi^*(\alpha)$ is exactly the threshold curve for which a strategy profile satisfying all three constraints exists.

\section{Conclusion and Future Works}
We study a voting problem with two opposed groups whose preferences depend on a ground truth that cannot be directly observed. conflicting preferences, incomplete information, and group strategic behavior bring challenge to the problem. \citet{deng2024aggregation} provide conditions for a strong Bayes-Nash equilibrium to exist yet is less informative in the scenario where no equilibrium exists. We adopt the solution concept of $\varepsilon$-approximate ex-ante Bayesian $\kd$-strong equilibrium, in which no group of at most $\kd$ agents have incentives to deviate, to study scenarios where agents are less organized and have limited power to coordinate on strategic behavior. We give a closed form of the threshold $\xi^*(\alpha)$, which fully characterizes the region where an $\varepsilon$-approximate ex-ante Bayesian $\kd$-strong equilibrium exists and reaches the informed majority decision. The curve includes multiple drastically different segments, including a highly non-trivial non-linear segment. This implies the complexity of the strategic behavior in the voting problem, which in turn demonstrates the capability of the $\kd$-strong equilibrium to provide an informative perspective. 

Several natural extensions serve as interesting future directions for our work. Although we fully characterize the existence of $\kd$-strong equilibrium in favor of the majorities, the question remains open regarding the equilibrium in favor of the minorities, which will be an extension of the current $\xi^*$ to the range $\frac12 > \alpha > 0$. The second direction is to consider a full spectrum of preferences among agents, including those who always prefer one alternative and those whose preferences depend on the world state. Finally, it is also interesting to study the problem of a more structured agent relationship. For example, the relationship of agents are represented as edges on a graph, and a group will deviate only if they form a clique or have enough edges. 
\bibliographystyle{ACM-Reference-Format}
\bibliography{references,newref}

\clearpage
\appendix

\section{Proof of Lemma~\ref{lem:dominated}}
\label{apx:dominated}
\begin{proof}
    Suppose the strategies other agents play is $\stgp_{-i}$. The only scenario where $i$'s vote can change the outcome is the pivotal case, where both alternatives need exactly one vote to pass their threshold. Therefore, different strategies $i$ plays only affect his/her expected utility conditioned on $i$ being pivotal. The ex-ante expected utility of $i$ with strategy profile $\stgp = (\stg, \stgp_{-i})$ conditioned on him/her being pivotal is as follows. Note that the posterior belief on the world state conditioned on $i$ being pivotal does not depend on $i$'s strategy. 
    \begin{align*}
        \ut_{i}(\stgp \mid \piv) =&\ \Pr[L\mid \piv, \stgp_{-i}] \cdot ((P_{hL}\cdot \bph + P_{\ell L}\cdot \bpl)\cdot\vt_{i}(L, \bA)\\ &\ \ + (P_{hL}\cdot (1-\bph) + P_{\ell L}\cdot (1-\bpl))\cdot \vt_{i}(L, \bR)) \\
        &\ +  \Pr[H\mid \piv, \stgp_{-i}]\cdot ((P_{hH}\cdot \bph + P_{\ell H}\cdot \bpl)\cdot\vt_{i}(H, \bA) \\&\ \ + (P_{hH}\cdot (1-\bph) + P_{\ell H}\cdot (1-\bpl))\cdot \vt_{i}(H, \bR)). 
    \end{align*}

    When agent $i$ plays a different strategy $\stg' = (\bpl', \bph')$, the expected utility conditioned on pivotal can be written in the similar form. Let $\stgp' = (\stg', \stgp_{-i})$. Then, the difference between the expected utility when $i$ plays $\stg$ and $\stg'$ is exactly the expected utility difference conditioned on $i$ being pivotal.

    \begin{align*}
        \ut_i(\stgp') - \ut_i(\stgp) = &\ \ut_i(\stgp' \mid \piv) - \ut_i(\stgp \mid \piv) \\
        =&\ \Pr[L \mid \piv, \stgp_{-i}] \cdot ((P_{hL} \cdot (\bph' - \bph) + P_{\ell L} \cdot (\bpl' - \bpl))\cdot (\vt_i(L, \bA) - \vt_i(L, \bR))\\
        &\ + \Pr[H \mid \piv, \stgp_{-i}] \cdot ((P_{hH} \cdot (\bph' - \bph) + P_{\ell H} \cdot (\bpl' - \bpl))\cdot (\vt_i(H, \bA) - \vt_i(H, \bR)). 
    \end{align*}


For majority agents, $\vt_i(L, \bA) - \vt_i(L, \bR) < 0$ and $\vt_i(H, \bA) - \vt_i(H, \bR) > 0$. When $\bph < 1$ and $\bpl > 0$, let $\bph' = \bph + \varepsilon$ and $\bpl' = \bpl - \varepsilon\cdot \frac{P_{hL} + P_{hH}}{P_{\ell L} + P_{\ell H}}$, where $\varepsilon > 0$ is a small constant that guarantees $\bph \le 1$ and $\bpl \ge 0$. Then we have $P_{hL} \cdot (\bph' - \bph) + P_{\ell L} \cdot (\bpl' - \bpl) < 0$ and $P_{hH} \cdot (\bph' - \bph) + P_{\ell H} \cdot (\bpl' - \bpl) > 0$, which implies that $\ut_i(\stgp') - \ut_i(\stgp) > 0$ and that $\stg$ is dominated by $\stg'$. Now, without loss of genearlity, suppose $\bph = 1$. In this case, for all $\stg' \neq \stg$, either $P_{hL} \cdot (\bph' - \bph) + P_{\ell L} \cdot (\bpl' - \bpl) > 0$ or $P_{hH} \cdot (\bph' - \bph) + P_{\ell H} \cdot (\bpl' - \bpl) < 0$ hold. (If $P_{hH} \cdot (\bph' - \bph) + P_{\ell H} \cdot (\bpl' - \bpl) < 0$ does not hold and $\stg' \neq \stg$, then $\bpl' > \bpl$ must hold. Combined with the fact that $P_{hH} > P_{hL}$ and $P_{\ell H} < P_{\ell L}$, $P_{hL} \cdot (\bph' - \bph) + P_{\ell L} \cdot (\bpl' - \bpl) > 0$ holds.) 
Therefore, by constructing $\stgp_{-i}$ so that $\Pr[L \mid \piv, \stgp_{-i}]$ is sufficiently close to 0 or 1, we can find a scenario where $\ut_i(\stgp') - \ut_i(\stgp) < 0$, which implies that $\stgp$ is not weakly dominated by $\stgp'$. 
The reasoning for $\bpl = 0$ and that for the minority agents resembles the reasoning above. 
\end{proof}

\section{Technical Lemma on reaching informed majority decision. }

Here we give some useful results that will be applied in the full proof. 

\citep{han2023wisdom}, provides useful tool to determine the fidelity of a profile sequence by the {\em expected vote share} of alternative $\bA$ under different world states. 

\begin{dfn}[\bf{Expected vote share}]
   Given an instance of $\ag$ agents, and a strategy profile $\stgp$, let random variable $\xrv_{i}^{\ag}$ be "agent $i$ votes for $\bA$":  $\xrv_{i}^{\ag} = 1$ if agent $i$ votes for $\bA$, and $\xrv_{i}^{\ag} = 0$ if $i$ votes for $\bR$. Then the expected vote share (of $\bA$) is defined as follows: 
\begin{align}
    \thd_{H} =&\ \frac{1}{\ag}\sum_{i=1}^{\ag}E[\xrv_{i}^{\ag}\mid H].\label{eq:hthdh}
\end{align}
\begin{align}
    \thd_{L} =& \ \frac{1}{\ag}\sum_{i=1}^{\ag}E[\xrv_{i}^{\ag}\mid L]\label{eq:hthdl}. 
\end{align}
Specifically, $\thd_H$ ($\thd_L$, respectively) is the \exshare\ of $\bA$ conditioned on world state $H$ (world state $L$, respectively). 
\end{dfn}

\begin{thm}
\label{thm:arbitrary}
Given an arbitrary sequence of instances and arbitrary sequence of strategy profiles $\{\stgp_{\ag}\}_{\ag=1}^\infty$, let $\thd_H^\ag$ and $\thd_L^{\ag}$ be the expected vote share for each $\stgp_{\ag}$. Let 
\begin{equation*}
    F = \liminf_{\ag\to\infty} \sqrt{\ag}\cdot \min \left( \thd^{\ag}_H - \frac12, \frac12 - \thd^{\ag}_L\right)
\end{equation*}
\begin{itemize}
     \item If $F= +\infty$, the fidelity of $\stgp_{\ag}$ converges to 1 , i.e., $\lim_{\ag\to\infty} \acc(\stgp_{\ag}) = 1$. 
     \item If $F < 0$ (including $-\infty$), $\acc(\stgp_{\ag})$ does NOT converge to 1. 
     \item If $F \ge 0$ (not including $+\infty$), and the variance of $\sum_{i=1}^{\ag} \xrv_{i}^{\ag}$ is  $\Theta(\ag)$ under every world state,  $\acc(\stgp_{\ag})$ does NOT converge to 1. 
\end{itemize}
\end{thm}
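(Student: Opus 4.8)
<br>

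The plan is to condition on the world state and reduce the fidelity to two tail probabilities of a sum of independent Bernoulli random variables. Conditioned on state $H$, the indicators $\xrv_1^{\ag}, \dots, \xrv_{\ag}^{\ag}$ are independent (signals are i.i.d.\ given the state and each agent randomizes independently) with $E[\xrv_i^{\ag}\mid H] = P_{hH}\bph^i + P_{\ell H}\bpl^i$, so $S_{\ag} := \sum_i \xrv_i^{\ag}$ has mean $\ag\thd_H$ and variance $\sigma_\ag^2 \le \ag/4$; the same holds under $L$ with mean $\ag\thd_L$. Since $\bA$ wins exactly when $S_{\ag} > \ag/2$, we have $\acc(\stgp_{\ag}) = P_H\Pr[S_{\ag} > \ag/2 \mid H] + P_L\Pr[S_{\ag} < \ag/2 \mid L]$, and the whole statement reduces to how far $\ag\thd_H$ and $\ag\thd_L$ sit from the threshold $\ag/2$ relative to the fluctuation scale $\sqrt{\ag}$. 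Throughout, ties ($S_\ag = \ag/2$) only help the negative statements, since a tie is not a win under the majority rule.

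For the first bullet ($F = +\infty$) I would use Chebyshev's inequality. The signed gap between the $H$-mean and the threshold is $\ag(\thd_H - \tfrac12) = \sqrt{\ag}\cdot\big(\sqrt{\ag}(\thd_H - \tfrac12)\big)$, which grows faster than any multiple of $\sqrt{\ag}$ because $\sqrt{\ag}(\thd_H - \tfrac12)\to+\infty$; as $\sigma_\ag = O(\sqrt{\ag})$, Chebyshev gives $\Pr[S_{\ag}\le \ag/2\mid H]\le \tfrac{\ag/4}{\ag^2(\thd_H-1/2)^2}\to 0$, and symmetrically $\Pr[S_{\ag}\ge \ag/2\mid L]\to 0$. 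Hence both winning probabilities tend to $1$ and $\acc(\stgp_\ag)\to 1$.

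For the second bullet ($F<0$) I would pass to a subsequence along which $\sqrt{\ag}\min(\thd_H - \tfrac12,\, \tfrac12 - \thd_L) < -\delta$ for a fixed $\delta>0$, and then by pigeonhole to a further subsequence on which the offending term is always the same, say $\thd_H - \tfrac12 < -\delta/\sqrt{\ag}$. Then the mean $\ag\thd_H$ lies below $\ag/2$ by at least $\delta\sqrt{\ag}$, so the one-sided Chebyshev (Cantelli) inequality bounds $\Pr[S_{\ag} > \ag/2 \mid H] \le \tfrac{\sigma_\ag^2}{\sigma_\ag^2 + \delta^2\ag} \le \tfrac{1/4}{1/4 + \delta^2} < 1$. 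Crucially this uses only $\sigma_\ag^2 \le \ag/4$ and needs no lower bound on the variance, which is why this case holds unconditionally; since the $H$-winning probability is bounded away from $1$, so is $\acc$.

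The interesting case is the third bullet ($0 \le F < \infty$ with variance $\Theta(\ag)$), and it is the main obstacle, as it is an anti-concentration statement where the mean sits essentially a bounded number of standard deviations from the threshold. Here I would again restrict to a subsequence realizing the $\liminf$, on which (after a pigeonhole choice of which term is the minimum, say the $H$-term) $\sqrt{\ag}(\thd_H - \tfrac12)\to F$ and, using the variance hypothesis, $\sigma_{\ag}/\sqrt{\ag}\to s$ for some constant $s\in(0,\tfrac12]$. Because the $\xrv_i^{\ag}$ are bounded and $\sigma_{\ag}^2\to\infty$, the Lindeberg--Feller CLT applies, so $(S_{\ag}-\ag\thd_H)/\sigma_{\ag}$ converges in distribution to $N(0,1)$. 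The standardized threshold is $(\ag/2 - \ag\thd_H)/\sigma_{\ag} = -\sqrt{\ag}(\thd_H - \tfrac12)/(\sigma_{\ag}/\sqrt{\ag}) \to -F/s \le 0$, whence $\Pr[S_{\ag}>\ag/2\mid H]\to \Phi(F/s) < 1$ (even $F=0$ yields the limit $\tfrac12$), so the winning probability stays bounded away from $1$ and $\acc$ cannot converge to $1$. The delicate points to check are the reduction to a single subsequence, verifying the Lindeberg condition from boundedness together with $\sigma_{\ag}\to\infty$, and the bookkeeping at the tie value $S_\ag=\ag/2$. I would also remark that the variance hypothesis is genuinely necessary here: a near-degenerate profile with $\thd_H$ just above $\tfrac12$ could place almost all mass above $\ag/2$ and force $\acc\to 1$, a possibility excluded only by the $\Theta(\ag)$ fluctuation requirement.
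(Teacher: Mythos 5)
Your proposal is correct. Note, however, that this paper does not prove Theorem~\ref{thm:arbitrary} at all: it is imported verbatim (modulo a change of variables, as stated in the accompanying remark) from the cited prior work of Han et al., so there is no in-paper proof to compare against. Taken on its own terms, your argument is sound and complete in all three cases: the reduction of fidelity to the two conditional tail probabilities of an independent Bernoulli sum is exactly the right decomposition; Chebyshev suffices for $F=+\infty$; Cantelli correctly handles $F<0$ with no variance lower bound (matching the fact that the theorem imposes none there); and the Lindeberg--Feller CLT along a pigeonholed subsequence, with the standardized threshold converging to $-F/s$ for $s=\lim\sigma_{\ag}/\sqrt{\ag}\in(0,\tfrac12]$, correctly shows the winning probability tends to $\Phi(F/s)<1$ in the boundary case, which is precisely where the $\Theta(\ag)$ variance hypothesis is needed. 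The subsequence extractions (first realizing the $\liminf$, then fixing which term attains the minimum, then making $\sigma_{\ag}^2/\ag$ converge) and the tie-handling convention are the only bookkeeping points, and you have flagged and resolved them correctly.
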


\begin{remark}
    The original theorem in ~\citep{han2023wisdom} is presented on {\em \exshare}, which is the expected vote share of the informed majority decision minus the majority rule's winning threshold. For the convenience of our presentation, we translate the theorem to expected vote share of $\bA$ while keeping it equivalent to the original theorem. 
\end{remark}

Below are the lemmas that we apply to deal with the case where the variance of a strategy profile becomes $o(\ag)$. 

\begin{lem}
    \label{lem:var_sublinear} For any instance of $\ag$ agents and an arbitrary strategy profile $\stgp$, let $\delta_H = \frac{1}{\ag}Var(\sum_{i=1}^{\ag} \xrv_{i}^{\ag} \mid H)$ and $\delta_L = \frac{1}{\ag}Var(\sum_{i=1}^{\ag} \xrv_{i}^{\ag} \mid L)$, then $\thd_H - \thd_L \le \frac{2\cdot \Delta\cdot \delta_H}{\min(P_{hH}, P_{\ell H})}$ and $\thd_H - \thd_L \le \frac{2\cdot \Delta\cdot \delta_L}{\min(P_{hL}, P_{\ell L})}$.
\end{lem}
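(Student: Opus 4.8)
The plan is to reduce both inequalities to a single per-agent estimate and then sum. Introduce the local shorthand $p_i^H := E[\xrv_i^\ag \mid H] = P_{hH}\bph^i + P_{\ell H}\bpl^i$ and $p_i^L := E[\xrv_i^\ag \mid L] = P_{hL}\bph^i + P_{\ell L}\bpl^i$. Using Proposition~1 (that $\Delta = P_{hH}-P_{hL} = P_{\ell L}-P_{\ell H}$), I would first rewrite the left-hand side exactly as
\[
\thd_H - \thd_L = \frac1\ag\sum_{i=1}^\ag \left(p_i^H - p_i^L\right) = \frac{\Delta}{\ag}\sum_{i=1}^\ag \left(\bph^i - \bpl^i\right),
\]
since the coefficient of $\bph^i$ is $P_{hH}-P_{hL}=\Delta$ and that of $\bpl^i$ is $P_{\ell H}-P_{\ell L}=-\Delta$. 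For the right-hand side I would use that, conditioned on a fixed world state, the votes $\xrv_i^\ag$ are independent Bernoulli variables (each agent randomizes on its own conditionally i.i.d.\ signal), so the variance splits as $\delta_H = \frac1\ag\sum_i p_i^H(1-p_i^H)$ and likewise $\delta_L = \frac1\ag\sum_i p_i^L(1-p_i^L)$. After these two rewritings, and since $\Delta>0$, each target inequality reduces to a term-by-term comparison.

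The one substantive step is the per-agent inequality
\[
\bph^i - \bpl^i \le \frac{2}{\min(P_{hH}, P_{\ell H})}\, p_i^H(1-p_i^H).
\]
I would prove it by lower-bounding $p_i^H(1-p_i^H)$ through two one-sided estimates: $p_i^H = P_{hH}\bph^i + P_{\ell H}\bpl^i \ge P_{hH}\bph^i$ and $1-p_i^H = P_{hH}(1-\bph^i)+P_{\ell H}(1-\bpl^i) \ge P_{\ell H}(1-\bpl^i)$, which multiply to $p_i^H(1-p_i^H) \ge P_{hH}P_{\ell H}\,\bph^i(1-\bpl^i)$. Dividing by $\min(P_{hH},P_{\ell H})$ converts the constant into $2\max(P_{hH},P_{\ell H}) \ge 1$ (as $P_{hH}+P_{\ell H}=1$), so it remains to verify the elementary bound $\bph^i - \bpl^i \le \bph^i(1-\bpl^i)$; this rearranges to $\bpl^i(1-\bph^i)\ge 0$, which holds because $\bph^i,\bpl^i\in[0,1]$ (and when $\bph^i<\bpl^i$ the claim is trivial since the right side is nonnegative).

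Summing this per-agent bound over $i$, multiplying by $\Delta/\ag$, and invoking the two rewritings yields $\thd_H - \thd_L \le \frac{2\Delta}{\min(P_{hH},P_{\ell H})}\,\delta_H$. The second inequality is proved identically with $L$ in place of $H$: the same one-sided estimates give $p_i^L(1-p_i^L)\ge P_{hL}P_{\ell L}\,\bph^i(1-\bpl^i)$, and $\min(P_{hL},P_{\ell L})$ plays the role of $\min(P_{hH},P_{\ell H})$ (again $2\max(P_{hL},P_{\ell L})\ge 1$). The main obstacle is really just spotting the factorization $p_i^H(1-p_i^H)\ge P_{hH}P_{\ell H}\,\bph^i(1-\bpl^i)$ paired with the clean $2\max\ge 1$ slack; the only point that must be stated with care is the conditional independence of votes given the world state, which is what licenses splitting the variance into per-agent Bernoulli variances. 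Everything else is bookkeeping.
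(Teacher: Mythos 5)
Your proof is correct, and it shares the paper's skeleton: writing $p_i^H = P_{hH}\bph^i + P_{\ell H}\bpl^i$ as in your proposal, both arguments use conditional independence of votes given the world state to split $\delta_H$ into per-agent Bernoulli variances $\frac{1}{\ag}\sum_i p_i^H(1-p_i^H)$, both rewrite $\thd_H-\thd_L = \frac{\Delta}{\ag}\sum_i(\bph^i-\bpl^i)$, and both reduce the lemma to the per-agent estimate $\bph^i-\bpl^i \le \frac{2}{\min(P_{hH},P_{\ell H})}\,p_i^H(1-p_i^H)$. Where you genuinely differ is in how that estimate is proved. The paper does a case split on whether $p_i^H \le \frac12$: if so, $p_i^H(1-p_i^H)\ge \frac12 p_i^H \ge \frac12 P_{hH}\bph^i$ and it bounds $\bph^i-\bpl^i\le\bph^i$; otherwise $p_i^H(1-p_i^H) \ge \frac12(1-p_i^H) \ge \frac12 P_{\ell H}(1-\bpl^i)$ and it bounds $\bph^i-\bpl^i\le 1-\bpl^i$. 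So the paper's factor of $2$ comes from $\max\bigl(p_i^H,\,1-p_i^H\bigr)\ge\frac12$. You avoid the case split entirely by multiplying the two one-sided bounds to get $p_i^H(1-p_i^H)\ge P_{hH}P_{\ell H}\,\bph^i(1-\bpl^i)$, extracting the factor of $2$ instead from $2\max(P_{hH},P_{\ell H})\ge 1$, and closing with the elementary $\bph^i-\bpl^i\le\bph^i(1-\bpl^i)$. The two routes yield the identical constant; yours is arguably cleaner since it is uniform in $i$ (no cases, and negative values of $\bph^i-\bpl^i$ cause no trouble), and you also state explicitly the conditional-independence fact that licenses the variance decomposition, which the paper uses silently.
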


\begin{proof}
    For each $i$, we denote agent $i$' strategy as $(\bpl^i, \bph^i)$. In world state $H$, $\xrv_{i}^{\ag} = 1$ with probability $P_{hH} \cdot \bph^i + P_{\ell H} \cdot \bpl^i$. Therefore, 
    \begin{align*}
        \delta_H =&\ \frac{1}{\ag}Var(\sum_{i=1}^{\ag} \xrv_{i}^{\ag} \mid H)\\
        =&\ \frac1{\ag} \cdot \sum_{i=1}^{\ag}Var( \xrv_{i}^{\ag} \mid H)\\
        =&\ \frac1{\ag} \cdot \sum_{i=1}^{\ag}(P_{hH} \cdot \bph^i + P_{\ell H} \cdot \bpl^i) \cdot (1 - P_{hH} \cdot \bph^i - P_{\ell H} \cdot \bpl^i). 
    \end{align*}
    Similarly, when the world state is $L$, 
    \begin{align*}
        \delta_L =&\ \frac1{\ag} \cdot \sum_{i=1}^{\ag}(P_{hL} \cdot \bph^i + P_{\ell L} \cdot \bpl^i) \cdot (1 - P_{hL} \cdot \bph^i - P_{\ell L} \cdot \bpl^i). 
    \end{align*}

    On the other hand,
    \begin{align*}
        \thd_H - \thd_L = &\ \frac{1}{\ag}\sum_{i=1}^{\ag}(E[\xrv_{i}^{\ag}\mid H] - E[\xrv_{i}^{\ag}\mid L])\\
        =&\ \frac{1}{\ag}\sum_{i=1}^{\ag}\cdot \Delta\cdot (\bph^i - \bpl^i). 
    \end{align*}

    Let $\delta^i_H = (P_{hH} \cdot \bph^i + P_{\ell H} \cdot \bpl^i) \cdot (1 - P_{hH} \cdot \bph^i - P_{\ell H} \cdot \bpl^i)$. If $P_{hH} \cdot \bph^i + P_{\ell H} \cdot \bpl^i \le \frac12$, $\delta^i_H\ge \frac12 (P_{hH} \cdot \bph^i + P_{\ell H} \cdot \bpl^i)$. Therefore, $\bph^i - \bpl^i \le \bph^i \le \frac{2\delta_H^i}{P_{hH}}$. If $P_{hH} \cdot \bph^i + P_{\ell H} \cdot \bpl^i > \frac12$, $\delta^i_H>\frac12 (1 - P_{hH} \cdot \bph^i + P_{\ell H} \cdot \bpl^i)$, then 
    \begin{align*}
        \bph^i - \bpl^i = & (1 - \bpl^i) - (1 - \bph^i) \\
        \le &\ 1 - \bpl^i \\
        \le & \frac{2\delta_H^i}{P_{\ell H}}. 
    \end{align*}

    Similarly, for world sate $L$, we also have $\bph^i - \bpl^i \le \frac{2\delta_L}{P_{\ell L}}$ or $\bph^i - \bpl^i \le \frac{2\delta_L}{P_{h L}}$. Note that Therefore, by summing up all the $i$, we have 
    $\thd_H - \thd_L \le \frac{2\cdot \Delta\cdot \delta_H}{\min(P_{hH}, P_{\ell H})}$ and $\thd_H - \thd_L \le \frac{2\cdot \Delta\cdot \delta_L}{\min(P_{hL}, P_{\ell L})}$.
\end{proof}

\begin{lem}
    \label{lem:var_linear}
    For any instance of $\ag$ agents and an arbitrary strategy profile $\stgp$ where all agents play non-dominated strategies, let $\maj$ be the set of all majority agents, $\delta_H^{\maj} = \frac{1}{\ag}Var(\sum_{i \in \maj} \xrv_{i}^{\ag} \mid H)$ and $\delta_L^{\maj} = \frac{1}{\ag}Var(\sum_{i\in\maj} \xrv_{i}^{\ag} \mid L)$. Then  $\delta_H^\maj \ge \frac{\min(P_{\ell H}, P_{hH})}{2 \cdot \Delta} \cdot (\thd_H - \thd_L)$, and $\delta_L^\maj \ge \frac{\min(P_{\ell L}, P_{hL})}{2 \cdot \Delta} \cdot (\thd_H - \thd_L)$. 
\end{lem}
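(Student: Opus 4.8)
The plan is to bound the conditional variance contributed by each majority agent separately and then sum, while using the structure of non-dominated strategies to relate the majority's contribution to the full gap $\thd_H - \thd_L$. Since the votes are independent conditioned on the world state, the variance is additive: $\delta_H^\maj = \frac{1}{\ag}\sum_{i \in \maj}\delta_H^i$, where $\delta_H^i = (P_{hH}\bph^i + P_{\ell H}\bpl^i)(1 - P_{hH}\bph^i - P_{\ell H}\bpl^i)$ is the per-agent variance already used in the proof of Lemma~\ref{lem:var_sublinear}. The first step recycles the per-agent case analysis from that proof: splitting on whether $P_{hH}\bph^i + P_{\ell H}\bpl^i$ is at most or above $\frac12$, one obtains in both cases the estimate $\bph^i - \bpl^i \le \frac{2\delta_H^i}{\min(P_{hH}, P_{\ell H})}$, which I rearrange into the lower bound $\delta_H^i \ge \frac{\min(P_{hH}, P_{\ell H})}{2}(\bph^i - \bpl^i)$. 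By Lemma~\ref{lem:dominated}, each majority agent has $\bpl^i = 0$ or $\bph^i = 1$, so $\bph^i \ge \bpl^i$ and this lower bound is genuinely non-negative rather than vacuous.

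Summing over $i \in \maj$ and writing $D_\maj := \frac{\Delta}{\ag}\sum_{i \in \maj}(\bph^i - \bpl^i)$ for the majority's contribution to the gap, the first step yields $\delta_H^\maj \ge \frac{\min(P_{hH}, P_{\ell H})}{2\Delta} D_\maj$. The key remaining step is the comparison $D_\maj \ge \thd_H - \thd_L$. Using the identity $\thd_H - \thd_L = \frac{\Delta}{\ag}\sum_{i=1}^\ag(\bph^i - \bpl^i)$ (which follows from $P_{hH} - P_{hL} = \Delta = P_{\ell L} - P_{\ell H}$), I split the sum into its majority and minority parts. For a minority agent with a non-dominated strategy, Lemma~\ref{lem:dominated} forces $\bph^i = 0$ or $\bpl^i = 1$, so $\bph^i - \bpl^i \le 0$ in either case; hence the minority contribution $D_\mino := \frac{\Delta}{\ag}\sum_{i \in \mino}(\bph^i - \bpl^i)$ is non-positive, and $D_\maj = (\thd_H - \thd_L) - D_\mino \ge \thd_H - \thd_L$. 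Chaining the two inequalities gives $\delta_H^\maj \ge \frac{\min(P_{hH}, P_{\ell H})}{2\Delta}(\thd_H - \thd_L)$, as required.

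The bound on $\delta_L^\maj$ follows by repeating the argument verbatim in world state $L$, replacing $P_{hH}, P_{\ell H}$ with $P_{hL}, P_{\ell L}$; the per-agent estimate and the sign argument on the minority contribution are both state-independent, so nothing new is needed. I expect the only conceptually nontrivial point to be the sign comparison $D_\mino \le 0$: this is precisely the statement that non-dominated minority strategies can only pull $\thd_H - \thd_L$ downward, and it is exactly what allows a variance bound involving only the majority agents to control the full gap $\thd_H - \thd_L$. Everything else is an adaptation of the per-agent estimate already established for Lemma~\ref{lem:var_sublinear}, so no further obstacles should arise.
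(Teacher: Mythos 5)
Your proposal is correct and follows essentially the same route as the paper's proof: the per-agent estimate $\bph^i - \bpl^i \le 2\,Var(\xrv_i^\ag \mid \cdot)/\min(\cdot,\cdot)$ recycled from Lemma~\ref{lem:var_sublinear}, summed over majority agents, combined with the observation that non-dominated minority strategies satisfy $\bph^i - \bpl^i \le 0$ so the majority's contribution dominates $\thd_H - \thd_L$. The only cosmetic difference is your explicit remark (via Lemma~\ref{lem:dominated}) that majority agents have $\bph^i \ge \bpl^i$, which the paper does not need and neither do you, since the per-agent inequality holds regardless of sign.
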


\begin{proof}
    The proof resembles that of Lemma~\ref{lem:var_sublinear}. For each $i$, we denote agent $i$' strategy as $(\bpl^i, \bph^i)$. In world state $H$, $\xrv_{i}^{\ag} = 1$ with probability $P_{hH} \cdot \bph^i + P_{\ell H} \cdot \bpl^i$. Therefore, 
    \begin{equation*}
        Var(\xrv_{i}^{\ag} \mid H) = (P_{hH} \cdot \bph^i + P_{\ell H} \cdot \bpl^i). 
    \end{equation*}
    Likewise, 
    \begin{equation*}
        Var(\xrv_{i}^{\ag} \mid L) = (P_{hL} \cdot \bph^i + P_{\ell L} \cdot \bpl^i). 
    \end{equation*}
    By the proof of Lemma~\ref{lem:var_sublinear}, we know that $\bph^i - \bpl^i \le \frac{2Var(\xrv_{i}^{\ag} \mid H)}{\min(P_{\ell H}, P_{hH})}$ and $\bph^i - \bpl^i \le \frac{2Var(\xrv_{i}^{\ag} \mid L)}{\min(P_{\ell L}, P_{hL})}$. Now we sum up the inequalities for all $i \in \maj$. This will be 
    \begin{align*}
         \frac{2\delta_H^\maj \cdot \ag}{\min(P_{\ell H}, P_{hH})} \ge &\ \sum_{i \in \maj} \bph^i - \bpl^i,\\
         \frac{2\delta_L^\maj \cdot\ag}{\min(P_{\ell L}, P_{hL})} \ge &\ \sum_{i \in \maj} \bph^i - \bpl^i.
    \end{align*}
    On the other hand, $\thd_H - \thd_L = \frac{1}{\ag} \cdot \sum_{i= 1}^{\ag}\Delta \cdot (\bph^i - \bpl^i)$. For all the minority playing non-dominated strategies, $\bph^i = 0$ or $\bpl^i = 1$ holds, which implies to $\bph - \bpl \le 0$. Therefore, $\thd_H - \thd_L \le \frac{1}{\ag} \cdot \sum_{i \in \maj} \Delta \cdot (\bph^i - \bpl^i)$. Therefore, we have 
    \begin{align*}
        \delta_H^\maj \ge &\ \frac{\min(P_{\ell H}, P_{hH})}{2 \cdot \ag} \cdot \sum_{i \in \maj} \bph^i - \bpl^i\\
        \ge &\ \frac{\min(P_{\ell H}, P_{hH})}{2 \cdot \ag} \cdot \frac{\ag}{\Delta}\cdot (\thd_H - \thd_L)
    \end{align*}
    Likewise, $\delta_L^\maj \ge \frac{\min(P_{\ell L}, P_{hL})}{2 \cdot \Delta} \cdot  (\thd_H - \thd_L)$.
\end{proof}

Lemma~\ref{lem:var_linear} implies that, if the majority agents' strategies of  strategy profile sequence has a sub-linear variance, the difference between the expected vote share is also sub-linear. 

\begin{coro}
    \label{coro:var_sublinear}
    Let $\maj$ be the set of all majority agents. For an arbitrary sequence of instances and arbitrary sequence of strategy profiles $\{\stgp_{\ag}\}_{\ag=1}^\infty$ where all majority agents play non-dominated strategies, if for any $\delta > 0$ there exists infinitely many $\ag$ such that in some world state $\wos$, $Var(\sum_{i\in \maj} \xrv_{i}^{\ag} \mid \wos) < \delta \cdot \ag$, then for any $\varepsilon > 0$, there exists infinitely many $\ag$ such that in some world state $\wos$, $\thd_H - \thd_L < \varepsilon$. 
\end{coro}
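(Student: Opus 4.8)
The plan is to read this corollary off Lemma~\ref{lem:var_linear} by rearranging its two inequalities and then calibrating the variance threshold $\delta$ supplied by the hypothesis against the target $\varepsilon$. The corollary's standing assumption---that all majority agents play non-dominated strategies---is precisely the hypothesis of Lemma~\ref{lem:var_linear}, so that lemma applies verbatim to every $\stgp_\ag$ in the sequence, and no fresh structural work is needed.

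First I would restate the two bounds of Lemma~\ref{lem:var_linear} in the direction I need. Writing $\delta_\wos^\maj = \frac{1}{\ag}Var(\sum_{i\in\maj}\xrv_i^\ag \mid \wos)$, the lemma gives, for every $\ag$,
\[ \thd_H - \thd_L \le \frac{2\Delta}{\min(P_{\ell H}, P_{hH})}\,\delta_H^\maj \quad\text{and}\quad \thd_H - \thd_L \le \frac{2\Delta}{\min(P_{\ell L}, P_{hL})}\,\delta_L^\maj. \]
Next I would fold both inequalities into a single, world-state-independent bound by setting $p_{\min} = \min(P_{hH}, P_{\ell H}, P_{hL}, P_{\ell L}) > 0$, which yields $\thd_H - \thd_L \le \frac{2\Delta}{p_{\min}}\,\delta_\wos^\maj$ for either choice of $\wos \in \{H, L\}$.

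Finally, given $\varepsilon > 0$, I would set $\delta = \frac{p_{\min}\,\varepsilon}{2\Delta}$ and invoke the hypothesis with this $\delta$: there are infinitely many $\ag$ for which some world state $\wos$ satisfies $Var(\sum_{i\in\maj}\xrv_i^\ag\mid\wos) < \delta\ag$, i.e.\ $\delta_\wos^\maj < \delta$. For each such $\ag$, substituting into the unified bound gives $\thd_H - \thd_L < \frac{2\Delta}{p_{\min}}\cdot\delta = \varepsilon$, which is exactly the claimed conclusion. The only subtlety---hardly an obstacle---is that the world state achieving small variance may change with $\ag$; this is precisely the reason to pass to the uniform constant $p_{\min}$, so that the resulting bound on $\thd_H - \thd_L$ holds regardless of which of $H$ or $L$ is the low-variance state for a given $\ag$.
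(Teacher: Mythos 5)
Your proof takes exactly the route the paper intends: the paper offers no argument for this corollary beyond the sentence that Lemma~\ref{lem:var_linear} implies it, and your rearrangement of the lemma's two inequalities plus the calibration $\delta = p_{\min}\varepsilon/(2\Delta)$ is precisely the quantifier bookkeeping that sentence leaves implicit. Passing to the uniform constant $p_{\min}$ so that the bound holds no matter which of $H$ or $L$ is the low-variance state for a given $\ag$ is also the right (and necessary) observation.

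One claim in your write-up is inaccurate, although the fault originates in the paper's own statement. You assert that the corollary's hypothesis --- all \emph{majority} agents play non-dominated strategies --- ``is precisely the hypothesis of Lemma~\ref{lem:var_linear}.'' It is not: that lemma is stated for profiles in which \emph{all} agents play non-dominated strategies, and its proof genuinely uses minority non-dominance (a non-dominated minority strategy has $\bph^i = 0$ or $\bpl^i = 1$, hence $\bph^i - \bpl^i \le 0$, which is what permits the step $\thd_H - \thd_L \le \frac{\Delta}{\ag}\sum_{i\in\maj}(\bph^i - \bpl^i)$). Under the corollary's literal hypothesis the statement is actually false: let every majority agent always vote for $\bA$ (non-dominated, zero variance) and every minority agent vote informatively ($\bpl^i = 0$, $\bph^i = 1$, a dominated strategy for a minority agent); the variance hypothesis then holds trivially, yet $\thd_H - \thd_L = (1-\alpha)\Delta$ is a positive constant for every $\ag$. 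So your argument is correct exactly when the hypothesis is read --- as the paper evidently intends, and as every application of the corollary in the paper satisfies --- to require that minority agents also play non-dominated strategies; a careful proof should state this explicitly rather than assert that the hypotheses coincide verbatim.
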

\section{Full Proof}
\subsection{Lemmas}
\label{apx:lemmas}
In this section, we give the technical lemmas to prove Theorem~\ref{thm:thresholdk}. They either form crucial prerequisites for some intermediate results or guarantee some parameters are in the correct range. 

The following Lemma~\ref{lem:acctoeq} restrict the deviation to be consider into those with only minority agents. 
\begin{lem}
\label{lem:acctoeq}
    Given a strategy profile $\stgp$, if there exists $e > 0$ such that $\lp_H^{\bA}(\stgp) \ge 1 - e$ and $\lp_H^{\bA}(\stgp) \ge 1 - e$ holds, then for any deviating strategy profile $\stgp'$, at least one of the following holds. (1) No agent has a utility increase more than $\frac{2B(B+1)e}{P_H \cdot P_L}$. (2) Every majority agents get strictly negative utility decrease. 
\end{lem}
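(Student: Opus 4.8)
The plan is to reduce the utility to an affine function of the two winning probabilities and then exploit the fact that the \emph{original} profile $\stgp$ is already almost optimal for the majority, so majority agents have essentially no room to gain. Using $\lp_W^{\bR}=1-\lp_W^{\bA}$ in each state, I would write
\[
\ut_i(\stgp) = c_i + P_H\, d_i^H\, \lp_H^{\bA}(\stgp) + P_L\, d_i^L\, \lp_L^{\bA}(\stgp),
\]
where $c_i = P_H \vt_i(H,\bR) + P_L \vt_i(L,\bR)$ is profile-independent, $d_i^H := \vt_i(H,\bA)-\vt_i(H,\bR)$ and $d_i^L := \vt_i(L,\bA)-\vt_i(L,\bR)$. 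Since utilities are integers in $\{0,\dots,B\}$, each $d_i^H,d_i^L$ lies in $[-B,B]$, and for a majority agent $d_i^H\ge 1$, $d_i^L\le -1$. Writing $a_0=\lp_H^{\bA}(\stgp)$, $b_0=\lp_L^{\bA}(\stgp)$, $a=\lp_H^{\bA}(\stgp')$, $b=\lp_L^{\bA}(\stgp')$, the hypothesis (read as $\lp_H^{\bA}(\stgp)\ge 1-e$ and $\lp_L^{\bR}(\stgp)\ge 1-e$, correcting the evident typo) gives $a_0\ge 1-e$ and $b_0\le e$, and every agent's utility change is
\[
\ut_i(\stgp') - \ut_i(\stgp) = P_H\, d_i^H (a-a_0) + P_L\, d_i^L (b-b_0).
\]

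First I would record two inequalities that hold for \emph{every} $\stgp'$, with no assumption: since $a\le 1$ and $a_0\ge 1-e$ we get $a-a_0\le e$, and since $b\ge 0$ and $b_0\le e$ we get $b-b_0\ge -e$. These say the deviation cannot move the winning probabilities far in the two directions that would benefit the majority. Then I would branch on option~(2): either every majority agent strictly loses, in which case we are done, or some majority agent $i^\ast$ satisfies $\ut_{i^\ast}(\stgp')\ge \ut_{i^\ast}(\stgp)$. In the latter case, writing $p=d_{i^\ast}^H\in[1,B]$ and $q=-d_{i^\ast}^L\in[1,B]$, this non-loss inequality reads $P_H p (a-a_0)\ge P_L q (b-b_0)$.

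The crux is to turn this single non-loss inequality into two-sided control of both $a-a_0$ and $b-b_0$. Combining $P_H p(a-a_0)\ge P_L q(b-b_0)$ with the free bound $a-a_0\le e$ bounds $b-b_0$ from above by $P_H B e/P_L$; combining it with the free bound $b-b_0\ge -e$ bounds $a-a_0$ from below by $-P_L B e/P_H$. Together with the free bounds this confines both quantities to $O(e)$ intervals:
\[
a-a_0 \in \Big[-\tfrac{P_L B e}{P_H},\, e\Big], \qquad b-b_0 \in \Big[-e,\, \tfrac{P_H B e}{P_L}\Big].
\]
Substituting these into the utility-change identity for an arbitrary agent $j$ and using $|d_j^H|,|d_j^L|\le B$ yields a uniform bound on every agent's gain; a direct estimate gives $\ut_j(\stgp')-\ut_j(\stgp)\le B(B+1)e$, comfortably below the claimed $\tfrac{2B(B+1)e}{P_H P_L}$ (as $P_H P_L\le \tfrac14$), establishing option~(1).

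I expect the main obstacle to be conceptual rather than computational: recognizing that it is the near-optimality of $\stgp$ (not of $\stgp'$) that drives the argument, and that a \emph{single} non-losing majority agent suffices to confine the deviation's effect on the winning probabilities in \emph{both} states at once, hence to control all agents' gains simultaneously. A minor point to treat carefully is ties under the majority rule; I would either invoke a tie-breaking convention making $\lp_W^{\bA}+\lp_W^{\bR}=1$ or absorb tie probabilities into the $O(e)$ slack, neither of which affects the final bound.
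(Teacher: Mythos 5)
Your proposal is correct, and it takes a genuinely different route from the paper's proof. The paper branches on the fidelity of the \emph{deviating} profile: if $\acc(\stgp') \ge 1 - \frac{(B+1)e}{P_H P_L}$, it bounds majority gains by $Be$ (using only the near-optimality of $\stgp$) and minority gains by $\frac{2B(B+1)e}{P_H P_L}$ (splitting the aggregate fidelity bound across the two states by dividing by the priors, which is exactly where the $1/(P_H P_L)$ factor in the lemma's constant comes from); otherwise, one of $\lp_H^{\bA}(\stgp')$, $\lp_L^{\bR}(\stgp')$ drops by more than $\frac{(B+1)e}{P_H P_L}$, and this loss overwhelms the at-most-$Be$ gain available to a majority agent from the other state, so every majority agent strictly loses. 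You instead branch directly on the negation of conclusion (2): a single non-losing majority agent $i^*$, combined with the two ``free'' bounds $a - a_0 \le e$ and $b - b_0 \ge -e$ (near-optimality of $\stgp$) and the integrality gap $p, q \ge 1$, confines \emph{both} probability shifts to $O(Be)$ intervals, which then bounds every agent's gain uniformly by $B(B+1)e$. Both arguments rest on the same two ingredients (near-optimality of $\stgp$ for the majority, and the integer utility gap of majority preferences), but yours is more economical: it needs no separate majority/minority analysis and yields a strictly tighter constant, $B(B+1)e$ rather than $\frac{2B(B+1)e}{P_H P_L}$, because it never divides the fidelity slack by the priors. What the paper's route buys is alignment with how the lemma is deployed downstream, where deviations are analyzed via whether $\acc(\stgp')$ converges to $1$ (Lemma~\ref{lem:minority_deviate}), so casing on fidelity there is natural; but for the statement as given, your argument is complete, and your handling of the two side issues (reading the hypothesis as $\lp_L^{\bR}(\stgp) \ge 1 - e$, which is an evident typo, and the treatment of ties) matches what the paper implicitly assumes.
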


\begin{proof}
    The difference on the ex-ante expected utility of agent $i$ under two strategy profiles is 
    \begin{align*}
        \ut_{i}(\stgp'_\ag) - \ut_{i}(\stgp_\ag) = &\ P_H\cdot (\lp_{H}^{\bA}(\stgp'_\ag) - \lp_{H}^{\bA}(\stgp_\ag))(\vt_{i}(H, \bA) - \vt_{i}(H,\bR))\\
        &\ + P_L\cdot (\lp_{L}^{\bR}(\stgp'_\ag) - \lp_{L}^{\bR}(\stgp_\ag))(\vt_{i}(L, \bR) - \vt_{i}(L,\bA)).
    \end{align*}
    
    We discuss on different cases on the fidelity of $\stgp'_\ag$. 

    Suppose $\acc(\stgp'_\ag) \ge 1 -  \frac{(B+1)e}{P_H\cdot P_L}$. For a majority agent $i$, $\ut_{i}(\stgp'_\ag) - \ut_{i}(\stgp_\ag) \le P_H\cdot e\cdot B + P_L\cdot e\cdot B = Be.$ For a minority agent $i$, $\ut_{i}(\stgp'_\ag) - \ut_{i}(\stgp_\ag) \le P_H\cdot \frac{(B+1)e}{P_H^2\cdot P_L}\cdot B + P_L\cdot \frac{(B+1)e}{P_H\cdot P_L^2}\cdot B = \frac{2B(B+1)e}{P_H\cdot P_L}.$
    Therefore, no agent can get expected utility gain more than $\frac{2B(B+1)e}{P_H\cdot P_L}$. 

    Suppose $\acc(\stgp'_\ag) < 1 - \frac{(B+1)e}{P_H\cdot P_L}$. We first show that majority agents will not join the deviating group. Then we show that at most $\xi\cdot \ag$ minority agents cannot successfully deviate. $\acc(\stgp'_\ag) < 1 - \frac{(B+1)e}{P_H\cdot P_L}$, at least one of $\lp_H^{\bA}(\stgp'_\ag) < 1 - \frac{(B+1)e}{P_H\cdot P_L}$ and $\lp_L^{\bR}(\stgp'_\ag) < 1 - \frac{(B+1)e}{P_H\cdot P_L}$ holds. When $\lp_H^{\bA}(\stgp'_\ag) < 1 - \frac{(B+1)e}{P_H\cdot P_L}$, the utility difference of a majority agent $i$ will be $\ut_{i}(\stgp'_\ag) - \ut_{i}(\stgp_\ag) < P_H\cdot (e - \frac{(B+1)e}{P_H\cdot P_L} )\cdot 1 + P_L\cdot e\cdot B < (P_H - 1)\cdot e + (P_L -1)\cdot B \cdot e < 0.$ When $\lp_L^{\bR}(\stgp'_\ag) < 1 - \frac{(B+1)e}{P_H\cdot P_L}$ holds, the utility difference of a majority agent $i$ is also strictly negative according to similar reasoning. Therefore, a majority agent has no incentives to join the deviation group. 
\end{proof}
The following Lemma shows that the deviation succeeds if and only if the fidelity of the deviating strategy does not converges to 1. 
\begin{lem}
    \label{lem:minority_deviate}
    Let $\{\stgp_\ag\}$ be an arbitrary sequence of strategy profile such that $\acc(\stgp_\ag)$ converges to 1 and $\{\stgp_\ag'\}$ be a sequence of deviating strategy profile from $\{\stgp_\ag\}$ where for each $\ag$, the deviating group $D$ contains only minority agents.
    \begin{enumerate}
        \item If $\lim_{\ag \to \infty} \acc(\stgp_\ag') =1$, the expected utility increase of the minority agents converges to 0. 
        \item If $\lim_{\ag \to \infty} \acc(\stgp_\ag') =1$ does not hold, there exists a constant $\varepsilon$ and infinitely many $\ag$ such that every minority agent gains at least $\varepsilon$ in the expected utility. 
    \end{enumerate}
\end{lem}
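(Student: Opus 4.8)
The plan is to argue directly from the expression for the change in a minority agent's ex-ante utility that already appears in the proof of Lemma~\ref{lem:acctoeq}:
\[
\ut_i(\stgp') - \ut_i(\stgp) = P_H\bigl(\lp_H^{\bA}(\stgp') - \lp_H^{\bA}(\stgp)\bigr)\bigl(\vt_i(H,\bA) - \vt_i(H,\bR)\bigr) + P_L\bigl(\lp_L^{\bR}(\stgp') - \lp_L^{\bR}(\stgp)\bigr)\bigl(\vt_i(L,\bR) - \vt_i(L,\bA)\bigr).
\]
For a minority agent both utility gaps are strictly negative integers, hence lie in $[-B,-1]$. The first fact I would record is structural: since $P_H+P_L=1$ and every conditional win-probability is at most $1$, the condition $\acc(\stgp)\to 1$ forces \emph{both} $\lp_H^{\bA}(\stgp)\to 1$ and $\lp_L^{\bR}(\stgp)\to 1$ (and identically for $\stgp'$ on any subsequence where its fidelity tends to $1$).

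For Part 1 this observation is essentially all that is needed: if $\acc(\stgp_\ag')\to 1$, then both bracketed probability differences tend to $0$, and since the utility gaps are bounded by $B$, the whole expression is $o(1)$ uniformly over all minority agents. For Part 2 the idea is to convert the two terms into a telescoping bound in $1-\acc$. Writing $\lp_H^{\bA}(\stgp)=1-\epsilon_H$ and $\lp_L^{\bR}(\stgp)=1-\epsilon_L$ with $\epsilon_H,\epsilon_L\to 0$, the state-$H$ probability difference equals $-(1-\lp_H^{\bA}(\stgp'))+\epsilon_H$; multiplying by the negative utility gap and choosing the favorable endpoint of $[-B,-1]$ for the genuinely helpful part and the unfavorable endpoint for the vanishing part yields the lower bound $P_H(1-\lp_H^{\bA}(\stgp'))-BP_H\epsilon_H$, and symmetrically for state $L$. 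Summing gives the agent-independent inequality
\[
\ut_i(\stgp') - \ut_i(\stgp) \ge \bigl(1 - \acc(\stgp')\bigr) - B\bigl(1 - \acc(\stgp)\bigr).
\]

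From here the conclusion is immediate: because $\acc(\stgp_\ag')$ does not converge to $1$, there is a fixed $\eta>0$ and infinitely many $\ag$ with $1-\acc(\stgp_\ag')\ge \eta$, while $1-\acc(\stgp_\ag)\to 0$ over all $\ag$; restricting to those $\ag$ large enough that $B(1-\acc(\stgp_\ag))\le \eta/2$ makes every minority agent gain at least $\eta/2$, so $\varepsilon=\eta/2$ works. The main obstacle I anticipate is purely the sign bookkeeping in the telescoping step — selecting the correct endpoint ($1$ versus $B$) of the utility-gap interval for the helpful and the vanishing contributions of each term, and confirming that a fidelity drop in only one world state cannot be cancelled by a spurious gain in the other. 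The latter holds precisely because the original profile already has both win-probabilities at $1-o(1)$, so any increase relative to it is at most $o(1)$; once the clean agent-uniform inequality above is in hand, extracting $\varepsilon$ and the subsequence is routine.
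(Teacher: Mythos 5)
Your proof is correct, and it rests on the same two ingredients as the paper's: the decomposition of $\ut_i(\stgp'_\ag) - \ut_i(\stgp_\ag)$ into state-wise win-probability differences multiplied by utility gaps lying in $[-B,-1]$, and the observation that $\acc(\stgp_\ag)\to 1$ (with $P_H,P_L>0$) forces both $\lp_H^{\bA}(\stgp_\ag)\to 1$ and $\lp_L^{\bR}(\stgp_\ag)\to 1$. Your Part 1 is essentially verbatim the paper's argument. In Part 2 the execution differs in a way worth noting: the paper extracts a constant $\delta>0$ and a subsequence on which (WLOG) $\lp_H^{\bA}(\stgp_\ag')\le 1-\delta$, takes the original profile's win probabilities above $1-P_HP_L\delta/(2B)$, and bounds the two terms separately to get the explicit constant $\varepsilon = P_H\delta\bigl(\tfrac{2B-1}{2B}-\tfrac{P_L^2}{2}\bigr)$; you instead fold both states into the single agent-uniform inequality
\begin{equation*}
\ut_i(\stgp'_\ag) - \ut_i(\stgp_\ag) \ \ge\ \bigl(1-\acc(\stgp'_\ag)\bigr) - B\bigl(1-\acc(\stgp_\ag)\bigr),
\end{equation*}
using the endpoint $-1$ of the gap interval for the fidelity-drop term and $-B$ for the vanishing term, and then conclude by intersecting the subsequence where $1-\acc(\stgp'_\ag)\ge\eta$ with the tail where $B(1-\acc(\stgp_\ag))\le\eta/2$. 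Your telescoping identity checks out, since $P_H(1-\lp_H^{\bA})+P_L(1-\lp_L^{\bR}) = 1-\acc$ exactly. What your packaging buys is the elimination of the WLOG case split (in particular, it silently handles the possibility that the failing world state alternates with $\ag$ along the subsequence, which the paper's WLOG resolves only implicitly by pigeonhole); what the paper's route buys is an $\varepsilon$ stated directly in terms of $\delta$, $B$, and the prior. Both arguments are complete; yours is arguably the cleaner of the two.
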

\begin{proof}
    Consider the utility difference of an arbitrary minority agent between $\stgp_\ag$ and $\stgp_\ag'$.  $\acc(\stgp_\ag)$ converges to 1 implies that $\lp_H^\bA(\stgp_\ag)$ and $\lp_L^\bR(\stgp_\ag)$ converges to 1.

    When $\lim_{\ag \to \infty} \acc(\stgp_\ag') =1$, $\lp_H^\bA(\stgp_\ag')$ and $\lp_L^\bR(\stgp_\ag')$ converges to 1. Therefore, for any constant $\varepsilon > 0$, we can find a $N > 0$ such that for all $\ag > N$, all $\lp_H^\bA(\stgp_\ag)$, $\lp_L^\bR(\stgp_\ag)$, $\lp_H^\bA(\stgp_\ag')$, and $\lp_L^\bR(\stgp_\ag')$ is at least $1 - \varepsilon$
    \begin{equation*}
        \ut_{i}(\stgp'_\ag) - \ut_{i}(\stgp_\ag) \le P_H \cdot \varepsilon \cdot B + P_L \cdot \varepsilon \cdot B = \varepsilon\cdot B. 
    \end{equation*}
    Therefore, the expected utility gain of any minority agents converge to 0. 

    On the other hand, when $\lim_{\ag \to \infty} \acc(\stgp_\ag') =1$ does not hold, there are a constant $\delta > 0$ and infinitely many $\ag$ (denote the set as $N'$) in which either $\lp_H^\bA(\stgp_\ag') \le 1 - \delta$ or $\lp_L^\bR(\stgp_\ag') \le 1 - \delta$ holds. Without loss of generality, suppose $\lp_H^\bA(\stgp_\ag') \le 1 - \delta$ hold. On the other hand, for all sufficiently large $\ag$, both $\lp_H^\bA(\stgp_\ag)$ and $\lp_L^\bR(\stgp_\ag)$ are at least $1 - P_HP_L \cdot \delta/(2B)$. Therefore, 
    \begin{align*}
        \ut_{i}(\stgp'_\ag) - \ut_{i}(\stgp_\ag) \ge &\ P_H \cdot (\delta - P_HP_L\cdot\delta / (2B))\cdot 1 - P_L \cdot P_HP_L\delta/(2B) \cdot B\\
        \ge &\ P_H\cdot \delta \cdot (\frac{2B-1}{2B} - \frac{P_L^2}{2})\\
        >&\ 0.  
    \end{align*}
    Therefore, there exists a constant $\varepsilon = P_H\cdot \delta \cdot (\frac{2B-1}{2B} - \frac{P_L^2}{2})$ and infinitely many $\ag$ such that every minority agent gain at least $\varepsilon$ in the expected utility. 
\end{proof}

The following two lemmas guarantees that $\anl$ is between $\frac{1}{2P_{\ell L}}$ and $\frac{1}{1 +P_{\ell L}}$. 

\begin{lem}
    \label{lem:a4upper}
    $\anl < \frac{1}{2P_{\ell L}}$. 
\end{lem}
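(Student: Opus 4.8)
The plan is to leverage the characterization recorded in the proof sketch, namely that $\xinl \ge \frac{\Delta(\alpha - 1/2)}{P_{hL}}$ holds if and only if $\alpha \le \anl$. Equivalently, $\anl$ is the crossing point of the ``Non-Linear'' curve $\xinl$ with the Segment~2 line $\frac{\Delta(\alpha-1/2)}{P_{hL}}$, and the line lies strictly above $\xinl$ exactly when $\alpha > \anl$. Hence, to show $\anl < \frac{1}{2P_{\ell L}}$, it suffices to evaluate the two expressions at the single point $\alpha = \frac{1}{2P_{\ell L}}$ and verify that there the line strictly exceeds $\xinl$; the stated equivalence then places $\frac{1}{2P_{\ell L}}$ on the side where the defining inequality fails, forcing $\frac{1}{2P_{\ell L}} > \anl$. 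Note that $\alpha = \frac{1}{2P_{\ell L}}$ is a legitimate value, since $\frac12 < \frac{1}{2P_{\ell L}}$ (as $P_{\ell L} < 1$) and $\frac{1}{2P_{\ell L}} \le \theta$ under the assumption $P_{hH} \le P_{\ell L}$ (indeed $\theta - \frac{1}{2P_{\ell L}} = \frac{P_{\ell L} - P_{hH}}{2P_{\ell L}} \ge 0$).

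The first computation is routine: at $\alpha = \frac{1}{2P_{\ell L}}$ we have $\alpha - \frac12 = \frac{1 - P_{\ell L}}{2P_{\ell L}} = \frac{P_{hL}}{2P_{\ell L}}$, using $P_{hL} + P_{\ell L} = 1$, so the factor $P_{hL}$ cancels and the line equals $\frac{\Delta}{2P_{\ell L}}$ — precisely the flat value of Segment~1, which reflects the continuity of $\xi^*$ at the Segment~1/Segment~2 junction. The key simplification is in $\xinl$ from \eqref{eq:xinl}: at $\alpha = \frac{1}{2P_{\ell L}}$ the factor $1 - 2\alpha P_{\ell L}$ vanishes, so the entire square-root term drops out. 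The denominator then collapses to $2P_{\ell H} + 4P_{\ell L} - 8\alpha P_{\ell H}P_{\ell L} = 4P_{\ell L} - 2P_{\ell H} = 2P_{\ell L} + 2\Delta$, where I use $8\alpha P_{\ell H}P_{\ell L} = 4P_{\ell H}$ and $\Delta = P_{\ell L} - P_{\ell H}$. Thus $\xinl = \frac{\Delta}{2P_{\ell L} + 2\Delta}$ at this point.

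Finally, since $\Delta > 0$ and both denominators are positive, $\frac{\Delta}{2P_{\ell L} + 2\Delta} < \frac{\Delta}{2P_{\ell L}}$, i.e., $\xinl < \frac{\Delta(\alpha - 1/2)}{P_{hL}}$ at $\alpha = \frac{1}{2P_{\ell L}}$, which yields $\frac{1}{2P_{\ell L}} > \anl$ as claimed. I do not anticipate a genuine obstacle here: the only delicate points are spotting the cancellation $1 - 2\alpha P_{\ell L} = 0$ that annihilates the square root, and orienting the if-and-only-if correctly so that failure of the inequality at $\frac{1}{2P_{\ell L}}$ places $\anl$ strictly below it. Should one wish to avoid invoking the stated characterization (to rule out any circularity with the curve's derivation), the same endpoint value can instead be combined with a direct sign check of $\xinl(\alpha) - \frac{\Delta(\alpha-1/2)}{P_{hL}}$ in a neighborhood of $\anl$; either way, the endpoint evaluation is the essential content.
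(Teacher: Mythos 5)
Your endpoint computation is correct, and the cancellation you spot is real: at $\alpha_0 = \frac{1}{2P_{\ell L}}$ the factor $1 - 2\alpha_0 P_{\ell L}$ annihilates the square root, the Segment~2 line equals $\frac{\Delta}{2P_{\ell L}}$, and $\xinl(\alpha_0) = \frac{\Delta}{4P_{\ell L} - 2P_{\ell H}} = \frac{\Delta}{2P_{\ell L}+2\Delta} < \frac{\Delta}{2P_{\ell L}}$. The gap is the bridge from this to $\anl < \alpha_0$. The ``characterization recorded in the proof sketch'' is exactly Lemma~\ref{lem:a4threshold}, and the paper's proof of that lemma \emph{uses} Lemma~\ref{lem:a4upper}: it invokes ``the similar reasoning of Lemma~\ref{lem:a4upper}'' to place the larger root of the squared quadratic at or above $\frac{1}{2P_{\ell L}}$, and later states outright that Lemma~\ref{lem:a4upper} implies $\alpha' \le \frac{1}{2P_{\ell L}}$. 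In the paper's development, $\anl$ is \emph{defined} by the closed form in Equation~\ref{eq:anl} (the smaller root of a quadratic obtained by squaring), and the crossing-point interpretation you treat as definitional is a downstream theorem resting on the very lemma you are asked to prove. So your primary route is circular.

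Moreover, a single-point evaluation genuinely cannot settle the question on its own, which is why the circularity is not cosmetic. At $\alpha_0$ the \emph{squared} inequality actually holds: the square-root side is $0$ while the other side is $\gunc(\alpha_0) = -\frac{(1-P_{\ell L})(P_{\ell L}+P_{hH}-1)}{P_{\ell L}} < 0$, whose square is strictly positive; the unsquared inequality fails at $\alpha_0$ purely because of the sign of $\gunc$, not because $\alpha_0$ lies between the two roots of the quadratic. Hence, absent Lemma~\ref{lem:a4threshold}, your data is consistent both with $\alpha_0 \le \anl$ and with $\alpha_0$ lying at or above the larger root, and your fallback (``a direct sign check in a neighborhood of $\anl$'') presupposes knowing where $\anl$ is. There is a clean non-circular repair built on your computation: as $\alpha \to \frac12^+$ the line vanishes while $\xinl$ stays positive, so $\xinl > \frac{\Delta(\alpha-1/2)}{P_{hL}}$ there; your inequality reverses this at $\alpha_0$; by continuity some $\alpha^* \in (\frac12, \alpha_0)$ satisfies $\xinl(\alpha^*) = \frac{\Delta(\alpha^*-1/2)}{P_{hL}}$; squaring an \emph{equality} is sign-safe, so $\alpha^*$ is a root of the quadratic, and since $\anl$ is the smaller root, $\anl \le \alpha^* < \alpha_0$. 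You would need to spell out that argument---or carry out the paper's direct algebra on Equation~\ref{eq:anl} (reduce $\anl < \frac{1}{2P_{\ell L}}$ to a polynomial inequality, bound $\sqrt{P_{hH}P_{hL}P_{\ell H}P_{\ell L}} \ge P_{hL}P_{\ell L}$, and show the resulting function of $P_{hH}$ is positive at $P_{hL}$ and increasing)---for the proposal to constitute a proof.
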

\begin{proof}
    By manipulating the inequality, $\anl < \frac{1}{2P_{\ell L}}$ is equivalent to 
    \begin{equation*}
        (1-P_{hH})^2 + P_{hH}P_{\ell L}^2 + 3 P_{hH}P_{\ell L} + P_{\ell L}^2 + 2P_{hL}P_{\ell L}\sqrt{P_{hH}P_{\ell L}P_{hL}P_{\ell H}} - 2P_{hH}P_{\ell L}^3 - P_{hH}^2 P_{\ell L} - 2P_{\ell L} > 0. 
    \end{equation*}

    Given that $P_{hL} < P_{hH} \le P_{\ell L}$, $\sqrt{P_{hH}P_{\ell L}P_{hL}P_{\ell H}} \ge P_{hL} P_{\ell L}$. Therefore, for fixed $P_{hL}$ and $P_{\ell L}$, we consider the following function of $P_{hH}$

    \begin{align*}
        \func(P_{hH}) =&\ (1-P_{hH})^2 + P_{hH}P_{\ell L}^2 + 3 P_{hH}P_{\ell L} + P_{\ell L}^2 + 2P_{hL}^2P_{\ell L}^2 - 2P_{hH}P_{\ell L}^3 - P_{hH}^2 P_{\ell L} - 2P_{\ell L} \\
        =&\ P_{hL} \cdot P_{hH}^2 - (2P_{\ell L}^3 - P_{\ell L}^2 -3P_{\ell L}+2) \cdot P_{hH} + P_{\ell L}^2 + 2P_{hL}^2 P_{\ell L}^2 - 2P_{\ell L} + 1. 
    \end{align*}
    It is not hard to verify that $\func(P_{hH}) \le LHS$. Therefore, it suffice to show that $\func(P_{hH}) > 0$ for all $P_{hL} \le P_{hH} \le P_{\ell L}$. We prove this by showing that $\func(P_{hL}) > 0$ and that $\func$ increases on $P_{hH}$. 

    First, we have
    \begin{align*}
        \func(P_{hL}) =&\ P_{\ell L}^2 +P_{hL}P_{\ell L}^2 + 3 P_{hL}P_{\ell L} + P_{\ell L}^2 + 2P_{hL}^2P_{\ell L}^2 - 2P_{hL}P_{\ell L}^3 - P_{hL}^2 P_{\ell L} - 2P_{\ell L}\\
        =&\ 4P_{\ell L}^2 P_{hL}^2\\
        > &\ 0.
    \end{align*}
    
    Secondly, 
    \begin{align*}
        \frac{d\func}{d P_{hH}} = &\ 2P_{hL}\cdot P_{hH} - (2P_{\ell L}^3 - P_{\ell L}^2 -3P_{\ell L}+2)\\
        > &\ 2(1 - P_{\ell L}^2)- (2P_{\ell L}^3 - P_{\ell L}^2 -3P_{\ell L}+2)\\
        =& P_{\ell L}\cdot (1 - P_{\ell L})\cdot (2P_{\ell L} - 1)\\
        >&\ 0.
    \end{align*}
    Therefore, $LHS \ge \func(P_{hH}) > \func(P_{hL}) > 0$. This implies $\anl < \frac{1}{2P_{\ell L}}$. 
\end{proof}

\begin{lem}
    \label{lem:a4lower} $\anl \ge \frac{1}{1 +P_{\ell L}}$.
\end{lem}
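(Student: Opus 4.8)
The plan is to clear denominators and eliminate the square root, reducing the claim to a polynomial inequality that I can then analyze with the same endpoint-and-monotonicity technique used for Lemma~\ref{lem:a4upper}. Write $R = \sqrt{P_{hH}P_{hL}P_{\ell H}P_{\ell L}}$, let
\[
N_0 = 2P_{hH}P_{\ell L}^2 + P_{hH}(P_{hH}+3P_{\ell L}) - (3P_{hH}+P_{\ell L}) + 2
\]
denote the part of the numerator of $\anl$ in \eqref{eq:anl} not involving $R$, and let $D = 2P_{\ell H}^2 + 8P_{hH}P_{\ell L}^2 > 0$ be its denominator. Since $D>0$ and $1+P_{\ell L}>0$, cross-multiplying shows that the target $\anl \ge \frac{1}{1+P_{\ell L}}$ is equivalent to
\[
G := (1+P_{\ell L})N_0 - D \ \ge\ 2P_{hL}(1+P_{\ell L})\,R .
\]

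First I would verify that the left-hand side $G$ is nonnegative on the relevant region $P_{hL}\le P_{hH}\le P_{\ell L}$; this is a polynomial inequality free of radicals and can be checked directly. (As a sanity check, in the symmetric case $P_{hH}=P_{\ell L}=p$ one computes $G = 2p(1-p)^2(1+p)\ge 0$.) With both sides nonnegative, I would square to remove the radical, obtaining the equivalent polynomial inequality
\[
G^2 \ \ge\ 4P_{hL}^2(1+P_{\ell L})^2\,P_{hH}P_{hL}P_{\ell H}P_{\ell L}.
\]
Substituting $P_{\ell H}=1-P_{hH}$ and $P_{hL}=1-P_{\ell L}$ leaves a polynomial inequality in the two free variables $P_{hH}$ and $P_{\ell L}$ over the region $1-P_{\ell L} < P_{hH}\le P_{\ell L}$.

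The key structural observation that makes this tractable is that equality holds exactly at the symmetric boundary $P_{hH}=P_{\ell L}$: there $\anl = \frac{1}{1+P_{\ell L}}$ (the symmetric computation gives $G = 2P_{hL}(1+P_{\ell L})R$). Hence the difference $G^2 - 4P_{hL}^2(1+P_{\ell L})^2P_{hH}P_{hL}P_{\ell H}P_{\ell L}$ vanishes at $P_{hH}=P_{\ell L}$ and therefore carries $(P_{\ell L}-P_{hH})$ as a factor. I would factor this out and show the complementary factor is nonnegative on the region, either directly or---mirroring Lemma~\ref{lem:a4upper}---by fixing $P_{\ell L}$, treating the expression as a function of $P_{hH}$, evaluating at the endpoint $P_{hH}=P_{hL}$, and checking the sign of its derivative to establish monotonicity.

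The main obstacle is the algebra in this last step: squaring raises the degree, so the polynomial whose sign must be controlled is cumbersome, and both extracting the factor $(P_{\ell L}-P_{hH})$ and the subsequent sign/monotonicity check demand careful bookkeeping. The fact that equality is attained precisely at symmetry is the crucial lever---it pins down exactly which factor to remove, and it also explains why one cannot imitate Lemma~\ref{lem:a4upper} by simply bounding the radical: any elementary bound such as $R\le P_{hH}P_{\ell L}$ is strict at symmetry, so it would push the estimate to the wrong side of the tight boundary, forcing the exact squaring argument instead.
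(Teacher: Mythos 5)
Your reduction is sound and takes a genuinely different route from the paper's proof. Both arguments pivot on the same structural fact---that $\anl = \frac{1}{1+P_{\ell L}}$ holds with equality exactly at the symmetric boundary $P_{hH}=P_{\ell L}$---but exploit it differently. The paper rearranges the claim into $\func(P_{hH})\le 0$, where $\func$ is precisely your $-G + 2P_{hL}(1+P_{\ell L})R$ with the radical kept intact, shows $\func(P_{\ell L})=0$, and then proves $\func$ is increasing in $P_{hH}$; differentiating the radical forces a case split ($P_{hH}<\frac12$ versus $P_{hH}\ge\frac12$), in each case under-estimating the derivative by a function linear in $P_{hH}$ whose positivity is checked at the two endpoints. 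You instead isolate the radical, verify the radical-free side $G$ is nonnegative, and square, trading the radical for a doubling of degree. Your verified ingredients are all correct: the reduction to $G \ge 2P_{hL}(1+P_{\ell L})R$ is exactly the paper's rearrangement, $G=2p(1-p)^2(1+p)$ at symmetry so equality does hold there, squaring is an equivalence once $G\ge0$ is known, and your remark that any elementary upper bound on $R$ (such as $R\le P_{hH}P_{\ell L}$) must fail because it is strict at the point where the inequality is tight correctly identifies why the bounding trick of Lemma~\ref{lem:a4upper} cannot be reused here---this is the same obstruction the paper works around by keeping the radical. Incidentally, your step ``verify $G\ge0$ directly'' is easier than you suggest: writing $G = (P_{\ell L}-1)P_{hH}^2 + (2P_{\ell L}^3-3P_{\ell L}^2+1)P_{hH} + P_{\ell L}(1-P_{\ell L})$, it is a concave quadratic in $P_{hH}$, so it suffices to check the endpoints, which give $2P_{\ell L}(1-P_{\ell L})^2(1+P_{\ell L})$ at $P_{hH}=P_{\ell L}$ and $P_{\ell L}(1-P_{\ell L})\bigl(2(1-P_{\ell L})^2+1\bigr)$ at $P_{hH}=P_{hL}$, both manifestly nonnegative.

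The one place where your write-up is a plan rather than a proof is the final step: after factoring $(P_{\ell L}-P_{hH})$ out of $G^2 - 4P_{hL}^2(1+P_{\ell L})^2P_{hH}P_{hL}P_{\ell H}P_{\ell L}$, you must show that a degree-seven two-variable cofactor is nonnegative on $P_{hL}\le P_{hH}\le P_{\ell L}$, and you only gesture at ``endpoint and monotonicity.'' That is exactly where all the difficulty lives---it is the counterpart of the paper's two-case derivative analysis---and there is no a priori guarantee that the cofactor is monotone in $P_{hH}$ for fixed $P_{\ell L}$, so that heuristic may need replacing (further case splits, or a concavity argument as above). The inequality you would need is true and your reduction to it is valid, so the approach can certainly be completed; but as written, the proposal establishes the framework and the correct key observations while leaving the decisive polynomial positivity unproved.
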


\begin{proof}
    By manipulating the inequality, $\anl \ge \frac{1}{1 +P_{\ell L}}$ is equivalent to 
    \begin{equation*}
        P_{hH}^2 - P_{hH} + 3P_{hH}P_{\ell L}^2  + P_{\ell L}^2 + 2P_{hL}(1+P_{\ell L})\sqrt{P_{hH}P_{\ell L}P_{hL}P_{\ell H}} - 2P_{hH}P_{\ell L}^3 - P_{hH}^2 P_{\ell L} - P_{\ell L} \le 0. 
    \end{equation*}

    Likewise, we view $P_{\ell L}$ as fixed and define $\func(P_{hH}) = LHS$ be a function of $P_{hH}$. We show that $\func(P_{\ell L}) = 0$ and $\func$ is increasing on $P_{hH}$. 

    Firstly, 
    \begin{align*}
        \func(P_{\ell L}) = &\ P_{\ell L}^2 - P_{\ell L} + 3P_{\ell L}^3 +P_{\ell L}^2 +2(1-P_{\ell L}^2)\cdot P_{\ell L}\cdot (1 - P_{\ell L}) - 2P_{\ell L}^4 - P_{\ell L}^3 -P_{\ell L}\\
        =&\ 0. 
    \end{align*}
    Then we consider the derivative of $\func$.
    \begin{equation*}
        \frac{d\func}{dP_{hH}} = 2(1 - P_{\ell L})\cdot P_{hH} + 3P_{\ell L}^2-2P_{\ell L}^3 - 1 + \frac{P_{hL}(1 +P_{\ell L})\cdot \sqrt{P_{hL}P_{\ell L}}}{\sqrt{P_{hH} P_{\ell H}}}\cdot (1 - 2P_{hH}). 
    \end{equation*}
    We discuss the derivative on $P_{hH} < \frac12$ and $P_{hH} \ge \frac12$ separately.

    When $P_{hL} < P_{hH}< \frac12$, $(1 - 2P_{hH}) \ge0$. Therefore, given that $\sqrt{P_{hH} P_{\ell H}} \le \frac12$, we have 
    \begin{equation*}
        \frac{d\func}{dP_{hH}} \ge  2(1 - P_{\ell L})\cdot P_{hH} + 3P_{\ell L}^2-2P_{\ell L}^3 - 1 + 2P_{hL}(1 +P_{\ell L})\cdot \sqrt{P_{hL}P_{\ell L}}\cdot (1 - 2P_{hH}). 
    \end{equation*}
    Note that RHS (denoted by $\func'_1(P_{hH})$) is linear to $P_{hH}$. As we aim to show that  it suffice to show that $\func'_1(P_{hH}) \ge 0$ for all $P_{hL}\le P_{hH} \le \frac12$, it suffice to show that $\func'_1(P_{hH}) \ge 0$ when $P_{hH} = P_{hL}$ and when $P_{hH} = \frac12$. 
    \begin{align*}
        \func'_1(P_{hL}) =&\  2(1-P_{\ell L})^2 - (1-P_{\ell L})^2 \cdot (2P_{\ell L} + 1) + 2(1-P_{\ell L})(1+P_{\ell L}) \cdot (2P_{\ell L} - 1) \cdot \sqrt{P_{\ell L}(1-P_{\ell L})}\\
        =&\ (2P_{\ell L} - 1)\cdot (1 - P_{\ell L}) \cdot \sqrt{1-P_{\ell L}} \cdot (2\cdot (1+P_{\ell L})\cdot \sqrt{P_{\ell L}} - \sqrt{1-P_{\ell L}})\\
        >&\ 0. \\
        \func'_1(\frac12) = &\ 1 - P_{\ell L} + 3P_{\ell L}^2-2P_{\ell L}^3 - 1\\
        =&\ P_{\ell L} \cdot (1 - P_{\ell})(2P_{\ell L} - 1)\\
        >&\ 0. 
    \end{align*}

    When $\frac12 \le  P_{hH}\le  P_{\ell L}$, $(1 - 2P_{hH}) \le0$. Therefore, by $P_{hL} < P_{hH} \le P_{\ell L}$, $\sqrt{P_{hH} P_{\ell H}}\ge \sqrt{P_{hL} P_{\ell L}} $, we have 
    \begin{equation*}
        \frac{d\func}{dP_{hH}} \ge  2(1 - P_{\ell L})\cdot P_{hH} + 3P_{\ell L}^2-2P_{\ell L}^3 - 1 + P_{hL}(1 +P_{\ell L})\cdot (1 - 2P_{hH}). 
    \end{equation*}

    Similarly, the RHS (denoted by $\func'2$) is linear to $P_{hH}$. Therefore, it suffice to show that $\func'_2(\frac12) \ge0$ and $\func'_2(P_{\ell L}) \ge 0$. 
    \begin{align*}
        \func'_2(\frac12) = &\ \func'_1(\frac12) > 0. \\
        \func'_2(P_{\ell L}) = &\ 2(1 - P_{\ell L})\cdot P_{\ell L} + 3P_{\ell L}^2-2P_{\ell L}^3 - 1 +(1 - P_{\ell L})(1+P_{\ell L})(1 - 2P_{\ell L})\\
        =&\ 0. 
    \end{align*}
    Therefore, for all $P_{hH} \in (P_{hL}, P_{\ell L}]$, $\func$ is increasing on $P_{hH}$ Therefore, $\anl \ge \frac{1}{1 + P_{\ell L}}$. 
\end{proof}

The following lemma guarantees the switch between the ``Steep'' Segment 2 and ``Non-Linear'' Segment 3 is numerically correct. That is, $\xinl \ge \frac{\Delta(\alpha - 1/2)}{P_{hL}}$ if and only if $\alpha \le \anl. $

\begin{lem}
    \label{lem:a4threshold}
    For $\frac{1}{2P_{\ell L}} > \alpha > \anl$, $\xinl < \frac{\Delta\cdot (\alpha -1/2)}{P_{hL}}$. For $\anl \ge \alpha > \frac12$, $\xinl \ge \frac{\Delta\cdot (\alpha -1/2)}{P_{hL}}$ 
\end{lem}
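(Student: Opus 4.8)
The plan is to turn the comparison between $\xinl$ and $\tfrac{\Delta(\alpha-1/2)}{P_{hL}}$ into a sign condition on a single function of $\alpha$ and to locate its sign change at $\anl$. Write $D(\alpha) := 4\sqrt{2(1-\alpha P_{\ell H})(1-2\alpha P_{\ell L})P_{\ell H}P_{\ell L}} + 2P_{\ell H} + 4P_{\ell L} - 8\alpha P_{\ell H}P_{\ell L}$, so that $\xinl = \Delta/D(\alpha)$ by \eqref{eq:xinl}. On the interval $\tfrac12 < \alpha < \tfrac{1}{2P_{\ell L}}$ the radicand is nonnegative (since $1-2\alpha P_{\ell L} > 0$ there), and one checks $D(\alpha) > 0$; as $\Delta > 0$ and $\alpha - \tfrac12 > 0$, dividing the claimed inequality by $\Delta$ and cross-multiplying by $D(\alpha) > 0$ gives the equivalence
$$\xinl \ge \frac{\Delta(\alpha - 1/2)}{P_{hL}} \iff H(\alpha) := P_{hL} - \left(\alpha - \tfrac12\right)D(\alpha) \ge 0.$$
Thus it suffices to show $H(\alpha) \ge 0$ for $\tfrac12 < \alpha \le \anl$ and $H(\alpha) < 0$ for $\anl < \alpha < \tfrac{1}{2P_{\ell L}}$; here $\anl \in (\tfrac{1}{1+P_{\ell L}}, \tfrac{1}{2P_{\ell L}})$ lies inside the interval by Lemmas~\ref{lem:a4upper} and~\ref{lem:a4lower}.

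Next I would split $D$ into its polynomial and radical parts. Setting $Q(\alpha) := P_{hL} - (\alpha - \tfrac12)(2P_{\ell H} + 4P_{\ell L} - 8\alpha P_{\ell H}P_{\ell L})$, a quadratic in $\alpha$ with positive leading coefficient $8P_{\ell H}P_{\ell L}$, and $R(\alpha) := 4(\alpha - \tfrac12)\sqrt{2(1-\alpha P_{\ell H})(1-2\alpha P_{\ell L})P_{\ell H}P_{\ell L}} \ge 0$, we have $H = Q - R$. Hence $H \ge 0$ is equivalent to $Q \ge R$: when $Q(\alpha) < 0$ this fails outright, and when $Q(\alpha) \ge 0$ it is, after squaring, equivalent to $\Phi(\alpha) := Q(\alpha)^2 - R(\alpha)^2 \ge 0$. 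A short computation shows the $\alpha^4$-coefficients of $Q^2$ and $R^2$ coincide (both equal $64 P_{\ell H}^2 P_{\ell L}^2$), so $\Phi$ is a cubic in $\alpha$. The crux is to verify that the root of $\Phi$ carrying the surd is exactly $\anl$: factoring out the rational root of $\Phi$ leaves a quadratic whose discriminant simplifies to a perfect square times $P_{hH}P_{hL}P_{\ell H}P_{\ell L}$, and the quadratic formula then reproduces \eqref{eq:anl}, in particular its $-2P_{hL}\sqrt{P_{hH}P_{hL}P_{\ell H}P_{\ell L}}$ term. This identifies $\alpha = \anl$ as the zero of $H$ in $(\tfrac12, \tfrac{1}{2P_{\ell L}})$.

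Finally I would pin down the sign pattern from the endpoints together with this root identification. At $\alpha = \tfrac12$ we have $R = 0$ and $Q = P_{hL}$, so $H(\tfrac12) = P_{hL} > 0$; at $\alpha = \tfrac{1}{2P_{\ell L}}$ the radical vanishes and a direct evaluation gives $H(\tfrac{1}{2P_{\ell L}}) = P_{hL}\cdot\frac{P_{\ell H} - P_{\ell L}}{P_{\ell L}} < 0$, using $P_{\ell H} < P_{\ell L}$. Since $Q$ is an upward parabola with $Q(\tfrac12) = P_{hL} > 0$ and $Q(\anl) = R(\anl) \ge 0$, the region $Q < 0$ created by squaring lies at or to the right of $\anl$ and introduces no spurious crossing; combined with continuity and the fact that $\anl$ is the only sign change of $H$ on the interval, we conclude $H > 0$ on $(\tfrac12, \anl)$ and $H < 0$ on $(\anl, \tfrac{1}{2P_{\ell L}})$, which is exactly the two cases of the lemma.

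The main obstacle is the symbolic step in the middle paragraph: carrying out the cancellation of the quartic terms, factoring the resulting cubic, and checking that its surd-bearing root equals the closed form \eqref{eq:anl}. Equivalently, one may avoid the explicit factorization by verifying $H(\anl) = 0$ directly from \eqref{eq:anl} (as already anticipated by the derivation of $\anl$) and instead proving that $(\alpha - \tfrac12)D(\alpha)$ crosses the constant $P_{hL}$ exactly once on the interval; in either route the algebra linking $\anl$ to the squared inequality is the delicate part, while the positivity facts and the endpoint evaluations are routine.
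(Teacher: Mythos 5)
Your overall strategy is the same as the paper's: isolate the radical, observe that squaring is only legitimate where the polynomial part $Q$ is nonnegative, identify $\anl$ as the relevant root of the squared inequality, and read off the sign pattern from the endpoints. The problem is that the two facts you treat as routine are exactly where the real work lies, and your arguments for them do not hold up. First, you assert $Q(\anl)=R(\anl)\ge 0$, "verified directly from \eqref{eq:anl}." But $\anl$ is by construction a root of $\Phi=Q^2-R^2=(Q-R)(Q+R)$, so solving the squared equation only tells you $Q(\anl)=\pm R(\anl)$; deciding that the sign is $+$ (equivalently, that $Q(\anl)\ge 0$) is the crux, not a consequence of the derivation. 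This is precisely where the paper spends most of its effort: it shows $\gunc(\alpha):=Q(\alpha)$ is monotonically decreasing on $(\tfrac12,\tfrac1{2P_{\ell L}}]$ by locating the parabola's vertex at $\tfrac14+\tfrac1{4P_{\ell H}}+\tfrac1{8P_{\ell L}}>\tfrac1{2P_{\ell L}}$, exhibits an explicit $\alpha'\ge \anl$, and then proves $\gunc(\alpha')\ge 0$ through a two-variable inequality $\func(P_{hH},P_{\ell L})\ge 0$ requiring derivatives up to third order in $P_{\ell L}$ and boundary checks at $P_{\ell L}=P_{hH}$ and $P_{\ell L}=1-P_{hH}$. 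Nothing in your proposal substitutes for this. Second, your "no spurious crossing" inference --- that an upward parabola with $Q(\tfrac12)>0$ and $Q(\anl)\ge 0$ cannot be negative on $[\tfrac12,\anl]$ --- is false as stated: the negative set of an upward parabola is an open interval which may sit strictly inside $(\tfrac12,\anl)$. Monotonicity of $Q$ on the whole interval (the vertex computation above) is what rules this out, and you never establish it. The same gap infects your second case: "the fact that $\anl$ is the only sign change of $H$" is asserted, whereas the paper proves it via the intermediate value theorem applied to the monotone $\gunc$ together with showing that the second root of the squared quadratic lies at or beyond $\tfrac1{2P_{\ell L}}$.

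A smaller computational point: after squaring, not only the $\alpha^4$ coefficients but also the $\alpha^3$ coefficients of $Q^2$ and $R^2$ cancel, so $\Phi$ is a quadratic in $\alpha$, not a cubic; there is no rational root to factor out, and the quadratic formula directly yields \eqref{eq:anl} together with a second root that must then be shown to lie outside the interval. This miscount would self-correct if you carried out the algebra, but it confirms that the backbone computation on which your root identification rests was not actually performed. Your endpoint evaluations ($H(\tfrac12)=P_{hL}>0$ and $H(\tfrac1{2P_{\ell L}})=P_{hL}(P_{\ell H}-P_{\ell L})/P_{\ell L}<0$) are correct and agree with the paper, but by themselves they cannot deliver the lemma without the positivity and uniqueness arguments above.
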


\begin{proof}
    Rewrite  $\xinl \ge \frac{\Delta\cdot (\alpha -1/2)}{P_{hL}}$ as the following formula. 

    \begin{equation*}
        (2\alpha - 1)\cdot 2\sqrt{2(1 - \alpha P_{\ell H})(1 - 2 \alpha P_{\ell L})P_{\ell H}P_{\ell L}} \le P_{hL} + (2\alpha-1)\cdot ( 4\alpha P_{\ell H}P_{\ell L}-P_{\ell H} - 2P_{\ell L})
    \end{equation*}

    The proof proceeds as follows. First, we show that when $ P_{hL} + (2\alpha-1)\cdot ( 4\alpha P_{\ell H}P_{\ell L}-P_{\ell H} - 2P_{\ell L}) \ge 0$ and $\frac{1}{2P_{\ell L}} \ge \alpha > \frac12$, the inequality holds at $\anl \ge\alpha > \frac12$. Secondly, we show that for all $\anl \ge \alpha > \frac12$, $ P_{hL} + (2\alpha-1)\cdot ( 4\alpha P_{\ell H}P_{\ell L}-P_{\ell H} - 2P_{\ell L}) \ge 0$ holds. Lastly, we show that when $\frac{1}{2P_{\ell L}} \ge \alpha > \anl$, the inequality does not hold. 

    When $ P_{hL} + (2\alpha-1)\cdot ( 4\alpha P_{\ell H}P_{\ell L}-P_{\ell H} - 2P_{\ell L}) \ge 0$, it is equivalent to solve the inequality by squaring both side, i.e., 
    \begin{equation*}
        8(2\alpha - 1)^2\cdot ((1 - \alpha P_{\ell H})(1 - 2 \alpha P_{\ell L})P_{\ell H}P_{\ell L}) \le (P_{hL} + (2\alpha-1)\cdot ( 4\alpha P_{\ell H}P_{\ell L}-P_{\ell H} - 2P_{\ell L}))^2.
    \end{equation*}

    Refactoring on $\alpha$, we get the following quadratic inequality. 
    \begin{align*}
        &\ \alpha^2 \cdot 4(P_{\ell H}^2 + 4P_{hH}P_{\ell L^2})\\
        - &\ \alpha \cdot 4(2P_{hH}P_{\ell L}^2 + 3P_{hH}P_{\ell L}  +P_{hH}^2  - 2P_{hH} - P_{\ell L} + 2) \\
        +&\ (P_{hL}^2 - 4P_{\ell H}P_{\ell L} + 4P_{\ell L}^2 + P_{\ell H}^2 + 4P_{hL}P_{\ell L} + 2P_{\ell H}P_{hL})\\
        \ge &\ 0. 
    \end{align*}

    Using the root formula of the quadratic equation, the solution of this quadratic inequality is 
    \begin{align*}
        \alpha \le&\  \frac{2P_{hH}P_{\ell L}^2 +P_{hH}\cdot (P_{hH}+3P_{\ell L})-(3P_{hH} +P_{\ell L}) + 2 - 2P_{hL}\cdot \sqrt{P_{hH}P_{hL}P_{\ell H}P_{\ell L}}}{2P_{\ell H}^2 + 8P_{hH}P_{\ell L}^2} = \anl\\
        \alpha \ge &\ \frac{2P_{hH}P_{\ell L}^2 +P_{hH}\cdot (P_{hH}+3P_{\ell L})-(3P_{hH} +P_{\ell L}) + 2 + 2P_{hL}\cdot \sqrt{P_{hH}P_{hL}P_{\ell H}P_{\ell L}}}{2P_{\ell H}^2 + 8P_{hH}P_{\ell L}^2}
    \end{align*}

    By the similar reasoning of Lemma~\ref{lem:a4upper}, we can show that 
    \begin{align*}
        &\ \frac{2P_{hH}P_{\ell L}^2 +P_{hH}\cdot (P_{hH}+3P_{\ell L})-(3P_{hH} +P_{\ell L}) + 2 + 2P_{hL}\cdot \sqrt{P_{hH}P_{hL}P_{\ell H}P_{\ell L}}}{2P_{\ell H}^2 + 8P_{hH}P_{\ell L}^2}\\
        \ge &\ \frac{2P_{hH}P_{\ell L}^2 +P_{hH}\cdot (P_{hH}+3P_{\ell L})-(3P_{hH} +P_{\ell L}) + 2 + 2P_{hL}^2\cdot P_{\ell L}}{2P_{\ell H}^2 + 8P_{hH}P_{\ell L}^2}\\
        \ge &\ \frac{1}{2P_{\ell L}}. 
    \end{align*}

    Therefore, when $ P_{hL} + (2\alpha-1)\cdot ( 4\alpha P_{\ell H}P_{\ell L}-P_{\ell H} - 2P_{\ell L}) \ge 0$ and $\frac{1}{2P_{\ell L}} \ge \alpha > \frac12$, the inequality holds at $\anl \ge \alpha > \frac12$. 

    Now we are going to show that when $\anl \ge \alpha > \frac12$, $ P_{hL} + (2\alpha-1)\cdot ( 4\alpha P_{\ell H}P_{\ell L}-P_{\ell H} - 2P_{\ell L}) \ge 0$ holds. Note that this constraint is a quadratic inequality on $\alpha$. Let $\gunc(\alpha) = P_{hL} + (2\alpha-1)\cdot ( 4\alpha P_{\ell H}P_{\ell L}-P_{\ell H} - 2P_{\ell L})$. If we can show that $\gunc(\alpha)$ is monotonically decreasing on $\alpha$ for $\frac{1}{2P_{\ell L}} \ge \alpha > \frac12$, and find a $\frac{1}{2P_{\ell L}} \ge \alpha' \ge \anl$ such that $\gunc(\alpha') \ge 0$, we show that $\gunc(\alpha) \ge 0$ for all $\anl \ge \alpha > \frac12$. Refactoring $\gunc$, we have 
    \begin{equation*}
        \gunc(\alpha) = \alpha^2 \cdot 8P_{\ell L}P_{\ell L} - \alpha \cdot 2(2P_{\ell L}P_{\ell H} + 2P_{\ell L} + P_{\ell H}) + 2P_{\ell L} + P_{\ell H} + P_{hL}. 
    \end{equation*}
    As a quadratic function, $\gunc$ is decreasing on $\alpha \le \frac14 + \frac1{4 P_{\ell H}} + \frac1{8P_{\ell L}}$. Note that 
    \begin{align*}
        \frac14 + \frac1{4 P_{\ell H}} + \frac1{8P_{\ell L}} > &\ \frac14 + \frac1{4 P_{\ell L}} + \frac1{8P_{\ell L}}\\
        \ge &\ \frac1{8P_{\ell L}} + \frac1{4 P_{\ell L}} + \frac1{8P_{\ell L}}\\
        =&\ \frac{1}{2P_{\ell L}}. 
    \end{align*}

    Therefore, $\gunc(\alpha)$ is monotonically decreasing on $\alpha$ for $\frac{1}{2P_{\ell L}} \ge \alpha > \frac12$. 

    Now we set $$\alpha' =  \frac{2P_{hH}P_{\ell L}^2 +P_{hH}\cdot (P_{hH}+3P_{\ell L})-(3P_{hH} +P_{\ell L}) + 2 - 2P_{hL}^2\cdot P_{\ell L}}{2P_{\ell H}^2 + 8P_{hH}P_{\ell L}^2}.$$

    Note that $\alpha \ge \anl$. And note that Lemma~\ref{lem:a4upper} implies $\alpha' \le \frac{1}{2P_{\ell L}}$ (by showing that $f(P_{hH})$ in Lemma~\ref{lem:a4upper} is positive).
    \begin{align*}
        \gunc(\alpha') = &\ \frac{4P_{\ell L} P_{hL}^2}{(P_{\ell H}^2 + 4P_{hH}P_{\ell L}^2)^2} \cdot ((1 - P_{hH})^4 + (1 - P_{hH})^2 \cdot (3P_{hH} - 2))\cdot P_{\ell L}\\
        &\ + 2(P_{hH}^3 +P_{hH}^2 - 4P_{hH}+2)\cdot P_{\ell L}^2 + 4(2P_{hH} - 1)\cdot P_{\ell L}^3 + 2 (1- P_{hH})\cdot P_{\ell L}^4).
    \end{align*}
    Therefore it suffices to show that for all $\frac12 < P_{\ell L} <1$ and $P_{hL} < P_{hH} \le P_{\ell L}$, $\func (P_{hH}, P_{\ell L}) =$
    $$(1 - P_{hH})^4 + (1 - P_{hH})^2 \cdot (3P_{hH} - 2))\cdot P_{\ell L}
         + 2(P_{hH}^3 +P_{hH}^2 - 4P_{hH}+2)\cdot P_{\ell L}^2 + 4(2P_{hH} - 1)\cdot P_{\ell L}^3 + 2 (1- P_{hH})\cdot P_{\ell L}^4 \ge 0. $$ 

    To solve this problem, we calculate $\func$'s higher derivatives on $P_{\ell L}$ and show that all of them are strictly positive from higher ones to lower ones. Specifically, we have 
    \begin{align*}
        \frac{\partial\func}{\partial P_{\ell L}} =&\ (1 - P_{hH})^2 \cdot (3P_{hH} - 2))
         + 4(P_{hH}^3 +P_{hH}^2 - 4P_{hH}+2)\cdot P_{\ell L} + 12(2P_{hH} - 1)\cdot P_{\ell L}^2 + 8 (1- P_{hH})\cdot P_{\ell L}^3.\\
         \frac{\partial^2\func}{\partial P_{\ell L}^2} =&\ 4(P_{hH}^3 +P_{hH}^2 - 4P_{hH}+2) + 24(2P_{hH} - 1)\cdot P_{\ell L} + 24(1- P_{hH})\cdot P_{\ell L}^2.\\
         \frac{\partial^3\func}{\partial P_{\ell L}^3} =&\ 24(2P_{hH} - 1)+ 48(1- P_{hH})\cdot P_{\ell L}.
    \end{align*}
    
    Firstly, note that $\frac{\partial^3\func}{\partial P_{\ell L}^3} = 24(1 - P_{hH}) \cdot (2P_{\ell L} - 1) \ge 0$. 
    If we want to show that $\frac{\partial^2\func}{\partial P_{\ell L}^2}$ is positive, under constraint $P_{\ell L} > P_{hH}$ and $P_{hH} > (1 - P_{\ell L})$, it suffices to show that $\frac{\partial^2\func}{\partial P_{\ell L}^2}$ is positive when $P_{\ell L} = P_{hH}$ and when $P_{\ell L} = 1 - P_{hH}$.
    
    $\frac{\partial^2\func}{\partial P_{\ell L}^2}(P_{hH}, P_{\ell L} = 1 - P_{hH}) = 4(1 - P_{hH}) \cdot (5(1 - P_{hH})- 8(1 - P_{hH}) + 5) = 4(1 - P_{hH}) \cdot (5P_{hH}^2 +2(1 - P_{hH})) > 0$.
    \begin{align*}
        \frac{\partial^2\func}{\partial P_{\ell L}^2}(P_{hH}, P_{\ell L} = P_{hH}) =&\  4\cdot (-5P_{hH}^3 + 19P_{hH}^2 - 10 P_{hH} + 2)\\
        =&\ 4\cdot (2(2P_{hH} - 1)^2 -2P_{hH} \cdot (2P_{hH} - 1)^2 + 3P_{hH}^2 + 3P_{hH}^3) \\
        >&\ 0. 
    \end{align*}

    Therefore, $\frac{\partial^2\func}{\partial P_{\ell L}^2} \ge 0$ for all feasible $P_{hH}$ and $P_{\ell L}$. Now we show that $\frac{\partial\func}{\partial P_{\ell L}} \ge 0$ for all feasible $P_{hH}$ and $P_{\ell L}$ by showing that $\frac{\partial\func}{\partial P_{\ell L}}(P_{hH}, P_{\ell L} = 1 - P_{hH}) \ge 0$ and  $\frac{\partial\func}{\partial P_{\ell L}}(P_{hH}, P_{\ell L} = P_{hH}) \ge 0$. 
    \begin{align*}
        \frac{\partial\func}{\partial P_{\ell L}}(P_{hH}, P_{\ell L} = 1 - P_{hH}) =&\  (1 - P_{hH})^2 \cdot (4(1 - P_{hH})^2 - 11 (1 - P_{hH}) + 9)\\
        =&\ P_{\ell L}^2 \cdot((2(1 - P_{hH}) -3)^2 + 1 - P_{hH})\\
        >&\ 0.
    \end{align*}
    \begin{align*}
        \frac{\partial\func}{\partial P_{\ell L}}(P_{hH}, P_{\ell L} = P_{hH}) =&\ (9P_{hH} - P_{hH}^2 - 2)\cdot (4P_{hH}^2 - 3P_{hH} +1)\\
        \ge &\ (9/2 - 3) \cdot ((2P_{hH}-1)^2 + P_{hH})\\
 \ge &\ 0. 
    \end{align*}

Therefore, $\frac{\partial\func}{\partial P_{\ell L}} \ge 0$ for all feasible $P_{hH}$ and $P_{\ell L}$. Now we show that $\func(P_{hH}, P_{\ell L}) \ge 0$ for all feasible $P_{hH}$ and $P_{\ell L}$ by showing that $\func(P_{hH}, P_{\ell L} = 1 - P_{hH}) \ge 0$ and  $\func(P_{hH}, P_{\ell L} =  P_{hH}) \ge 0$. 

\begin{align*}
    \func(P_{hH}, P_{\ell L} = 1 - P_{hH}) =&\ (2 + P_{hH})\cdot 1 - P_{hH}^3 >0\\
    \func(P_{hH}, P_{\ell L} =  P_{hH})=&\ (4P_{hH}^2 - 3P_{hH} + 1)^2 > 0. 
\end{align*}

Therefore, for all $\anl \ge \alpha > \frac12$, $ P_{hL} + (2\alpha-1)\cdot ( 4\alpha P_{\ell H}P_{\ell L}-P_{\ell H} - 2P_{\ell L}) \ge 0$ holds. This implies that $\xinl \ge \frac{\Delta\cdot (\alpha -1/2)}{P_{hL}}$ holds for all $\anl \ge \alpha > \frac12$. 

Finally, we show that for all $\frac{1}{2P_{\ell L}}\ge \alpha > \anl $, $\xinl < \frac{\Delta\cdot (\alpha -1/2)}{P_{hL}}$. Note that $\gunc(\alpha)$ is monotone on $\frac{1}{2P_{\ell L}}\ge \alpha > \frac12$, $\gunc(\anl ) > 0$, and 
\begin{align*}
    \gunc(\frac{1}{2P_{\ell L}}) = -\frac{(1 - P_{\ell L})\cdot (P_{\ell L} +P_{hH} -1)}{P_{\ell L}} < 0. 
\end{align*}
By the intermediate value theorem, there exists a unique solution $\anl < \alpha'' \le \frac{1}{2P_{\ell L}}$ such that $\gunc(\alpha'') = 0$. For all $\alpha'' < \alpha \le \frac{1}{2P_{\ell L}}$, $\gunc(\alpha) < 0$. This directly implies that $\xinl < \frac{\Delta\cdot (\alpha -1/2)}{P_{hL}}$. For all $\anl < \alpha < \alpha''$, The first part of this proof implies that $\xinl < \frac{\Delta\cdot (\alpha -1/2)}{P_{hL}}$. This finishes the proof. 
\end{proof}

\citet{deng2024aggregation} shows that when the fraction of the majority type agents is above the threshold, a good equilibrium always exists. In the equilibrium, the majority agents play the following ``optimal strategy''. A majority agent with $h$ signal votes for $\bA$ with probability $\bph = \frac{1}{2(P_{\ell L} + P_{\ell H})}$, and a majority agent with $\ell$ signal votes for $\bA$ with probability $\bpl = 0$. 

\begin{coro}\citep{deng2024aggregation}
\label{coro:optimal}
    When the fraction of the majority $\alpha > \thdmaj$, for any $1 \le \kd \le \ag$, any sequence of strategy profiles where all the majority type agents play the optimal strategy profile forms a $\varepsilon$-ex-ante Bayesian $\kd$-strong equilibrium with $\varepsilon$ converges to 1. Such sequence of equilibria leads to the informed majority decision with probability converging to 1. 
\end{coro}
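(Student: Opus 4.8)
The plan is to deduce Corollary~\ref{coro:optimal} from the unlimited-coalition guarantee of \citet{deng2024aggregation} (Theorem~\ref{thm:deng}) via a one-line monotonicity argument in the coalition capacity $\kd$. The only content that is genuinely new relative to Theorem~\ref{thm:deng} is the uniformity over all $1 \le \kd \le \ag$, and this is essentially free: enlarging the collection of admissible deviating coalitions can only make the equilibrium condition harder to meet, so the strongest (unlimited) case already implies every bounded case.

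First I would invoke \citet{deng2024aggregation}. Since $\alpha > \thdmaj$, the strategy profile sequence in which every majority agent plays the optimal strategy ($\bph = \frac{1}{2(P_{\ell L} + P_{\ell H})}$, $\bpl = 0$) is an $\varepsilon$-ex-ante Bayesian \emph{strong} equilibrium (i.e.\ the $\kd = \ag$ case) with $\varepsilon \to 0$, and its fidelity $\acc$ converges to $1$. Conceptually this amounts to two checks carried out in their construction: that the expected vote shares satisfy $\thd_H > \frac12$ and $\thd_L < \frac12$ with margins bounded away from $0$, so that Theorem~\ref{thm:arbitrary} forces $\acc \to 1$; and that these margins are wide enough that even the grand coalition of all minority agents cannot drive $\thd_H'$ down to $\frac12$ nor $\thd_L'$ up to $\frac12$. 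The particular value of $\bph$ is exactly what keeps the relevant margin tight as $\alpha \downarrow \thdmaj$, which is why the guarantee holds precisely on $\alpha > \thdmaj$.

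The crux of the corollary is then the passage to bounded $\kd$. Fix any $1 \le \kd \le \ag$. By the definition of $(\varepsilon,\kd)$-EBSE, the profile fails to be such an equilibrium only if there is a deviating coalition $D$ with $|D| \le \kd$ and a deviating profile satisfying conditions (1)--(3) of the definition. But $|D| \le \kd \le \ag$, so $D$ is also admissible for the unlimited notion; such a deviation would therefore witness that the profile is \emph{not} an $\varepsilon$-ex-ante Bayesian strong equilibrium, contradicting the previous paragraph. Hence no $\kd$-bounded profitable deviation exists, and the same unchanged profile sequence is an $\varepsilon$-ex-ante Bayesian $\kd$-strong equilibrium, with the same $\varepsilon \to 0$, simultaneously for every $\kd$. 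The whole step rests on the set inclusion $\{D : |D| \le \kd\} \subseteq \{D : |D| \le \ag\}$ of candidate coalitions.

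Finally, the fidelity claim and the informed-majority-decision claim are inherited verbatim, since the profile sequence is identical to the one whose fidelity already converges to $1$. I do not expect any real obstacle beyond correctly quoting the strong-equilibrium guarantee of \citet{deng2024aggregation}; the difficulty of that guarantee lies entirely in the margin computation sketched above (verifying robustness against the full minority coalition), not in the reduction to bounded $\kd$, which is purely combinatorial.
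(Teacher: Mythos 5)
Your proposal is correct and matches the paper's (implicit) derivation: the paper states this result as a direct corollary of \citet{deng2024aggregation}, precisely because their unlimited-coalition guarantee (Theorem~\ref{thm:deng}) for the optimal-strategy profile immediately yields the $\kd$-bounded version via the inclusion of coalitions of size at most $\kd$ among all coalitions, with fidelity inherited unchanged. (Note the statement's ``$\varepsilon$ converges to 1'' is a typo for ``converges to 0,'' which you correctly read.)
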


By Lemma~\ref{lem:dominated}, the optimal strategy majority agents play is not weakly dominated. Therefore, when minority agents play any non-weakly-dominated strategy, such equilibrium satisfies the constraint where no agents play weakly dominated strategy. 

\subsection{Proof for ``Flat'' Segment 1.} In this section, it is sufficient to show the following two statements. Proposition~\ref{prop:seg2upper} shows that when $\alpha \le \thdmaj$ and $\kd = (1 - \thdmaj)\cdot \ag$, even constraint (2) and (3) are not compatible. Proposition~\ref{prop:seg2lower} shows that when $\alpha \ge \frac{\infdif}{1 - \thdmaj}$ and $\kd < (1 - \thdmaj)\cdot \ag$, a strategy profile satisfying all three constraints exists. (Note that given that $\theta = \frac12 + \frac{P_{\ell H}}{2P_{\ell L}}$, $1 - \theta = \frac{\Delta}{2P_{\ell L}}$ and $\frac{1}{2P_{\ell L}} = \frac{\Delta}{1 - \theta}$.)

\begin{propnb}{prop:seg2lower}
    For all $\alpha \ge \frac{\infdif}{1 - \thdmaj}$, for any $\xi < 1 - \thdmaj$ and $\kd = \xi\cdot \ag$, there exists a strategy profile sequence that satisfies all three constraints. 
\end{propnb}

\begin{proof}
    For each $\ag$, let $\stgp_\ag$ be the following. All the minority agents always vote for $\bA$. The majority agents play the following strategy.
    \begin{align*}
        \bpl^{\maj}  =&\ \frac1{\alpha} (\alpha - \frac12 - \frac{P_{hL}}{\infdif}\cdot \xi) - \delta \cdot (P_{h H} + P_{h L}).\\
        \bph^{\maj}  =&\ \frac1{\alpha} (\alpha - \frac12 + \frac{P_{\ell L}}{\infdif}\cdot \xi) +  \delta \cdot (P_{\ell H} + P_{\ell L}).
    \end{align*}
    Here $\delta > 0$ is a constant such that $1 \ge \bpl^{\maj} \ge 0$ and $1 \ge \bph^{\maj} \ge 0$, and at least one of $\bpl^{\maj} = 0$ and $\bph^{\maj} = 1$ holds. We first show that such $\delta$ exists. 

    For $\bpl^{\maj}$, $\frac1{\alpha} (\alpha - \frac12 - \frac{P_{hL}}{\infdif}\cdot \xi) <\frac1{\alpha} \cdot \alpha < 1$. 
    Since $\alpha \ge \frac{\infdif}{1 - \thdmaj} = \frac{1}{2} + \frac{P_{hL}}{2P_{\ell L}}$,  there is $\frac{\alpha - 1/2}{P_{hL}} \ge \frac{1}{2P_{\ell L}}$. 
    Moreover, given that $\xi < 1- \thdmaj = \frac12 - \frac{P_{\ell H}}{2P_{\ell L}} = \frac{\infdif}{2P_{\ell L}}$, 
    \begin{align*}
        \alpha - \frac12 - \frac{P_{hL}}{\infdif}\cdot \xi >&\ \alpha - \frac12 - \frac{P_{hL}}{\infdif}\cdot  \frac{\Delta}{2P_{\ell L}}\\
        =&\ \alpha - \frac12-  \frac{P_{hL}}{2P_{\ell L}}\\
        \ge&\ 0. 
    \end{align*}

    For $\bph^{\maj}$, $\alpha - \frac12 + \frac{P_{\ell L}}{\infdif}\cdot \xi \ge \alpha - \frac12 > 0$. Given that $\alpha \ge \frac{1}{2} + \frac{P_{hL}}{2P_{\ell L}}$ and that $\xi < \frac{\infdif}{2P_{\ell L}}$, 
    \begin{align*}
        \frac1{\alpha} (\alpha - \frac12 + \frac{P_{\ell L}}{\infdif}\cdot \xi) < \frac{1}{\alpha}(\alpha - \frac12 +  \frac{P_{\ell L}}{2P_{\ell L}}) = \frac{1}{\alpha}\cdot \alpha = 1.
    \end{align*}
    Therefore, such $\delta  > 0$ exists. 

    Now we are ready to show that the sequence $\{\stgp_\ag\}$ satisfies all three constraints. According to Lemma~\ref{lem:dominated}, it is not hard to verify that all agents are playing a non-weakly dominated strategy. For the fidelity, we compute the expected vote share for $\bA$ of the profile, denoted by $\thd_H$ and $\thd_L$. 
    \begin{align*}
        \thd_H = &\ \alpha \cdot \delta \cdot \infdif + \xi + \frac12 > \frac12 + \xi.\\
        \thd_L = &\ \frac12 - \alpha \cdot \delta \cdot \infdif < \frac12. 
    \end{align*}
    Moreover, the difference $\alpha \cdot \delta$ does not depend of $\ag$. Therefore, $\liminf \sqrt{\ag} \cdot (\thd_H - \frac12) = + \infty$ and $\liminf \sqrt{\ag} \cdot (\frac12 - \thd_L) = + \infty$. 
    By applying the first case of Theorem~\ref{thm:arbitrary}, the fidelity of $\stgp_\ag$ converges to 1. 
    

    Now we show that $\stgp_\ag$ is an equilibrium. Let $e = 2\exp(-(\alpha \cdot \delta \cdot \infdif)^2 \cdot \ag)$ and $\stgp'_\ag$  be a different strategy profile such that at most $\xi\cdot \ag$ agents deviate. Applying Lemma~\ref{lem:acctoeq}, we know that either no agents receive more than $\frac{2B(B+1)e}{P_HP_L}$ utility increase or minority agents will not join the deviating group. 

    Now suppose the deviation group $D$ contains only minority types of agents. Let $\thd_L'$ and $\thd_H'$ be the \exshare{} of $\stgp'_\ag$. Since in $\stgp_\ag$ all minorities always vote for $\bA$, any deviation will increase the vote share of $\bR$. Therefore, $\thd_L' \ge \thd_L^\ag = \alpha \cdot \delta \cdot \infdif$. Moreover, since there are at most $\xi\cdot \ag$ deviators, $\thd_H' \ge \thd_H^\ag - \xi = \alpha \cdot \delta \cdot \infdif$. By the Hoeffding Inequality, $\lp_H^{\bA}(\stgp'_\ag) \ge 1 - e$ and $\lp_L^{\bR}(\stgp'_\ag) \ge 1 - e$. Therefore, $\acc(\stgp'_\ag) \ge 1 - e$, which contradicts the assumption that $\acc(\stgp'_\ag) < 1 - \frac{(B+1)e}{P_H\cdot P_L}$. Therefore, such deviation cannot succeed, and $\stgp_\ag$ is an $\frac{2B(B+1)e}{P_HP_L}$-ex-ante Bayesian $\kd$-strong equilibrium. 
\end{proof}

\begin{propnb}{prop:seg2upper}
    For all $\alpha \le \thdmaj$, for any profile sequence $\{\stgp_\ag\}$ such that $\acc(\stgp_\ag)$ converges to 1 and no agents play weakly dominated strategies, there is a constant $\varepsilon > 0$ and constantly many $\ag$ such that $\stgp_\ag$ is NOT an $\varepsilon$-ex-ante Bayesian $\kd$-strong equilibrium, where $\kd = (1 - \thdmaj)\cdot \ag$. 
\end{propnb}


\begin{proof}
    In this case, we show that a group of $(1 - \thdmaj)\cdot \ag$ of minority type agents can successfully deviate by always voting for $\bA$ or always voting for $\bR$ simultaneously. Suppose $\{\stgp\}$ is a strategy profile sequence such that $\acc(\stgp)$ converges to 1.
    
    For every $\ag$, we construct two deviating strategy profile sequences $\{\stgp^1\}$ and $\{\stgp^2\}$. 
    Let $D$ be the deviating group containing $(1 - \thdmaj)\cdot \ag$ minority type agents. Let $\astg = \sum_{i\not\in D} \stg_i = (\abpl, \abph)$ be the average of strategies of non-deviators, which includes all the majority agents and the $(1 - \alpha - \thdmaj)\cdot \ag$ minority agents. Note that $(\abpl, \abph)$ may or may not be identical in different $\ag$ in the sequence. 
    
    In $\{\stgp^1\}$, all the deviators vote for $\bR$. Then the expected vote share for $\bA$ of $\stgp^1$ is 
    \begin{align*}
        \thd_H^1 =&\ \thdmaj\cdot (P_{hH}\cdot \abph + P_{\ell H}\cdot \abpl),\\
        \thd_L^1 =&\ \thdmaj\cdot (P_{hL}\cdot \abph + P_{\ell L}\cdot \abpl).
    \end{align*}

    In $\{\stgp^2\}$, all the deviators vote for $\bA$. Then the expected vote share for $\bA$ of $\stgp^2$ is 
    \begin{align*}
        \thd_H^2 =&\ \thdmaj\cdot (P_{hH}\cdot \abph + P_{\ell H}\cdot \abpl)+ (1 - \thdmaj),\\
        \thd_L^2 =&\ \thdmaj\cdot (P_{hL}\cdot \abph + P_{\ell L}\cdot \abpl)+ (1 - \thdmaj).
    \end{align*}
    

    We first consider the case where $\{\stgp\}$ has infinite $\ag$ such that the majority agent strategies in $\stgp$ have sub-linear variance. Suppose for any $\delta > 0$ there exists infinitely many $\ag$ such that in some world state $\wos$, $Var(\sum_{i\in\maj} \xrv_{i}^{\ag} \mid \wos) < \delta \cdot \ag$. Firstly, $\{\stgp^1\}$ and $\{\stgp^2\}$ also have infinite $\stgp$ where majority agent strategies has sub-linear variance. This is because, the majority agents do no change their strategies in these two deviations. Then by Corollary~\ref{coro:var_sublinear}, 
    then for any $\varepsilon > 0$, there exists infinitely many $\ag$ such that in some world state $\wos$, $\thd_H^1 - \thd_L^1 < \varepsilon$ and $\thd_H^2 - \thd_L^2 < \varepsilon$. (Note that for every such $\ag$, $\thd_H^1 - \thd_L^1 < \varepsilon$ and $\thd_H^2 - \thd_L^2 < \varepsilon$ holds simultaneously.)

    Then we show that in this case, at least one of $\{\stgp^1\}$ and $\{\stgp^2\}$ falls into the second case in Theorem~\ref{thm:arbitrary}, which indicates that its fidelity does not converges to 1. If $\liminf_{\ag} \sqrt{\ag}\cdot (\thd_H^1 - \frac12) < 0$, $\{\stgp^1\}$ falls into the second case in Theorem~\ref{thm:arbitrary}. Suppose $\liminf_{\ag} \sqrt{\ag}\cdot (\thd_H^1 - \frac12) \ge 0$, we show that $\liminf_{\ag} \sqrt{\ag}\cdot (\frac12 - \thd_L^2) <0$. $\liminf_{\ag} \sqrt{\ag}\cdot (\thd_H^1 - \frac12) \ge 0$ implies that, for any constant $\delta' > 0$ for all sufficiently large $\ag$, $\thd_H^1 - \frac12 > -\delta'$. Then by Corollary~\ref{coro:var_sublinear}, there exists infinitely many $\ag$ such that in some world state $\wos$, $\thd_H^1 - \thd_L^1 < \varepsilon$ and $\thd_H^2 - \thd_L^2 < \varepsilon$. Note that $\thd_H^1 = \thd_H^2 - (1 - \thdmaj)$. Combining there three fact, and setting the constant such that $\delta' + \varepsilon < \frac{1 -\thdmaj}{2}$.  we know that there exists infinitely many $\ag$ such that 
    \begin{equation*}
        \thd_L^2 > \thd_H^2 - \varepsilon = \thd_H^1 - \varepsilon + (1 - \theta) > \frac12 - \delta' - \varepsilon + (1 - \theta) > \frac12 +\frac{1 -\thdmaj}{2}.
    \end{equation*}
    This directly implies that $\liminf_{\ag} \sqrt{\ag}\cdot (\frac12 - \thd_L^2) <0$. Therefore, the fidelity of either $\{\stgp^1\}$ or $\{\stgp^2\}$ does not converges to 1. Precisely, there exists an $\delta > 0$ and infinitely many $\ag$ such that $\acc(\stgp^1) \le 1 - \delta$ or $\acc(\stgp^2) \le 1 - \delta$ holds. As the fidelity decreases, the utility of the minority agents increases, and the deviation is succcessful. This implies that $\{\stgp\}$ is not a sequence of $\varepsilon$-ex-ante $\kd$-strong BNE with $\varepsilon$ converges to 0. 
    
    For now we suppose that the majority agent strategies $\stgp$ have linear variance for all sufficiently large $\ag$. Likewise, the majority agent strategies in $\stgp^1$ and $\stgp^2$ also have linear variances. In this case, if we can show that for all $\ag$, either $\thd_H^1 \le \frac12$ or $\thd_L^2 \ge \frac12$ holds, then at least one of them holds for infinitely many $\ag$. By the second or the third case (as the variance is linear) of Theorem~\ref{thm:arbitrary}, either the fidelity of $\stgp^1$ or the fidelity of $\stgp^2$ does not converge to 1. This implies that the deviation is successful, and $\{\stgp\}$ is not a sequence of $\varepsilon$-ex-ante $\kd$-strong BNE with $\varepsilon$ converges to 0. 

    It remains to show that $\thd_H^1 \le \frac12$ or $\thd_L^2 \ge \frac12$ holds for an arbitrary $\ag$. Let $\thd_H' = \thd_H^1$ and $\thd_L' = \thd_L^2$. We fix an arbitrary $\ag$. 
    Suppose $\thd_H' > \frac{1}{2}$, we show that $\thd_L'> \frac12$ so that the deviation can succeed by all the deviators always voting for $\bA$. Given that $\thdmaj = \frac{P_{\ell L} + P_{\ell H}}{2P_{\ell L}}$ this is equivalent to show that 
    \begin{equation*}
        P_{hL} \cdot \abph + P_{\ell L}\cdot \abpl > \frac{P_{\ell H}}{P_{\ell L} + P_{\ell H}}
    \end{equation*}
    under the condition that
    \begin{equation*}
        P_{hH} \cdot \abph + P_{\ell H}\cdot \abpl > \frac{P_{\ell L}}{P_{\ell L} + P_{\ell H}}.
    \end{equation*}
    Denote $\hthd_H' =  P_{hH} \cdot \abph + P_{\ell H}\cdot \abpl$ and $\hthd_L' =  P_{hL} \cdot \abph + P_{\ell L}\cdot \abpl$. 
    Now, let function $\gunc(\bph) = P_{hL} \cdot \bph + P_{\ell L}\cdot \frac{\hthd'_H - P_{hH}\cdot \bph}{P_{\ell H}} - \frac{P_{\ell H}}{P_{\ell L} + P_{\ell H}}$. Then it is sufficient to show that $\gunc(\abph) > 0. $

    Firstly, $\frac{d \gunc}{d \bph} = \frac{P_{hL}\cdot P_{\ell H} - P_{hH}\cdot P_{\ell L}}{P_{\ell H}} < 0$. Therefore, $\gunc(\bph)$ is minimized when $\bph$ is maximized. 
    Therefore, 
    \begin{align*}
        \gunc(\abph) \ge &\ \gunc(1)\\
        =&\ P_{hL} + P_{\ell L}\cdot \frac{\hthd'_H - P_{hH}\cdot 1}{P_{\ell H}} - \frac{P_{\ell H}}{P_{\ell L} + P_{\ell H}}.
    \end{align*}
    Recall that $\hthd'_H >  \frac{P_{\ell L}}{P_{\ell L} + P_{\ell H}}$. Therefore,  
    \begin{align*}
        \hthd'_H - P_{hH}> &\ \frac{P_{\ell L}}{P_{\ell L} + P_{\ell H}} - P_{hH}\\
        =&\ \frac{(P_{\ell L} -P_{hH} \cdot P_{\ell L}) - P_{hH}\cdot P_{\ell H}}{P_{\ell L} + P_{\ell H}}\\
        =&\ \frac{P_{\ell L}\cdot P_{\ell H} - P_{hH}\cdot P_{\ell H}}{P_{\ell L} + P_{\ell H}}\\
        =&\ \frac{(P_{\ell L} - P_{hH})\cdot P_{\ell H}}{P_{\ell L} + P_{\ell H}}.
    \end{align*}
    Therefore, 
    \begin{align}
        \gunc(\abph) > &\ \frac{P_{\ell L}}{P_{\ell H}}\cdot \frac{(P_{\ell L} - P_{hH})\cdot P_{\ell H}}{P_{\ell L} + P_{\ell H}} + (P_{hL} - \frac{P_{\ell H}}{P_{\ell L} + P_{\ell H}})\\
        =&\  \frac{(P_{\ell L} - P_{hH})\cdot P_{\ell L}}{P_{\ell L} + P_{\ell H}} + \frac{(P_{h L} - P_{\ell H})\cdot P_{\ell L}}{P_{\ell L} + P_{\ell H}}\\
        =&\ 0. 
    \end{align}
    Therefore, $\hthd'_L > \frac{P_{\ell H}}{P_{\ell L} + P_{\ell H}}$.
    Therefore, by applying the second or the third case of Theorem~\ref{thm:arbitrary}, either the fidelity of $\stgp_\ag^1$ or that of $\stgp_\ag^2$ does not converge to 1. By Lemma~\ref{lem:minority_deviate}, minority can gain a constant utility increase from at least one of such deviation under infinitely many $\ag$. Therefore, $\{\stgp_\ag\}$ fail to be an equilibrium sequence with $\varepsilon$ converges to 0. 
\end{proof}

\subsection{Segment 2 to 4: Overview}
The rest of the proof will proceed in the order of the proof sketch. Here is the outline. 
\begin{enumerate}
    \item We solve the case when $\gamma$ is biased (either $\gamma$  or $1- \alpha - \gamma$ is close to 0) and restrict the case when $\gamma$ is not biased to be ``double-symmetric'': all type-0 agents play the same strategy, and all type-1 agents play the same strategy.
    \item We show that a sequence of strategy profiles satisfy all constraint on $\xi$ if and only if $\xi< \xi_h$ and $\xi< \xi_\ell$. Here $\xi_h$ and $\xi_\ell$ are two constraints derived by the constraints on the expected vote share ($\thd_H' > \frac12$ and $\thd_L < \frac 12$, like that in the ``Flat'' Segment 1). Therefore, the problem of finding the upper and lower bound is converted into maximizing $\min(\xi_h, \xi_\ell)$.
    \item We reduce the search space of maximizing $\min(\xi_h, \xi_\ell)$ into two corner cases. In the first case, all type-0 agents always vote for $\bR$. In the second case, all type-1 agents always vote for $\bA$. The threshold $\xi^*$ take the maximum of these two cases and $\frac{\Delta(\alpha - \frac12)}{P_{hL}}$, which derives from the biased case. 
    \item We tackle the two extreme cases respectively. In the case where all type-0 agents always vote for $\bR$ (the first case), the extension of ``Steep'' Segment 2 and ``Tail'' Segment 4 serves as an upper bound. On the other hand, the case where all type-1 agents always vote for $\bA$ (the second case) derives the ``Non-Linear'' Segment 3 and the ``Tail'' Segment 4. 
    \item We take the maximum of all the cases and reach the upper and the lower bounds. 
\end{enumerate}

We first introduce notations. Consider a strategy profile $\stgp$ where no agents play weakly dominated strategies. For the majority type agents, we only care about their contribution to the expected vote share, so it suffices to know the average of their strategies. Let $\bph^{\maj}$  and $\bpl^{\maj}$ be the average strategy of the majority agents. 
We partition minority agents into two subgroups. The type-0 minority agents always vote for $\bR$ when they receive $h$ signals, i.e. $\bph^0 = 0$. The type-1 minority agents always vote for $\bA$ when they receive $\ell$ signals, i.e. $\bpl^1 = 1$. Each minority agent will only be categorized into exactly one type. Specifically, those who vote anti-informatively will either be a type-1 or type-0 agent, but will not be both. Let $\bpl^0$ and $\bph^1$ be the average probability of voting for $\bA$ for a type-0 agent receiving signal $\ell$  and a type-1 agent receiving signal $h$, respectively. Let $0\le \gamma\le 1 - \alpha$ be the fraction of type-0 agents. 

In this way, the expected vote share of the strategy profile can be written as 
    \begin{align*}
        \thd_H =&\ \alpha\cdot (P_{hH}\cdot \bph^{\maj} + P_{\ell H}\cdot \bpl^{\maj}) +  (1 - \alpha - \gamma)\cdot (P_{hH}\cdot \bph^{1} + P_{\ell H}) + \gamma \cdot P_{\ell H}\cdot \bpl^0.\\
        \thd_L =&\ \alpha\cdot (P_{hL}\cdot \bph^{\maj} + P_{\ell L}\cdot \bpl^{\maj}) +  (1 - \alpha - \gamma)\cdot (P_{hL}\cdot \bph^{1} + P_{\ell L}) + \gamma \cdot P_{\ell L}\cdot \bpl^0.
    \end{align*}

\subsection{Segment 2 to 4: Preliminaries}
We first show that, when $\gamma$ is very biased, no strategy profile can tolerate a deviation group of $\xi \ge \frac{\Delta \cdot (\alpha - 1/2)}{P_{hL}}$.

\begin{propnb}{prop:biasedgamma_gen}
    For $\alpha \le \frac{\infdif}{1 - \thdmaj}$, for any $\xi' \ge \frac{\Delta \cdot (\alpha - 1/2)}{P_{hL}}$, no sequence with infinitely many strategy profiles where $\gamma \le \xi'$ or $1 - \alpha - \gamma \le \xi'$ can satisfy all three constraint in Theorem~\ref{thm:thresholdk} with $\xi = \xi'$. 
\end{propnb}


\begin{proof}
     The high-level idea is the deviator group exhausts one type of minority agent. Suppose $\stgp$ is a strategy profile with $\gamma \le \xi'$. (The $1 - \alpha - \gamma \le \xi'$ side follows similar reasoning given $\frac{\Delta \cdot (\alpha - 1/2)}{P_{hL}} \ge \frac{\Delta \cdot (\alpha - 1/2)}{P_{\ell H}}$).  Let the deviation group contain all the type-0 agents and $(\xi - \gamma)\cdot n$ of type-1 agents. We consider two possible deviations. 

    In the first deviation, denoted as $\stgp^1$, the type-1 agents in the deviation group are those with minimum probability to vote for $\bA$ when receiving $h$ signal. All the deviators switch to always vote for $\bA$. We denote the expected share for $\stgp^1$ as $\thd_H^1$ and $\thd_L^1.$

    In the second deviation, denoted as $\stgp^2$, the type-1 agents in the deviation group are those with maximum probability to vote for $\bA$ when receiving $h$ signal. All the deviators switch to always vote for $\bR$.We denote the expected share for $\stgp^2$ as $\thd_H^2$ and $\thd_L^2.$ 
    
    Note that $\thd_H^1 \ge \thd_H^2 + \xi'$ and $\thd_L^1 \ge \thd_L^2 + \xi'$. Firstly, all the deviators (with a fraction of $\xi'$) in $\stgp^1$ vote always for $\bA$, while all the deviators in $\stgp^2$ always vote for $\bR$. Secondly, the majority agents play the same strategy in two deviations. Finally, the non-deviator minority agents in $\stgp^1$ are more likely to vote for $\bA$ than the non-deviator minority agents. 

    Similar to the proof of Proposition~\ref{prop:seg2upper},
    we first consider the case where $\{\stgp\}$ has infinite $\stgp$ where majority agent strategies have sub-linear variance. Suppose for any $\delta > 0$ there exists infinitely many $\ag$ such that in some world state $\wos$, $Var(\sum_{i \in \maj} \xrv_{i}^{\ag} \mid \wos) < \delta \cdot \ag$. Firstly, $\{\stgp^1\}$ and $\{\stgp^2\}$ also have infinite $\stgp$ where majority agent strategies have sub-linear variance. Then by Corollary~\ref{coro:var_sublinear}, 
    then for any $\varepsilon > 0$, there exists infinitely many $\ag$ such that in some world state $\wos$, $\thd_H^1 - \thd_L^1 < \varepsilon$ and $\thd_H^2 - \thd_L^2 < \varepsilon$. (Note that for every such $\ag$, $\thd_H^1 - \thd_L^1 < \varepsilon$ and $\thd_H^2 - \thd_L^2 < \varepsilon$ holds simultaneously.)

    Then we show that in this case, at least one of $\{\stgp^1\}$ and $\{\stgp^2\}$ falls into the second case in Theorem~\ref{thm:arbitrary}, which indicates that its fidelity does not converges to 1. If $\liminf_{\ag} \sqrt{\ag}\cdot (\thd_H^2 - \frac12) < 0$, $\{\stgp^2\}$ falls into the second case in Theorem~\ref{thm:arbitrary}. Suppose $\liminf_{\ag} \sqrt{\ag}\cdot (\thd_H^2 - \frac12) \ge 0$, we show that $\liminf_{\ag} \sqrt{\ag}\cdot (\frac12 - \thd_L^1) <0$. $\liminf_{\ag} \sqrt{\ag}\cdot (\thd_H^2 - \frac12) \ge 0$ implies that, for any constant $\delta' > 0$ for all sufficiently large $\ag$, $\thd_H^2 - \frac12 > -\delta'$. Then by Corollary~\ref{coro:var_sublinear}, there exists infinitely many $\ag$ such that in some world state $\wos$, $\thd_H^1 - \thd_L^1 < \varepsilon$ and $\thd_H^2 - \thd_L^2 < \varepsilon$. Note that $\thd_H^1 \ge \thd_H^2 - \xi'$. Combining there three fact, and setting the constant such that $\delta' + \varepsilon < \xi'$.  we know that there exists infinitely many $\ag$ such that 
    \begin{equation*}
        \thd_L^1 > \thd_H^1 - \varepsilon = \thd_H^2 - \varepsilon + \xi'> \frac12 - \delta' - \varepsilon + \xi' > \frac12 +\frac{\xi'}{2}.
    \end{equation*}
    This directly implies that $\liminf_{\ag} \sqrt{\ag}\cdot (\frac12 - \thd_L^2) <0$. Therefore, the fidelity of either $\{\stgp^1\}$ or $\{\stgp^2\}$ does not converges to 1. Precisely, there exists an $\delta > 0$ and infinitely many $\ag$ such that $\acc(\stgp^1) \le 1 - \delta$ or $\acc(\stgp^2) \le 1 - \delta$ holds. As the fidelity decreases, the utility of the minority agents increases, and the deviation is succcessful. This implies that $\{\stgp\}$ is not a sequence of $\varepsilon$-ex-ante $\kd$-strong BNE with $\varepsilon$ converges to 0. 

    For now we suppose that the majority agent strategies in $\stgp$ (and consequently $\stgp^1$ and $\stgp^2$) have linear variance for all sufficiently large $\ag$. In this case, if we can show that for all $\ag$, either $\thd_H^2 \le \frac12$ or $\thd_L^1 \ge \frac12$ holds, then at least one of them holds for infinitely many $\ag$. By the second or the third case of Theorem~\ref{thm:arbitrary}, either the fidelity of $\stgp^1$ or the fidelity of $\stgp^2$ does not converge to 1. By Lemma~\ref{lem:minority_deviate}, this implies that the deviation is successful, and $\{\stgp\}$ is not a sequence of $\varepsilon$-ex-ante $\kd$-strong BNE with $\varepsilon$ converges to 0. 

    By the definition of $\stgp^1$ and $\stgp^2$ ,we have 
    \begin{equation*}
        \thd_L^1 \ge \alpha\cdot (P_{hL}\cdot \bph^{\maj} + P_{\ell L}\cdot \bpl^{\maj}) +  (1 - \alpha - \xi)\cdot (P_{h L}\cdot \bph^{1} + P_{\ell L}) + \xi.
    \end{equation*}
    \begin{align*}
        \thd_H^2 \le&\ \alpha\cdot (P_{hH}\cdot \bph^{\maj} + P_{\ell H}\cdot \bpl^{\maj}) +  (1 - \alpha - \xi)\cdot (P_{hH}\cdot \bph^{1} + P_{\ell H}).
    \end{align*}

    We show that either $\thd_H^2 \le \frac12$ or $\thd_L^1 \ge \frac12$, which implies that the deviating succeed, and the strategy profile is not an equilibrium. 

    Note that the solution to equation
    \begin{align*}
       \alpha\cdot (P_{hL}\cdot \bph^{\maj} + P_{\ell L}\cdot \bpl^{\maj}) +  (1 - \alpha - \xi)\cdot (P_{h L}\cdot \bph^{1} + P_{\ell L}) + \xi =&\ \frac12 \\
        \alpha\cdot (P_{hH}\cdot \bph^{\maj} + P_{\ell H}\cdot \bpl^{\maj}) +  (1 - \alpha - \xi)\cdot (P_{hH}\cdot \bph^{1} + P_{\ell H}) =&\ \frac12
    \end{align*}
    is 
    \begin{align*}
    \bph^{maj} = &\ \frac{1}{\alpha} \left(\frac12 + \frac{P_{\ell H}\cdot \xi}{\Delta} - (1 - \alpha - \xi)\cdot \bph^1\right). \\
    \bpl^{maj} = &\ \frac{1}{\alpha} \left(\frac12 - \frac{P_{hH}\cdot \xi}{\Delta} - (1 - \alpha - \xi)\right).
\end{align*}
    Denote this solution as $\bph^{*}$ and $\bpl^*$. Note that when $\xi \ge \frac{\Delta \cdot (\alpha - 1/2)}{P_{hL}}$, $\bpl^* \le 0$
    
    Then we rewrite the inequality of $\thd_L^1$ and $\thd_H^2$ as 
    \begin{align*}
        \thd_H^2 \le&\ \frac12 + \alpha\cdot (P_{hH}\cdot (\bph^{\maj} - \bph^*) + P_{\ell H}\cdot (\bpl^{\maj} - \bpl^{*})).\\
        \thd_L^1 \ge &\ \frac12 + \alpha\cdot (P_{hL}\cdot (\bph^{\maj} - \bph^*) + P_{\ell L}\cdot (\bpl^{\maj} - \bpl^{*})).
    \end{align*}
    If $\thd_H^2 > \frac12$, there must be $P_{hH}\cdot (\bph^{\maj} - \bph^*) + P_{\ell H}\cdot (\bpl^{\maj} - \bpl^{*}) > 0$. By $P_{hH} > P_{hL}$, $P_{\ell L} > P_{\ell H}$, and $(\bpl^{\maj} - \bpl^{*}) \ge 0$, there must be $(P_{hL}\cdot (\bph^{\maj} - \bph^*) + P_{\ell}\cdot (\bpl^{\maj} - \bpl^{*}) > 0$. Therefore, $\thd_L^1 \ge \frac12$.

    Therefore, there exists a constant $\varepsilon$ such that for every strategy $\stgp$ in the sequence $\{\stgp\}$ such that $\gamma \le \xi'$ or $1 - \alpha - \gamma \le \xi'$, $\stgp$ is NOT a $\varepsilon$-ex-ante Bayesian $\xi'\cdot \ag$ strong equilibrium. 
\end{proof}

At the same time, the biased case also contributes a tight lower bound, which eventually becomes the lower bound for Segment 2. 

\begin{propnb}{prop:seg3lower}
    When $\frac{\infdif}{1 - \thdmaj} \ge \alpha$, for any $\xi < \frac{\Delta(\alpha - 1/2)}{P_{hL}}$, there exists a sequence of strategy profiles $\{\stgp\}$ that satisfies the constraints in Theorem~\ref{thm:thresholdk} with $\xi  = \frac{\Delta(\alpha - 1/2)}{P_{hL}}$
\end{propnb}

\begin{proof}
    For a fixed $\xi < \frac{\Delta(\alpha - 1/2)}{P_{hL}}$ and for every $\ag$, the strategy profile is as follows. All the minority agents always vote for $\bA$ regardless of their signals. All the majority agents play $\bph^{\maj} = 1 + \frac1{\alpha} \cdot (-\frac12 + \frac{\xi \cdot P_{\ell L}}{\Delta}) + \delta$ and $\bpl^{\maj} = 0$, where $\delta  \in (\frac{\xi + (\alpha -1/2)\cdot P_{\ell H}}{\alpha \cdot P_{hH}} - \frac{\xi \cdot P_{\ell L}}{\alpha \cdot \Delta}, \frac{(\alpha -1/2)\cdot P_{\ell L}}{\alpha \cdot P_{hL}} - \frac{\xi \cdot P_{\ell L}}{\alpha \cdot \Delta})$. Given that $\xi < \frac{\Delta(\alpha - 1/2)}{P_{hL}}$, we can verify that the feasible set of $\delta$ is not empty.

    \begin{align*}
        \frac{(\alpha -1/2)\cdot P_{\ell L}}{\alpha \cdot P_{hL}} -\frac{\xi + (\alpha -1/2)\cdot P_{\ell H}}{\alpha \cdot P_{hH}} > &\ \frac{(\alpha -1/2)\cdot P_{\ell L}}{\alpha \cdot P_{hL}} -\frac{(\alpha -1/2)\cdot P_{\ell H}}{\alpha \cdot P_{hH}} - \frac{\Delta(\alpha - 1/2) \cdot P_{\ell L}}{\alpha\cdot P_{hL} \cdot P_{hH}}\\
        =&\ \frac{(\alpha - 1/2)}{\alpha}\cdot \frac{P_{\ell L} \cdot P_{hH} - \Delta - P_{\ell H} \cdot P_{hL}}{P_{hL} \cdot P_{hH}}\\
        =&\ 0. 
    \end{align*}

    It is not hard to verify that $\bph^{\maj} \ge 0$.     
    Given that $\alpha \le \frac{\infdif}{1 - \thdmaj} = \frac{1}{2P_{\ell L}}$ and $\xi < \frac{\Delta(\alpha - 1/2)}{P_{hL}}$, we can verify that $ \bph^{\maj} \le 1$. 

    \begin{equation*}
        \bph^{\maj} < 1 + \frac1{\alpha} \cdot (-\frac12 + \frac{(\alpha -1/2)\cdot P_{\ell L}}{P_{hL}})
        = 1 + \frac1{\alpha} \cdot \frac{2\alpha \cdot P_{\ell L} - 1}{2P_{hL}} \le 1.
    \end{equation*}

    By Lemma 1, no agents play weakly dominated strategies. For constraint 2, we calculated the expected vote share for $\bA$ under two world states respectively. 
    \begin{align*}
        \thd_H =&\ \alpha \cdot P_{hH} \cdot (1 + \frac1{\alpha} \cdot (-\frac12 + \frac{\xi \cdot P_{\ell L}}{\Delta}) + \delta) + (1 - \alpha). \\
        \thd_L =&\ \alpha\cdot P_{hL} \cdot (1 + \frac1{\alpha} \cdot (-\frac12 + \frac{\xi \cdot P_{\ell L}}{\Delta}) + \delta) + (1 - \alpha).
    \end{align*}

    For $\thd_H$ we have 
    \begin{align*}
        \thd_H > &\ \alpha \cdot P_{hH} \cdot (1 + \frac1{\alpha} \cdot (-\frac12 + \frac{\xi \cdot P_{\ell L}}{\Delta}) + \frac{\xi + (\alpha -1/2)\cdot P_{\ell H}}{\alpha \cdot P_{hH}} - \frac{\xi \cdot P_{\ell L}}{\alpha \cdot \Delta}) + (1 - \alpha)\\
        =&\ P_{hH} \cdot (\alpha -\frac12 + \frac{\xi + (\alpha -1/2)\cdot P_{\ell H}}{P_{hH}}) + (1 - \alpha)\\
        =&\ P_{hH} \cdot (\alpha -\frac12) + (\alpha -1/2)\cdot P_{\ell H} + \xi + (1 - \alpha)\\
        =&\ \frac12 + \xi. 
    \end{align*}
    And for $\thd_L$, we have 
    \begin{align*}
        \thd_H < &\ \alpha \cdot P_{hL} \cdot (1 + \frac1{\alpha} \cdot (-\frac12 + \frac{\xi \cdot P_{\ell L}}{\Delta}) + \frac{(\alpha -1/2)\cdot P_{\ell L}}{\alpha \cdot P_{hL}} - \frac{\xi \cdot P_{\ell L}}{\alpha \cdot \Delta}) + (1 - \alpha)\\
        =&\ P_{hL} \cdot (\alpha - 1/2 + \frac{(\alpha -1/2)\cdot P_{\ell L}}{P_{hL}}) + (1 - \alpha) \\
        =&\ \frac12.
    \end{align*}
    Moreover, $\thd_H$ and $\thd_L$ does not depend on $\ag$. Therefore, $\liminf \sqrt{\ag} \cdot (\thd_H - \frac12) = + \infty$ and $\liminf \sqrt{\ag} \cdot (\frac12 - \thd_L) = + \infty$. 
    By applying the first case of Theorem~\ref{thm:arbitrary}, the fidelity of $\stgp_\ag$ converges to 1. 

    Now we show that $\stgp$ is an equilibrium with $\xi^* = \frac{\Delta\cdot (\alpha - 1/2)}{P_{hL}}$. Note that $\xi^* \le 1 -\alpha$ when $\alpha \le \thdmaj$. By Lemma~\ref{lem:acctoeq}, it suffices to show that a fraction $\xi < \xi^*$ of minority agents cannot successfully deviate. Suppose a $\xi < \xi^*$ fraction of minority agents change their strategies. Consider the expected vote share after deviation, denoted as $\thd_H'$ and $\thd_L'$. 
    Recall that all minority agents always vote for $\bA$ in $\stgp$. Therefore, $\thd_L' \le \thd_L < \frac12$. On the other hand, the deviator can decrease $\thd_H$ by no more than its fraction $\xi$. Therefore, $\thd_H' \ge \thd_H - \xi > \frac12$. 
    Therefore, applying the first case of Theorem~\ref{thm:arbitrary} and Lemma~\ref{lem:minority_deviate}, the fidelity still converges to 1 after any deviation, and the expected utility gain of the minority agents converges to 0. Therefore, the strategy profile proposed is an $\varepsilon$ equilibrium with $\varepsilon$ converges to 0. 
\end{proof}

Now we come to solve the case where $\gamma > \xi$ and $1 - \alpha - \gamma > \xi$, i.e.,  the deviation group cannot exhuast one type of minority agents. The following lemma shows that in the strategy profile maximizing $\xi$, each type of minority agents should play the same strategy.

\begin{lemnb}{lem:double}
    Let $\{\stgp_\ag\}$ be a sequence of strategy profiles that satisfies the constraints in Theorem~\ref{thm:thresholdk} with $\xi$. Consider the following profiles $\{\stgp'_\ag\}$, where each $\ag$, $\stgp_\ag$ and $\stgp'_\ag$ has the following relationship. 
    \begin{enumerate}
        \item Each majority agent in $\stgp'_\ag$ play the same strategy as in $\stgp_\ag$.
        \item All the type-1 minority agents in $\stgp'_\ag$ play the average strategy of type-1 minority agents in $\stgp_\ag$, denoted by $\bpl^1 = 1$ and $\bph^1$. 
        \item All the type-0 minority agents in $\stgp'_\ag$ play the average strategy of type-0 minority agents in $\stgp_\ag$, denoted by $\bph^0 = 0$ and $\bpl^0$. 
    \end{enumerate}
    Then $\{\stgp'_\ag\}$ also satisfies the constraints on $\xi$. 
\end{lemnb}

\begin{proof}
    The first constraint is straightforward. 
    
For the second and the third constraint, we consider the expected vote share. Let $\thd_H$ and $\thd_L$ be the expected vote share of $\stgp_\ag$ and $\thd_H'$ and $\thd_L'$ be the expected vote share of $\stgp'_\ag$. For simplicity, we use $\stgp$ and $\stgp'$ to denote $\stgp_\ag$ and $\stgp'_\ag$.

For an arbitrary $\ag$, the extremities of expected vote share in $\stgp$. For the $H$ side, $\thd_H$ is minimized by selecting $\xi\cdot \ag$ agents in type-1 agents with maximum probability to vote for $\bA$ with signal $h$ and switching them to always vote for $\bR$. The change bringing to $\thd_H$ is $- \frac1\ag \sum_{i\in D} (P_{hH}\cdot \bph^i + P_{\ell H}) \le -\xi\cdot (P_{hH}\cdot \bph^1 + P_{\ell H})$. For the $L$ side, $\thd_L$ is maximized by selecting $\xi\cdot \ag$ agents in type-0 agents with minimum probability to vote for $\bA$ with signal $\ell$ and switching them to always vote for $\bA$. The change bringing to $\thd_L$ is $\xi - \frac1\ag \sum_{i\in D} (P_{hL}+ P_{\ell L} \cdot \bpl^i) \ge \xi - \xi\cdot (P_{hL} + P_{\ell L}\cdot \bpl^0)$. 

Given that $\stgp$ satisfies constraint 3, any deviation from $\stgp$ still have fidelity converging to 1 (otherwise the deviation succeed). By Theorem~\ref{thm:arbitrary}, for any constant $\varepsilon > 0$ and all sufficiently large $\ag$,  there must be 
    \begin{align*}
        \thd_H - \frac1\ag \sum_{i\in D} (P_{hH} \cdot \bph^i + P_{\ell H}) \ge&\ \frac12 - \varepsilon\\
        \thd_L + \xi - \frac1\ag \sum_{i\in D} (P_{hL} + P_{\ell L} \cdot \bpl^i) \le &\ \frac 12 + \varepsilon. 
    \end{align*}
By taking $\varepsilon < \frac{\xi}{4}$, this implies that $\thd_H - \thd_L \ge \frac{\xi}{2}$ for all sufficiently large $\ag$. Then Lemma~\ref{lem:var_linear}, the variance of the majority agent's strategy is already $\Theta(\ag)$ for all sufficiently large $\ag$. Therefore, for any deviation from $\stgp_\ag$, the variance is $\Theta(\ag)$. Given that $\stgp_\ag$ satisfies the third constraint (equilibrium), these deviation fails, and their fidelities converges to 1. This rules out the third case in Theorem~\ref{thm:arbitrary} and further show that 
\begin{align*}
    \liminf \sqrt{\ag} \cdot ( \thd_H - \frac1\ag \sum_{i\in D} (P_{hH} \cdot \bph^i + P_{\ell H}) - \frac12) = &\ +\infty,\\
    \liminf \sqrt{\ag} \cdot (\frac12 - (\thd_L + \xi - \frac1\ag \sum_{i\in D} (P_{hL} + P_{\ell L} \cdot \bpl^i))) =&\  +\infty.
\end{align*}
This also implies that $\liminf \sqrt{\ag} \cdot ( \thd_H -\frac12) = +\infty$ and $\liminf \sqrt{\ag} \cdot  (\frac12 - \thd_L) = +\infty$.
    


Then we consider the $\stgp_\ag'$. We start from the second constraint (fidelity converging to 1). Note that $\thd_H' = \thd_H$ and $\thd_L' = \thd_L$ for all $\ag$. Therefore, $\liminf \sqrt{\ag} \cdot (\thd_H' -\frac12) = +\infty$ and $\liminf \sqrt{\ag} \cdot  (\frac12 - \thd_L') = +\infty$ also holds.
Therefore, by the first case in Theorem~\ref{thm:arbitrary}, the fidelity of $\stgp_\ag$ converges to 1.    

Now we prove the third constraint holds by showing that no deviation in $\stgp'_\ag$ can succeed.
Consider the extremities for $\stgp'_\ag$. For the $H$ side, $\thd'_H$ is minimized by selecting $\xi\cdot \ag$ agents in type-1 (where they play the same strategy). Therefore, the change bringing to $\thd'_H$ is at most $-\xi\cdot (P_{hH}\cdot \bph^1 + P_{\ell H})$. Likewise, the change bringing to $\thd'_L$ is at most $ \xi - \xi\cdot (P_{hL} + P_{\ell L}\cdot \bpl^0)$. Given that $\thd_H = \thd'_H$ and $\thd_L = \thd'_L$, there must be 
    \begin{align*}
        \thd'_H -\xi\cdot (P_{hH}\cdot \bph^1 + P_{\ell H}) \ge&\ \thd_H - \frac1\ag \sum_{i\in D} (P_{hH} \cdot \bph^i + P_{\ell H}) \\
        \thd'_L + \xi - \xi\cdot (P_{hL} + P_{\ell L}\cdot \bpl^0) \le &\ \thd_L + \xi - \frac1\ag \sum_{i\in D} (P_{hL} + P_{\ell L} \cdot \bpl^i) . 
    \end{align*}
    Therefore, it also holds that
    \begin{align*}
    \liminf \sqrt{\ag} \cdot (  \thd'_H -\xi\cdot (P_{hH}\cdot \bph^1 + P_{\ell H}) - \frac12) = &\ +\infty,\\
    \liminf \sqrt{\ag} \cdot (\frac12 - (\thd'_L + \xi - \xi\cdot (P_{hL} + P_{\ell L}\cdot \bpl^0))) =&\  +\infty.
\end{align*}
Therefore, for any deviating strategy profile $\stgp''_\ag$ from $\stgp'_\ag$, whose expected vote share is $\thd_H''$ and $\thd_L''$, there must be $\liminf \sqrt{\ag} \cdot \min(\thd_H'' - \frac12, \frac12 - \thd_L'') = +\infty$. By the first case of Theorem~\ref{thm:arbitrary} and Lemma~\ref{lem:minority_deviate}, the fidelity of $\stgp_\ag''$ converges to 1, and the deviation cannot succeed. Therefore, $\{\stgp'_\ag\}$ satisfies the third constraint. 
\end{proof}

Therefore, if we show that any strategy profile where each type of minority agents plays the same strategy does not satisfy the constraints, then no strategy profile can satisfy the constraints. 
In the rest of the proof, we focus on strategy profiles where each type of minority agents plays the same strategy.

\subsection{Segment 2 to 4: Transition to Maximization Problem. }
Consider a strategy profile $\stgp$ where the average strategy of majority agents is $(\bpl^{\maj}, \bph^{\maj})$, every type-1 minority agents play $(1, \bph^1)$, every type-0 minority agents play $(\bpl^0, 0)$, and the fraction of type-0 minority agents is $\gamma$. 

Consider the condition for $\stgp$ to be an equilibrium under $\xi$. First, after a $\xi$ fraction of type-0 agents switch to always vote for $\bA$, the deviating expected vote share under world state $L$ (denoted as $\thd'_L$) should be strictly less than $\frac12$. Second, after a $\xi$ fraction of type-1 agents switch to always vote for $\bR$, the deviating expected vote share under world state $H$ (denoted as $\thd'_H$) should be strictly greater than $\frac12$. Therefore, we have the following conditions. 
\begin{align*}
        \thd'_H = \alpha\cdot (P_{hH}\cdot \bph^{\maj} + P_{\ell H}\cdot \bpl^{\maj}) +  (1 - \alpha - \gamma - \xi)\cdot (P_{hH}\cdot \bph^{1} + P_{\ell H}) + \gamma \cdot P_{\ell H}\cdot \bpl^0 > &\ \frac12\\
        \thd'_L = \alpha\cdot (P_{hL}\cdot \bph^{\maj} + P_{\ell L}\cdot \bpl^{\maj}) +  (1 - \alpha - \gamma)\cdot (P_{hL}\cdot \bph^{1} + P_{\ell L}) + (\gamma - \xi) \cdot P_{\ell L}\cdot \bpl^0 +\xi <&\ \frac12
    \end{align*}
We claim that, for any given $\bph^1$, $\bpl^0$, $\gamma$, and $\xi$, there is a pair of $(\bpl^\maj, \bph^\maj)$ satisfying this condition if and only if the solution of $\thd'_H = \frac12$ and $\thd'_L = \frac12$ (denoted as $(\bpl^*, \bph^*)$) is in $(0, 1)^2$. Specifically, the explicit form of the solution is 
    \begin{align*}
        \bpl^* =&\ \frac{1}{\alpha} (\frac12 - (1-\alpha)  + \gamma \cdot ( 1 - \bpl^0)) - \frac{\xi}{\Delta\cdot \alpha}(P_{hH}\cdot P_{hL}\cdot \bph^1 + P_{hH}\cdot P_{\ell L} \cdot (1 - \bpl^0) + P_{hL}).\\
        \bph^* = &\ \frac{1}{\alpha} (\frac12 - (1 - \alpha -\gamma) \cdot \bph^1) + \frac{\xi}{\Delta\cdot \alpha}(P_{hH}\cdot P_{\ell L}\cdot \bph^1 + P_{\ell H}\cdot P_{\ell L} \cdot (1 - \bpl^0) + P_{\ell H}).
    \end{align*}
Combining with the constraint that $(\bpl^*, \bph^*)\in (0,1)^2$, we get two constraints on $\xi$, denoted as $\xi_\ell$ and $\xi_h$. 
    \begin{align*}
        \xi < &\ \xi_h =  \frac{\Delta(\alpha - 1/2 + (1 - \alpha - \gamma) \cdot \bph^1)}{P_{hH}\cdot P_{\ell L}\cdot \bph^1 + P_{\ell H}\cdot P_{\ell L} \cdot (1 - \bpl^0) + P_{\ell H}}.\\
        \xi < &\ \xi_{\ell} = \frac{\Delta (\alpha - 1/2 + \gamma \cdot (1 - \bpl^0))}{P_{hH}\cdot P_{hL}\cdot \bph^1 + P_{hH}\cdot P_{\ell L} \cdot (1 - \bpl^0) + P_{hL}}.
    \end{align*}

Specifically, when $\bph^1 = (1 - \bpl^0) = 0$, $\xi_h = \frac{\Delta(\alpha - \frac12)}{P_{\ell H}}$ and $ \xi_\ell = \frac{\Delta(\alpha - \frac12)}{P_{hL}}$. When $\gamma = \frac{1 - (P_{\ell L} + P_{\ell H})\cdot\alpha}{2}$, and $\bph^1 = (1 - \bpl^0) = 1$, $\xi_h = \xi_{\ell} = \frac{1}{2}\Delta \alpha$. 

\begin{lemnb}{lem:xilower}
    Given a set of $\bph^1, \bpl^0$, and $\gamma$ satisfying $\bph^1 \in [0,1]$, $\bpl^0 \in [0, 1]$, and let $\xi' = min(\xi_h, \xi_\ell)$ corresponding to $\bph^1, \bpl^0$, and $\gamma$. If $\gamma \ge \xi'$ and $1 - \alpha - \gamma \ge \xi'$ holds, then for all $\xi < \xi'$ there exists a sequence of profiles such that the constraints in Theorem~\ref{thm:thresholdk} holds for $\xi$. 
\end{lemnb}


\begin{proof}
    We fix an arbitrary $\xi < \xi'$, and let $\bph^*$ and $\bpl^*$ defined as above. By the definition of $\xi_h$ and $\xi_\ell$, $\bph^*$ and $\bpl^*$ are in $(0, 1)$. Now we set $\bph^\maj = \bph^* +\delta \cdot (P_{\ell H}+P_{\ell L})$ and $\bpl^\maj = \bpl^* - \delta (P_{hH} +P_{hL})$, where $\delta > 0$ guarantees that $\bpl^\maj$ and $\bph^\maj$ are in $[0, 1]$, and either $\bph^\maj = 1$ or $\bpl^\maj = 0$ holds. 

    Now we consider a sequence of strategy profile ${\stgp}$. For each $\ag$, $\stgp$ is a follows. All majority agents play $\bph^\maj$ and $\bpl^\maj$. A fraction $\gamma$ of minority agents play $\bph^0 = 0$ and $\bpl^0$. The rest of the minority agents play $\bpl^1 = 1$ and $\bph^1$. We show such strategy satisfies all three constraints. 

    First, it is not hard to verify that all agents play non-weakly dominated strategy. Secondly, we consider the fidelity of this strategy. In this proof, $\thd_H'$ and $\thd_L'$ refers two deviating expected vote share defined as above. Note that 
    \begin{align*}
        \thd_H =&\ \thd_H' + \xi \cdot (P_{hH}\cdot \bph^{1} + P_{\ell H}) > \thd_H'.\\
        \thd_L =&\  \thd_L' - \xi \cdot ( 1 - P_{\ell L}\cdot \bpl^0) < \thd_L'.
    \end{align*}
    Therefore, it is sufficient to show that $\thd_H' > \frac12$ and $\thd_L' < \frac12$. 
    Recall that $\bph^*$ and $\bpl^*$ is the solution of the set of equation ($\thd_H' = \frac12$, $\thd_L' = \frac12$). Therefore, assigning $\bph^\maj$ and $\bpl^\maj$, 
    \begin{align*}
        \thd_H' =&\ \frac12+ \alpha\cdot \delta \cdot (P_{hH}\cdot (P_{\ell H}+P_{\ell L}) - P_{\ell H} \cdot (P_{hH} +P_{hL})) =\frac12+ \alpha \cdot \delta \cdot \Delta > \frac12\\
        \thd_L' =&\ \frac12+\alpha\cdot \delta \cdot (P_{hL}\cdot (P_{\ell H}+P_{\ell L}) - P_{\ell L} \cdot (P_{hH} +P_{hL})) =\frac12- \alpha \cdot \delta \cdot \Delta < \frac12
    \end{align*}

    Therefore, for the second constraint, we show that $\thd_H > \frac12$ and $\thd_L < \frac12$, and that they do not depend on $\ag$. By the first case of Theorem~\ref{thm:arbitrary}, the fidelity of $\stgp$ converges to 1.

    For the third constraint, by Lemma~\ref{lem:acctoeq}, it suffices to consider a deviating group with only minority agents. for any deviating strategy with deviating expected vote share $\thd_H''$ and $\thd_L''$, there must be $\thd_H'' \ge \thd_H'$ and $\thd_L'' \le \thd_L'$. Given that $\thd_H' > \frac12$ and $\thd_L' < \frac12$ and do not depend on $\ag$, this directly implies $\liminf \sqrt{\ag} \cdot \min(\thd_H'' - \frac12, \frac12 - \thd_L'') = +\infty$. By the first case of Theorem~\ref{thm:arbitrary} and Lemma~\ref{lem:minority_deviate}, the fidelity of $\stgp_\ag''$ converges to 1, and the deviation cannot succeed. Therefore, $\{\stgp'_\ag\}$ satisfies the third constraint. 
\end{proof}

\begin{lemnb}{lem:xiupper}
    Let $\xi' = \max_{\bph^1, \bpl^0, \gamma} \min(\xi_h, \xi_\ell)$. No sequence strategy profile can satisfy all three constraints in Theorem~\ref{thm:thresholdk} with $\xi = \max(\xi', \frac{\Delta(\alpha - 1/2)}{P_{hL}})$. 
\end{lemnb}

\begin{proof}
    Let $\{\stgp\}$ be a sequence strategy profile satisfying constraint 1 and 2.  We show that constraint 3 does not hold, i.e., there exists a constant $\varepsilon$ such that infinitely many strategy profiles in $\{\stgp\}$ are not $\varepsilon$-ex-ante Bayesian $\xi\cdot \ag$ strong equilibrium. 

    When $\{\stgp\}$ contains infinitely many strategy profiles where $\gamma \le \xi'$ or $1 - \alpha - \gamma \le \xi'$, we apply Proposition~\ref{prop:biasedgamma_gen}. If $\xi' < \frac{\Delta(\alpha - 1/2)}{P_{hL}}$, $\{\stgp\}$ also contains infinitely many strategy profiles where $\gamma \le \frac{\Delta(\alpha - 1/2)}{P_{hL}}$ or $1 - \alpha - \gamma \le \frac{\Delta(\alpha - 1/2)}{P_{hL}}$, and Proposition~\ref{prop:biasedgamma_gen} implies that the constraint is violated with $\xi = \frac{\Delta(\alpha - 1/2)}{P_{hL}}$; if $\xi' \ge \frac{\Delta(\alpha - 1/2)}{P_{hL}}$, Proposition~\ref{prop:biasedgamma_gen} implies that the constraint is violated with $\xi = \xi' $. Therefore, $\{\stgp\}$ cannot satisfies all three constraints with $\xi = \max(\xi', \frac{\Delta(\alpha - 1/2)}{P_{hL}})$.

    No we consider the case where $\gamma > \xi'$ and $1 - \alpha - \gamma > \xi'$ holds for all sufficiently large $\ag$. By Lemma~\ref{lem:double}, without loss of generality, we can assume that all type-0 agents play the same strategy and all type-1 agents play the same strategy. For an arbitrary $\ag$, let $\thd_H$ and $\thd_L$ be the expected vote share of $\stgp$. We consider two deviations. In the first deviation $\stgp^1$, a $\xi'$ fraction of type-0 agents switch to always vote for $\bA$. We denote the expected vote share after this deviation as $\thd_H^1$ and $\thd_L^1$. In the second deviation $\stgp^2$, a $\xi'$ fraction of type-1 agents switch to always vote for $\bR$. We denote the expected vote share after this deviation as $\thd_H^2$ and $\thd_L^2$. Note that $\thd_H'$ and $\thd_L'$ defined in Equation~\ref{eq:devthdH} and~\ref{eq:devthdL} is exactly $\thd_H' = \thd_H^1$ and $\thd_L' = \thd_L^2$. We will show that for infinitely many $\ag$, either $\stgp^1$ or $\stgp^2$ is a successful deviation. 

    Now we do case study on the variance of $\stgp$. We first consider the case where $\{\stgp\}$ has infinite $\stgp$ where the majority agent strategies have sub-linear variance. This part of proof resembles that of Proposition~\ref{prop:seg2upper}. Suppose for any $\delta > 0$ there exists infinitely many $\ag$ such that in some world state $\wos$, $Var(\sum_{i\in\maj} \xrv_{i}^{\ag} \mid \wos) < \delta \cdot \ag$. Firstly, $\{\stgp^1\}$ and $\{\stgp^2\}$ also has infinite $\stgp$ where the majority agent strategies have sub-linear variance. Then by Corollary~\ref{coro:var_sublinear}, 
    then for any $\varepsilon > 0$, there exists infinitely many $\ag$ such that in some world state $\wos$, $\thd_H^1 - \thd_L^1 < \varepsilon$ and $\thd_H^2 - \thd_L^2 < \varepsilon$. (Note that for every such $\ag$, $\thd_H^1 - \thd_L^1 < \varepsilon$ and $\thd_H^2 - \thd_L^2 < \varepsilon$ holds simultaneously.)

    Then we show that in this case, at least one of $\{\stgp^1\}$ and $\{\stgp^2\}$ falls into the second case in Theorem~\ref{thm:arbitrary}, which indicates that its fidelity does not converges to 1. If $\liminf_{\ag} \sqrt{\ag}\cdot (\thd_H^1 - \frac12) < 0$, $\{\stgp^1\}$ falls into the second case in Theorem~\ref{thm:arbitrary}. Suppose $\liminf_{\ag} \sqrt{\ag}\cdot (\thd_H^1 - \frac12) \ge 0$, we show that $\liminf_{\ag} \sqrt{\ag}\cdot (\frac12 - \thd_L^2) <0$. $\liminf_{\ag} \sqrt{\ag}\cdot (\thd_H^1 - \frac12) \ge 0$ implies that, for any constant $\delta' > 0$ for all sufficiently large $\ag$, $\thd_H^1 - \frac12 > -\delta'$. Then by Corollary~\ref{coro:var_sublinear}, there exists infinitely many $\ag$ such that in some world state $\wos$, $\thd_H^1 - \thd_L^1 < \varepsilon$ and $\thd_H^2 - \thd_L^2 < \varepsilon$. On the other hand, note that 
    \begin{align*}
        \thd_H^1 =&\ \thd_H^2 - \xi'\cdot (P_{hH} \cdot \bph^1 + P_{\ell H}) - \xi'\cdot (1 - \bpl^0\cdot P_{\ell L})\\
        =&\ \thd_H^2 - \xi'\cdot(1 + P_{\ell H}+ P_{hH} \cdot \bph^1-\bpl^0\cdot P_{\ell L})\\
        \le &\ \thd_H^2 - \xi'\cdot (P_{hL} + P_{\ell H}). 
    \end{align*}
    Combining there three inequalities, and setting the constant such that $\delta' + \varepsilon < \frac{ \xi'\cdot (P_{hL} + P_{\ell H})}{2}$.  we know that there exists infinitely many $\ag$ such that 
    \begin{equation*}
        \thd_L^2 > \thd_H^2 - \varepsilon \ge \thd_H^1 - \varepsilon +  \xi'\cdot (P_{hL} + P_{\ell H}) > \frac12 - \delta' - \varepsilon +  \xi'\cdot (P_{hL} + P_{\ell H}) > \frac12 +\frac{ \xi'\cdot (P_{hL} + P_{\ell H})}{2}.
    \end{equation*}
    This directly implies that $\liminf_{\ag} \sqrt{\ag}\cdot (\frac12 - \thd_L^2) <0$. Therefore, the fidelity of either $\{\stgp^1\}$ or $\{\stgp^2\}$ does not converges to 1. Precisely, there exists an $\delta > 0$ and infinitely many $\ag$ such that $\acc(\stgp^1) \le 1 - \delta$ or $\acc(\stgp^2) \le 1 - \delta$ holds. By Lemma~\ref{lem:double}, the utility of the minority agents increases as the fidelity decreases, and at least on of the deviation is successful. This implies that $\{\stgp\}$ is not a sequence of $\varepsilon$-ex-ante $\kd$-strong BNE with $\varepsilon$ converges to 0. 

    Now we consider the case where $\stgp$ (and consequently $\stgp^1$ and $\stgp^2$) have majority agent strategies with $\Theta(\ag)$ variance of all sufficiently large $\ag$.
    It suffices to show that at least one of $\thd'_L \ge \frac12$ and $\thd'_H \le \frac12$
    holds for infinitely many $\ag$. By the second or the third case (as the variance is linear) of Theorem~\ref{thm:arbitrary}, either the fidelity of $\stgp^1$ or the fidelity of $\stgp^2$ does not converge to 1. This implies that the deviation is successful, and $\{\stgp\}$ is not a sequence of $\varepsilon$-ex-ante $\kd$-strong BNE with $\varepsilon$ converges to 0. 
    
    Recall that $\bph^*$ and $\bpl^*$ be the solution to $\thd_H' = \frac12$ and $\thd_L' = \frac12$. By the definition of $\xi_h$ and $\xi_\ell$, either $\bph^* \ge 1$ or $\bpl^*\le 0$ holds. Without loss of generality, let $\bph^* \ge 1$. Now we consider $\bph^\maj$ and $\bpl^\maj$. Note that $\bph^\maj \le \bph^*$ always holds. 
    \begin{align*}
        \thd_H' =&\ \frac12 + \alpha\cdot (P_{hH}\cdot (\bph^{\maj} - \bph^*) + P_{\ell H}\cdot (\bpl^{\maj} - \bpl^{*})).\\
        \thd_L' = &\ \frac12 + \alpha\cdot (P_{hL}\cdot (\bph^{\maj} - \bph^*) + P_{\ell L}\cdot (\bpl^{\maj} - \bpl^{*})).
    \end{align*}

    We show that $\thd_H' > \frac12$ implies $\thd_L' > \frac12$. By $\bph^\maj \le \bph^*$, $\thd_H' > \frac12$ implies $\bpl^{\maj} > \bpl^{*}$ and $P_{hH}\cdot (\bph^{\maj} - \bph^*) + P_{\ell H}\cdot (\bpl^{\maj} - \bpl^{*}) > 0$. By $P_{hH}> P_{hL}$ and $P_{\ell H} < P_{\ell L}$, $\thd_L' > \thd_H' > \frac12$. 
\end{proof}

\subsection{Segment 2 to 4: Maximizing $\min(\xi_h, \xi_\ell)$. }

Now we try to find the condition that maximizes $\min(\xi_h, \xi_\ell)$. We view $\xi_h$ and $\xi_{\ell}$ as functions of $\bph^1$ and $(1 - \bpl^0)$ and consider their partial derivative on $\bph^1$ and $(1 - \bpl^0)$. It is not hard to verify that $\frac{\partial \xi_h}{\partial (1 - \bpl^0)} < 0$ and $ \frac{\partial \xi_h}{\partial \bph^1} < 0$. On the other two, recall the fact that for a function $f = \frac{u}{v}$, the derivative of $f$ is $f' = \frac{u'v - uv'}{v^2}$. Therefore, we follow $\sign(f') = \sign(u'v - uv')$ and have the following observation on the derivatives. 
\begin{align*}
    \sign(\frac{\partial \xi_h}{\partial \bph^1}) =&\ \sign((1 - \alpha - \gamma)\cdot P_{\ell H}\cdot (1 + P_{\ell L}\cdot (1 - \bpl^0)) - P_{hH} \cdot P_{\ell L} \cdot (\alpha - \frac12)),\\
    \sign(\frac{\partial \xi_\ell}{\partial (1 - \bpl^0)}) =&\ \sign(\gamma \cdot P_{hL}\cdot (1 + P_{hH}\cdot \bph^1) - P_{hH}\cdot P_{\ell L} \cdot (\alpha - \frac12)).
\end{align*}

We first discuss on $\xi_h$. When $(1 - \alpha - \gamma) \le \frac{P_{hH}\cdot P_{\ell L}}{P_{\ell H}} ( \alpha - \frac12)$, $\frac{\partial \xi_h}{\partial \bph^1} \le 0$ at $(1 - \bpl^0)= 0$. Therefore, for any $(\bpl^0, \bph^1)$, $\xi_h(\bph, 1 - \bpl^0) < \xi_h(\bph^1, 0) \le \xi_h(0, 0) = \frac{\Delta(\alpha - \frac12)}{P_{\ell H}}$. On the other hand, when $(1 - \alpha - \gamma) >\frac{P_{hH}\cdot P_{\ell L}}{P_{\ell H}} ( \alpha - \frac12)$, $\frac{\partial \xi_h}{\partial \bph^1} > 0$ for all $\bpl^0 \in [0, 1]$. 
Similarly, when $\gamma \le \frac{P_{hH}\cdot P_{\ell L}}{P_{hL}} ( \alpha - \frac12)$, for any $(\bpl^0, \bph^1)$, $\xi_\ell(\bph, 1 - \bpl^0)< \xi_\ell(0, 0) = \frac{\Delta(\alpha - \frac12)}{P_{hL}}$, and when $\gamma > \frac{P_{hH}\cdot P_{\ell L}}{P_{hL}} ( \alpha - \frac12)$, $\frac{\partial \xi_\ell}{\partial (1 - \bpl^0)} > 0$ holds. 
Therefore, the following proposition holds. 


\begin{prop}
    \label{prop:mid_biased_gamma_gen} When either $(1 - \alpha - \gamma) \le \frac{P_{hH}\cdot P_{\ell L}}{P_{\ell H}} ( \alpha - \frac12)$ or $\gamma \le\frac{P_{hH}\cdot P_{\ell L}}{P_{hL}} ( \alpha - \frac12)$ holds, either $\xi_h$ or $\xi_{\ell}$ cannot exceed $\frac{\Delta(\alpha - \frac12)}{P_{hL}}$, and $\min(\xi_h, \xi_\ell)$ is maximized at $\frac{\Delta(\alpha - \frac12)}{P_{hL}}$.
\end{prop}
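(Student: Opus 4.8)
The plan is to treat $\xi_h$ and $\xi_\ell$ as functions on the unit square $(\bph^1, 1-\bpl^0) \in [0,1]^2$ (with $\gamma$ fixed so as to satisfy one of the two hypotheses) and to bound the relevant one of them by iterated one-dimensional monotonicity, collapsing the maximization onto the corner $\bph^1 = (1-\bpl^0) = 0$. Two facts are unconditional and drive the argument: $\frac{\partial \xi_h}{\partial (1-\bpl^0)} < 0$, since the numerator of $\xi_h$ is free of $(1-\bpl^0)$ while its denominator grows linearly in it, and symmetrically $\frac{\partial \xi_\ell}{\partial \bph^1} < 0$. The remaining two partials have the sign formulas recorded just above the statement, and the key structural point is that each of those signs \emph{loses its dependence on the second variable} exactly when the first variable is pushed to the boundary where the unconditional monotonicity already sends it.

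For the first hypothesis $(1-\alpha-\gamma) \le \frac{P_{hH}P_{\ell L}}{P_{\ell H}}(\alpha - \frac12)$ I would bound $\xi_h$. Using $\frac{\partial \xi_h}{\partial (1-\bpl^0)} < 0$, for every fixed $\bph^1$ we get $\xi_h(\bph^1, 1-\bpl^0) \le \xi_h(\bph^1, 0)$. Evaluating the sign formula for $\frac{\partial \xi_h}{\partial \bph^1}$ at $(1-\bpl^0) = 0$ gives $\sign\big((1-\alpha-\gamma)P_{\ell H} - (\alpha - \frac12)P_{hH}P_{\ell L}\big)$, which is independent of $\bph^1$ and is $\le 0$ precisely under the hypothesis; hence $\xi_h(\cdot, 0)$ is non-increasing in $\bph^1$ and $\xi_h(\bph^1, 0) \le \xi_h(0,0) = \frac{\Delta(\alpha - 1/2)}{P_{\ell H}}$. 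Combining, $\xi_h \le \frac{\Delta(\alpha-1/2)}{P_{\ell H}}$ throughout the square; and since $P_{hH} \le P_{\ell L}$ forces $P_{\ell H} = 1 - P_{hH} \ge 1 - P_{\ell L} = P_{hL}$, this is at most $\frac{\Delta(\alpha-1/2)}{P_{hL}}$. Therefore $\min(\xi_h,\xi_\ell) \le \xi_h \le \frac{\Delta(\alpha-1/2)}{P_{hL}}$.

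The second hypothesis $\gamma \le \frac{P_{hH}P_{\ell L}}{P_{hL}}(\alpha - \frac12)$ is handled by the mirror argument on $\xi_\ell$: use $\frac{\partial \xi_\ell}{\partial \bph^1} < 0$ to get $\xi_\ell(\bph^1, 1-\bpl^0) \le \xi_\ell(0, 1-\bpl^0)$, then evaluate the sign of $\frac{\partial \xi_\ell}{\partial (1-\bpl^0)}$ at $\bph^1 = 0$, namely $\sign\big(\gamma P_{hL} - (\alpha - \frac12)P_{hH}P_{\ell L}\big)$, which is independent of $(1-\bpl^0)$ and $\le 0$ under the hypothesis; this yields $\xi_\ell \le \xi_\ell(0,0) = \frac{\Delta(\alpha-1/2)}{P_{hL}}$ and hence $\min(\xi_h,\xi_\ell) \le \frac{\Delta(\alpha-1/2)}{P_{hL}}$. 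In both cases the cap $\frac{\Delta(\alpha-1/2)}{P_{hL}}$ is the binding value, since it is exactly the quantity attained by the fully biased configuration (all minority agents voting $\bA$) of Proposition~\ref{prop:seg3lower}; this is what I would invoke for the maximization clause of the statement.

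The only delicate step — and the one I would double-check — is the \emph{order} of the two one-dimensional reductions. One must first move along the variable whose partial is unconditionally negative, and only then invoke the conditional sign at the resulting boundary slice, because the conditional partial is guaranteed non-positive only there: the positive term in its sign expression carries the factor $1 + P_{\ell L}(1-\bpl^0)$ (respectively $1 + P_{hH}\bph^1$), which is minimized at the boundary, so in the interior the sign can flip to positive and the naive monotonicity fails. With the order chosen correctly, everything else reduces to the two quotient-rule sign computations plus the elementary inequality $P_{\ell H} \ge P_{hL}$.
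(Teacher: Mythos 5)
Your proof is correct and follows essentially the same route as the paper's: both collapse the square $(\bph^1, 1-\bpl^0)\in[0,1]^2$ to the corner $(0,0)$ by first using the unconditional negative partial ($\partial \xi_h/\partial(1-\bpl^0)<0$, resp.\ $\partial\xi_\ell/\partial\bph^1<0$) and then the hypothesis-dependent sign of the other partial evaluated on the boundary slice, yielding $\xi_h\le \frac{\Delta(\alpha-1/2)}{P_{\ell H}}\le \frac{\Delta(\alpha-1/2)}{P_{hL}}$ or $\xi_\ell\le\frac{\Delta(\alpha-1/2)}{P_{hL}}$. Your remark on the necessity of the order of the two reductions (and your explicit use of $P_{\ell H}\ge P_{hL}$) is a point the paper leaves implicit, but it is the same argument.
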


Now we consider the case when $(1 - \alpha - \gamma) > \frac{P_{hH}\cdot P_{\ell L}}{P_{\ell H}} ( \alpha - \frac12)$ and $\gamma > \frac{P_{hH}\cdot P_{\ell L}}{P_{hL}} ( \alpha - \frac12)$. We deal with this case by analyzing the derivatives under a linear constraint of $\bph^1$ and $(1 - \bpl^0)$, i.e. $1 - \bpl^0 = t\cdot \bph^1$ for every $t > 0$. Then we have 
\begin{align*}
    \sign(\frac{\partial \xi_h(\bph^1, t\cdot \bph^1)}{\partial \bph^1}) =&\ \sign((1 - \alpha - \gamma) \cdot P_{\ell H} - (\alpha - \frac12) \cdot P_{hH}\cdot P_{\ell L} - (\alpha  - \frac12) \cdot P_{\ell H} \cdot P_{\ell L}\cdot t)).\\
    \sign(\frac{\partial \xi_\ell(\bph^1, t\cdot \bph^1)}{\partial \bph^1}) =&\ \sign(t \cdot (\gamma \cdot P_{hL} - (\alpha - \frac12)\cdot P_{hH} \cdot P_{\ell L}) - P_{hH} \cdot P_{\ell L} \cdot (\alpha - \frac12)). 
\end{align*}

Therefore, $\frac{\partial \xi_h(\bph^1, t\cdot \bph^1)}{\partial \bph^1} \ge 0$ if and only if $t \le \frac{1 - \alpha -\gamma}{P_{\ell L}\cdot (\alpha - 1/2)} - \frac{P_{hH}}{P_{\ell H}}$, and $\frac{\partial \xi_\ell(\bph^1, t\cdot \bph^1)}{\partial \bph^1} \ge 0 $ if and only if $t \ge \frac{P_{hH} \cdot P_{\ell L} \cdot (\alpha - \frac12)}{(\gamma \cdot P_{hL} - (\alpha - \frac12)\cdot P_{hH} \cdot P_{\ell L})}$. With this observation, we have the following lemma. 

\begin{lem}
    \label{lem:slope_gen}
    When $(1 - \alpha - \gamma) > \frac{P_{hH}\cdot P_{\ell L}}{P_{\ell H}} ( \alpha - \frac12)$ and $\gamma > \frac{P_{hH}\cdot P_{\ell L}}{P_{hL}} ( \alpha - \frac12)$, at least one of the three holds. (1) $\max_{\bph^1, \bpl^0, \gamma} \min(\xi_h, \xi_\ell) \le \frac{\Delta \cdot (\alpha - 1/2)}{P_{hL}}$; (2) $\min(\xi_h, \xi_\ell)$ is maximized when $\bph^1 = 1$, and (3) $\min(\xi_h, \xi_\ell)$ is maximized when $1 - \bpl^0 = 1$. 
\end{lem}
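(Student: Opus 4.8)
The plan is to prove the trichotomy by sweeping the feasible square $(\bph^1, 1-\bpl^0) \in [0,1]^2$ with the family of rays $1-\bpl^0 = t\,\bph^1$, $t \in [0,+\infty]$, exploiting the two sign formulas and the thresholds (call them $t_h$ and $t_\ell$) already derived immediately above the statement. The starting observation is that along each ray the signs of $\partial \xi_h/\partial \bph^1$ and of $\partial \xi_\ell/\partial \bph^1$ are independent of the position $\bph^1$, so each of $\xi_h,\xi_\ell$ is monotone along the whole ray: $\xi_h$ is non-decreasing iff $t \le t_h := \frac{1-\alpha-\gamma}{P_{\ell L}(\alpha-1/2)} - \frac{P_{hH}}{P_{\ell H}}$ and $\xi_\ell$ is non-decreasing iff $t \ge t_\ell := \frac{P_{hH}P_{\ell L}(\alpha-1/2)}{\gamma P_{hL} - (\alpha-1/2)P_{hH}P_{\ell L}}$, where the hypotheses $(1-\alpha-\gamma) > \frac{P_{hH}P_{\ell L}}{P_{\ell H}}(\alpha-\tfrac12)$ and $\gamma > \frac{P_{hH}P_{\ell L}}{P_{hL}}(\alpha-\tfrac12)$ guarantee that $t_h$ and $t_\ell$ are finite and strictly positive (the complementary regime being exactly what Proposition~\ref{prop:mid_biased_gamma_gen} disposes of).

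First I would record the anchor values at the origin, $\xi_h(0,0) = \frac{\Delta(\alpha-1/2)}{P_{\ell H}}$ and $\xi_\ell(0,0) = \frac{\Delta(\alpha-1/2)}{P_{hL}}$, and note that $P_{\ell H} \ge P_{hL}$ (equivalent to the standing assumption $P_{hH} \le P_{\ell L}$) yields $\xi_h(0,0) \le \xi_\ell(0,0) = \frac{\Delta(\alpha-1/2)}{P_{hL}}$. Consequently, on any ray with $t \ge t_h$ the function $\xi_h$ is non-increasing, so $\min(\xi_h,\xi_\ell) \le \xi_h \le \xi_h(0,0) \le \frac{\Delta(\alpha-1/2)}{P_{hL}}$ everywhere on that ray; symmetrically, on any ray with $t \le t_\ell$ the function $\xi_\ell$ is non-increasing, giving $\min \le \xi_\ell \le \xi_\ell(0,0) = \frac{\Delta(\alpha-1/2)}{P_{hL}}$. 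The vertical edge $\bph^1 = 0$ is not reached by finite-slope rays and is treated separately using $\partial \xi_h/\partial(1-\bpl^0) < 0$, which again bounds $\min$ by $\xi_h(0,0) \le \frac{\Delta(\alpha-1/2)}{P_{hL}}$ there.

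The case split then turns on comparing $t_h$ and $t_\ell$. If $t_\ell \ge t_h$, the intervals $[0,t_\ell]$ and $[t_h,+\infty]$ cover all slopes, so $\min \le \frac{\Delta(\alpha-1/2)}{P_{hL}}$ holds on the entire square and conclusion (1) follows. If $t_\ell < t_h$, the only rays not yet controlled are the ``middle'' ones with $t_\ell < t < t_h$; on each of these both $\xi_h$ and $\xi_\ell$ are strictly increasing, so $\min(\xi_h,\xi_\ell)$ is increasing along the ray and is maximized at its far endpoint, which lies on the edge $\bph^1 = 1$ when $t \le 1$ and on the edge $1-\bpl^0 = 1$ when $t \ge 1$. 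I would finish by taking a global maximizer $p^\ast$ of $\min(\xi_h,\xi_\ell)$ over the square: if its value is $\le \frac{\Delta(\alpha-1/2)}{P_{hL}}$ we are in case (1); otherwise $p^\ast$ cannot lie on a non-middle ray (where $\min$ is already below this value), hence lies on a middle ray, and by the monotonicity just established the far endpoint of that ray attains at least the same value and therefore also the global maximum, placing an optimizer on $\bph^1 = 1$ or $1-\bpl^0 = 1$, i.e. case (2) or (3). Choosing $\gamma$ to realize the joint optimum over the admissible range lifts this fixed-$\gamma$ argument to the $\max_{\bph^1,\bpl^0,\gamma}$ form of the lemma.

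The routine part is the monotonicity bookkeeping, which is already packaged in the two sign/threshold identities. The main obstacle is the final gluing step: converting per-ray monotonicity into a statement about the global maximizer. The care needed is to certify that any ray on which $\min$ could exceed $\frac{\Delta(\alpha-1/2)}{P_{hL}}$ is necessarily a both-increasing middle ray — which hinges on the correct ordering $\xi_h(0,0) \le \xi_\ell(0,0)$ of the origin values — and then to argue that the maximum on such a ray migrates to a boundary edge, so that an optimizer above the threshold is always found on $\bph^1 = 1$ or $1 - \bpl^0 = 1$.
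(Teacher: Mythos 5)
Your proposal is correct and follows essentially the same route as the paper's proof: sweeping rays $1-\bpl^0 = t\cdot\bph^1$, using the sign identities to get per-ray monotonicity of $\xi_h$ and $\xi_\ell$ with the thresholds $t_h$ and $t_\ell$, bounding $\min(\xi_h,\xi_\ell)$ by the origin values $\xi_h(0,0)\le\xi_\ell(0,0)=\frac{\Delta(\alpha-1/2)}{P_{hL}}$ on non-middle rays, and pushing the maximum on both-increasing middle rays to the edges $\bph^1=1$ or $1-\bpl^0=1$. Your treatment is in fact slightly more careful than the paper's in spelling out the vertical edge $\bph^1=0$ and the final gluing step from per-ray monotonicity to the location of the global maximizer, which the paper leaves implicit.
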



\begin{proof}
    We consider a pair of $\bph^1$ and $1 - \bpl^0$ other than (0,0). Let $t = \frac{1 - \bpl^0}{\bph^1}$.
    \begin{enumerate}
        \item If $t \ge  \frac{1 - \alpha -\gamma}{P_{\ell L}\cdot (\alpha - 1/2)} - \frac{P_{hH}}{P_{\ell H}}$, there is $\frac{\partial \xi_h(\bph^1, t\cdot \bph^1)}{\partial \bph^1} \le 0$ which implies $\xi_h(\bph^1, 1 - \bpl^0) < \xi_h(0, 0)$.
        \item If $t \le \frac{P_{hH} \cdot P_{\ell L} \cdot (\alpha - \frac12)}{(\gamma \cdot P_{hL} - (\alpha - \frac12)\cdot P_{hH} \cdot P_{\ell L})}$, there is $\frac{\partial \xi_\ell(\bph^1, t\cdot \bph^1)}{\partial \bph^1} \le 0$ which implies $\xi_\ell(\bph^1, 1 - \bpl^0) < \xi_\ell(0, 0)$.
        \item Finally, suppose $\frac{P_{hH} \cdot P_{\ell L} \cdot (\alpha - \frac12)}{(\gamma \cdot P_{hL} - (\alpha - \frac12)\cdot P_{hH} \cdot P_{\ell L})} <t <\frac{1 - \alpha -\gamma}{P_{\ell L}\cdot (\alpha - 1/2)} - \frac{P_{hH}}{P_{\ell H}}$. In this case, $\frac{\partial \xi_h(\bph^1, t\cdot \bph^1)}{\partial \bph^1} > 0$ and $\frac{\partial \xi_\ell(\bph^1, t\cdot \bph^1)}{\partial \bph^1} >0$. Therefore, if $\bph^1 < 0$ and $(1 - \bpl^0) < 1$, We consider the pair $\bph^1 + \delta\le 1$ and $(1 - \bpl^0) + t\cdot \delta \le 1$, where $\delta > 0$ is a small constant. Then we have $\xi_h(\bph^1, 1 - \bpl^0) < \xi_h(\bph^1 + \delta, 1 - \bpl^0 + t\cdot \delta)$ and $\xi_\ell(\bph^1, 1 - \bpl^0) < \xi_\ell(\bph^1 + \delta, 1 - \bpl^0 + t\cdot \delta)$. 
    \end{enumerate} 
    If the third case exists, $\min(\xi_h, \xi_{\ell})$ is maximized on either $\bph^1 = 1$ or $1 - \bpl^0 = 1$. Otherwise, $\min(\xi_h, \xi_{\ell})$ cannot exceed $\frac{\Delta \cdot (\alpha - 1/2)}{P_{hL}}$. 
\end{proof}

Combining Proposition~\ref{prop:mid_biased_gamma_gen} and Lemma~\ref{lem:slope_gen}, we have the following result. 


\begin{propnb}{prop:border_cases}
    At least one of the following holds. (1) $\max_{\bph^1, \bpl^0, \gamma} \min(\xi_h, \xi_\ell) \le \frac{\Delta \cdot (\alpha - 1/2)}{P_{hL}}$; (2) $\min(\xi_h, \xi_\ell)$ is maximized when $\bph^1 = 1$, and (3) $\min(\xi_h, \xi_\ell)$ is maximized when $1 - \bpl^0 = 1$. 
\end{propnb}

In the rest of the proof, we will deal the three cases in Proposition~\ref{prop:border_cases}. Case (1) always lead to $\frac{\Delta\cdot (\alpha - 1/2)}{P_{\ell H}}$. Next, we show that in case (3), $\xi$ will upper bounded by either $\frac{\Delta\cdot (\alpha - 1/2)}{P_{hL}}$ (Proposition~\ref{prop:seg3_h}) or $\frac12 \Delta \alpha$ (Proposition~\ref{prop:seg5_h}). Finally, we deal with the most difficult case (2) and discover the non-linear segment  (Lemma~\ref{lem:seg4_upper_fix} and Lemma~\ref{lem:seg4_upper_max}). 

\subsection{Segment 2 to 4: Case $1 - \bpl^0 = 1$. }

Now we are ready to show that, for case (3) $1- \bpl^0 = 1$, when $\frac{\infdif}{1 - \thdmaj} \ge \alpha \ge \frac{1}{1 + P_{\ell L}}$, $\frac{\Delta\cdot (\alpha - 1/2)}{P_{hL}}$ is the upper bound of $\xi$. 

\begin{propnb}{prop:seg3_h} When $(1 - \alpha - \gamma) > \frac{P_{hH}\cdot P_{\ell L}}{P_{\ell H}} ( \alpha - \frac12)$, $\gamma >\frac{P_{hH}\cdot P_{\ell L}}{P_{hL}} ( \alpha - \frac12)$, and $\frac{\infdif}{1 - \thdmaj} \ge \alpha \ge \frac{1}{1 + P_{\ell L}}$, if $1- \bpl^0 = 1$ holds, either $\xi_h$ or $\xi_{\ell}$ cannot exceed $\frac{\Delta(\alpha - \frac12)}{P_{hL}}$.
\end{propnb}


\begin{proof}
    This proof discusses on different case on $\gamma$. Firstly, consider the special case where $\gamma= \frac{1 - (P_{\ell L} + P_{\ell H})\cdot\alpha}{2}$. In this case, and when $\bph^1 = (1 - \bpl^0) = 1$, $\xi_h = \xi_{\ell} = \frac{1}{2}\Delta \alpha$, which is smaller than $\frac{\Delta(\alpha - \frac12)}{P_{hL}}$ when $\alpha \ge \frac{1}{1 + P_{\ell}}$. Now we consider case when $\bph^1 < 1$. We know that when $(1 - \alpha - \gamma) > \frac{P_{hH}\cdot P_{\ell L}}{P_{\ell H}} ( \alpha - \frac12)$ and $\gamma >\frac{P_{hH}\cdot P_{\ell L}}{P_{hL}} ( \alpha - \frac12)$, $\xi_h$ is increasing on $\bph^1$ and decreasing on $(1 - \bpl^0)$, and $\xi_\ell$ is increasing on $(1 - \bpl^0)$ and decreasing on $\bph^1$. Therefore, for all $\bph^1 < 1$, $\xi_h$ decreases and is less than $ \frac{\Delta(\alpha - \frac12)}{P_{hL}}$. 

    Then we consider the case when $\gamma\neq \frac{1 - (P_{\ell L} + P_{\ell H})\cdot\alpha}{2}$. Note that $\xi_h$ decreases and $\xi_\ell$ increases when $\gamma$ increases. Therefore, when $\gamma > \frac{1 - (P_{\ell L} + P_{\ell H})\cdot\alpha}{2}$, $\xi_h$ is even smaller compared to itself when $\gamma= \frac{1 - (P_{\ell L} + P_{\ell H})\cdot\alpha}{2}$, under every same pair of $\bph^1$ and $1 - \bpl^0$. Therefore, in this case, $\xi_h \le \frac12 \Delta\alpha < \frac{\Delta(\alpha - \frac12)}{P_{hL}}$.

    Now we consider $\gamma <\frac{1 - (P_{\ell L} + P_{\ell H})\cdot\alpha}{2}$.  We will find the $\bph^1$ such that $\xi_h(\bph^1, 1) = \frac{\Delta(\alpha - \frac12)}{P_{hL}}$ and $\xi_\ell(\bph^1, 1) = \frac{\Delta(\alpha - \frac12)}{P_{hL}}$ respectively, denoted as $\bph^{1h}$ and $\bph^{1\ell}$. Then we will show that $ \bph^{1h} \ge \bph^{1\ell}$. Given that $\xi_h$ is increasing and $\xi_\ell$ is decreasing on $\bph^1$, this implies that no $\bph^1$ satisfies $\min(\xi_h(\bph^1, 1), \xi_\ell (\bph^1, 1) )> \frac{\Delta(\alpha - \frac12)}{P_{hL}}$. 

    Here we give the explicit form of $\bph^{1h}$ and $\bph^{1\ell}$. 
    \begin{align*}
        \bph^{1h} =&\  \frac{(\alpha - 1/2)\cdot (P_{\ell H}P_{\ell L} + P_{\ell H} - P_{hL})}{P_{hL}\cdot (1- \alpha - \gamma) - (\alpha - 1/2)P_{hH}P_{\ell L}}.\\
        \bph^{1\ell} =&\ \frac{\gamma \cdot P_{hL} - (\alpha - 1/2) P_{hH}P_{\ell L}}{(\alpha - 1/2)\cdot P_{hH}P_{hL}}. 
    \end{align*}

    Note that when $P_{hL}\cdot (1- \alpha - \gamma) - (\alpha - 1/2)P_{hH}P_{\ell L} \le 0$, for any $0 < \bph^1 < 1$, $\xi_h < \frac{\Delta(\alpha - \frac12)}{P_{hL}}$, and the statement holds. Now we consider the case when $P_{hL}\cdot (1- \alpha - \gamma) - (\alpha - 1/2)P_{hH}P_{\ell L} >0$. Note that 

    \begin{align*}
       &\  \frac{(\alpha - 1/2)\cdot (P_{\ell H}P_{\ell L} + P_{\ell H} - P_{hL})}{P_{hL}\cdot (1- \alpha - \gamma) - (\alpha - 1/2)P_{hH}P_{\ell L}} \ge   \frac{\gamma \cdot P_{hL} - (\alpha - 1/2) P_{hH}P_{\ell L}}{(\alpha - 1/2)\cdot P_{hH}P_{hL}}\\
       \Leftrightarrow &\ P_{hL}^2 \cdot \gamma \cdot(1 - \alpha - \gamma)- P_{hL}(\alpha -\frac12) \cdot P_{hH}P_{\ell L} \cdot (\gamma + 1 - \alpha - \gamma) + (\alpha - \frac12)^2 P_{hH}^2 P_{\ell L}^2 \\
       &\ \le (\alpha - \frac12)^2 P_{hH}P_{\ell L}P_{hL}P_{\ell H} + (\alpha - \frac12)^2 \cdot P_{hH} P_{hL} \cdot (P_{\ell H} - P_{hL})\\
       \Leftrightarrow &\ P_{hL}^2 \cdot \gamma \cdot(1 - \alpha - \gamma)- P_{hL}(\alpha -\frac12) \cdot P_{hH}P_{\ell L} \cdot ( 1 - \alpha) + (\alpha - \frac12)^2 P_{hH}^2 P_{\ell L}^2 \\
       &\ \le (\alpha - \frac12)^2 P_{hH}P_{\ell L}P_{hL}P_{\ell H} + (\alpha - \frac12)^2 \cdot P_{hH} P_{hL} \cdot (P_{\ell H} - P_{hL}). 
    \end{align*}
    Note that the first term $ P_{hL}^2 \cdot \gamma \cdot(1 - \alpha - \gamma)$ decreases on $\gamma $ when $\gamma \le \frac12(1 - \alpha)$. Given the constraint that $\gamma  < \frac{1 - (P_{\ell L} + P_{\ell H})\cdot\alpha}{2} \le \frac12(1 - \alpha)$, it suffices to show that the inequality holds when $\gamma  = \frac{1 - (P_{\ell L} + P_{\ell H})\cdot\alpha}{2}$. Recall that when $\bph^1 = 1$, $\xi_h = \xi_\ell = \frac12 \Delta \alpha < \frac{\Delta(\alpha - \frac12)}{P_{hL}}$, $\xi_h$ increases on $\bph^1$, and $\xi_\ell$ decreases on $\bph^1$. Therefore, there must be $\bph^{1h} > \bph^{1\ell}$, which finishes the proof. 
\end{proof}

\begin{propnb}{prop:seg5_h} 
When $(1 - \alpha - \gamma) > \frac{P_{hH}\cdot P_{\ell L}}{P_{\ell H}} ( \alpha - \frac12)$, $\gamma >\frac{P_{hH}\cdot P_{\ell L}}{P_{hL}} ( \alpha - \frac12)$, and $ \frac{1}{1 + P_{\ell L}}\ge \alpha \ge \frac{1}2$, if $1- \bpl^0 = 1$ holds, either $\xi_h$ or $\xi_{\ell}$ cannot exceed $\frac12 \Delta \alpha$, and $\min(\xi_h, \xi_\ell)$ is maximized at $\frac12 \Delta \alpha$. 
\end{propnb}


\begin{proof}
    This proof resembles the proof of Proposition~\ref{prop:seg3_h}. First, when $\gamma= \frac{1 - (P_{\ell L} + P_{\ell H})\cdot\alpha}{2}$, and $\bph^1 = (1 - \bpl^0) = 1$, $\xi_h = \xi_{\ell} = \frac{1}{2}\Delta \alpha$. We know that when $(1 - \alpha - \gamma) > \frac{P_{hH}\cdot P_{\ell L}}{P_{\ell H}} ( \alpha - \frac12)$ and $\gamma >\frac{P_{hH}\cdot P_{\ell L}}{P_{hL}} ( \alpha - \frac12)$, $\xi_h$ is increasing on $\bph^1$ and decreasing on $(1 - \bpl^0)$, and $\xi_\ell$ is increasing on $(1 - \bpl^0)$ and decreasing on $\bph^1$. Therefore, for all $(1 - \bpl^0) = 1$ and $\bph^1 < 1$, $\xi_h < \frac{1}{2}\Delta \alpha$. When $\gamma >\frac{1 - (P_{\ell L} + P_{\ell H})\cdot\alpha}{2}$, $\xi_h$ decreases and is smaller than $\frac12 \Delta \alpha$. 

    Now we consider $\gamma <\frac{1 - (P_{\ell L} + P_{\ell H})\cdot\alpha}{2}$.  We will find the $\bph^1$ such that $\xi_h(\bph^1, 1) = \frac12 \Delta \alpha$ and $\xi_\ell(\bph^1, 1) = \frac12 \Delta \alpha$ respectively, denoted as $\bph^{1h}$ and $\bph^{1\ell}$. Then we will show that $ \bph^{1h} \ge \bph^{1\ell}$. Given that $\xi_h$ is increasing and $\xi_\ell$ is decreasing on $\bph^1$, this implies that no $\bph^1$ satisfies $\min(\xi_h(\bph^1, 1), \xi_\ell (\bph^1, 1)) > \frac12 \Delta \alpha$. 

    Here we give the explicit form of $\bph^{1h}$ and $\bph^{1\ell}$. 

    \begin{align*}
        \bph^{1h} = &\ \frac{\alpha\cdot (P_{\ell  H}P_{\ell L} + P_{\ell H} - 2) +1 }{2(1 - \alpha - \gamma) - \alpha \cdot P_{hH}P_{\ell L}}\\
        \bph^{1\ell} =&\  \frac{2\gamma - 1 + \alpha\cdot (2 - P_{hL} -P_{hH} P_{\ell L})}{\alpha \cdot P_{hH}P_{hL}} = \frac{2\gamma - 1 + \alpha\cdot (1 + P_{\ell L}P_{\ell H})}{\alpha \cdot P_{hH}P_{hL}}
    \end{align*}
     Note that given $\alpha \le \frac{1}{1+P_{\ell L}} < \frac{1}{1+P_{\ell L}P_{hH}}$ and $\gamma <\frac{1 - (P_{\ell L} + P_{\ell H})\cdot\alpha}{2} \le \frac{1-\alpha}{2}$, $2(1 - \alpha - \gamma) - \alpha \cdot P_{hH}P_{\ell L} > 2 - (1+P_{hH}P_{\ell L})\cdot \alpha > 0$. Now we show that $\bph^{1h} \ge \bph^{1\ell }$ holds. 
     
     \begin{align*}
         &\ \frac{\alpha\cdot (P_{\ell  H}P_{\ell L} + P_{\ell H} - 2) +1 }{2(1 - \alpha - \gamma) - \alpha \cdot P_{hH}P_{\ell L}} \ge \frac{2\gamma - 1 + \alpha\cdot (1 + P_{\ell L}P_{\ell H})}{\alpha \cdot P_{hH}P_{hL}}\\
         \Leftrightarrow&\ 4\gamma^2 - 2\gamma\cdot  (3(1 -\alpha) - \alpha \cdot P_{\ell L}) +2 (1 - \alpha)^2  - \alpha \cdot (1 - \alpha) \cdot P_{\ell L}\\
         &\ + \alpha\cdot (2\alpha - 1) \cdot (P_{\ell L} P_{\ell H} - P_{hH}P_{hL}) + \alpha^2 (P_{hH} P_{\ell H} - P_{\ell L} P_{hL}) \ge 0\\
         \Leftrightarrow&\ 4((1 - \alpha - \gamma) -\frac{1 - \alpha}{2})\cdot ((1 - \alpha - \gamma) -\frac{\alpha \cdot P_{\ell L}}{2})\\
         &\ + \alpha\cdot (2\alpha - 1) \cdot (P_{\ell L} P_{\ell H} - P_{hH}P_{hL}) + \alpha^2 (P_{hH} P_{\ell H} - P_{\ell L} P_{hL}) \ge 0. 
     \end{align*}
     We show that every term in the LHS is non-negative. 
     \begin{itemize}
         \item Since we assume $\gamma <\frac{1 - (P_{\ell L} + P_{\ell H})\cdot\alpha}{2}$, there is $1 - \alpha - \gamma  > \frac{1 - (P_{h L} + P_{h H})\cdot\alpha}{2} \ge  \frac{1-\alpha}{2}$. Moreover, when $\alpha < \frac{1}{1+P_{\ell L}}$, $\alpha\cdot P_{\ell L} < \frac{P_{\ell L}}{1+P_{\ell L}} < 1 - \alpha$. Therefore, the first term is non-negative. 
         \item For the second term, we have $\alpha > \frac12$, $P_{\ell L} \ge P_{hH}$, and $P_{\ell H} \ge P_{hL}$. Therefore, the second term is non-negative. 
         \item For the last term, given that $P_{hL} < P_{hH} \le P_{\ell L}$ and $P_{hH}+P_{\ell H} = P_{hL}+P_{\ell L} = 1$, there must be $P_{hH}P_{\ell H} \ge P_{\ell L}P_{hL}$. Therefore, the last term is non-negative. 
     \end{itemize}

     Therefore, we show that $\bph^{1h} \ge \bph^{1\ell}$, which finishes the proof. 
\end{proof}

\subsection{Segment 2 to 4: Case $\bph^1 = 1$. }

For now we start to deal with case (2) in Proposition~\ref{prop:border_cases} where $\bph^1 = 1$. This case contains a non-linear upper bound which cannot be easily characterized. The high-level idea is as follows. Firstly, note that $\xi_h$ decreases while $\xi_{\ell}$ increases on $\gamma$. Therefore, for a fixed $1 - \bpl^0$, there exists a unique $\gamma$ such that $\xi_h = \xi_\ell$. Moreover, if this gamma is valid, it maximized $\min(\xi_h, \xi_{\ell})$ for this fixed $\bpl^0$. Therefore, we first calculate this $\gamma$ and the corresponding $\xi_h$ and $\xi_\ell$ for an arbitrary fixed $1 - \bpl^0$ (Lemma~\ref{lem:seg4_upper_fix}). Then we maximize $\xi_h$ (also equals to $\xi_\ell$) on $1 - \bpl^0$ under the optimal $\gamma$ (Lemma~\ref{lem:seg4_upper_max}). This brings us an upper bound for case (2).  

\begin{lemnb}{lem:seg4_upper_fix} When $(1 - \alpha - \gamma) > \frac{P_{hH}\cdot P_{\ell L}}{P_{\ell H}} ( \alpha - \frac12)$, $\gamma >\frac{P_{hH}\cdot P_{\ell L}}{P_{hL}} ( \alpha - \frac12)$, and $ \frac{1}{2 P_{\ell L}}\ge \alpha \ge \frac{1}2$, for $\bph^1 = 1$ and any fixed $1 - \bpl^0$, let 
    \begin{equation*}
        \hat{\xi} = \frac{\Delta \cdot (\alpha - 1/2 + 1/2 \cdot (1 - \bpl^0))}{(P_{\ell L}\cdot (1 - \bpl^0) + P_{hL})\cdot (P_{\ell H}\cdot (1 - \bpl^0) + P_{hH} +1)}. 
    \end{equation*}
    Then, $\min(\xi_h, \xi_{\ell}) \le \hat{\xi}$ holds for all feasible $\gamma$. 
\end{lemnb}


\begin{proof}
    First, we claim that $\hat{\xi}$ is reached by fixing $\bph^1 = 1$ and $1 - \bpl^0$ and modifying $\gamma$ such that $\xi_h = \xi_{\ell}$. By solving the equation $\xi_h = \xi_{\ell}$ for $\gamma$, we get the solution
    \begin{equation*}
        \hat{\gamma} = \frac{\frac12 (P_{\ell L}\cdot (1 - \bpl^0) + P_{hL} +1) -\alpha \cdot (P_{hH}P_{\ell L} + P_{\ell H}P_{\ell L} \cdot (1-\bpl^0) +P_{\ell H})}{(P_{\ell L}\cdot (1 - \bpl^0) + P_{hL})\cdot (P_{\ell H}\cdot (1 - \bpl^0) + P_{hH} +1)}. 
    \end{equation*}
    Correspondingly, when $\gamma = \hat{\gamma}$, $\xi_h = \xi_\ell = \hat{\xi}$. 

    Now we show that $\hat{\xi}$ is a upper bound of $\min(\xi_h, \xi_{\ell})$. Recall that $\xi_h$ is decreasing on $\xi$ and $\xi_{\ell}$ is increasing on $\xi$. Therefore, for all $\gamma \ge \hat{\gamma}$, $\xi_h \le \hat{\xi}$; and for all $\gamma \le \hat{\gamma}$, $\xi_\ell \le \hat{\xi}$. 
\end{proof}

\begin{lem}
    \label{lem:seg4_upper_max} Let \begin{equation*}
        \hat{\xi} = \frac{\Delta \cdot (\alpha - 1/2 + 1/2 \cdot (1 - \bpl^0))}{(P_{\ell L}\cdot (1 - \bpl^0) + P_{hL})\cdot (P_{\ell H}\cdot (1 - \bpl^0) + P_{hH} +1)}. 
    \end{equation*}
    For $(1 - \bpl^0) \in [0, 1]$, 
    \begin{itemize}
        \item when $\alpha \ge \frac{1 + P_{hL} + P_{hH}P_{\ell L}}{2(P_{hH}P_{\ell L} + P_{hL} P_{\ell H} +P_{\ell L})}$, $\max_{(1 - \bpl^0)} \hat{\xi} = \frac{\Delta(\alpha - 1/2)}{P_{hL}\cdot (P_{hH} + 1)}$ with $\arg \max_{(1 - \bpl^0)} \hat{\xi}  = 0$;
        \item when $\frac{1 + P_{hL} + P_{hH}P_{\ell L}}{2(P_{hH}P_{\ell L} + P_{hL} P_{\ell H} +P_{\ell L})} > \alpha > \frac{1}{1+P_{\ell L} + (P_{\ell L} - P_{hH})}$, $\max_{(1 - \bpl^0)} \hat{\xi} = \xinl$, and the maximum $\arg \max_{(1 - \bpl^0)} \hat{\xi} \in (0, 1)$;
        \item when $\alpha \le  \frac{1}{1+P_{\ell L} + (P_{\ell L} - P_{hH})}$, $\max_{(1 - \bpl^0)} \hat{\xi} = \frac12 \Delta \alpha$ with $\arg \max_{(1 - \bpl^0)} \hat{\xi} = 1$. 
    \end{itemize}
\end{lem}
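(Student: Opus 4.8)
The plan is to reduce the statement to a one-variable optimization by setting $x := 1-\bpl^0 \in [0,1]$ and writing $\hat{\xi}(x) = u(x)/v(x)$, where $u(x) = \Delta(\alpha-\tfrac12+\tfrac12 x)$ is affine and $v(x) = (P_{\ell L}x+P_{hL})(P_{\ell H}x+P_{hH}+1) = c x^2 + b x + d$ with $c=P_{\ell L}P_{\ell H}$, $b=P_{\ell L}(P_{hH}+1)+P_{hL}P_{\ell H}$, and $d=P_{hL}(P_{hH}+1)$. Since $v>0$ on $[0,1]$, the sign of $\hat{\xi}'$ equals the sign of $N(x):=u'(x)v(x)-u(x)v'(x)$. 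I would expand $N$ and observe that the middle terms cancel, leaving the downward quadratic $N(x) = -\tfrac{\Delta}{2}c\,x^2 - 2\Delta(\alpha-\tfrac12)c\,x + \big(\tfrac{\Delta}{2}d-\Delta(\alpha-\tfrac12)b\big)$. Its vertex sits at $x_v = 1-2\alpha < 0$ because $\alpha>\tfrac12$, so $N$ is strictly decreasing on $[0,1]$; hence $N$ changes sign at most once there, from $+$ to $-$, and $\hat{\xi}$ is unimodal on $[0,1]$. Consequently the maximizer is either the unique interior zero $x^\ast$ of $N$ or an endpoint, determined entirely by the signs of $N(0)$ and $N(1)$.

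Next I would locate the regime boundaries. Both $N(0)$ and $N(1)$ are strictly decreasing in $\alpha$, so solving $N(0)=0$ and $N(1)=0$ pins down the thresholds. Using $P_{hH}+P_{\ell H}=1$ and $P_{hL}+P_{\ell L}=1$, the equation $N(0)=0$ simplifies (after clearing $\Delta$) to $\alpha = \tfrac12 + \frac{P_{hL}(P_{hH}+1)}{2(P_{hH}P_{\ell L}+P_{hL}P_{\ell H}+P_{\ell L})}$, which collapses to $\alpha_1 := \frac{1+P_{hL}+P_{hH}P_{\ell L}}{2(P_{hH}P_{\ell L}+P_{hL}P_{\ell H}+P_{\ell L})}$; and $N(1)=0$ gives $\alpha = 1/v'(1)$ where $v'(1)=2c+b=P_{\ell H}+2P_{\ell L}$, i.e. $\alpha_2 := \frac{1}{1+P_{\ell L}+(P_{\ell L}-P_{hH})}$. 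After checking $\alpha_1>\alpha_2$ under $P_{hH}\le P_{\ell L}$, monotonicity in $\alpha$ produces the trichotomy of sign patterns $(\operatorname{sgn}N(0),\operatorname{sgn}N(1))$: for $\alpha\ge\alpha_1$ it is $(-,-)$, so $N<0$ on $[0,1]$ and the maximum is at $x=0$; for $\alpha_2<\alpha<\alpha_1$ it is $(+,-)$, so the interior zero $x^\ast$ lies in $(0,1)$ and the maximum is interior; for $\alpha\le\alpha_2$ it is $(+,+)$, so $N>0$ throughout and the maximum is at $x=1$.

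It remains to evaluate $\hat{\xi}$ at the three maximizers. The endpoints are immediate: $\hat{\xi}(0)=\frac{\Delta(\alpha-1/2)}{P_{hL}(P_{hH}+1)}$, and since $v(1)=c+b+d=(P_{\ell L}+P_{hL})(P_{\ell H}+P_{hH}+1)=2$ we get $\hat{\xi}(1)=\frac{\Delta\alpha}{2}$, matching the two linear branches. For the interior branch I would not substitute $x^\ast$ into $u/v$ directly; instead, at the critical point $N(x^\ast)=0$ yields $u(x^\ast)/v(x^\ast)=u'/v'(x^\ast)$, so $\hat{\xi}(x^\ast)=\frac{\Delta/2}{v'(x^\ast)}$. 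Taking $x^\ast$ to be the larger root of $N$ and using $v'(x)=2c\,x+b$, the quantity $2v'(x^\ast)$ splits into a rational part and a surd: the rational part is $2(2c+b)-8\alpha c = 2P_{\ell H}+4P_{\ell L}-8\alpha P_{\ell L}P_{\ell H}$ by the identity $2c+b=v'(1)$, while the discriminant under the surd, using $c+b+d=v(1)=2$, reduces to $2P_{\ell L}P_{\ell H}(1-\alpha P_{\ell H})(1-2\alpha P_{\ell L})$. Together these reconstruct exactly the denominator of $\xinl$ in Equation~\eqref{eq:xinl}, giving $\hat{\xi}(x^\ast)=\xinl$.

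The main obstacle is the interior case: tracking which root of $N$ is the maximizer and carrying the radical through until it matches the prescribed closed form $\xinl$. The two reductions $2c+b=P_{\ell H}+2P_{\ell L}$ (namely $v'(1)$) and $c+b+d=2$ (namely $v(1)$) are precisely what make this tractable, turning an otherwise opaque substitution into the stated surd. The residual work is verifying that the discriminant indeed collapses to the stated product and confirming $\alpha_1>\alpha_2$; both are routine but tedious polynomial manipulations that lean on $P_{hH}+P_{\ell H}=P_{hL}+P_{\ell L}=1$ and the standing assumption $P_{hH}\le P_{\ell L}$.
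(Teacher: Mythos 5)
Your proposal is correct and follows essentially the same route as the paper's proof: both reduce the problem to the sign of $u'v-uv'$, observe that this is a downward-opening quadratic in $1-\bpl^0$ whose vertex at $1-2\alpha<0$ makes it strictly decreasing on $[0,1]$, and read off the three regimes from the endpoint signs, with the interior root producing $\xinl$. Your evaluation of the interior maximum via the critical-point identity $u/v=u'/v'$ (giving $\hat{\xi}(x^*)=\frac{\Delta/2}{v'(x^*)}$), together with the identities $v(1)=2$ and $v'(1)=2P_{\ell L}+P_{\ell H}$, is a clean way to carry out the substitution step that the paper leaves implicit, and it correctly reproduces both thresholds and the closed form of $\xinl$.
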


\begin{proof}
    Consider the derivative of $\hat{\xi}$ on $(1 - \bpl^0)$. 

    \begin{align*}
        \sign(\frac{\partial \hat{\xi}}{\partial (1 - \bpl^0)}) =&\  \sign( (P_{\ell L}\cdot (1 - \bpl^0) + P_{hL})\cdot (P_{\ell H}\cdot (1 - \bpl^0) + P_{hH} +1) \\
        &\ - (\alpha - 1/2 + 1/2 \cdot (1 - \bpl^0)) \\
        &\ \ \cdot (P_{\ell L}\cdot (P_{\ell H}\cdot (1 - \bpl^0) + P_{hH} +1) + P_{\ell H} \cdot  (P_{\ell L}\cdot (1 - \bpl^0) + P_{hL}))\\
        =&\ \sign(-\frac12 P_{\ell L} P_{\ell H} \cdot (1- \bpl^0)^2 -(2\alpha - 1) P_{\ell L}P_{\ell H}\cdot (1-\bpl^0) \\
        &\ - (\alpha -\frac12) \cdot (P_{hH}P_{\ell L} + P_{hL} P_{\ell H} +P_{\ell L}) + \frac12(P_{hL}P_{hH} + P_{hL})). 
    \end{align*}

    Denote the right-hand side 
    \begin{align*}
        \func(1 - \bpl^0) =&\ -\frac12 P_{\ell L} P_{\ell H} \cdot (1- \bpl^0)^2 -(2\alpha - 1) P_{\ell L}P_{\ell H}\cdot (1-\bpl^0) \\
        &\ - (\alpha -\frac12) \cdot (P_{hH}P_{\ell L} + P_{hL} P_{\ell H} +P_{\ell L}) + \frac12(P_{hL}P_{hH} + P_{hL})
    \end{align*}

    We have the following observations. Firstly, $\func(1 - \bpl^0)$ is decreasing on $(1- \bpl^0)$ for $(1 - \bpl^0) \in [0, 1]$. Secondly, when $1 - \bpl^0 = 0$, 
    \begin{align*}
       \func(1 - \bpl^0) = &\  - (\alpha -\frac12) \cdot (P_{hH}P_{\ell L} + P_{hL} P_{\ell H} +P_{\ell L}) + \frac12(P_{hL}P_{hH} + P_{hL})\\
        =&\ \frac12\cdot (1 + P_{hL} + P_{hH}P_{\ell L}) - \alpha \cdot (P_{hH}P_{\ell L} + P_{hL} P_{\ell H} +P_{\ell L})
    \end{align*}
    Thirdly, when $1 - \bpl^0 = 1$
    \begin{align*}
        \func(1 - \bpl^0)= &\  -\frac12 P_{\ell L} P_{\ell H} -(2\alpha - 1) P_{\ell L}P_{\ell H}\\
        &\ - (\alpha -\frac12) \cdot (P_{hH}P_{\ell L} + P_{hL} P_{\ell H} +P_{\ell L}) + \frac12(P_{hL}P_{hH} + P_{hL})\\
        =&\ 1 - \alpha \cdot (1 + P_{\ell L} +(P_{\ell L} - P_{hH})). 
    \end{align*}

    Therefore, when $\alpha \ge \frac{1 + P_{hL} + P_{hH}P_{\ell L}}{2(P_{hH}P_{\ell L} + P_{hL} P_{\ell H} +P_{\ell L})}$, $\sign(\frac{\partial \hat{\xi}}{\partial (1 - \bpl^0)}) \le 0$ for all $(1 - \bpl^0) \in [0, 1]$, and $\hat{\xi}$ is maximized at $(1 - \bpl^0) = 0$ with $\max_{(1 - \bpl^0)} \hat{\xi} = \frac{\Delta(\alpha - 1/2)}{P_{hL}\cdot (P_{hH} + 1)}$. 
    
    When $\alpha \le  \frac{1}{1+P_{\ell L} + (P_{\ell L} - P_{hH})}$, $\sign(\frac{\partial \hat{\xi}}{\partial (1 - \bpl^0)}) \ge 0$ for all $(1 - \bpl^0) \in [0, 1]$, and $\hat{\xi}$ is maximized at $(1 - \bpl^0) = 1$ with $\max_{(1 - \bpl^0)} \hat{\xi} = \frac12 \Delta \alpha$. 
    
    Finally, when $\frac{1 + P_{hL} + P_{hH}P_{\ell L}}{2(P_{hH}P_{\ell L} + P_{hL} P_{\ell H} +P_{\ell L})} > \frac{1}{1+P_{\ell L} + (P_{\ell L} - P_{hH})}$, $\frac{\partial \hat{\xi}}{\partial (1 - \bpl^0)}$ is positive when $(1 - \bpl^0) = 0$ and negative when $(1 - \bpl^0) = 1$. Therefore, $\hat{\xi}$ is maximized when $(1 - \bpl^0)$ satisfies $\func(1 - \bpl^0) = 0$, which is 
    \begin{equation*}
        (1 - \bpl^0) = (1 - 2\alpha) + \frac{\sqrt{2\cdot (1 - \alpha \cdot P_{\ell H})\cdot (1 - 2\alpha \cdot P_{\ell L}) \cdot P_{\ell L}P_{\ell H}}}{P_{\ell L} \cdot P_{\ell H}}. 
    \end{equation*}

    Note that $\func(1 - \bpl^0) = 0$ is a quadratic equation. By the monotonicity of $\func$ and the intermediate value theorem, a real-number solution is guaranteed. On the other hand, the other solution of the equation is out of $[0, 1]$. 
    Therefore, this is the only solution in $[0, 1]$ and is guaranteed to be in $[0, 1]$. 
    
    The corresponding $\hat{\xi}$ is exactly $\xinl$. 
\end{proof}

\subsection{Segment 2 to 4: Wrap up the Upper Bounds.}

So far, we have shown that the upper and the lower bound of $\xi$ only derive from one of the four cases. (1) When $\gamma$ is biased (either type-0 agents or type-1 agents dominates the minority group). (2) When $1 - \bpl^0 = 1$ (all type-0 agents always vote for $\bR$). (3) When $\bph^1 = 1$ (all type-1 agents always vote for $\bA$). (4) All other cases, upper-bounded with $\frac{\Delta(\alpha-1/2)}{P_{hL}}$ by Proposition~\ref{prop:border_cases}. Each case has its own upper bound, and cases (1), (2), and (3) also have lower bounds. 

Therefore, the last step is to integrate all the cases and reach a global upper and lower bound for each Segment. For the upper bound, we take the maximum of the upper bound for each case. In each case, there does not exist a strategy profile satisfying all three constraints with $\xi$ equal to their corresponding upper bound. Therefore, overall, no strategy profiles can satisfy all three constraints with $\xi$ equal to the maximum of all upper bounds. For the lower bound, we show that the upper bound in each segment binds a tight lower bound. As a consequence, the bounds consist of the threshold curve $\xi^*(\alpha)$, which finishes the proof.




Now we start to integrate the upper bound of each case. 
\begin{enumerate}
    \item When $\gamma$ is biased, the upper bound is $\frac{\Delta(\alpha - 1/2)}{P_{hL}}$ by Proposition~\ref{prop:biasedgamma_gen} along Segment 2, 3, and 4.
    \item When $1 - \bpl^0 = 1$, the upper bound is given by the maximum of $\frac{\Delta(\alpha - 1/2)}{P_{hL}}$ and $\frac12\Delta\alpha$ by Proposition~\ref{prop:seg3_h} and~\ref{prop:seg5_h}. 
    \item When $\bph^1 = 0$, the upper  bound is no more than $\frac{\Delta(\alpha - 1/2)}{P_{hL}}$ in Segment 2, $\xinl$ in Segment 3, and $\frac12\Delta\alpha$ in Segment 4 by Proposition~\ref{lem:seg4_upper_fix} and~\ref{lem:seg4_upper_max}. 
    \item In all other cases, the upper bound is at most $\frac{\Delta(\alpha - 1/2)}{P_{hL}}$ by Proposition~\ref{prop:border_cases}.
\end{enumerate}
The order of value of these upper bounds varies in different segments, of which the full proof is in Appendix~\ref{apx:lemmas}. In Segment 2, $\frac{\Delta(\alpha - 1/2)}{P_{hL}}$ is the largest. In Segment 3, $\xinl$ is the largest. Finally, in Segment 4, $\frac12 \Delta \alpha \ge \frac{\Delta(\alpha - 1/2)}{P_{hL}}$. Therefore, taking the maximum on each segment, the upper bound for $\xi$ is $\frac{\Delta(\alpha - 1/2)}{P_{hL}}$ in Segment 2, $\xinl$ in Segment 3, and $\frac12 \Delta \alpha$ in Segment 4 (if exists). This is exactly the upper bound given in $\xi^*(\alpha)$.

Here we give the formal proof for the upper bounds. 

\begin{prop}[Segment 2]
    \label{prop:seg3upper}
    When $\frac{1}{2P_{\ell L}} > \alpha \ge \anl$, no  (sequence of) strategy profiles can satisfies all three conditions for $\xi = \frac{\Delta\cdot (\alpha - 1/2)}{P_{hL}}$. 
\end{prop}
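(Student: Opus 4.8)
The plan is to reduce the claim to the single inequality $\xi' := \max_{\bph^1,\bpl^0,\gamma}\min(\xi_h,\xi_\ell) \le \frac{\Delta(\alpha-1/2)}{P_{hL}}$, valid for every $\alpha$ with $\frac{1}{2P_{\ell L}} > \alpha \ge \anl$. Once this is established, Lemma~\ref{lem:xiupper} gives the proposition immediately: the threshold appearing there, $\max\!\left(\xi', \frac{\Delta(\alpha-1/2)}{P_{hL}}\right)$, collapses to $\frac{\Delta(\alpha-1/2)}{P_{hL}}$, and Lemma~\ref{lem:xiupper} already absorbs the biased-$\gamma$ contribution through Proposition~\ref{prop:biasedgamma_gen}. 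Thus the whole argument is about bounding the ``double-symmetric'' maximum $\xi'$, and no new equilibrium construction or variance analysis is needed here.

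To bound $\xi'$ I would invoke Proposition~\ref{prop:border_cases}, which leaves three possibilities. In the first, $\xi' \le \frac{\Delta(\alpha-1/2)}{P_{hL}}$ outright and there is nothing to do. In the remaining two the maximum is attained at a boundary of the strategy space, either $1-\bpl^0=1$ or $\bph^1=1$, and I would cap each boundary maximum separately. Throughout, whenever the $\gamma$-monotonicity conditions $(1-\alpha-\gamma) > \frac{P_{hH}P_{\ell L}}{P_{\ell H}}(\alpha-\tfrac12)$ and $\gamma > \frac{P_{hH}P_{\ell L}}{P_{hL}}(\alpha-\tfrac12)$ fail, Proposition~\ref{prop:mid_biased_gamma_gen} already caps $\min(\xi_h,\xi_\ell)$ at $\frac{\Delta(\alpha-1/2)}{P_{hL}}$, so it suffices to treat the regime where those conditions hold.

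The key preparatory observation, which I would record first, is the ordering $\anl \ge \frac{1}{1+P_{\ell L}} \ge \frac{1}{1+P_{\ell L}+(P_{\ell L}-P_{hH})}$; the first inequality is Lemma~\ref{lem:a4lower} and the second follows from $P_{\ell L}\ge P_{hH}$. Hence on all of Segment~2 we have $\alpha \ge \frac{1}{1+P_{\ell L}}$, which is exactly the hypothesis needed for the boundary $1-\bpl^0=1$: Proposition~\ref{prop:seg3_h} then yields $\min(\xi_h,\xi_\ell)\le\frac{\Delta(\alpha-1/2)}{P_{hL}}$ there. For the boundary $\bph^1=1$, Lemma~\ref{lem:seg4_upper_fix} reduces the maximum to $\max_{1-\bpl^0}\hat{\xi}$ and Lemma~\ref{lem:seg4_upper_max} evaluates it. The same ordering rules out the small-$\alpha$ regime of Lemma~\ref{lem:seg4_upper_max} (whose value $\frac12\Delta\alpha$ is precisely the too-large quantity that would create Segment~4), leaving only the large regime, where the value $\frac{\Delta(\alpha-1/2)}{P_{hL}(P_{hH}+1)}$ is strictly below the line since $P_{hH}+1>1$, and the middle regime, where the value is exactly $\xinl$ and $\xinl \le \frac{\Delta(\alpha-1/2)}{P_{hL}}$ holds for $\alpha \ge \anl$ by Lemma~\ref{lem:a4threshold}. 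Combining the three cases of Proposition~\ref{prop:border_cases} gives $\xi' \le \frac{\Delta(\alpha-1/2)}{P_{hL}}$.

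The substantive difficulty has already been discharged by the cited lemmas; the only genuinely delicate point in this step is the bookkeeping that pins $\alpha$ into the correct regime of Lemma~\ref{lem:seg4_upper_max}. The crucial danger is the small-$\alpha$ regime, where $\max_{1-\bpl^0}\hat{\xi} = \frac12\Delta\alpha$ can exceed $\frac{\Delta(\alpha-1/2)}{P_{hL}}$; ruling it out on Segment~2 rests entirely on the threshold comparison $\anl \ge \frac{1}{1+P_{\ell L}+(P_{\ell L}-P_{hH})}$ above, and the endpoint $\alpha=\anl$, where $\xinl$ meets the line, must be handled through the equality case of Lemma~\ref{lem:a4threshold}. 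I would therefore treat that comparison and the boundary $\alpha=\anl$ with the most care, as everything else follows by assembling the invoked results.
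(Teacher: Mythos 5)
Your proposal is correct and follows essentially the same route as the paper's proof: reduce via Lemma~\ref{lem:xiupper} to bounding $\max\min(\xi_h,\xi_\ell)$, dispose of the degenerate-$\gamma$ regime with Proposition~\ref{prop:mid_biased_gamma_gen}, split by Proposition~\ref{prop:border_cases}, and cap the two boundary cases with Proposition~\ref{prop:seg3_h} (using Lemma~\ref{lem:a4lower} for range coverage) and with Lemmas~\ref{lem:seg4_upper_fix}, \ref{lem:seg4_upper_max}, and \ref{lem:a4threshold}. The only difference is organizational bookkeeping in the $\bph^1=1$ case — you pin the regimes of Lemma~\ref{lem:seg4_upper_max} via the ordering $\anl \ge \frac{1}{1+P_{\ell L}} \ge \frac{1}{1+P_{\ell L}+(P_{\ell L}-P_{hH})}$, while the paper instead shows $\anl$ lies below the large-regime threshold $\frac{1+P_{hL}+P_{hH}P_{\ell L}}{2(P_{hH}P_{\ell L}+P_{hL}P_{\ell H}+P_{\ell L})}$ and splits Segment~2 accordingly — which does not change the substance of the argument.
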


\begin{proof}
    By Lemma~\ref{lem:xiupper}, it suffices to show that $\min(\xi_h, \xi_{\ell}) \le  \frac{\Delta\cdot (\alpha - 1/2)}{P_{hL}}$ for all possible $\bph^1$, $1 - \bpl^0$, and $\gamma$. By Proposition~\ref{prop:mid_biased_gamma_gen}, when $(1 - \alpha - \gamma) \le \frac{P_{hH}\cdot P_{\ell L}}{P_{\ell H}} ( \alpha - \frac12)$ or $\gamma \le \frac{P_{hH}\cdot P_{\ell L}}{P_{hL}} ( \alpha - \frac12)$ holds, $\min(\xi_h, \xi_{\ell}) \le  \frac{\Delta\cdot (\alpha - 1/2)}{P_{hL}}$. For the rest of the proof, we assume $(1 - \alpha - \gamma) > \frac{P_{hH}\cdot P_{\ell L}}{P_{\ell H}} ( \alpha - \frac12)$ and $\gamma > \frac{P_{hH}\cdot P_{\ell L}}{P_{hL}} ( \alpha - \frac12)$. Then it suffices to show that in all three cases in Proposition~\ref{prop:border_cases}, $\min(\xi_h, \xi_{\ell}) \le   \frac{\Delta\cdot (\alpha - 1/2)}{P_{hL}}$.

    For case (1) where $\bph^1 = 1 - \bpl^0 = 0$, $\xi_h = \frac{\Delta\cdot (\alpha - 1/2)}{P_{\ell H}} < \frac{\Delta\cdot (\alpha - 1/2)}{P_{hL}}$ and $\xi_\ell = \frac{\Delta\cdot (\alpha - 1/2)}{P_{hL}}$. The statement holds.

    For case (2) where $\bph^1 = 1$, we apply Lemma~\ref{lem:seg4_upper_fix} and Lemma~\ref{lem:seg4_upper_max}.
    Firstly, Lemma~\ref{lem:seg4_upper_fix} and Lemma~\ref{lem:seg4_upper_max} characterize that, for $\alpha \ge \frac{1 + P_{hL} + P_{hH}P_{\ell L}}{2(P_{hH}P_{\ell L} + P_{hL} P_{\ell H} +P_{\ell L})}$, $\min(\xi_h, \xi_{\ell}) \le \frac{\Delta(\alpha - 1/2)}{P_{hL}\cdot (P_{hH} + 1)} < \frac{\Delta\cdot (\alpha - 1/2)}{P_{hL}}$; and when $\frac{1 + P_{hL} + P_{hH}P_{\ell L}}{2(P_{hH}P_{\ell L} + P_{hL} P_{\ell H} +P_{\ell L})} > \alpha > \frac{1}{1+P_{\ell L} + (P_{\ell L} - P_{hH})}$, $\min(\xi_h, \xi_{\ell}) \le\xinl$. Secondly, Lemma~\ref{lem:a4threshold} implies that when $\frac{1}{2P_{\ell L}} \ge \alpha > \anl$, $\xinl \le \frac{\Delta\cdot (\alpha - 1/2)}{P_{hL}}$. Finally, Lemma~\ref{lem:seg4_upper_max} and Lemma~\ref{lem:a4threshold} together shows that that $\anl < \frac{1 + P_{hL} + P_{hH}P_{\ell L}}{2(P_{hH}P_{\ell L} + P_{hL} P_{\ell H} +P_{\ell L})}$. This is because, when $\alpha = \anl$, $\gamma_4 = \frac{\Delta\cdot (\alpha - 1/2)}{P_{hL}}$, while when $\alpha \ge \frac{1 + P_{hL} + P_{hH}P_{\ell L}}{2(P_{hH}P_{\ell L} + P_{hL} P_{\ell H} +P_{\ell L})}$, $\xinl \le \frac{\Delta(\alpha - 1/2)}{P_{hL}\cdot (P_{hH} + 1)} < \frac{\Delta\cdot (\alpha - 1/2)}{P_{hL}}$. Combining these three facts, we know that, 
    \begin{enumerate}
        \item When $\frac{1}{2P_{\ell L}} \ge \alpha \ge \min (\frac{1}{2P_{\ell L}}, \frac{1 + P_{hL} + P_{hH}P_{\ell L}}{2(P_{hH}P_{\ell L} + P_{hL} P_{\ell H} +P_{\ell L})})$, $\min(\xi_h, \xi_{\ell}) \le \frac{\Delta(\alpha - 1/2)}{P_{hL}\cdot (P_{hH} + 1)} < \frac{\Delta\cdot (\alpha - 1/2)}{P_{hL}}$.
        \item When $\min (\frac{1}{2P_{\ell L}}, \frac{1 + P_{hL} + P_{hH}P_{\ell L}}{2(P_{hH}P_{\ell L} + P_{hL} P_{\ell H} +P_{\ell L})}) > \alpha \ge \anl$, $\min(\xi_h, \xi_{\ell})\le \xinl \le \frac{\Delta\cdot (\alpha - 1/2)}{P_{hL}}$.
    \end{enumerate}
    Together this shows that $\min(\xi_h, \xi_{\ell}) \le \frac{\Delta\cdot (\alpha - 1/2)}{P_{hL}}$ when $\frac{1}{2P_{\ell L}} > \alpha \ge \anl$.

    For case (3) where $1 - \bpl^0 = 1$, Proposition~\ref{prop:seg3_h} guarantees that $\min(\xi_h, \xi_{\ell}) \le \frac{\Delta\cdot (\alpha - 1/2)}{P_{hL}}$. Lemma~\ref{lem:a4lower} guarantees that the range of $\alpha$ in Proposition~\ref{prop:seg3_h} covers $\frac{1}{2P_{\ell L}} > \alpha \ge \anl$. 

    Therefore, in all three cases, $\min(\xi_h, \xi_{\ell}) \le   \frac{\Delta\cdot (\alpha - 1/2)}{P_{hL}}$. This finishes the proof. 
\end{proof}

\begin{prop}[Segment 4]
    \label{prop:seg5upper}
    When $ \frac{1}{1+P_{\ell L} + (P_{\ell L} - P_{hH})} \ge \frac12$ and $ \frac{1}{1+P_{\ell L} + (P_{\ell L} - P_{hH})} \ge \alpha \ge \frac12$, no (sequence of) strategy profiles can satisfies all three conditions for $\xi = \frac12 \Delta \alpha$. 
\end{prop}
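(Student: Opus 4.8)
The plan is to reduce the statement to Lemma~\ref{lem:xiupper}, exactly as the Segment~2 upper bound (Proposition~\ref{prop:seg3upper}) does. By that lemma, no profile can satisfy all three constraints with $\xi = \max\bigl(\xi',\ \frac{\Delta(\alpha-1/2)}{P_{hL}}\bigr)$, where $\xi' = \max_{\bph^1,\bpl^0,\gamma}\min(\xi_h,\xi_\ell)$. So it suffices to show two things: first, that $\frac{\Delta(\alpha-1/2)}{P_{hL}} \le \frac12\Delta\alpha$ throughout the Segment~4 interval; and second, that $\xi' \le \frac12\Delta\alpha$. The first is a one-line computation: $\frac{\Delta(\alpha-1/2)}{P_{hL}} \le \frac12\Delta\alpha$ iff $\alpha(2-P_{hL}) \le 1$, i.e. $\alpha \le \frac{1}{1+P_{\ell L}}$ (using $1-P_{hL}=P_{\ell L}$); and since $P_{\ell L}-P_{hH}\ge 0$ we have $\frac{1}{1+P_{\ell L}+(P_{\ell L}-P_{hH})} \le \frac{1}{1+P_{\ell L}}$, so the inequality holds for every $\alpha$ in the stated range. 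Once $\xi' \le \frac12\Delta\alpha$ is proved, the lower-bound construction of Section~\ref{sec:bph1} (with $\bph^1 = 1-\bpl^0 = 1$ and $\gamma = \frac{1-(P_{\ell L}+P_{\ell H})\alpha}{2}$, giving $\xi_h=\xi_\ell=\frac12\Delta\alpha$) shows $\xi' \ge \frac12\Delta\alpha$, hence $\xi' = \frac12\Delta\alpha$ and $\max\bigl(\xi',\frac{\Delta(\alpha-1/2)}{P_{hL}}\bigr) = \frac12\Delta\alpha$, so Lemma~\ref{lem:xiupper} delivers the proposition.

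To bound $\xi'$ I would split on whether $\gamma$ is biased. When $(1-\alpha-\gamma)\le\frac{P_{hH}P_{\ell L}}{P_{\ell H}}(\alpha-\frac12)$ or $\gamma\le\frac{P_{hH}P_{\ell L}}{P_{hL}}(\alpha-\frac12)$, Proposition~\ref{prop:mid_biased_gamma_gen} already gives $\min(\xi_h,\xi_\ell)\le\frac{\Delta(\alpha-1/2)}{P_{hL}}\le\frac12\Delta\alpha$. Otherwise both non-biasedness inequalities hold, and I invoke Proposition~\ref{prop:border_cases}, which reduces the maximization to three cases. Case~(1) gives $\max\min\le\frac{\Delta(\alpha-1/2)}{P_{hL}}\le\frac12\Delta\alpha$ directly. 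For case~(3), $1-\bpl^0=1$, Proposition~\ref{prop:seg5_h} applies on $\frac{1}{1+P_{\ell L}}\ge\alpha\ge\frac12$, an interval containing the Segment~4 range, and yields $\min(\xi_h,\xi_\ell)\le\frac12\Delta\alpha$. For case~(2), $\bph^1=1$, I combine Lemma~\ref{lem:seg4_upper_fix} (which bounds $\min(\xi_h,\xi_\ell)\le\hat\xi$ for every feasible $\gamma$ at fixed $1-\bpl^0$, the domain hypothesis $\frac{1}{2P_{\ell L}}\ge\alpha\ge\frac12$ being satisfied since $\frac{1}{1+P_{\ell L}+(P_{\ell L}-P_{hH})}\le\frac{1}{2P_{\ell L}}$) with the third bullet of Lemma~\ref{lem:seg4_upper_max}, which states that for $\alpha\le\frac{1}{1+P_{\ell L}+(P_{\ell L}-P_{hH})}$ the maximum of $\hat\xi$ over $1-\bpl^0$ is $\frac12\Delta\alpha$, attained at $1-\bpl^0=1$. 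Hence $\min(\xi_h,\xi_\ell)\le\frac12\Delta\alpha$ in this case as well. Collecting the three cases gives $\xi'\le\frac12\Delta\alpha$, which is the remaining ingredient.

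Because the genuinely difficult estimates (Proposition~\ref{prop:seg5_h} and Lemma~\ref{lem:seg4_upper_max}) are already in hand, this proof is largely assembly, and I expect the only real care to lie in the range bookkeeping: verifying that the Segment~4 interval $[\frac12,\ \frac{1}{1+P_{\ell L}+(P_{\ell L}-P_{hH})}]$ sits inside the domains of each supporting result, and that the standing hypothesis $\frac{1}{1+P_{\ell L}+(P_{\ell L}-P_{hH})}\ge\frac12$ makes that interval nonempty so Segment~4 genuinely occurs. The one conceptual subtlety to track is that in Segment~4 the ordering of the two competing linear bounds $\frac{\Delta(\alpha-1/2)}{P_{hL}}$ and $\frac12\Delta\alpha$ flips relative to Segment~2, so the binding upper bound is now $\frac12\Delta\alpha$; the computation $\alpha\le\frac{1}{1+P_{\ell L}}$ is what pins this down, and it is precisely what lets every sub-case bound be matched against $\frac12\Delta\alpha$ rather than against $\frac{\Delta(\alpha-1/2)}{P_{hL}}$.
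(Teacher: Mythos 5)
Your proposal is correct and follows essentially the same route as the paper's proof: both reduce via Lemma~\ref{lem:xiupper}, Proposition~\ref{prop:mid_biased_gamma_gen}, and Proposition~\ref{prop:border_cases} to the three boundary cases, dispatch case $\bph^1=1$ with Lemmas~\ref{lem:seg4_upper_fix} and~\ref{lem:seg4_upper_max}, case $1-\bpl^0=1$ with Proposition~\ref{prop:seg5_h}, and use $\frac{\Delta(\alpha-1/2)}{P_{hL}} \le \frac12\Delta\alpha$ for $\alpha \le \frac{1}{1+P_{\ell L}}$. Your additional step establishing $\xi' = \frac12\Delta\alpha$ exactly (via the construction with $\bph^1 = 1-\bpl^0 = 1$) is a small refinement that lets Lemma~\ref{lem:xiupper} apply at $\xi = \frac12\Delta\alpha$ literally, where the paper implicitly relies on monotonicity of non-existence in $\xi$; otherwise the arguments coincide.
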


\begin{proof}
    Note that when $\alpha \le \frac{1}{1 + P_{\ell L}}$, $\frac12 \Delta \alpha \ge \frac{\Delta\cdot (\alpha - 1/2)}{P_{hL}}$. By Lemma~\ref{lem:xiupper}, Proposition~\ref{prop:mid_biased_gamma_gen},  and Proposition~\ref{prop:border_cases}, it suffices to show that in all three cases in Proposition~\ref{prop:border_cases}, $\min(\xi_h, \xi_{\ell}) \le   \frac12 \Delta \alpha$. 

    For case (1) where $\bph^1 = 1 - \bpl^0 = 0$, $\xi_h = \frac{\Delta\cdot (\alpha - 1/2)}{P_{\ell H}} < \frac{\Delta\cdot (\alpha - 1/2)}{P_{hL}}$ and $\xi_\ell = \frac{\Delta\cdot (\alpha - 1/2)}{P_{hL}}$. When $\alpha \le \frac1{1+P_{\ell }}$, $\frac{\Delta\cdot (\alpha - 1/2)}{P_{hL}} \le \frac12 \Delta\alpha$. 

    For case (2) where $\bph^1 = 1$, by Lemma~\ref{lem:seg4_upper_fix} and Lemma~\ref{lem:seg4_upper_max}, for any $\bph^1 = 1$ and $\bpl^0 \in [0, 1]$, $\min(\xi_h, \xi_{\ell}) \le \frac12 \Delta\alpha$. 

    For case (3) where $1 - \bpl^0 = 1$, Proposition~\ref{prop:seg5_h} guarantees that $\min(\xi_h, \xi_{\ell}) \le \frac12 \Delta\alpha$. 

    Therefore, in all three cases, $\min(\xi_h, \xi_{\ell}) \le   \frac12 \Delta\alpha$. This finishes the proof.
\end{proof}

\begin{prop}[Segment 3]
    \label{prop:seg4upper}
    When $ \anl \ge \alpha \ge \max(\frac{1}{1+P_{\ell L} + (P_{\ell L} - P_{hH})}, \frac12)$, no (sequence of) strategy profiles can satisfies all three conditions for $\xi = \xinl$. 
\end{prop}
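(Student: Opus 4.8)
The plan is to reduce Proposition~\ref{prop:seg4upper} entirely to the machinery already assembled for Segments 2--4 and then certify that, throughout the Segment 3 range, $\xinl$ dominates every competing bound. By Lemma~\ref{lem:xiupper}, no profile can satisfy all three constraints once $\xi$ reaches $\max\bigl(\xi',\frac{\Delta(\alpha-1/2)}{P_{hL}}\bigr)$, where $\xi'=\max_{\bph^1,\bpl^0,\gamma}\min(\xi_h,\xi_\ell)$. Since being a $\xi\ag$-strong equilibrium is a monotonically stronger requirement as $\xi$ grows, it suffices to establish the single inequality $\xinl \ge \max\bigl(\xi',\frac{\Delta(\alpha-1/2)}{P_{hL}}\bigr)$: if no profile works at the smaller threshold, none works at $\xi=\xinl$ either.

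First I would dispatch the term $\frac{\Delta(\alpha-1/2)}{P_{hL}}$. Lemma~\ref{lem:a4threshold} gives $\xinl \ge \frac{\Delta(\alpha-1/2)}{P_{hL}}$ exactly on $\anl \ge \alpha > \frac12$, which contains the Segment 3 range. It then remains to bound $\xi'$, and I invoke Proposition~\ref{prop:border_cases}, which says $\xi'$ is attained in one of three cases. Case (1) directly yields $\xi' \le \frac{\Delta(\alpha-1/2)}{P_{hL}} \le \xinl$, and the biased-$\gamma$ subcases hiding inside cases (2) and (3) are handled identically through Proposition~\ref{prop:mid_biased_gamma_gen}, again producing $\le \frac{\Delta(\alpha-1/2)}{P_{hL}} \le \xinl$.

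The substantive work is case (2), $\bph^1=1$. Here Lemma~\ref{lem:seg4_upper_fix} bounds $\min(\xi_h,\xi_\ell)\le\hat{\xi}$ for every feasible $\gamma$ and fixed $1-\bpl^0$, and Lemma~\ref{lem:seg4_upper_max} computes $\max_{1-\bpl^0}\hat{\xi}$. The crux is to certify that the entire Segment 3 interval lies in the \emph{middle} regime of Lemma~\ref{lem:seg4_upper_max}, where this maximum equals precisely $\xinl$: the lower endpoint $\frac{1}{1+P_{\ell L}+(P_{\ell L}-P_{hH})}$ is built into the hypothesis of the proposition, while $\alpha \le \anl < \frac{1+P_{hL}+P_{hH}P_{\ell L}}{2(P_{hH}P_{\ell L}+P_{hL}P_{\ell H}+P_{\ell L})}$ (the comparison established alongside Proposition~\ref{prop:seg3upper}) handles the upper endpoint. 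This gives $\min(\xi_h,\xi_\ell)\le\hat{\xi}\le\xinl$. Finally, case (3), $1-\bpl^0=1$, splits at $\frac{1}{1+P_{\ell L}}$: Proposition~\ref{prop:seg3_h} covers $\alpha \ge \frac{1}{1+P_{\ell L}}$ with bound $\frac{\Delta(\alpha-1/2)}{P_{hL}}\le\xinl$, and Proposition~\ref{prop:seg5_h} covers $\frac12 \le \alpha \le \frac{1}{1+P_{\ell L}}$ with bound $\frac12\Delta\alpha$. The one comparison I would not take for granted is $\frac12\Delta\alpha \le \xinl$ on Segment 3; I would verify it by evaluating $\hat{\xi}$ at $1-\bpl^0=1$ (numerator $\Delta\alpha$, denominator $(P_{\ell L}+P_{hL})(P_{\ell H}+P_{hH}+1)=2$), giving $\frac12\Delta\alpha=\hat{\xi}(1)\le\max_{1-\bpl^0}\hat{\xi}=\xinl$ in the middle regime. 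Collecting the three cases yields $\xi'\le\xinl$ and closes the reduction.

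The main obstacle I anticipate is not any single computation but the bookkeeping of $\alpha$-ranges: ensuring that the Segment 3 interval $\bigl[\max(\frac{1}{1+P_{\ell L}+(P_{\ell L}-P_{hH})},\frac12),\,\anl\bigr]$ simultaneously sits inside the non-linear regime of Lemma~\ref{lem:seg4_upper_max} and inside the validity windows of Propositions~\ref{prop:seg3_h} and~\ref{prop:seg5_h} and of Lemma~\ref{lem:a4threshold}, so that each invoked bound genuinely collapses to $\xinl$ rather than to a spuriously larger quantity at the segment boundaries. Getting the boundary behaviour at $\alpha=\frac{1}{1+P_{\ell L}+(P_{\ell L}-P_{hH})}$ right---where $\xinl$ and $\frac12\Delta\alpha$ coincide---is the delicate gluing point.
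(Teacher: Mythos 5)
Your proposal is correct and follows essentially the same route as the paper's proof: reduce via Lemma~\ref{lem:xiupper} and Proposition~\ref{prop:border_cases} to the three cases, handle case (1) with Lemma~\ref{lem:a4threshold}, case (2) with Lemmas~\ref{lem:seg4_upper_fix} and~\ref{lem:seg4_upper_max}, and case (3) with Propositions~\ref{prop:seg3_h} and~\ref{prop:seg5_h} plus the comparisons $\frac{\Delta(\alpha-1/2)}{P_{hL}} \le \xinl$ and $\frac12\Delta\alpha \le \xinl$ on Segment 3. Your extra bookkeeping (the explicit monotonicity-in-$\xi$ reduction, the check $\anl < \frac{1+P_{hL}+P_{hH}P_{\ell L}}{2(P_{hH}P_{\ell L}+P_{hL}P_{\ell H}+P_{\ell L})}$, and the evaluation $\hat{\xi}(1)=\frac12\Delta\alpha$) only makes explicit steps the paper leaves implicit.
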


\begin{proof}
    Note that when $\alpha < \anl$, $\xinl > \frac{\Delta\cdot (\alpha - 1/2)}{P_{hL}}$. By Lemma~\ref{lem:xiupper}, Proposition~\ref{prop:mid_biased_gamma_gen}, and Proposition~\ref{prop:border_cases}, it suffices to show that in all three cases in Proposition~\ref{prop:border_cases}, $\min(\xi_h, \xi_{\ell}) \le   \xinl$. 

    For case (1) where $\bph^1 = 1 - \bpl^0 = 0$, $\xi_h = \frac{\Delta\cdot (\alpha - 1/2)}{P_{\ell H}} < \frac{\Delta\cdot (\alpha - 1/2)}{P_{hL}}$ and $\xi_\ell = \frac{\Delta\cdot (\alpha - 1/2)}{P_{hL}}$. By Lemma~\ref{lem:a4threshold}, for $\anl \ge \alpha> \frac12$, $\frac{\Delta\cdot (\alpha - 1/2)}{P_{hL}} \le \xinl$. 

    For case (2) where $\bph^1 = 1$, by Lemma~\ref{lem:seg4_upper_fix} and Lemma~\ref{lem:seg4_upper_max}, for any $\bph^1 = 1$ and $\bpl^0 \in [0, 1]$, $\min(\xi_h, \xi_{\ell}) \le \xinl$. 

    For case (3) where $1 - \bpl^0 = 1$, Proposition~\ref{prop:seg3_h} guarantees that for $\frac{1}{2P_{\ell L}} \ge \alpha > \frac{1}{1+P_{\ell L}}$, $\min(\xi_h, \xi_{\ell}) \le \frac{\Delta\cdot (\alpha - 1/2)}{P_{hL}}$, and Proposition~\ref{prop:seg5_h} guarantees that $\min(\xi_h, \xi_{\ell}) \le \frac12 \Delta\alpha$. By Lemma~\ref{lem:a4threshold}, for $\anl \ge \alpha> \frac12$, $\frac{\Delta\cdot (\alpha - 1/2)}{P_{hL}} \le \xinl$. Lemma~\ref{lem:seg4_upper_max} implies that when $\alpha \ge \frac{1}{1+P_{\ell L} + (P_{\ell L} - P_{hH})}$, $\xinl \ge \frac12 \Delta\alpha$. Therefore, for all $ \anl \ge \alpha \ge \max(\frac{1}{1+P_{\ell L} + (P_{\ell L} - P_{hH})}, \frac12)$, $\min(\xi_h, \xi_{\ell}) \le \xinl$. 

    Therefore, in all three cases, $\min(\xi_h, \xi_{\ell}) \le   \xinl$. This finishes the proof.
\end{proof}

Before we wrap up the lower bounds, it remains to prove the lower bounds for Segment 3 and 4, which consists of the following subsection

\subsection{Lower bound for Segment 3 and Segment 4.}
We will start from the simpler Segment 4. 
\begin{prop}
    \label{prop:seg5lower}
    When $\frac{1}{1 + P_{\ell L}} \ge \alpha \ge \frac12$, the following strategy $\stgp$ profile satisfies the constraints in Theorem~\ref{thm:thresholdk} with $\xi^* = \frac12 \cdot \Delta \cdot \alpha$: all the majority agents report informatively, a fraction $\gamma^* = \frac{1 - (P_{\ell L} + P_{\ell H})\cdot\alpha}{2}$ of minority agents vote for $\bR$, and the other minority agents ($1 - \alpha -\gamma^* = \frac{1 - (P_{hL} + P_{hH})\cdot\alpha}{2}$ ) always vote for $\bR$. 
\end{prop}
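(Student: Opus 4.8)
The plan is to obtain this lower bound as a direct instantiation of Lemma~\ref{lem:xilower} at the extreme parameter choice $\bph^1 = 1$, $1-\bpl^0 = 1$ (equivalently $\bpl^0 = 0$), and $\gamma = \gamma^* = \frac{1-(P_{\ell L}+P_{\ell H})\alpha}{2}$. Under this choice a type-$0$ minority agent always votes $\bR$ (since $\bph^0 = \bpl^0 = 0$) and a type-$1$ minority agent always votes $\bA$ (since $\bpl^1 = \bph^1 = 1$), so the fraction $\gamma^*$ of $\bR$-voters and the fraction $1-\alpha-\gamma^* = \frac{1-(P_{hL}+P_{hH})\alpha}{2}$ of $\bA$-voters recover exactly the profile in the statement, with the majority voting informatively. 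It then remains to verify that the value $\min(\xi_h,\xi_\ell)$ produced by these parameters equals $\frac12\Delta\alpha$ and that the feasibility hypotheses of Lemma~\ref{lem:xilower} hold.

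First I would substitute the parameters into the closed forms (\ref{eq:xih}) and (\ref{eq:xil}); using $P_{hH}+P_{\ell H}=P_{hL}+P_{\ell L}=1$ this yields $\xi_h = \xi_\ell = \frac12\Delta\alpha$, the identity already recorded when the corner $\bph^1 = 1-\bpl^0 = 1$, $\gamma = \gamma^*$ was introduced. Equivalently, and more transparently, the informative majority gives $\thd_H = \frac12 + \frac{\Delta\alpha}{2}$ and $\thd_L = \frac12 - \frac{\Delta\alpha}{2}$, so the two admissible margins $\thd_H - \frac12$ and $\frac12 - \thd_L$ are perfectly balanced. A deviating group consisting of a $\xi$ fraction of type-$0$ agents switching to $\bA$ raises $\thd_L$ by exactly $\xi$, while a $\xi$ fraction of type-$1$ agents switching to $\bR$ lowers $\thd_H$ by exactly $\xi$; hence for any $\xi < \frac12\Delta\alpha$ both deviating vote shares stay strictly on the correct side of $\frac12$ by a constant independent of $\ag$. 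By Theorem~\ref{thm:arbitrary} the fidelity of the profile and of every such deviation converges to $1$, and by Lemma~\ref{lem:acctoeq} together with Lemma~\ref{lem:minority_deviate} no minority deviation is profitable, giving an $\varepsilon$-equilibrium with $\varepsilon\to 0$.

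The remaining, and only genuinely computational, step is to check the feasibility conditions $\gamma^* > \frac12\Delta\alpha$ and $1-\alpha-\gamma^* > \frac12\Delta\alpha$ of Lemma~\ref{lem:xilower}, which guarantee that each of the two candidate deviating groups can be drawn entirely from a single minority type. Substituting $\gamma^* = \frac{1-(P_{\ell L}+P_{\ell H})\alpha}{2}$ and $\Delta = P_{hH}+P_{\ell L}-1$ and simplifying with $P_{hH}+P_{\ell H}=1$, the first inequality reduces to $\alpha < \frac{1}{2P_{\ell L}}$; likewise, using $P_{hL}+P_{\ell L}=1$, the second reduces to $\alpha < \frac{1}{2P_{hH}}$. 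Both are implied by the segment hypothesis $\alpha \le \frac{1}{1+P_{\ell L}}$ and the standing assumption $P_{hH}\le P_{\ell L}$, since $\frac{1}{1+P_{\ell L}} < \frac{1}{2P_{\ell L}} \le \frac{1}{2P_{hH}}$. I expect this feasibility check to be the main (though mild) obstacle: it is precisely where the right endpoint $\frac{1}{1+P_{\ell L}}$ of Segment~4 is used, ensuring we remain in the regime where Lemma~\ref{lem:xilower}, rather than the biased-$\gamma$ bound of Proposition~\ref{prop:biasedgamma_gen}, is the operative tool, and where the perfectly balanced margins make $\frac12\Delta\alpha$ the binding threshold.
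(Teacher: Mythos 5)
Your proposal is correct and takes essentially the same approach as the paper: the paper's own proof is exactly your "more transparent" route — computing $\thd_H = \frac12 + \frac12\Delta\alpha$ and $\thd_L = \frac12 - \frac12\Delta\alpha$ for the informative-majority profile, noting that a $\xi < \frac12\Delta\alpha$ fraction of minority deviators can move each expected vote share by at most $\xi$, and closing with Theorem~\ref{thm:arbitrary}, Lemma~\ref{lem:acctoeq}, and Lemma~\ref{lem:minority_deviate} — together with the same feasibility check (your reductions to $\alpha < \frac{1}{2P_{\ell L}}$ and $\alpha < \frac{1}{2P_{hH}}$ are the correct algebra, in fact cleaner than the paper's own manipulation). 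One precision: invoking Lemma~\ref{lem:xilower} alone only yields existence of \emph{some} equilibrium sequence (its construction need not make the majority vote informatively), so it is your direct margin argument, not the lemma instantiation, that proves the proposition for the specific profile stated.
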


\begin{proof}
    Given that $\gamma^* \le (1 - \alpha)/2$, it suffices to show that $\gamma^* \ge \xi^*$. Note that this equivalents to $\alpha \le \frac{1}{P_{\ell L} +P_{\ell H} + \Delta} = \frac{1}{P_{hH} +P_{\ell L}}$. Given that $\frac{1}{P_{hH} +P_{\ell L}} \ge \frac{1}{2P_{\ell L}}$, this holds for all $\frac{1}{1 + P_{\ell L}} \ge \alpha \ge \frac12$. 

    Now we show that all three constraints are satisfies. 
    Firstly, by Lemma 1, no agents play weakly dominated strategies. For the second constraint of fidelity, we calculated the expected vote share for $\bA$ under two world states respectively. 
    \begin{align*}
        \thd_H =&\ \alpha \cdot P_{hH} + \frac{1 - (P_{hL} + P_{hH})\cdot\alpha}{2} = \frac{1}{2} + \frac{1}{2} \cdot \Delta \cdot \alpha \\
        \thd_L =&\ \alpha\cdot P_{hL} + \frac{1 - (P_{hL} + P_{hH})\cdot\alpha}{2} = \frac{1}{2} - \frac{1}{2} \cdot \Delta \cdot \alpha.
    \end{align*}
    Therefore, $\thd_H > 0.5$ and $\thd_L < 0.5$. Moreover, $\thd_H$ and $\thd_L$ does not depend on $\ag$. Therefore, $\liminf \sqrt{\ag} \cdot (\thd_H - \frac12) = + \infty$ and $\liminf \sqrt{\ag} \cdot (\frac12 - \thd_L) = + \infty$. 
    By applying the first case of Theorem~\ref{thm:arbitrary}, the fidelity of $\stgp_\ag$ converges to 1. 

    Now we show that $\stgp$ is an equilibrium with $\xi^* = \frac12 \cdot \Delta \cdot \alpha$. By Lemma~\ref{lem:acctoeq}, it suffices to consider a deviating group with only minority agents. For any $\xi < \frac{1}{2}\cdot \Delta \cdot \alpha$, $\thd_H > \frac12 + \xi$ and $\thd_L < \frac12 - \xi$. Therefore, for any group with a fraction $\xi$ of deviators and any deviating strategy, $\thd_H' > \frac12$ and $\thd_L < \frac12$. Likewise, applying the first case of Theorem~\ref{thm:arbitrary} and Lemma~\ref{lem:minority_deviate}, the fidelity still converges to 1 after any deviation, and the expected utility gain of minority agents converges to 0. Therefore, the strategy profile proposed is an $\varepsilon$ equilibrium with $\varepsilon$ converges to 0. 
\end{proof}

Then we give the lower bound for Segment 3.

\begin{prop}
    \label{prop:seg4lower} When $\anl > \alpha \ge (\frac12, \frac{1}{1+P_{\ell L} +(P_{\ell L} - P_{hH})})$, there exists a (sequence of) strategy profiles that satisfies the constraints with $\xi = \xinl$. 
\end{prop}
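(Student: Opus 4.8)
The plan is to produce explicit parameters $(\bph^1, \bpl^0, \gamma)$ whose associated threshold $\min(\xi_h, \xi_\ell)$ equals $\xinl$, and then to invoke Lemma~\ref{lem:xilower} to conclude that for every $\xi < \xinl$ a valid sequence of strategy profiles exists. Lemma~\ref{lem:seg4_upper_max} shows that, in the regime covered by the proposition, the quantity $\hat\xi$ from Lemma~\ref{lem:seg4_upper_fix} attains its maximum over $1-\bpl^0$ at the interior point
\[
  1-\bpl^0 = (1-2\alpha) + \frac{\sqrt{2(1-\alpha P_{\ell H})(1-2\alpha P_{\ell L})P_{\ell L}P_{\ell H}}}{P_{\ell L}P_{\ell H}} \in (0,1),
\]
and that this maximum is exactly $\xinl$. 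I would therefore set $\bph^1 = 1$, let $1-\bpl^0$ be this maximizer, and set $\gamma = \gamma^* := \hat{\gamma}$, the value from Lemma~\ref{lem:seg4_upper_fix} that equalizes $\xi_h = \xi_\ell = \hat\xi = \xinl$. At these parameters $\min(\xi_h, \xi_\ell) = \xinl$, so it remains only to check the feasibility hypotheses of Lemma~\ref{lem:xilower}.

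The hypotheses $\bph^1, \bpl^0 \in [0,1]$ are immediate. The substance of the argument is verifying the two capacity conditions $\gamma^* \ge \xinl$ and $1 - \alpha - \gamma^* \ge \xinl$, which guarantee that a deviating group of size $\xinl \cdot \ag$ cannot exhaust either the type-0 or the type-1 minority agents. I would handle both via a single monotonicity reduction. As $\alpha$ decreases across Segment~3, the maximizing $1-\bpl^0$ increases toward $1$, so with $\bph^1 = 1$ held fixed the minority strategy becomes increasingly balanced; consequently the equalizing fraction $\gamma^*$ tends to $\frac{1-\alpha}{2}$, so both $\gamma^*$ and $1-\alpha-\gamma^*$ grow while $\xinl$ shrinks. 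It therefore suffices to verify the two inequalities at the most demanding endpoint $\alpha = \anl$. There Lemma~\ref{lem:a4threshold} gives the simplification $\xinl = \frac{\Delta(\anl - 1/2)}{P_{hL}}$, reducing $\gamma^* \ge \xinl$ to a concrete algebraic inequality in the signal parameters. For the second condition I would first show $\gamma^* \le \frac{1-\alpha}{2}$ (type-0 agents are the smaller subgroup), so that $1-\alpha-\gamma^* \ge \gamma^*$, whence $1-\alpha-\gamma^* \ge \gamma^* \ge \xinl$ follows for free; this replaces the awkward direct bound on $1-\alpha-\gamma^*$ by the easier pair $\{\gamma^* \le \frac{1-\alpha}{2},\ \gamma^* \ge \xinl\}$.

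I expect the main obstacle to be the endpoint verification $\gamma^* \ge \xinl$ at $\alpha = \anl$: substituting the closed forms of $\gamma^* = \hat{\gamma}$, the interior maximizer $1-\bpl^0$, and $\anl$ produces a multi-term polynomial inequality in $P_{hH}, P_{\ell L}$ under the constraints $P_{hL} < P_{hH} \le P_{\ell L}$ and $P_{hH}+P_{\ell H} = P_{hL}+P_{\ell L} = 1$. As in the proofs of Lemmas~\ref{lem:a4upper}, \ref{lem:a4lower}, and~\ref{lem:a4threshold}, the cleanest route is to fix $P_{\ell L}$, treat the expression as a function of $P_{hH}$, and certify positivity by checking it (together with its relevant derivatives) at the boundary cases $P_{hH} = P_{hL}$ and $P_{hH} = P_{\ell L}$. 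The monotonicity step and the auxiliary bound $\gamma^* \le \frac{1-\alpha}{2}$ are comparatively routine once this endpoint inequality is established.
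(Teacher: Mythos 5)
Your construction is identical to the paper's: you take $\bph^1 = 1$, the interior maximizer of $\hat\xi$ for $1-\bpl^0$, and the equalizing $\gamma^* = \hat\gamma$, invoke Lemma~\ref{lem:xilower}, and reduce $1-\alpha-\gamma^* \ge \xinl$ to the pair $\gamma^* \le \frac{1-\alpha}{2}$ and $\gamma^* \ge \xinl$ (this is exactly the paper's Propositions~\ref{prop:seg4lower_gamma2} and~\ref{prop:seg4lower_gamma3}). The divergence — and the gap — is in how $\gamma^* \ge \xinl$ is established. Your monotonicity reduction rests on claims that are asserted from the ``increasingly balanced minority'' intuition but never proved: that $\gamma^*$ and $1-\alpha-\gamma^*$ grow as $\alpha$ decreases. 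The claim for $\gamma^*$ happens to be true, but it is not free; it amounts to showing $2\sqrt{2(1-\alpha P_{\ell H})(1-2\alpha P_{\ell L})P_{\ell L}P_{\ell H}} \le P_{\ell H}+2P_{\ell L}-4\alpha P_{\ell H}P_{\ell L}$, its own radical manipulation (squaring reduces it to $(2P_{\ell L}-P_{\ell H})^2 \ge 0$). More seriously, the endpoint check $\gamma^*(\anl) \ge \frac{\Delta(\anl - 1/2)}{P_{hL}}$ into which your reduction funnels all the difficulty is much harder than ``comparatively routine'': $\anl$ itself contains the radical $\sqrt{P_{hH}P_{hL}P_{\ell H}P_{\ell L}}$, and substituting it into $\gamma^*$ places that radical inside $\sqrt{2(1-\alpha P_{\ell H})(1-2\alpha P_{\ell L})P_{\ell L}P_{\ell H}}$, producing nested radicals. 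The boundary-checking technique of Lemmas~\ref{lem:a4upper}--\ref{lem:a4threshold} does not transfer, because those proofs rely on the radical entering linearly so that a single squaring eliminates it; here you would need repeated squarings with sign bookkeeping, and there is no reason to expect the result to be settled by checking $P_{hH}=P_{hL}$ and $P_{hH}=P_{\ell L}$.

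The paper sidesteps both problems by never evaluating at the endpoint. Since $\hat\gamma$ and $\hat\xi$ share the denominator $(P_{\ell L}(1-\bpl^0)+P_{hL})(P_{\ell H}(1-\bpl^0)+P_{hH}+1)$, the inequality $\gamma^* > \xinl$ is a comparison of numerators (Proposition~\ref{prop:seg4lower_gamma2}); after substituting the maximizing $1-\bpl^0$, the difference of numerators equals $(1-2\alpha P_{\ell L})(1-\alpha P_{\ell H}) + \frac12\bigl(\frac{1}{P_{\ell L}}-2\alpha\bigr)\sqrt{2(1-\alpha P_{\ell H})(1-2\alpha P_{\ell L})P_{\ell L}P_{\ell H}}$, which is term-by-term positive for every $\alpha < \frac{1}{2P_{\ell L}}$ — a one-line verification covering all of Segment 3 uniformly. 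The companion bound $\gamma^* \le \frac{1-\alpha}{2}$ is likewise proved for the whole range by a quadratic-root analysis (Lemmas~\ref{lemma:seg4lower_gamma_half} and~\ref{lem:seg4lower_gamma_half2}), not at $\anl$. One further omission: the paper also verifies (Proposition~\ref{prop:seg4lower_gamma1}) that $\gamma^* > \frac{P_{hH}P_{\ell L}}{P_{hL}}(\alpha-\frac12)$ and $1-\alpha-\gamma^* > \frac{P_{hH}P_{\ell L}}{P_{\ell H}}(\alpha-\frac12)$, the standing hypotheses of Lemma~\ref{lem:seg4_upper_fix} that you cite; this follows by a short indirect argument from $\xinl > \frac{\Delta(\alpha-1/2)}{P_{hL}}$, but it must be stated. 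In short, your skeleton is right, but the two load-bearing steps of your plan are precisely where the work lies; replacing the endpoint reduction with the paper's uniform-in-$\alpha$ numerator comparison makes the proof both complete and strictly easier.
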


 By Lemma~\ref{lem:xilower}, it suffices to give a valid set of $\bph^1, (1 - \bpl^0)$, and $\gamma$. We specify the following set. 
\begin{align*}
        \bph^{1*} =&\ 1 \\
        1 - \bpl^{0*} = &\ (1 - 2\alpha) + \frac{\sqrt{2\cdot (1 - \alpha \cdot P_{\ell H})\cdot (1 - 2\alpha \cdot P_{\ell L}) \cdot P_{\ell L}P_{\ell H}}}{P_{\ell L} \cdot P_{\ell H}}\\
        \gamma^* = &\ \frac{\frac12 (P_{\ell L}\cdot (1 - \bpl^{0*}) + 2 - P_{\ell L}) -\alpha \cdot (P_{hH}P_{\ell L} + P_{\ell H}P_{\ell L} \cdot (1-\bpl^{0*}) +P_{\ell H})}{(P_{\ell L}\cdot (1 - \bpl^{0*}) + P_{hL})\cdot (P_{\ell H}\cdot (1 - \bpl^{0*}) + P_{hH} +1)}.
\end{align*}
In this case, by Lemma~\ref{lem:seg4_upper_max} and the proof of Lemma~\ref{lem:seg4_upper_fix}, $\min(\xi_h, \xi_\ell) = \xinl$. 
    
We show that these parameters are in a valid scope. Specifically, there are three restriction to be proved. 
\begin{enumerate}
        \item $1 - 1 - \bpl^{0*} \in [0, 1]$. 
        \item $(1 - \alpha - \gamma^*) > \frac{P_{hH}\cdot P_{\ell L}}{P_{\ell H}} (\alpha - \frac12)$ and $\gamma^* >\frac{P_{hH}\cdot P_{\ell L}}{P_{hL}} ( \alpha - \frac12)$.
        \item $\gamma^* > \xinl$ and $1 - \alpha - \gamma^* > \xinl$. 
\end{enumerate}

Lemma~\ref{lem:seg4_upper_max} gaurantees that $1 - \bpl^{0*} \in [0, 1]$. 

\begin{prop}
        \label{prop:seg4lower_gamma1}
        For all $\anl > \alpha > (\frac12, \frac{1}{1+P_{\ell L} +(P_{\ell L} - P_{hH})})$, $(1 - \alpha - \gamma^*) > \frac{P_{hH}\cdot P_{\ell L}}{P_{\ell H}} (\alpha - \frac12)$ and $\gamma^* >\frac{P_{hH}\cdot P_{\ell L}}{P_{hL}} ( \alpha - \frac12)$ holds.
\end{prop}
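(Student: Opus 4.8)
The plan is to argue by contradiction, exploiting the fact that at the specified parameters the two thresholds coincide at $\xinl$, a value that is \emph{strictly} above $\frac{\Delta(\alpha-1/2)}{P_{hL}}$ on the relevant range of $\alpha$, whereas the failure of either inequality would — via Proposition~\ref{prop:mid_biased_gamma_gen} — pin $\min(\xi_h,\xi_\ell)$ \emph{below} $\frac{\Delta(\alpha-1/2)}{P_{hL}}$. So the whole proof reduces to one clean incompatibility, with no grinding once the right facts are assembled.

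First I would record what the chosen triple delivers. Since $\bph^{1*}=1$ and $1-\bpl^{0*}$ is precisely the interior maximizer produced in Lemma~\ref{lem:seg4_upper_max}, while $\gamma^*$ is the equalizing value $\hat\gamma$ of Lemma~\ref{lem:seg4_upper_fix}, substituting these into the formulas for $\xi_h,\xi_\ell$ yields $\xi_h=\xi_\ell=\hat\xi=\xinl$. To be sure we are in the interior (middle) regime of Lemma~\ref{lem:seg4_upper_max} rather than one of its two boundary cases, I would combine the hypothesis $\alpha>\frac{1}{1+P_{\ell L}+(P_{\ell L}-P_{hH})}$ with the inequality $\anl<\frac{1+P_{hL}+P_{hH}P_{\ell L}}{2(P_{hH}P_{\ell L}+P_{hL}P_{\ell H}+P_{\ell L})}$ already established in the proof of Proposition~\ref{prop:seg3upper}; then $\anl>\alpha$ places $\alpha$ strictly inside the middle interval. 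I would also invoke Lemma~\ref{lem:a4threshold}, which gives $\xinl>\frac{\Delta(\alpha-1/2)}{P_{hL}}$ strictly whenever $\alpha<\anl$ (equality occurring only at the boundary $\alpha=\anl$).

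For $\gamma^*>\frac{P_{hH}P_{\ell L}}{P_{hL}}(\alpha-\frac12)$, suppose the inequality fails. Then $\gamma^*$ satisfies the second branch of Proposition~\ref{prop:mid_biased_gamma_gen}, which forces $\xi_\ell\le\frac{\Delta(\alpha-1/2)}{P_{hL}}$ at \emph{every} $(\bph^1,1-\bpl^0)$, in particular at our $(\bph^{1*},1-\bpl^{0*})$; but there $\xi_\ell=\xinl>\frac{\Delta(\alpha-1/2)}{P_{hL}}$, a contradiction. The inequality $1-\alpha-\gamma^*>\frac{P_{hH}P_{\ell L}}{P_{\ell H}}(\alpha-\frac12)$ is symmetric: its failure triggers the first branch of Proposition~\ref{prop:mid_biased_gamma_gen}, giving $\xi_h\le\frac{\Delta(\alpha-1/2)}{P_{\ell H}}$, and since $P_{\ell H}=1-P_{hH}\ge 1-P_{\ell L}=P_{hL}$ this is at most $\frac{\Delta(\alpha-1/2)}{P_{hL}}<\xinl$, again contradicting $\xi_h=\xinl$ at the chosen point. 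As a byproduct, the two inequalities also certify $\gamma^*\in(0,1-\alpha)$, so $\gamma^*$ is a genuine fraction.

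The main obstacle I anticipate is the bookkeeping in the first step rather than the logic of the contradiction: I must confirm that the explicitly defined $\gamma^*$ and $1-\bpl^{0*}$ genuinely realize $\xi_h=\xi_\ell=\xinl$ (i.e. that $\alpha<\anl$ really lands in the interior regime of Lemma~\ref{lem:seg4_upper_max}), and I must secure the \emph{strict} form of Lemma~\ref{lem:a4threshold} on the open interval $\frac12<\alpha<\anl$ — the stated non-strict version would leave the argument exposed at the boundary. I would obtain strictness from the strict monotonicity of $\gunc$ together with $\gunc(\anl)>0$ used inside the proof of Lemma~\ref{lem:a4threshold}. Once these two points are nailed down, both inequalities follow immediately from the contradiction scheme without any further calculation.
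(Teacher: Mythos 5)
Your proof is correct and follows essentially the same route as the paper's: a contradiction argument in which the chosen $(\bph^{1*},1-\bpl^{0*},\gamma^*)$ realize $\xi_h=\xi_\ell=\xinl>\frac{\Delta(\alpha-1/2)}{P_{hL}}$, while failure of either stated inequality would, via Proposition~\ref{prop:mid_biased_gamma_gen}, force $\min(\xi_h,\xi_\ell)\le\frac{\Delta(\alpha-1/2)}{P_{hL}}$ at every $(\bph^1,1-\bpl^0)$, including that point. Your extra care about the strict form of Lemma~\ref{lem:a4threshold} and about confirming that $\alpha$ lies in the interior regime of Lemma~\ref{lem:seg4_upper_max} only tightens details the paper leaves implicit.
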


\begin{proof}
   Recall the definition of $\xi_h$ and $\xi_{\ell}$. 
    \begin{align*}
        &\ \xi_h =  \frac{\Delta(\alpha - 1/2 + (1 - \alpha - \gamma) \cdot \bph^1)}{P_{hH}\cdot P_{\ell L}\cdot \bph^1 + P_{\ell H}\cdot P_{\ell L} \cdot (1 - \bpl^0) + P_{\ell H}}.\\
        &\ \xi_{\ell} = \frac{\Delta (\alpha - 1/2 + \gamma \cdot (1 - \bpl^0))}{P_{hH}\cdot P_{hL}\cdot \bph^1 + P_{hH}\cdot P_{\ell L} \cdot (1 - \bpl^0) + P_{hL}}. 
    \end{align*}

    Now we view $\xi_h$ and $\xi_{\ell}$ as functions of $\gamma \in (-\infty, +\infty)$, denoted as $\xi_h(\gamma)$ and $\xi_{\ell}(\gamma)$. $\xi_h(\gamma)$ is monotonically decreasing on $\gamma$, and $\xi_{\ell}(\gamma)$ is monotonically increasing on $\gamma$. Moreover, in Proposition~\ref{prop:mid_biased_gamma_gen}, we show that when $\gamma \le\frac{P_{hH}\cdot P_{\ell L}}{P_{hL}} ( \alpha - \frac12)$, for any $(\bph^1, 1 - \bpl^0) \in [0, 1]^2$, $\xi_{\ell}(\gamma) \le \frac{\Delta(\alpha - 1/2)}{P_{hL}}$. Similarly, when $\gamma \ge (1 - \alpha) - \frac{P_{hH}\cdot P_{\ell L}}{P_{\ell H}} ( \alpha - \frac12)$, for any $(\bph^1, 1 - \bpl^0) \in [0, 1]^2$, $\xi_{h}(\gamma) \le \frac{\Delta(\alpha - 1/2)}{P_{\ell H}}$. Therefore, for $\gamma \in R \setminus (\frac{P_{hH}\cdot P_{\ell L}}{P_{hL}} ( \alpha - \frac12),(1 - \alpha) - \frac{P_{hH}\cdot P_{\ell L}}{P_{\ell H}} ( \alpha - \frac12))$, for any $(\bph^1, 1 - \bpl^0) \in [0, 1]^2$, $\min(\xi_h, \xi_{\ell}) \le \frac{\Delta(\alpha - 1/2)}{P_{hL}}$. However, given the $\bph^{1*}, 1 - \bpl^{0*}$, and $\gamma^*$ provided, $\hat{\xi}= \xinl > \frac{\Delta(\alpha - 1/2)}{P_{hL}}$. This directly implies that $(1 - \alpha - \gamma) > \frac{P_{hH}\cdot P_{\ell L}}{P_{\ell H}} ( \alpha - \frac12)$, $\gamma >\frac{P_{hH}\cdot P_{\ell L}}{P_{hL}} ( \alpha - \frac12)$ holds.  
\end{proof} 

Finally, we we prove that $\gamma^* > \xinl$ and $1 - \alpha - \gamma^* > \xinl$. We first give two alternative forms of $\gamma^*$. 

\begin{lem}
    \label{lem:seg4_gamma_alt}
    \begin{align*}
        \gamma^* = &\ \frac{(1 - 2\alpha P_{\ell L})\cdot (1 - \alpha P_{\ell H}) + \frac12 \cdot (\frac{1}{P_{\ell H}}-2\alpha)\cdot \sqrt{2\cdot (1 - \alpha \cdot P_{\ell H})\cdot (1 - 2\alpha \cdot P_{\ell L}) \cdot P_{\ell L}P_{\ell H}}}{4(1 - 2\alpha P_{\ell L})\cdot (1 - \alpha P_{\ell H}) +  (\frac{1}{P_{\ell L}} + \frac{2}{P_{\ell H}}-4\alpha)\cdot \sqrt{2\cdot (1 - \alpha \cdot P_{\ell H})\cdot (1 - 2\alpha \cdot P_{\ell L}) \cdot P_{\ell L}P_{\ell H}}}\\
        =&\ \frac{2P_{\ell L}^2 - (3 - 2\alpha \cdot P_{\ell H})\cdot P_{\ell H} \cdot P_{\ell L} + P_{\ell H}\cdot \sqrt{2\cdot (1 - \alpha \cdot P_{\ell H})\cdot (1 - 2\alpha \cdot P_{\ell L}) \cdot P_{\ell L}P_{\ell H}}}{2\cdot (2P_{\ell L} - P_{\ell H})^2}. 
    \end{align*}
\end{lem}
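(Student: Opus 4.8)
The plan is to prove Lemma~\ref{lem:seg4_gamma_alt} by direct algebraic substitution: both alternative expressions are obtained from the definition of $\gamma^*$ by plugging in the explicit value of $1 - \bpl^{0*}$ and simplifying, using only the identities $P_{hH} = 1 - P_{\ell H}$, $P_{hL} = 1 - P_{\ell L}$, and $\Delta = P_{\ell L} - P_{\ell H}$. To keep the bookkeeping manageable I would abbreviate $S := \sqrt{2(1 - \alpha P_{\ell H})(1 - 2\alpha P_{\ell L})P_{\ell L}P_{\ell H}}$, so that $1 - \bpl^{0*} = (1 - 2\alpha) + S/(P_{\ell L}P_{\ell H})$, and I would record at the outset the crucial relation $S^2 = 2(1 - \alpha P_{\ell H})(1 - 2\alpha P_{\ell L})P_{\ell L}P_{\ell H}$, which lets me eliminate every square of $S$ that appears. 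Note first that, since $P_{hL} + 1 = 2 - P_{\ell L}$, the $\gamma^*$ in the statement is exactly $\hat{\gamma}$ from Lemma~\ref{lem:seg4_upper_fix} evaluated at $1 - \bpl^0 = 1 - \bpl^{0*}$, so it suffices to substitute the value of $1 - \bpl^{0*}$ into $\hat{\gamma}$.

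For the first alternative form I would treat the numerator and denominator of $\gamma^*$ separately. The numerator $N = \frac12(P_{\ell L}(1-\bpl^{0*}) + 2 - P_{\ell L}) - \alpha(P_{hH}P_{\ell L} + P_{\ell H}P_{\ell L}(1-\bpl^{0*}) + P_{\ell H})$ is linear in $1 - \bpl^{0*}$; substituting $1-\bpl^{0*} = (1-2\alpha) + S/(P_{\ell L}P_{\ell H})$ splits it as $N = A_0 + A_1 S$, and collecting the $S$-free part (after replacing $P_{hH}$ by $1-P_{\ell H}$ and cancelling) yields $A_0 = (1 - 2\alpha P_{\ell L})(1 - \alpha P_{\ell H})$, while the coefficient of $S$ collapses to $A_1 = \frac12\big(\tfrac{1}{P_{\ell H}} - 2\alpha\big)$. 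The denominator $D = (P_{\ell L}(1-\bpl^{0*}) + P_{hL})(P_{\ell H}(1-\bpl^{0*}) + P_{hH} + 1)$ is quadratic; after substitution its $(1-\bpl^{0*})^2$ term produces an $S^2$ contribution, which I replace by the polynomial above, so that $D = C_0 + C_1 S$ with $C_0 = 4(1-2\alpha P_{\ell L})(1-\alpha P_{\ell H})$ and $C_1 = \tfrac{1}{P_{\ell L}} + \tfrac{2}{P_{\ell H}} - 4\alpha$. Matching $N/D$ against the claimed fraction establishes the first form.

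For the second form the natural route is to eliminate the surd from the denominator. Since $C_0 + C_1 S$ carries $S$, I would clear it by the cross-multiplication identity that equates the first and second forms, namely $(A_0 + A_1 S)\cdot 2(2P_{\ell L} - P_{\ell H})^2 = \big(2P_{\ell L}^2 - (3 - 2\alpha P_{\ell H})P_{\ell H}P_{\ell L} + P_{\ell H}S\big)(C_0 + C_1 S)$. Expanding both sides, reducing the single $S^2$ term on the right through the $S^2$ relation, and comparing the rational ($S^0$) and the $S$-linear coefficients reduces the claim to two polynomial identities in $P_{\ell L}, P_{\ell H}, \alpha$ (with $P_{hH} = 1-P_{\ell H}$, $P_{hL} = 1-P_{\ell L}$) that I would verify hold identically.

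The conceptual content here is minimal; the main obstacle is purely the volume and delicacy of the algebra. In particular, the fact that the denominator of the second form is the clean constant $2(2P_{\ell L} - P_{\ell H})^2$ — free of $\alpha$ and of $S$ — forces large cancellations, and the chief danger is an arithmetic slip in tracking the $S$-linear coefficient, where the $\alpha$-dependent pieces of $A_0, A_1, C_0, C_1$ must annihilate against those coming from the $S^2$ reduction. I expect matching that $S$-linear coefficient to be the most error-prone step, and I would carry it out by isolating the coefficient of $S$ on each side and checking the resulting identity only after clearing the denominators $P_{\ell L}$ and $P_{\ell H}$.
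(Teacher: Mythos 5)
Your proposal is correct and matches the paper's proof in all essentials: both substitute $1-\bpl^{0*}$ into $\hat{\gamma}$, abbreviate the surd as $S$, split numerator and denominator into $S$-free and $S$-linear parts (your $A_0,A_1,C_0,C_1$ are exactly the paper's first form), and use the relation $S^2=2(1-\alpha P_{\ell H})(1-2\alpha P_{\ell L})P_{\ell L}P_{\ell H}$ to eliminate every square of $S$. The only cosmetic divergence is in the second form, which the paper derives by multiplying numerator and denominator by the conjugate $C_0-C_1S$ whereas you verify it by cross-multiplication and coefficient matching; the two polynomial identities your route reduces to do hold identically (both sides of the $S^0$ identity collapse to $2(1-2\alpha P_{\ell L})(1-\alpha P_{\ell H})(2P_{\ell L}-P_{\ell H})^2$, and the $\alpha^2$ terms cancel in the $S^1$ identity), so your plan carries through.
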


\begin{proof}
    We assign $1 - \bpl^{0*}$ into the numerator and denominator of $\gamma^*$ respectively to get the first alternative form. Recall that 
    \begin{equation*}
       1 - \bpl^{0*} = (1 - 2\alpha) + \frac{\sqrt{2\cdot (1 - \alpha \cdot P_{\ell H})\cdot (1 - 2\alpha \cdot P_{\ell L}) \cdot P_{\ell L}P_{\ell H}}}{P_{\ell L} \cdot P_{\ell H}}. 
    \end{equation*}

    For the numerator, 
    \begin{align*}
        &\ \frac12 (P_{\ell L}\cdot (1 - \bpl^{0*}) + 2 - P_{\ell L}) -\alpha \cdot (P_{hH}P_{\ell L} + P_{\ell H}P_{\ell L} \cdot (1-\bpl^{0*}) +P_{\ell H})\\
        =&\ 1- \alpha P_{\ell L} + \frac{\sqrt{2\cdot (1 - \alpha \cdot P_{\ell H})\cdot (1 - 2\alpha \cdot P_{\ell L}) \cdot P_{\ell L}P_{\ell H}}}{2P_{\ell H}}\\
        &\ - \alpha \cdot \left(P_{hH} P_{\ell L} + P_{\ell H} + (1 - 2\alpha) \cdot P_{\ell L}P_{\ell H} + \sqrt{2\cdot (1 - \alpha \cdot P_{\ell H})\cdot (1 - 2\alpha \cdot P_{\ell L}) \cdot P_{\ell L}P_{\ell H}}\right)\\
        =&\ 1 - \alpha P_{\ell L} + \frac{\sqrt{2\cdot (1 - \alpha \cdot P_{\ell H})\cdot (1 - 2\alpha \cdot P_{\ell L}) \cdot P_{\ell L}P_{\ell H}}}{2P_{\ell H}}\\
        &\ - \alpha \cdot \left(P_{\ell H} + P_{\ell L} - 2 \alpha \cdot P_{\ell L} P_{\ell H} + \sqrt{2\cdot (1 - \alpha \cdot P_{\ell H})\cdot (1 - 2\alpha \cdot P_{\ell L}) \cdot P_{\ell L}P_{\ell H}}\right)\\
        =&\ \frac12 \cdot (\frac{1}{P_{\ell H}}-2\alpha)\cdot \sqrt{2\cdot (1 - \alpha \cdot P_{\ell H})\cdot (1 - 2\alpha \cdot P_{\ell L}) \cdot P_{\ell L}P_{\ell H}}+ 1 -2\alpha P_{\ell L} - \alpha P_{\ell H} + 2\alpha^2 \cdot P_{\ell L} P_{\ell H}\\
        = &\ (1 - 2\alpha P_{\ell L})\cdot (1 - \alpha P_{\ell H}) + \frac12 \cdot (\frac{1}{P_{\ell H}}-2\alpha)\cdot \sqrt{2\cdot (1 - \alpha \cdot P_{\ell H})\cdot (1 - 2\alpha \cdot P_{\ell L}) \cdot P_{\ell L}P_{\ell H}}. 
    \end{align*}

    Similarly, for the denominator, 
    \begin{align*}
        &\ (P_{\ell L}\cdot (1 - \bpl^{0*}) + P_{hL})\cdot (P_{\ell H}\cdot (1 - \bpl^{0*}) + P_{hH} +1)\\
        =&\ \left( \frac{\sqrt{2\cdot (1 - \alpha \cdot P_{\ell H})\cdot (1 - 2\alpha \cdot P_{\ell L}) \cdot P_{\ell L}P_{\ell H}}}{P_{\ell H}} + 1 -2\alpha P_{\ell L}\right)\\\
        &\ \cdot \left( \frac{\sqrt{2\cdot (1 - \alpha \cdot P_{\ell H})\cdot (1 - 2\alpha \cdot P_{\ell L}) \cdot P_{\ell L}P_{\ell H}}}{P_{\ell L}} + 2 -2\alpha P_{\ell H}\right)\\
        =&\ 2\cdot (1 - \alpha \cdot P_{\ell H})\cdot (1 - 2\alpha \cdot P_{\ell L}) + 2(1 - 2\alpha P_{\ell L})\cdot (1 - \alpha P_{\ell H})\\
        &\ + (\frac{1 - 2\alpha P_{\ell L}}{P_{\ell L}} + \frac{2 - 2\alpha P_{\ell H}}{P_{\ell H}})\cdot \sqrt{2\cdot (1 - \alpha \cdot P_{\ell H})\cdot (1 - 2\alpha \cdot P_{\ell L}) \cdot P_{\ell L}P_{\ell H}}\\
        =&\ 4(1 - 2\alpha P_{\ell L})\cdot (1 - \alpha P_{\ell H}) +  (\frac{1}{P_{\ell L}} + \frac{2}{P_{\ell H}}-4\alpha)\cdot \sqrt{2\cdot (1 - \alpha \cdot P_{\ell H})\cdot (1 - 2\alpha \cdot P_{\ell L}) \cdot P_{\ell L}P_{\ell H}}. 
    \end{align*}

    The second form is reached by multiplying 
    $$4(1 - 2\alpha P_{\ell L})\cdot (1 - \alpha P_{\ell H}) - (\frac{1}{P_{\ell L}} + \frac{2}{P_{\ell H}}-4\alpha)\cdot \sqrt{2\cdot (1 - \alpha \cdot P_{\ell H})\cdot (1 - 2\alpha \cdot P_{\ell L}) \cdot P_{\ell L}P_{\ell H}}$$ together to the numerator and denominator to eliminate the square root in the denominator. 

    For the numerator, it becomes
    \begin{align*}
        &\ -\frac{1}{P_{\ell L}}(1 - 2\alpha P_{\ell L})\cdot (1 - \alpha P_{\ell H}) \cdot \sqrt{2\cdot (1 - \alpha \cdot P_{\ell H})\cdot (1 - 2\alpha \cdot P_{\ell L}) \cdot P_{\ell L}P_{\ell H}} \\
        &\ + (1 - 2\alpha P_{\ell L})\cdot (1 - \alpha P_{\ell H}) \cdot ( 4(1 - 2\alpha P_{\ell L})\cdot (1 - \alpha P_{\ell H}) - (1 - 2\alpha P_{\ell H})\cdot (1  + \frac{2P_{\ell L}}{P_{\ell H}}- 4\alpha P_{\ell L}))\\
        =&\ -\frac{1}{P_{\ell L}}(1 - 2\alpha P_{\ell L})\cdot (1 - \alpha P_{\ell H}) \cdot \sqrt{2\cdot (1 - \alpha \cdot P_{\ell H})\cdot (1 - 2\alpha \cdot P_{\ell L}) \cdot P_{\ell L}P_{\ell H}}\\
        &\ \frac{1}{P_{\ell H}} (1 - 2\alpha P_{\ell L})\cdot (1 - \alpha P_{\ell H}) \cdot ((3 - 2\alpha \cdot P_{\ell H})\cdot P_{\ell H} - 2P_{\ell L})
    \end{align*}

    For the denominator it becomes
    \begin{align*}
        &\ 16(1 - 2\alpha P_{\ell L})^2\cdot (1 - \alpha P_{\ell H})^2 -2\cdot (1 - \alpha \cdot P_{\ell H})\cdot (1 - 2\alpha \cdot P_{\ell L}) \cdot P_{\ell L}P_{\ell H} \cdot (\frac{1}{P_{\ell L}} + \frac{2}{P_{\ell H}}-4\alpha)^2\\
        =&\ 2(1 - 2\alpha P_{\ell L})\cdot (1 - \alpha P_{\ell H}) \\
        &\ \cdot ( 8(1 +2P_{\ell L}P_{\ell H}\alpha ^2 - (2P_{\ell L} + {\ell H}) \alpha) - (16P_{\ell L}P_{\ell H}\alpha ^2 + (2P_{\ell L} + {\ell H}) \alpha + \frac{(2P_{\ell L} +P_{\ell H})^2}{P_{\ell L} \cdot P_{\ell H}}))\\
        =&\ 2(1 - 2\alpha P_{\ell L})\cdot (1 - \alpha P_{\ell H}) \cdot \frac{8P_{\ell L} P_{\ell H} - (2P_{\ell L} +P_{\ell H})^2}{P_{\ell L} \cdot P_{\ell H}}\\
        =&\ \frac{2(1 - 2\alpha P_{\ell L})\cdot (1 - \alpha P_{\ell H}) \cdot (2P_{\ell L} -P_{\ell H})^2}{P_{\ell L} \cdot P_{\ell H}}.
    \end{align*}
    Therefore, putting the numerator and the denominator together, we have 
    \begin{equation*}
       \gamma^* =  \frac{2P_{\ell L}^2 - (3 - 2\alpha \cdot P_{\ell H})\cdot P_{\ell H} \cdot P_{\ell L} + P_{\ell H}\cdot \sqrt{2\cdot (1 - \alpha \cdot P_{\ell H})\cdot (1 - 2\alpha \cdot P_{\ell L}) \cdot P_{\ell L}P_{\ell H}}}{2\cdot (2P_{\ell L} - P_{\ell H})^2}. 
    \end{equation*}
\end{proof}

\begin{prop}
    \label{prop:seg4lower_gamma2}
        For all $\anl > \alpha > \frac12$, $\gamma^* > \xinl$. 
\end{prop}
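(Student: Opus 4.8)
The plan is to exploit the structural coincidence that $\gamma^*$ and $\xinl$ are fractions sharing a common denominator. By Lemma~\ref{lem:seg4_upper_fix} and the definition of $\xinl$ as $\max_{1-\bpl^0}\hat{\xi}$ (Lemma~\ref{lem:seg4_upper_max}), both $\gamma^* = \hat{\gamma}$ and $\xinl = \hat{\xi}$ are obtained by evaluating their respective expressions at the \emph{same} optimizer $x := 1-\bpl^{0*}$, and both carry the identical denominator $D_0 := (P_{\ell L}x + P_{hL})(P_{\ell H}x + P_{hH}+1)$. Since $x\in[0,1]$ (Lemma~\ref{lem:seg4_upper_max}) and every signal probability is positive, $D_0>0$. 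Consequently $\gamma^*>\xinl$ is equivalent to the strict inequality between the two numerators, which eliminates the fractions entirely and, in particular, sidesteps cross-multiplying two expressions that each carry a square root.

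Second, I would compute the numerator difference $N(x) := \big[\tfrac12(P_{\ell L}x + 1 + P_{hL}) - \alpha(P_{hH}P_{\ell L} + P_{\ell H}P_{\ell L}x + P_{\ell H})\big] - \Delta\big(\alpha - \tfrac12 + \tfrac12 x\big)$, which is affine in $x$. Using $\Delta = P_{\ell L}-P_{\ell H}$, the coefficient of $x$ collapses to $P_{\ell H}(\tfrac12 - \alpha P_{\ell L})$. Writing $S := \sqrt{2(1-\alpha P_{\ell H})(1-2\alpha P_{\ell L})P_{\ell L}P_{\ell H}}$ and substituting the explicit optimizer $x = (1-2\alpha) + S/(P_{\ell L}P_{\ell H})$ turns $N$ into an affine function of $S$. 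I expect the coefficient of $S$ to reduce to $\tfrac{1}{2P_{\ell L}} - \alpha$, and --- after invoking $P_{hH}+P_{\ell H}=P_{hL}+P_{\ell L}=1$ to cancel the leftover $\tfrac12 P_{\ell H}$ and $\alpha P_{\ell L}$ terms --- the $S$-free remainder to collapse to the product $(1-\alpha P_{\ell H})(1-2\alpha P_{\ell L})$. Thus $N = \big(\tfrac{1}{2P_{\ell L}} - \alpha\big)S + (1-\alpha P_{\ell H})(1-2\alpha P_{\ell L})$.

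Third, I would establish positivity on the stated range. Lemma~\ref{lem:a4upper} gives $\anl < \tfrac{1}{2P_{\ell L}}$, so for $\tfrac12 < \alpha < \anl$ we have $\alpha < \tfrac{1}{2P_{\ell L}}$; this makes $\tfrac{1}{2P_{\ell L}} - \alpha > 0$ and $1-2\alpha P_{\ell L} > 0$, while $1-\alpha P_{\ell H}>0$ holds trivially and the radicand of $S$ is a product of positive factors, so $S>0$. Hence $N$ is a sum of two strictly positive terms, giving $N>0$ and therefore $\gamma^*>\xinl$.

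The conceptual crux --- and the step that makes the whole argument short --- is the shared-denominator observation of the first paragraph; the only genuinely error-prone work is the bookkeeping in the second paragraph, where the $S$-free remainder must be verified to factor exactly as $(1-\alpha P_{\ell H})(1-2\alpha P_{\ell L})$. I expect that collapse to be the main (though routine) obstacle, since it relies on several cancellations driven by the two probability-normalization identities; once it is confirmed, positivity is immediate.
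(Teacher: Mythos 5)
Your proposal is correct and follows essentially the same route as the paper's proof: both exploit the shared denominator to reduce the claim to comparing numerators, rewrite the numerator difference as $\bigl(\tfrac{1}{2P_{\ell L}}-\alpha\bigr)S + (1-\alpha P_{\ell H})(1-2\alpha P_{\ell L})$ with $S$ the square-root term, and conclude positivity from $\alpha < \anl < \tfrac{1}{2P_{\ell L}}$ (Lemma~\ref{lem:a4upper}). The only cosmetic difference is that the paper routes the substitution through the first alternative form of $\gamma^*$ in Lemma~\ref{lem:seg4_gamma_alt}, whereas you substitute the optimizer directly into the affine-in-$x$ numerator difference; the resulting algebra is identical.
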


\begin{proof}
    By Lemma~\ref{lem:seg4_upper_fix}, we know that 
    \begin{equation*}
        \xinl = \frac{\Delta \cdot (\alpha - 1/2 + 1/2 \cdot (1 - \bpl^{0*}))}{(P_{\ell L}\cdot (1 - \bpl^{0*}) + P_{hL})\cdot (P_{\ell H}\cdot (1 - \bpl^{0*}) + P_{hH} +1)}. 
    \end{equation*}
    The denominator of $\gamma^*$ and that of $\xinl$ are the same. Therefore, it suffice to compare the numerator and prove 
    \begin{equation*}
        \frac12 (P_{\ell L}\cdot (1 - \bpl^{0*}) + 2 - P_{\ell L}) -\alpha \cdot (P_{hH}P_{\ell L} + P_{\ell H}P_{\ell L} \cdot (1-\bpl^{0*}) +P_{\ell H}) >\Delta \cdot (\alpha - 1/2 + 1/2 \cdot (1 - \bpl^{0*})).
    \end{equation*}

    From Lemma~\ref{lem:seg4_gamma_alt} (the first form), we know the numerator of $\gamma^*$ equals to 
    $$ (1 - 2\alpha P_{\ell L})\cdot (1 - \alpha P_{\ell H}) + \frac12 \cdot (\frac{1}{P_{\ell H}}-2\alpha)\cdot \sqrt{2\cdot (1 - \alpha \cdot P_{\ell H})\cdot (1 - 2\alpha \cdot P_{\ell L}) \cdot P_{\ell L}P_{\ell H}}.$$

    Similarly, for $\xinl$, we have 
    \begin{align*}
        &\ \Delta \cdot (\alpha - 1/2 + 1/2 \cdot (1 - \bpl^{0*})) \\
        =&\ \frac{\Delta}{2P_{\ell L} \cdot P_{\ell H}} \cdot \sqrt{2\cdot (1 - \alpha \cdot P_{\ell H})\cdot (1 - 2\alpha \cdot P_{\ell L}) \cdot P_{\ell L}P_{\ell H}}. 
    \end{align*}
    By subtracting the numerator of $\xinl$ from that of $\gamma^*$, we have 
    \begin{equation*}
        (1 - 2\alpha P_{\ell L})\cdot (1 - \alpha P_{\ell H}) + \frac12 \cdot (\frac{1}{P_{\ell H}}-2\alpha - \frac{\Delta}{P_{\ell L} \cdot P_{\ell H}})\cdot \sqrt{2\cdot (1 - \alpha \cdot P_{\ell H})\cdot (1 - 2\alpha \cdot P_{\ell L} ) \cdot P_{\ell L}P_{\ell H}}
    \end{equation*}
    Note that 
    \begin{align*}
        \frac{1}{P_{\ell H}}-2\alpha - \frac{\Delta}{P_{\ell L} \cdot P_{\ell H}} = \frac{P_{\ell L} - (P_{\ell L} - P_{\ell H})}{P_{\ell L} \cdot P_{\ell H}} - 2\alpha = \frac{1}{P_{\ell L}} - 2\alpha. 
    \end{align*}

    Given the assumption that $\alpha < \anl < \frac{1}{2P_{\ell L}}$, every term in the subtraction is strictly positive. This implies that $\gamma^* > \xinl$. 
\end{proof}

Finally, we will prove $1 - \alpha - \gamma^* > \xinl$ by showing that $1 - \alpha - \gamma^* \ge \gamma^*$. This consists of two steps. First, we characterize the range where $1 - \alpha - \gamma^* \ge \gamma^*$ holds. Second, we show that this range covers $\anl \ge \alpha \ge \frac12$. 

\begin{lem}
    \label{lemma:seg4lower_gamma_half}
    When $(2 P_{hH} - P_{\ell L} - P_{\ell L} P_{hH}) < 0$, for all $\frac{1}{2P_{\ell L}} \ge \alpha > \frac12$, $1 - \alpha - \gamma^* \ge \gamma^*$.
    
    When $(2 P_{hH} - P_{\ell L} - P_{\ell L} P_{hH}) \ge 0$, for all
    $$\frac12 < \alpha \le \frac{P_{\ell H}^2 + 2P_{\ell L}^2 - P_{\ell H} P_{hH}P_{\ell L}}{P_{\ell H}^2 + 4P_{\ell L}^2 - 4 P_{\ell H} P_{hH}P_{\ell L}}-\frac{P_{\ell H} \cdot \sqrt{P_{\ell H} \cdot P_{\ell L} \cdot (2 P_{hH} - P_{\ell L} - P_{\ell L} P_{hH})}}{P_{\ell H}^2 + 4P_{\ell L}^2 - 4 P_{\ell H} P_{hH}P_{\ell L}},$$ $1 - \alpha - \gamma^* \ge \gamma^*$. 
\end{lem}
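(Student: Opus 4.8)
The plan is to work directly from the second closed form of $\gamma^*$ in Lemma~\ref{lem:seg4_gamma_alt},
$$\gamma^* = \frac{2P_{\ell L}^2 - (3 - 2\alpha P_{\ell H})P_{\ell H}P_{\ell L} + P_{\ell H}\sqrt{2(1 - \alpha P_{\ell H})(1 - 2\alpha P_{\ell L})P_{\ell L}P_{\ell H}}}{2(2P_{\ell L} - P_{\ell H})^2},$$
and to note that $1 - \alpha - \gamma^* \ge \gamma^*$ is equivalent to $\gamma^* \le \frac{1-\alpha}{2}$. Since the denominator $2(2P_{\ell L}-P_{\ell H})^2$ is strictly positive, I would clear it and isolate the lone square-root term, rewriting the target as $P_{\ell H}\,R(\alpha) \le T(\alpha)$, where $R(\alpha) = \sqrt{2(1-\alpha P_{\ell H})(1-2\alpha P_{\ell L})P_{\ell L}P_{\ell H}}$ and $T(\alpha) = (1-\alpha)(2P_{\ell L}-P_{\ell H})^2 - 2P_{\ell L}^2 + (3-2\alpha P_{\ell H})P_{\ell H}P_{\ell L}$ is affine in $\alpha$. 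One also records that $T$ is strictly decreasing, since its slope magnitude $(2P_{\ell L}-P_{\ell H})^2 + 2P_{\ell H}^2 P_{\ell L}$ is positive, and that $T(\frac12) = P_{\ell H}(P_{\ell L}P_{hH} + \frac12 P_{\ell H}) > 0$.

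On the range $\frac12 < \alpha \le \frac{1}{2P_{\ell L}}$ the radicand of $R$ is nonnegative, so $P_{\ell H}R(\alpha)\ge 0$ and the inequality holds if and only if both $T(\alpha)\ge 0$ and $Q(\alpha):= T(\alpha)^2 - P_{\ell H}^2 R(\alpha)^2 \ge 0$. Expanding $Q$ (a quadratic, since $T$ is affine and $R^2$ is quadratic), the crucial algebraic checks are that its leading coefficient factors as $(2P_{\ell L}-P_{\ell H})^2(P_{\ell H}^2 + 4P_{\ell L}^2 - 4P_{\ell H}P_{hH}P_{\ell L})$ after substituting $P_{hH}=1-P_{\ell H}$, hence is strictly positive so $Q$ is an upward parabola, and that the discriminant of $Q$ is a positive multiple of $P_{\ell H}P_{\ell L}(2P_{hH}-P_{\ell L}-P_{\ell L}P_{hH})$. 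Thus the sign condition dividing the two cases of the lemma is exactly the sign of this discriminant, and the minus-branch root of the quadratic formula matches the stated threshold.

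The case analysis is then handled uniformly by the identity $Q(\alpha_T) = -P_{\ell H}^2 R(\alpha_T)^2$, where $\alpha_T$ is the unique root of the affine $T$. When $2P_{hH}-P_{\ell L}-P_{\ell L}P_{hH} < 0$ the discriminant is negative, so $Q>0$ for every real $\alpha$; this forces $R(\alpha_T)^2 < 0$, i.e.\ $\alpha_T > \frac{1}{2P_{\ell L}}$, so $T>0$ and $Q>0$ on the whole interval and the inequality holds for all $\frac{1}{2P_{\ell L}}\ge\alpha>\frac12$. When $2P_{hH}-P_{\ell L}-P_{\ell L}P_{hH}\ge 0$, let $\alpha_-\le\alpha_+$ be the roots of $Q$; since $Q(\alpha_T)\le 0$ wherever $R^2\ge 0$, the root $\alpha_T$ of $T$ separates $\alpha_-$ and $\alpha_+$, so for $\alpha\le\alpha_-\le\alpha_T$ we have simultaneously $T\ge 0$ and $Q\ge 0$, giving $1-\alpha-\gamma^*\ge\gamma^*$ precisely up to $\alpha_-$, which is the stated closed-form threshold.

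I expect the main obstacle to be the bookkeeping in the middle step: verifying that the leading coefficient and discriminant of $Q$ collapse to the claimed factored forms and that the negative branch of the quadratic formula reproduces the stated threshold verbatim rather than up to a stray sign or constant factor. A secondary subtlety is the legitimacy of squaring, which demands $T\ge 0$ on the relevant subinterval \emph{before} squaring; I would discharge this not by a separate global estimate but by the affine-separation argument above, using $T(\frac12)>0$, the monotonicity of $T$, and the identity $Q(\alpha_T)=-P_{\ell H}^2R(\alpha_T)^2$.
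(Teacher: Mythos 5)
Your proposal is correct and follows essentially the same route as the paper's proof: starting from the second closed form of $\gamma^*$ in Lemma~\ref{lem:seg4_gamma_alt}, isolating the lone square root, squaring to obtain a quadratic in $\alpha$ whose leading coefficient is positive and whose discriminant is a positive multiple of $P_{\ell H}P_{\ell L}(2P_{hH}-P_{\ell L}-P_{\ell L}P_{hH})$, and reading the two cases of the lemma off the quadratic formula, with the minus branch giving the stated threshold. If anything, you are more careful than the paper at the squaring step: the paper only notes that the square-root side is nonnegative, whereas your affine-separation argument (using $T(\tfrac12)>0$, the monotonicity of $T$, and the identity $Q(\alpha_T)=-P_{\ell H}^2R(\alpha_T)^2$) supplies the verification that $T\ge 0$ on the relevant range, which is what actually licenses deducing $P_{\ell H}R\le T$ from $Q\ge 0$.
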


\begin{proof}
    We solve the inequality $1 - \alpha - \gamma^* \ge \gamma^*$, using the second alternative form of $\gamma^*$ in Lemma~\ref{lem:seg4_gamma_alt}. 
    \begin{align*}
        &\ \frac{2P_{\ell L}^2 - (3 - 2\alpha \cdot P_{\ell H})\cdot P_{\ell H} \cdot P_{\ell L} + P_{\ell H}\cdot \sqrt{2\cdot (1 - \alpha \cdot P_{\ell H})\cdot (1 - 2\alpha \cdot P_{\ell L}) \cdot P_{\ell L}P_{\ell H}}}{2\cdot (2P_{\ell L} - P_{\ell H})^2} \ge \frac{1 - \alpha }{2}\\
        \Leftrightarrow&\ P_{\ell H}\cdot \sqrt{2\cdot (1 - \alpha \cdot P_{\ell H})\cdot (1 - 2\alpha \cdot P_{\ell L}) \cdot P_{\ell L}P_{\ell H}} \\
        &\ \ge (2 P_{\ell L}^2 + P_{\ell H}^2 - P_{\ell L}P_{\ell H}) - \alpha \cdot (2P_{\ell H}^2P_{\ell L} + 4P_{\ell L}^2 - 4P_{\ell H}P_{\ell L} + P_{\ell H}) \\
    \end{align*}

    Given that the left-hand side of the inequality is positive (assuming $\frac{1}{2P_{\ell L}} \ge \alpha > \frac12$), it suffice to consider the inequality by squaring both sides, i.e., 
    \begin{align*}
        &\ P_{\ell H}^2 \cdot 2\cdot (1 - \alpha \cdot P_{\ell H})\cdot (1 - 2\alpha \cdot P_{\ell L}) \cdot P_{\ell L}P_{\ell H} \\
        &\ \ge ((2 P_{\ell L}^2 + P_{\ell H}^2 - P_{\ell L}P_{\ell H}) - \alpha \cdot (2P_{\ell H}^2P_{\ell L} + 4P_{\ell L}^2 - 4P_{\ell H}P_{\ell L} + P_{\ell H}))^2.
    \end{align*}
    By refactoring on $\alpha$, this inequality is equivalent to 
    \begin{align*}
        (2P_{\ell L}-P_{\ell H})^2 \cdot (\alpha^2 \cdot (4P_{\ell L}^2 + P_{\ell H}^2 - 4 P_{\ell L}P_{\ell H}P_{hH}) - 2\alpha \cdot (2P_{\ell L}^2 + P_{\ell H}^2 - P_{\ell L}P_{\ell H}P_{hH}) + (P_{\ell L}^2 +P_{\ell H}^2)) \ge 0. 
    \end{align*}
    Applying the root formula of the quadratic equation on $$\alpha^2 \cdot (4P_{\ell L}^2 + P_{\ell H}^2 - 4 P_{\ell L}P_{\ell H}P_{hH}) - 2\alpha \cdot (2P_{\ell L}^2 + P_{\ell H}^2 - P_{\ell L}P_{\ell H}P_{hH}) + (P_{\ell L}^2 +P_{\ell H}^2) = 0,$$ we get two solutions. 
    \begin{equation*}
        \alpha = \frac{P_{\ell H}^2 + 2P_{\ell L}^2 - P_{\ell H} P_{hH}P_{\ell L}}{P_{\ell H}^2 + 4P_{\ell L}^2 - 4 P_{\ell H} P_{hH}P_{\ell L}} \pm \frac{P_{\ell H} \cdot \sqrt{P_{\ell H} \cdot P_{\ell L} \cdot (2 P_{hH} - P_{\ell L} - P_{\ell L} P_{hH})}}{P_{\ell H}^2 + 4P_{\ell L}^2 - 4 P_{\ell H} P_{hH}P_{\ell L}}
    \end{equation*}
    
    When $(2 P_{hH} - P_{\ell L} - P_{\ell L} P_{hH}) < 0$, the two solutions are complex. This directly implies that 
    \begin{equation*}
        \alpha^2 \cdot (4P_{\ell L}^2 + P_{\ell H}^2 - 4 P_{\ell L}P_{\ell H}P_{hH}) - 2\alpha \cdot (2P_{\ell L}^2 + P_{\ell H}^2 - P_{\ell L}P_{\ell H}P_{hH}) + (P_{\ell L}^2 +P_{\ell H}^2) > 0
    \end{equation*}
    always holds. Therefore, for all $\frac{1}{2P_{\ell L}} \ge \alpha > \frac12$, $1 - \alpha - \gamma^* \ge \gamma^*$.

     When $(2 P_{hH} - P_{\ell L} - P_{\ell L} P_{hH}) \ge 0$, the two solutions are real numbers. This implies that when $$\frac12 < \alpha \le \frac{P_{\ell H}^2 + 2P_{\ell L}^2 - P_{\ell H} P_{hH}P_{\ell L}}{P_{\ell H}^2 + 4P_{\ell L}^2 - 4 P_{\ell H} P_{hH}P_{\ell L}}-\frac{P_{\ell H} \cdot \sqrt{P_{\ell H} \cdot P_{\ell L} \cdot (2 P_{hH} - P_{\ell L} - P_{\ell L} P_{hH})}}{P_{\ell H}^2 + 4P_{\ell L}^2 - 4 P_{\ell H} P_{hH}P_{\ell L}},$$ $1 - \alpha - \gamma^* \ge \gamma^*$. 
\end{proof}

\begin{lem} 
\label{lem:seg4lower_gamma_half2}
When $(2 P_{hH} - P_{\ell L} - P_{\ell L} P_{hH}) \ge 0$, 
    $$\anl \le \frac{P_{\ell H}^2 + 2P_{\ell L}^2 - P_{\ell H} P_{hH}P_{\ell L}}{P_{\ell H}^2 + 4P_{\ell L}^2 - 4 P_{\ell H} P_{hH}P_{\ell L}}-\frac{P_{\ell H} \cdot \sqrt{P_{\ell H} \cdot P_{\ell L} \cdot (2 P_{hH} - P_{\ell L} - P_{\ell L} P_{hH})}}{P_{\ell H}^2 + 4P_{\ell L}^2 - 4 P_{\ell H} P_{hH}P_{\ell L}}.$$
\end{lem}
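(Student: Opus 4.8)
The plan is to avoid manipulating the two \emph{different} square roots simultaneously (one inside $\anl$, one inside the right-hand bound) by instead comparing the smaller roots of two quadratics. Recall from the proof of Lemma~\ref{lem:a4threshold} that $\anl$ is the smaller root of the upward-opening quadratic
$$Q_1(\alpha) = 4(P_{\ell H}^2 + 4P_{hH}P_{\ell L}^2)\,\alpha^2 - 4(2P_{hH}P_{\ell L}^2 + 3P_{hH}P_{\ell L} + P_{hH}^2 - 2P_{hH} - P_{\ell L} + 2)\,\alpha + (P_{hL}^2 + 4P_{\ell L}^2 + P_{\ell H}^2 - 4P_{\ell H}P_{\ell L} + 4P_{hL}P_{\ell L} + 2P_{\ell H}P_{hL}),$$
and from the proof of Lemma~\ref{lemma:seg4lower_gamma_half} that the right-hand bound, call it $\beta$, is the smaller root of the upward-opening quadratic
$$Q_2(\alpha) = (4P_{\ell L}^2 + P_{\ell H}^2 - 4P_{\ell L}P_{\ell H}P_{hH})\,\alpha^2 - 2(2P_{\ell L}^2 + P_{\ell H}^2 - P_{\ell L}P_{\ell H}P_{hH})\,\alpha + (P_{\ell L}^2 + P_{\ell H}^2).$$
Both leading coefficients are positive (for $Q_2$ using $P_{\ell H}P_{hH}\le P_{\ell L}$), and both discriminants are nonnegative on the relevant domain, the one for $Q_2$ precisely because of the standing hypothesis $2P_{hH} - P_{\ell L} - P_{\ell L}P_{hH}\ge 0$.

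The crucial reduction is that, since $Q_1$ opens upward with roots $\anl \le \anl'$, we have $Q_1(x)\le 0$ exactly on $[\anl,\anl']$. Hence it suffices to prove the \emph{single} inequality $Q_1(\beta)\le 0$, which immediately forces $\beta\in[\anl,\anl']$ and in particular $\anl \le \beta$. I would first use $Q_2(\beta)=0$ to substitute $\beta^2 = \big(2(2P_{\ell L}^2 + P_{\ell H}^2 - P_{\ell L}P_{\ell H}P_{hH})\beta - (P_{\ell L}^2 + P_{\ell H}^2)\big)/(4P_{\ell L}^2 + P_{\ell H}^2 - 4P_{\ell L}P_{\ell H}P_{hH})$, collapsing $Q_1(\beta)$ to an affine expression $p\,\beta + q$ with explicit rational $p,q$. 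The target then becomes the scalar inequality $p\,\beta + q \le 0$, in which only one radical survives.

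At this stage the only square root left is the one inside $\beta = \frac{P_{\ell H}^2 + 2P_{\ell L}^2 - P_{\ell H}P_{hH}P_{\ell L}}{R} - \frac{P_{\ell H}\sqrt{S}}{R}$, where $R = P_{\ell H}^2 + 4P_{\ell L}^2 - 4P_{\ell H}P_{hH}P_{\ell L} > 0$ and $S = P_{\ell H}P_{\ell L}(2P_{hH} - P_{\ell L} - P_{\ell L}P_{hH})\ge 0$. I would isolate the radical term, fix the sign of the rational remainder so that the squaring step is direction-preserving, and square once to obtain a polynomial inequality purely in $P_{hH}$ and $P_{\ell L}$ after substituting $P_{\ell H}=1-P_{hH}$ and $P_{hL}=1-P_{\ell L}$. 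This polynomial inequality would then be discharged by exactly the technique of Lemmas~\ref{lem:a4upper}, \ref{lem:a4lower}, and~\ref{lem:a4threshold}: fix $P_{\ell L}$, view the expression as a polynomial in $P_{hH}$, and certify nonnegativity through sign-definiteness of its successive derivatives together with its values at the boundary cases $P_{hH}=1-P_{\ell L}$ and $P_{hH}=P_{\ell L}$, all under the constraints $1-P_{\ell L} < P_{hH}\le P_{\ell L}$, $\frac12 < P_{\ell L} < 1$, and $2P_{hH} - P_{\ell L} - P_{\ell L}P_{hH}\ge 0$. The bound $\anl < \frac{1}{2P_{\ell L}}$ from Lemma~\ref{lem:a4upper} is available as a consistency check on the direction of the reduction.

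The main obstacle I anticipate is the final polynomial verification: forming $Q_1(\beta)=p\,\beta+q$ and then isolating-and-squaring yields a polynomial of fairly high degree mixing all four signal parameters, and ensuring the squaring is sign-correct demands a careful preliminary estimate of the isolated rational part. I expect, as in the companion lemmas, that freezing $P_{\ell L}$ and exploiting monotonicity in $P_{hH}$ will control this, with the extra hypothesis $2P_{hH} - P_{\ell L} - P_{\ell L}P_{hH}\ge 0$ (equivalently $P_{\ell L}\le \frac{2P_{hH}}{1+P_{hH}}$) being exactly what keeps $S\ge 0$ and makes the boundary evaluations tractable.
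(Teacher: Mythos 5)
Your reduction to the single inequality $Q_1(\beta)\le 0$ is where the proof breaks. Since $Q_1$ opens upward, $Q_1(\beta)\le 0$ is equivalent to $\anl \le \beta \le \anl'$, where $\anl'$ is the \emph{larger} root of $Q_1$; so your plan implicitly requires proving $\beta \le \anl'$ as well, and that is false in part of the hypothesis region. Concretely, take $P_{hH}=0.6$, $P_{\ell L}=0.749$ (so $P_{\ell H}=0.4$, $P_{hL}=0.251$, and $2P_{hH}-P_{\ell L}-P_{\ell L}P_{hH}=0.0016>0$): then $\anl \approx 0.5728$, $\anl' \approx 0.6436$, but $\beta \approx 0.6489 > \anl'$, so $Q_1(\beta)>0$ even though the lemma's conclusion $\anl\le\beta$ holds. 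The same happens at the boundary $P_{\ell L}=2P_{hH}/(1+P_{hH})$ (e.g.\ $P_{hH}=0.6$, $P_{\ell L}=0.75$, where $S=0$ and $\beta=1.105/1.69\approx 0.6538 > \anl'\approx 0.6427$). Consequently the affine inequality $p\,\beta+q\le 0$ you plan to verify after substituting $Q_2(\beta)=0$ is simply false for these parameters; no derivative or monotonicity analysis can rescue it. To salvage the two-quadratics idea you would have to split off and separately handle the case $\beta>\anl'$, which needs an entirely different argument that your proposal does not supply.

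The paper avoids this trap by never comparing $\beta$ against $Q_1$ directly. It first \emph{weakens} the bound: since $P_{hH}\le P_{\ell L}$, one has $2P_{hH}-P_{\ell L}-P_{\ell L}P_{hH}\le P_{hH}(1-P_{\ell L})=P_{hH}P_{hL}$, hence $\sqrt{S}\le\sqrt{P_{hH}P_{hL}P_{\ell H}P_{\ell L}}$, so it suffices to prove $\anl$ is below the smaller quantity $\beta''$ obtained by replacing $\sqrt{S}$ with $\sqrt{P_{hH}P_{hL}P_{\ell H}P_{\ell L}}$ --- the \emph{same} radical that appears in $\anl$. Cross-multiplying then merges the two radical terms, and the whole expression factors as $(P_{\ell L}-P_{hH})\,(2P_{\ell L}-P_{\ell H})^2\,\bigl(P_{\ell H}+2P_{\ell L}P_{hH}-2\sqrt{P_{hH}P_{hL}P_{\ell H}P_{\ell L}}\bigr)\ge 0$, with the last factor positive since $P_{\ell L}P_{hH}>\sqrt{P_{hH}P_{hL}P_{\ell H}P_{\ell L}}$. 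It is worth noting that in the counterexample above this weakened bound satisfies $\beta''\approx 0.6037<\anl'$, i.e.\ it does lie between the roots of $Q_1$; the radical unification is precisely what makes the comparison tractable, and it is the ingredient your approach is missing.
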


\begin{proof}
    Recall that 
    \begin{equation*}
        \anl = \frac{2P_{hH}P_{\ell L}^2 +P_{hH}\cdot (P_{hH}+3P_{\ell L})-(3P_{hH} +P_{\ell L}) + 2 - 2P_{hL}\cdot \sqrt{P_{hH}P_{hL}P_{\ell H}P_{\ell L}}}{2P_{\ell H}^2 + 8P_{hH}P_{\ell L}^2}.
    \end{equation*}

    We first use scaling methods to unify the square roots in these formulas. 
    Note that when $(2 P_{hH} - P_{\ell L} - P_{\ell L} P_{hH}) \ge 0$
    \begin{align*}
        \sqrt{P_{\ell H} \cdot P_{\ell L} \cdot (2 P_{hH} - P_{\ell L} - P_{\ell L} P_{hH})} \le  &\ \sqrt{P_{\ell H} \cdot P_{\ell L} \cdot (P_{hH} - P_{\ell L} P_{hH})} = \sqrt{P_{hH}P_{hL}P_{\ell H}P_{\ell L}}.
    \end{align*}

    Therefore, it suffice to show that 
    \begin{equation*}
        \anl \le \frac{P_{\ell H}^2 + 2P_{\ell L}^2 - P_{\ell H} P_{hH}P_{\ell L}}{P_{\ell H}^2 + 4P_{\ell L}^2 - 4 P_{\ell H} P_{hH}P_{\ell L}}-\frac{P_{\ell H} \cdot \sqrt{P_{hH}P_{hL}P_{\ell H}P_{\ell L}}}{P_{\ell H}^2 + 4P_{\ell L}^2 - 4 P_{\ell H} P_{hH}P_{\ell L}}.
    \end{equation*}
    This inequality is equivalent to 
    \begin{align*}
        &\ (P_{\ell H}^2 + 2P_{\ell L}^2 - P_{\ell H} P_{hH}P_{\ell L})\cdot (2P_{\ell H}^2 + 8P_{hH}P_{\ell L}^2)\\
        &\ - (P_{\ell H}^2 + 4P_{\ell L}^2 - 4 P_{\ell H} P_{hH}P_{\ell L})\cdot (2P_{hH}P_{\ell L}^2 +P_{hH}\cdot (P_{hH}+3P_{\ell L})-(3P_{hH} +P_{\ell L}) + 2)\\
        &\ +2 \cdot (P_{hL} \cdot (P_{\ell H}^2 + 4P_{\ell L}^2 - 4 P_{\ell H} P_{hH}P_{\ell L}) - P_{\ell L}\cdot (P_{\ell H}^2 + 4P_{hH}P_{\ell L}^2)) \cdot \sqrt{P_{hH}P_{hL}P_{\ell H}P_{\ell L}}\\
        \ge &\ 0. 
    \end{align*}

    By reorganizing the formulas, the non-square-root term equals to $(P_{\ell L} - P_{hH}) \cdot (2P_{\ell L} - P_{\ell H})^2 \cdot (P_{\ell H} + 2 P_{\ell L}P_{hH})$, and the square-root term equals to $-2(P_{\ell L} - P_{hH}) \cdot (2P_{\ell L} - P_{\ell H})^2 \cdot \sqrt{P_{hH}P_{hL}P_{\ell H}P_{\ell L}}$. Therefore, the inequality is equivalent to
    \begin{equation*}
        (P_{\ell L} - P_{hH}) \cdot (2P_{\ell L} - P_{\ell H})^2 \cdot (P_{\ell H} + 2 P_{\ell L}P_{hH} - 2\sqrt{P_{hH}P_{hL}P_{\ell H}P_{\ell L}}) \ge 0.
    \end{equation*}
    By $P_{hH} > P_{hL}$ and $P_{\ell L} > P_{\ell H}$, $P_{\ell L}P_{hH} >\sqrt{P_{hH}P_{hL}P_{\ell H}P_{\ell L}}$. Together with the assumption $P_{\ell L} \ge P_{hH}$. Therefore, the inequality holds, which finishes the proof. 
\end{proof}

Combining Lemma~\ref{lemma:seg4lower_gamma_half} and ~\ref{lem:seg4lower_gamma_half2}, we have the following proposition. 

\begin{prop}
    \label{prop:seg4lower_gamma3}
    For all $\anl > \alpha > \frac12$, $1 - \alpha - \gamma^* > \xinl$. 
\end{prop}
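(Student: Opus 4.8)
The plan is to obtain the claim by transitivity, reducing the target inequality to a strictly weaker and more symmetric one. Since Proposition~\ref{prop:seg4lower_gamma2} already establishes $\gamma^* > \xinl$ on the whole range $\anl > \alpha > \frac12$, it suffices to prove the balance estimate $1 - \alpha - \gamma^* \ge \gamma^*$ on that same range: chaining the two gives $1 - \alpha - \gamma^* \ge \gamma^* > \xinl$, which is exactly what we want (note the final inequality is strict because the second step is strict). So the whole work of this proposition is to verify that $\gamma^*$, the mass of type-0 agents, never exceeds half of the minority population $1-\alpha$ throughout Segment~3.

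The inequality $1 - \alpha - \gamma^* \ge \gamma^*$ is precisely the conclusion of Lemma~\ref{lemma:seg4lower_gamma_half}, so I would simply invoke that lemma after matching its hypotheses. Lemma~\ref{lemma:seg4lower_gamma_half} is stated with a case split on the sign of the quantity $2P_{hH} - P_{\ell L} - P_{\ell L}P_{hH}$, so I would mirror that split. In the case $2P_{hH} - P_{\ell L} - P_{\ell L}P_{hH} < 0$, Lemma~\ref{lemma:seg4lower_gamma_half} gives $1 - \alpha - \gamma^* \ge \gamma^*$ for all $\frac{1}{2P_{\ell L}} \ge \alpha > \frac12$; since Lemma~\ref{lem:a4upper} guarantees $\anl < \frac{1}{2P_{\ell L}}$, the interval $\frac12 < \alpha < \anl$ sits strictly inside this range and is therefore covered.

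In the complementary case $2P_{hH} - P_{\ell L} - P_{\ell L}P_{hH} \ge 0$, Lemma~\ref{lemma:seg4lower_gamma_half} only yields the balance estimate up to the explicit threshold
\[
T = \frac{P_{\ell H}^2 + 2P_{\ell L}^2 - P_{\ell H} P_{hH}P_{\ell L}}{P_{\ell H}^2 + 4P_{\ell L}^2 - 4 P_{\ell H} P_{hH}P_{\ell L}} - \frac{P_{\ell H}\sqrt{P_{\ell H} P_{\ell L} (2 P_{hH} - P_{\ell L} - P_{\ell L} P_{hH})}}{P_{\ell H}^2 + 4P_{\ell L}^2 - 4 P_{\ell H} P_{hH}P_{\ell L}},
\]
so I must confirm that the interval of interest $\frac12 < \alpha < \anl$ lies below $T$. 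This is exactly the content of Lemma~\ref{lem:seg4lower_gamma_half2}, which asserts $\anl \le T$ under the same sign condition; combined with $\alpha < \anl$ this places every relevant $\alpha$ in the admissible range of Lemma~\ref{lemma:seg4lower_gamma_half}, again giving $1 - \alpha - \gamma^* \ge \gamma^*$. Since the two cases are exhaustive, the balance estimate holds on all of $\frac12 < \alpha < \anl$, and the proposition follows.

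I do not expect a genuine obstacle here, because the heavy analytic lifting (solving the quadratic in $\alpha$, controlling the square-root terms, and comparing $\anl$ to $T$ via the factorization into $(P_{\ell L}-P_{hH})(2P_{\ell L}-P_{\ell H})^2(\cdots)$) has already been discharged in Lemmas~\ref{lemma:seg4lower_gamma_half} and~\ref{lem:seg4lower_gamma_half2}. The only points demanding care are bookkeeping: matching the sign-of-$2P_{hH} - P_{\ell L} - P_{\ell L}P_{hH}$ case split exactly, invoking Lemma~\ref{lem:a4upper} so that the endpoint $\frac{1}{2P_{\ell L}}$ dominates $\anl$ in the first case, and checking that the weak inequality $1-\alpha-\gamma^*\ge\gamma^*$ upgrades to a strict bound against $\xinl$ purely through the strict inequality of Proposition~\ref{prop:seg4lower_gamma2}.
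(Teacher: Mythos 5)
Your proposal is correct and follows essentially the same route as the paper: the paper also obtains the result by combining Lemma~\ref{lemma:seg4lower_gamma_half} and Lemma~\ref{lem:seg4lower_gamma_half2} to get $1-\alpha-\gamma^* \ge \gamma^*$ (splitting on the sign of $2P_{hH}-P_{\ell L}-P_{\ell L}P_{hH}$, with Lemma~\ref{lem:a4upper} covering the first case), and then chaining with the strict bound $\gamma^* > \xinl$ from Proposition~\ref{prop:seg4lower_gamma2}. Your write-up merely makes the transitivity step explicit, which the paper leaves implicit.
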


Therefore, combining Lemma~\ref{lem:seg4_upper_max}, Proposition~\ref{prop:seg4lower_gamma1},\ref{prop:seg4lower_gamma2}, and \ref{prop:seg4lower_gamma3}, Proposition~\ref{prop:seg4lower} is proved. 

\subsection{Segment 2 to 4: Wrap up the Lower Bound.}
Finally, we show the tight lower bound corresponding to the upper bound in each segment. 

\begin{enumerate}
    \item In ``Steep'' Segment 2, Proposition~\ref{prop:seg3lower} in the biased case gives the lower bound $\frac{\Delta(\alpha - 1/2)}{P_{hL}}$, where all minority agents always vote for $\bA$. 
    \item In ``Non-Linear'' Segment 3, we show that the corresponding $\xi^*$ and $1 - \bpl^0$ that maximizes $\hat{\xi}$ serves as a valid set of parameters in Proposition~\ref{prop:seg4lower}, so $\xinl$ is also a lower bound. 
    \item For ``Tail'' Segment 4, the parameters $\bph^1 = 1 - \bpl^0 = 1$, and $\gamma = \frac{1 - (P_{\ell L} + P_{\ell H})\cdot \alpha}{2}$ gives the lower bound $\frac12\Delta\alpha$ by Proposition~\ref{prop:seg5lower}.  
\end{enumerate}

Therefore, the lower bound on $\xi$ is exactly given in $\xi^*(\alpha)$. We finish the proof that $\xi^*(\alpha)$ is exactly the threshold curve for which a strategy profile satisfying all three constraints exists. 

\end{document}